\documentclass[11pt,reqno]{amsart}
\usepackage{amssymb}

\usepackage{amsthm,amsfonts,amssymb,euscript,color,bbm}
\usepackage{comment}
\usepackage{mathtools}
\usepackage{stmaryrd}
\usepackage{mathrsfs}
\usepackage{pifont}
\usepackage{amsfonts,amsmath,amssymb,amsthm,amscd,cancel}
\usepackage{graphicx}
\usepackage[]{geometry}
\usepackage{verbatim}
\usepackage{mathrsfs}
\usepackage{fancyhdr}
\usepackage{footnpag,footmisc}
\usepackage[normalem]{ulem}

\usepackage{cite}
\usepackage{hyperref}
\usepackage{engord}
\usepackage[displaymath]{lineno}

\theoremstyle{definition}
\newtheorem{theorem}{Theorem}[section]
\newtheorem{lemma}{Lemma}[section]
\newtheorem{proposition}{Proposition}[section]
\newtheorem{corollary}{Corollary}[section]
\newtheorem{definition}{Definition}[section]

\newtheorem{remark}{Remark}[section]
\newtheorem*{conjecture}{Conjecture}
\newtheorem*{question}{Basic setup}

\allowdisplaybreaks[4]

\DeclareMathAlphabet{\mathsfsl}{OT1}{cmss}{m}{sl}
\numberwithin{equation}{section}

\newcommand{\D}{\mathrm{d}}

\newcommand{\tr}{\mathrm{tr}}

\renewcommand{\L}{\mathbb{L}}
\renewcommand{\H}{\mathbb{H}}

\def\alphab{\underline{\alpha}}
\def\betab{\underline{\beta}}
\def\chib{\underline{\chi}}
\def\chibh{\widehat{\underline{\chi}}}
\def\chih{\widehat{\chi}}
\def\etab{\underline{\eta}}

\def\Lb{\underline{L}}
\def\mub{\underline{\mu}}

\def\tr{\mathrm{tr}}
\def\omegab{\underline{\omega}}

\def\sigmac{\check{\sigma}}
\def\tensor{\widehat{\otimes}}

\def\ub{\underline{u}}
\def\Cb{\underline{C}}

\def\Lh{\widehat{L}}
\def\Lbh{\widehat{\underline{L}}}

\def\Xib{\underline{\Xi}}
\def\Ib{\underline{I}}
\def\Lambdab{\underline{\Lambda}}
\def\Nb{\underline{N}}
\def\Thetab{\underline{\Theta}}

\newcommand{\Db}{\underline{D}}
\newcommand{\Dh}{\widehat{D}}
\newcommand{\Dbh}{\widehat{\underline{D}}}

\newcommand{\logOmega}{\left|\log\frac{\Omega_0(u_1)}{\Omega_0(u_0)}\right|}

\newcommand{\tdelta}{\widetilde{\delta}}
\newcommand{\tOmega}{\widetilde{\Omega}}
\newcommand{\tu}{\widetilde{u}}

\def\nablas{\mbox{$\nabla \mkern -13mu /$ }}
\def\Deltas{\mbox{$\Delta \mkern -13mu /$ }}

\def\divs{\mbox{$\mathrm{div} \mkern -13mu /$ }}
\def\curls{\mbox{$\mathrm{curl} \mkern -13mu /$ }}

\def\omegabs{\mbox{$\omegab \mkern -13mu /$ }}
\def\ds{\mbox{$\nabla \mkern -13mu /$ }}
\def\gs{\mbox{$g \mkern -9mu /$}}
\def\epsilons{\mbox{$\epsilon \mkern -9mu /$}}

\def\Thetas{\mbox{$\Theta \mkern -12mu /$ }}

\def\gfs{\mbox{$\mathfrak{g} \mkern -9mu /$}}

\begin{document}

\title[Instability of spherical naked singularities]{Instability of spherical naked singularities of a scalar field under gravitational perturbations}

\author[Junbin Li]{Junbin Li}
\address{Department of Mathematics, Sun Yat-sen University\\ Guangzhou, China}
\email{lijunbin@mail.sysu.edu.cn}

\author[Jue Liu]{Jue Liu}
\address{Department of Mathematics, Sun Yat-sen University\\ Guangzhou, China}
\email{liuj337@mail2.sysu.edu.cn}

\date{}

\maketitle

\begin{abstract}
In this paper, we initiate the study of the instability of naked singularities without symmetries. In a series of papers, Christodoulou proved that naked singularities are not stable in the context of the spherically symmetric Einstein equations coupled with a massless scalar field. We study in this paper the next simplest case: a characteristic initial value problem of this coupled system with the initial data given on two intersecting null cones, the incoming one of which is assumed to be spherically symmetric and singular at its vertex, and the outgoing one of which has no symmetries. It is shown that, arbitrarily fixing the initial scalar field, the set of the initial conformal metrics on the outgoing null cone such that the maximal future development does not have any sequences of closed trapped surfaces approaching the singularity, is of first category in the whole space in which the shear tensors are continuous. Such a set can then be viewed as exceptional, although the exceptionality is weaker than the at least $1$ co-dimensionality in spherical symmetry. Almost equivalently, it is also proved that, arbitrarily fixing an incoming null cone $\Cb_\varepsilon$ to the future of the initial incoming null cone, the set of the initial conformal metrics such that the maximal future development has at least one closed trapped surface before $\Cb_\varepsilon$, contains an open and dense subset of the whole space. Since the initial scalar field can be chosen such that the singularity is naked if the initial shear is set to be zero, we may say that the spherical naked singularities of a self-gravitating scalar field are not stable under gravitational perturbations. This in particular gives new families of non-spherically symmetric gravitational perturbations different from the original spherically symmetric scalar perturbations given by Christodoulou.

\end{abstract}

\tableofcontents

\setcounter{tocdepth}{1}


\allowdisplaybreaks

\section{Introduction}
\subsection{Previous works}
In general relativity, one models the gravity in spacetimes using a Lorentzian manifold whose metric should satisfy the \emph{Einstein equations}
$$\mathbf{Ric}_{\alpha\beta}-\frac{1}{2}\mathbf{R}g_{\alpha\beta}=\mathbf{T}_{\alpha\beta}.$$

One fundamental question in the classical theory of general relativity is the weak cosmic censorship conjecture, which is proposed firstly by Penrose, and usually stated as follows: \emph{for suitable Einstein-matter field systems, the maximal future development of the generic asymptotically flat initial data possesses a complete future null infinity}. 

Perhaps the most interesting case is the \emph{vacuum Einstein equations}, i.e., setting $\mathbf{T}_{\alpha\beta}\equiv0$ and hence
\begin{align*}
\mathbf{Ric}_{\alpha\beta}=0,
\end{align*} 
This is because the singularities arise in this case only because of the effect of gravity itself. In order to simplify the equations, symmetries are usually imposed, for example, spherical symmetry. However, it is not appropriate to impose spherical symmetry in vacuum because the spacetime will reduce to a single family of the static Schwarzschild solutions due to the Birkhoff theorem. To gain insights to attack the original problem, one would rather investigate the spherically symmetric solution of the Einstein equations coupled with a simple material model that can be used to simulate the effect of gravity. A suitable choice is the massless scalar field $\phi$,  whose energy-momentum tensor reads
$$\mathbf{T}_{\alpha\beta}=\nabla_\alpha\phi\nabla_\beta\phi-\frac{1}{2}g_{\alpha\beta}\nabla_\mu\phi\nabla^\mu\phi.$$
The coupled system, which we call the \emph{Einstein-scalar field equations}, can be written in the following form:
$$\begin{cases}\mathbf{Ric}_{\alpha\beta}=2\nabla_\alpha\phi\nabla_\beta\phi\\
g^{\alpha\beta}\nabla_{\alpha}\nabla_{\beta}\phi=0
\end{cases}.$$

The study of this model in spherical symmetry is successful. Christodoulou studied in a series of papers, including \cite{Chr87, Chr91, Chr93, Chr94, Chr99}, the spherical symmetric solution of this model, and rigorously verified in the last paper \cite{Chr99} that the weak cosmic censorship conjecture holds in this category. Strictly speaking, he proved the following \emph{genericity} theorem.
\begin{theorem}[Christodoulou \cite{Chr87, Chr91, Chr93, Chr99}]\label{Christodoulougenericity}
Let $\mathcal{A}$ be the space of functions of bounded variation on the non-negative real line, which is the space of the initial data $\alpha_0=\frac{\partial}{\partial r}(r\phi)\Big|_{C_o}$ on the initial null cone $C_o$ issuing from a point $o$, a fixed point of the $SO(3)$ action, where $r$ is the area radius of the orbit spherical sections of $C_o$. Let $\mathcal{S}$ be the collection of $\alpha_0\in\mathcal{A}$ such that the maximal future development is not complete. Let $\mathcal{E}$ be the exceptional set which is the complement in $\mathcal{S}$ of the collection of initial data such that the maximal development has a complete future null infinity and terminates at a spacelike singular future boundary. Then if $\alpha_0\in\mathcal{E}$, there exists some $f\in\mathcal{A}$ depending on $\alpha_0$ such that the family $\alpha_0+tf$ lies in $\mathcal{S}\backslash\mathcal{E}$ for $t\ne0$. In addition, any two such families do not intersect, i.e., if $\alpha_1+t_1f_1=\alpha_2+t_2f_2$, then $\alpha_1\equiv\alpha_2$, $f_1\equiv f_2$, $t_1=t_2$.
\end{theorem}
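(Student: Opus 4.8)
The plan is to follow Christodoulou's program in four stages: \emph{(i)} well-posedness and the global structure of the characteristic initial value problem; \emph{(ii)} a scaling analysis pinning down the asymptotic profile of any $\alpha_0\in\mathcal E$; \emph{(iii)} an explicit perturbation forcing a trapped surface; and \emph{(iv)} a soft argument for the disjointness of the resulting families.

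\emph{Stage (i).} I would write the spherically symmetric Einstein--scalar field system in double-null coordinates, with quotient metric $-\Omega^2\,\D u\,\D v+r^2\gamma$, reducing it to transport equations for the area radius $r$, the Hawking mass $m$ (where $1-\tfrac{2m}{r}$ is proportional to $\Omega^{-2}\partial_u r\,\partial_v r$) and the field $\phi$, whose self-gravitating content is carried by the null fluxes $r\partial_u\phi$, $r\partial_v\phi$; the datum on $C_o$ is exactly $\alpha_0=\partial_r(r\phi)$. Using the BV well-posedness theory of \cite{Chr93} one constructs the maximal future development and its terminal boundary, and, combined with the global structure results of \cite{Chr87, Chr91}, this yields a dichotomy for incomplete developments: either a trapped surface forms --- in which case by \cite{Chr91} the future null infinity is complete and the terminal singularity is spacelike, so $\alpha_0\in\mathcal S\setminus\mathcal E$ --- or no trapped surface ever forms and the terminal boundary contains a \emph{naked} singular point $b$ on the central worldline $\{r=0\}$, visible from $\mathcal I^+$. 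Hence $\mathcal E$ is precisely the set of $\alpha_0$ producing a naked central singularity not preceded by any trapped surface.

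\emph{Stage (ii), the core.} Near such a $b$ I would perform the homothetic rescaling $r\mapsto r/\lambda$, $u\mapsto(u-u_b)/\lambda$, $v\mapsto(v-v_b)/\lambda$, and, using the exact scale invariance of the system together with a priori estimates on the rescaled solutions uniform in $\lambda$, show that as $\lambda\to0$ the rescaled solutions converge to a solution invariant under this scaling, i.e.\ a (continuously) self-similar solution. The nakedness hypothesis, with the absence of trapped spheres, forces this limit to be non-flat and to lie in the distinguished sub-family of the self-similar solutions of \cite{Chr94} whose causal structure admits an outgoing ray escaping the singularity; in particular $2m/r<1$ along the relevant incoming cone through $b$. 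Translated back, this gives a precise asymptotic description of $\alpha_0$ near the area-radius value associated with $b$: to leading order $\alpha_0$ coincides with a fixed self-similar profile $\alpha_\ast=\alpha_\ast(\alpha_0)$ with a pinned scaling exponent, up to an error negligible at that scale. This rigidity --- a naked singularity is asymptotically self-similar with a \emph{critical} mass ratio --- is the step I expect to be the main obstacle, and it is where the bulk of \cite{Chr99} is spent: one must control the nonlinear coupling between $m$ and the fluxes uniformly across all scales approaching $b$ and rule out oscillatory or otherwise wilder blow-up.

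\emph{Stages (iii)--(iv).} Given the expansion I would take $f=f(\alpha_0)\in\mathcal A$ supported in a small $r$-interval just outside the singular value, with amplitude and profile normalised against $\alpha_\ast$ and chosen essentially uncorrelated with it, so that for every $t\neq0$ the trapped-surface-formation criterion of \cite{Chr91} is met on an incoming cone approaching $b$ --- where $2m/r$ is already close to $1$ --- the requisite extra incoming energy being supplied by the sign-independent, quadratic-in-$t$ term $\sim t^2\!\int f^2$; the point is that the perturbation is scale-critical: small enough to preserve the hypotheses of \cite{Chr93, Chr91} globally, yet just large enough to trap. By \cite{Chr91} the resulting trapped surface then places $\alpha_0+tf$ in $\mathcal S\setminus\mathcal E$ for all $t\neq0$. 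Finally the disjointness assertion is soft: since $f(\alpha_0)$ is manufactured canonically from $\alpha_0$ --- its support, amplitude and profile dictated by the location $b$ and by the self-similar normalisation of the singularity of $\alpha_0$ --- an equality $\alpha_1+t_1f_1=\alpha_2+t_2f_2$ forces $\alpha_1$ and $\alpha_2$ to share the same singular value and leading self-similar profile, whence $\alpha_1\equiv\alpha_2$, and then $f_1\equiv f_2$ and $t_1=t_2$ follow at once.
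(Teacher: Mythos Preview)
Your Stage~(ii) is the wrong mechanism, and the rest of the argument collapses with it. Christodoulou's proof in \cite{Chr99}, as the paper sketches in Section~1.5, does \emph{not} show that naked singularities are asymptotically self-similar; no such rigidity theorem is proved there, and it is not known. What actually drives the instability is the \emph{blue-shift}: along the boundary $\Cb_e$ of the causal past of the singular point $e$, the quantity $\Omega_0^2 h$ (equivalently $1-2m/r$ weighted by the lapse) tends to zero, so $-\log(\Omega_0^2 h)\to\infty$. The proof proceeds by contradiction in dimensionless coordinates $(t,s)$ with $u=u_0 e^{-t}$, $-2r=u_0 e^{s-t}$: assuming the trapped-surface criterion of Theorem~\ref{Chr91} fails on every cone near $e$, one derives the inequality $2(m(s,t)-m(0,t))\le c_1 r(s,t)\, s\log(1/s)$ uniformly, and then uses this together with the blue-shift to force the \emph{opposite} inequality at some $(s_\varepsilon,t_\varepsilon)$, after a suitable perturbation. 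The perturbation $f$ is designed to make a certain generic condition on $\alpha_0$ hold, not to correct a self-similar profile.

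Consequently your Stage~(iii) is mis-calibrated: the relevant amplification is not that $2m/r$ is already near $1$ (it need not be), but that the blue-shift integral $\int |u|^{-1}(h^{-1}-1)\,\D u$ diverges; and your Stage~(iv) disjointness argument, which rests on $f$ being canonically determined by a self-similar asymptote $\alpha_\ast(\alpha_0)$, has no foundation once Stage~(ii) is removed. The actual uniqueness comes from the explicit construction of $f$ tied to the location $r_e$ and the specific generic condition being violated, not from any asymptotic profile.
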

The conclusion of the above theorem consists of two parts: the \emph{existence} of the families $(\alpha_0+tf)_{t\in\mathbb{R}}$ of perburbations, which implies that $\mathcal{E}$ is dense in $\mathcal{A}$, and the \emph{uniqueness}, which means that any functions in $\mathcal{A}$ belongs to no more than one of such families. The set $\mathcal{E}$ is then of co-dimension at least $1$\footnote{In fact, Christodoulou proved that indeed the co-dimension is at least $2$. Christodoulou's proof also suggested that the set $\mathcal{E}$ still has at least $1$ co-dimension in the space of all more regular absolutely continuous initial data.} and therefore the set $\mathcal{E}$ can be considered to be exceptional. Indeed, this theorem does not only establish the weak cosmic censorship but also the strong cosmic censorship, which says that \emph{the maximal future development of generic asymptotically flat initial data cannot extend as a Lorentzian manifold in suitable sense}. We remark that the exceptional set $\mathcal{E}$ is not empty, i.e., singularities not hidden inside a black hole, which we call \emph{naked singularities}, do occur in the spherically symmetric solutions of the Einstein-scalar field equations. Examples are provided by Christodoulou in \cite{Chr94} and this shows that the word ``\emph{generic}'' is needed in both the statements of the weak and strong cosmic censorship conjectures.

An essential step of proving the above theorem is to understand in which way a closed trapped surface, a 2-spacelike surface embedding in the spacetime such that the mean curvature relative to both future null normals are negetive forms surrounding a given singularity.  
This was accomplished by Christodoulou in \cite{Chr91}:
\begin{theorem}[Christodoulou, \cite{Chr91}]\label{Chr91}
Consider the spherically symmetric solution of the Einstein-scalar field equations with initial data given on a null cone $C_o$. Consider two spherical sections $S_1$ and $S_2$ with area radii $r_1$, $r_2$ and mass contents $m_1$, $m_2$, and $S_2$ is in the exterior to $S_1$. Denote
$$\delta=\frac{r_2}{r_1}-1.$$
Then there exists positive constants $c_0$, $c_1$ such that if $\delta\le c_0$ and
\begin{align}\label{Christodouloum2-m1}2(m_2-m_1)>c_1r_2\delta\log\left(\frac{1}{\delta}\right),\end{align}
then there is a closed trapped surface on the incoming null cone through $S_2$ in the maximal future development.
\end{theorem}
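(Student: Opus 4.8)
The plan is to argue by contradiction inside a double‑null development adapted to the data, showing that the mass concentrated in the thin annulus between $S_1$ and $S_2$ is forced, by the field equations, to produce a trapped sphere on the incoming cone through $S_2$. I would use coordinates $(u,v)$ with $g=-\Omega^2\,\D u\,\D v+r^2\gamma$ ($\gamma$ the unit round metric), $C_o=\{u=0\}$, and $v$ chosen on $C_o$ so that $S_1=\{u=0,v=v_1\}$, $S_2=\{u=0,v=v_2\}$; write $\Cb$ for the incoming cone $\{v=v_2\}$ through $S_2$. Under spherical symmetry the Einstein--scalar field system closes into a hierarchy of transport equations for $r$, for the Hawking mass $m$, and for $\phi$: the two Raychaudhuri equations for $\nu=\partial_u r$ and $\lambda=\partial_v r$ (with sources $(\partial_u\phi)^2$, $(\partial_v\phi)^2$), the two mass equations in which $\partial_u m$, $\partial_v m$ are again quadratic in the derivatives of $\phi$ and have a definite sign in the untrapped region, and the reduced wave equation $\partial_u\partial_v(r\phi)=\lot$, a free wave up to a lower‑order term controlled by $2m/r$. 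Since $1-\mu:=1-2m/r=-4\nu\lambda/\Omega^2$ and $\nu<0$ along any incoming cone, a (marginally) trapped sphere on $\Cb$ is exactly a sphere on $\Cb$ with $\lambda=\partial_v r\le 0$, i.e.\ $2m\ge r$; this is what I will produce.

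First I would convert the hypothesis into a lower bound on scalar‑field energy. Integrating the mass equation along $C_o$ from $S_1$ to $S_2$ gives $m_2-m_1$ as the energy flux through that segment; since $\delta\le c_0$ is small, $r\approx r_2$ and the geometric coefficients are essentially constant across $[S_1,S_2]$, so the assumption $2(m_2-m_1)>c_1r_2\delta\log(1/\delta)$ becomes a lower bound, of order $r_2\delta\log(1/\delta)$, on $\int_{S_1}^{S_2}(\partial_v(r\phi))^2$. I would then run a continuity/bootstrap argument in the ``trapezoidal'' region $\mathcal{D}$ consisting of the points in the causal past of $\Cb$ with area radius $r\ge\beta r_2\delta$ for a fixed small $\beta$, the running assumption being precisely that $1-\mu>0$ on $\mathcal{D}$. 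Under this assumption the solution is regular on $\mathcal{D}$ (by Christodoulou's local theory, since $r$ is bounded below there and the energy stays bounded), $m$ is monotone along the null directions, and the reduced wave equation is a free wave up to terms controlled by $2m/r$; consequently the energy concentrated on $[S_1,S_2]$ is transported inward with only a fractional loss, so the scalar‑field flux across the outgoing‑cone segments of $\mathcal{D}$, and hence the mass accessible on $\Cb$, remains $\gtrsim r_2\delta\log(1/\delta)$ down to $r=\beta r_2\delta$; feeding the same persistent energy into the Raychaudhuri equation for $\lambda$ along $\Cb$ drives $\lambda$ down.

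The argument closes as follows. If the running assumption fails somewhere in $\mathcal{D}$, then $1-\mu$ vanishes at a marginally trapped sphere, and from the evolution equations for $m$ and $\lambda$ at such a sphere (where $\partial_v m\ge 0$ and $\lambda=0$, so $1-\mu$ is nonincreasing outward in $v$, while $\partial_u\lambda<0$ there) one checks that the marginally trapped sphere may be taken to lie on $\Cb$ itself, whence the spheres on $\Cb$ just to its future are strictly trapped — a closed trapped surface on $\Cb$. If instead the assumption never fails, then $\Cb$ is regular with $1-\mu>0$ down to $r=\beta r_2\delta$; but there $2m\gtrsim r_2\delta\log(1/\delta)>\beta r_2\delta=r$ as soon as $\delta$ is small enough — fixing $c_1$ large compared with $\beta$ and then $c_0$ small accordingly makes $c_1\log(1/\delta)>\beta$ for all $\delta\le c_0$ — so $1-\mu<0$ there, contradicting the running assumption. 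In all cases a closed trapped surface lies on the incoming cone through $S_2$.

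I expect the decisive difficulty to be the quantitative transport step: propagating the lower bound on the concentrated scalar‑field energy from the single annulus $[S_1,S_2]\subset C_o$ all the way inward to $r\sim r_2\delta$, while keeping the ``not yet trapped'' bootstrap alive — two demands in direct competition, the second of which must also be checked to persist near the center, where the geometry begins to degenerate. The factor $\log(1/\delta)$ enters exactly here: it is the energy budget needed to absorb the geometric spreading and the lower‑order terms in the reduced wave equation accumulated over the $\sim\log(1/\delta)$ dyadic scales of radial contraction from $r_2$ to $r_2\delta$, so that the energy density per scale stays above the threshold $\sim r_2\delta$ at which Raychaudhuri‑driven focusing outruns any growth of $r$. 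Calibrating the bootstrap norms, the constant $\beta$, and $c_0,c_1$ consistently, and making the domain‑of‑dependence bookkeeping in $\mathcal{D}$ precise, is the technical heart of the proof.
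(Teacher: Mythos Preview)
This theorem is not proved in the present paper: it is cited as Christodoulou's result from \cite{Chr91} and used as background. The only information the paper gives about its proof is the one-line remark that ``Theorem \ref{Chr91} is proved by contradiction, using monotonicity properties of the Einstein equations in spherical symmetry,'' together with Remark \ref{Chr91more} noting it is part of the collapse theorem in \cite{Chr91}. There is therefore no proof in this paper to compare your proposal against.

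That said, your sketch is consistent with that one-line description: you argue by contradiction, you rely on the sign of $\partial_u m$ and $\partial_v m$ in the untrapped region (the monotonicity), and you use the Raychaudhuri equations to force $\lambda\le 0$. Your identification of the role of the $\log(1/\delta)$ factor --- as the budget needed to survive $\sim\log(1/\delta)$ dyadic contractions of $r$ --- matches the way this paper later motivates the analogous lower bound in its non-symmetric generalization (see the discussion around \eqref{ourm2-m1} and the estimate \eqref{estimate-varphi}). So your outline is plausible as a reconstruction of Christodoulou's argument, but a detailed comparison would require consulting \cite{Chr91} directly rather than the present paper.
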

\begin{remark}\label{Chr91more}Theorem \ref{Chr91} is in fact part of the main theorem called collapse theorem in \cite{Chr91}, in which the behaviors of the singularity and the event horizon were also studied. First, it was proved that region of trapped surfaces terminates at a strictly spacelike singular boundary. Second, it was also proved that the event horizon will form and the completeness of the future null infinity follows from the analysis in \cite{Chr87}, that every causal curve $r=c$ has infinite length towards the future for $c$ larger than the final Bondi mass of the spacetime.\footnote{It was also discussed in \cite{Da05} by Dafermos how a single closed trapped surface implies the completeness of the future null infinity in spherically symmetric spacetimes. Beyond spherical symmetry, nothing is known about the relation between the existence of a closed trapped surface and the completeness of the future null infinity. We would like to mention however the works \cite{L-Z1, L-Z2} on the local existence theorems in retarded time, while the completeness of the future null infinity is equivalent to the global existence in retarded time.
}
\end{remark}

The proof of Theorem \ref{Christodoulougenericity} can then be sketched as follows. It was shown in \cite{Chr93} that the first singularity can only appear on the central line. Suppose that $e$ is this singular endpoint of the central line. We define $\Cb_e$ to be the boundary of the causal past of $e$, whose intersection with $C_o$ has area radius $r_e$. Then what was actually proved in the last paper \cite{Chr99} is the following \emph{instability} theorem:
\begin{theorem}[Christodoulou, \cite{Chr99}]\label{Christodoulouinstability} Suppose that the initial data $\alpha_0$ satisfies some \emph{generic} condition. Then there exist a sequence of $r_n\to r_e^+$ and a sequence of points $p_n\to e$ on the central line, such that the outgoing null cone $C_{p_n}$ issuing from $p_n$ satisfies the assumptions of Theorem \ref{Chr91} at two spherical sections: $S_{1,n}$, the intersection of $C_{p_n}$ and $\Cb_e$, and $S_{2,n}$, the intersection of $C_{p_n}$ and $\Cb_{r_n}$, where $\Cb_{r_n}$ is the incoming null cone through the spherical section $r=r_n$ on the initial null cone $C_o$.
\end{theorem}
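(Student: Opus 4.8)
The plan is to derive the conclusion of Theorem~\ref{Christodoulouinstability} from the combination of Theorem~\ref{Chr91} (the trapped-surface criterion in terms of a radius ratio and a mass difference) and the detailed first-order structure of the singular self-similar-type solution near the singular endpoint $e$, which was established in \cite{Chr93, Chr94, Chr99}. The key quantitative ingredient is that, under the \emph{generic} condition on $\alpha_0$, the mass aspect function $m/r$ fails to vanish fast enough as one approaches $e$; more precisely, one expects an estimate saying that along $\Cb_e$ the Hawking mass grows at least linearly in the area radius measured from $e$, while along the nearby incoming cones $\Cb_{r_n}$ one picks up an additional definite increment of mass relative to what is present on $\Cb_e$. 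The genericity hypothesis is exactly what rules out the borderline exactly-self-similar (scale-invariant) profile for which the mass increment would be too small, i.e.\ it forces a strict lower bound of the form required on the right-hand side of \eqref{Christodouloum2-m1}.

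First I would recall the precise description of the solution in a neighborhood of $e$: introduce the double-null coordinates adapted to $\Cb_e$ and the central line, let $r_e$ be the area radius of $C_o\cap\Cb_e$, and parametrize points $p_n\to e$ on the central axis by a sequence of retarded times, with $r_n\to r_e^+$ the associated area radii on $C_o$. Then $S_{1,n}=C_{p_n}\cap\Cb_e$ has area radius $r_{1,n}\to0$ as $p_n\to e$ (since $\Cb_e$ is the past boundary of $e$), and $S_{2,n}=C_{p_n}\cap\Cb_{r_n}$ has area radius $r_{2,n}$ with $\delta_n=r_{2,n}/r_{1,n}-1$ controllable: one chooses the sequence $r_n$ so that $\delta_n\to0$, which is possible precisely because $\Cb_{r_n}\to\Cb_e$ as $r_n\to r_e^+$ while $C_{p_n}$ approaches $e$. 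The second step is to estimate the mass difference $m_{2,n}-m_{1,n}=m(S_{2,n})-m(S_{1,n})$. This is where Christodoulou's analysis of the behavior of $\alpha_0$ and the resulting solution near the first singularity enters: one uses the monotonicity of the Hawking mass along incoming cones together with the lower bound on the energy flux coming from the generic condition to show $2(m_{2,n}-m_{1,n})\gtrsim r_{2,n}$, hence in particular it beats $c_1 r_{2,n}\delta_n\log(1/\delta_n)$ once $\delta_n$ is taken small enough, because $\delta_n\log(1/\delta_n)\to0$.

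The third step is purely logical: for each large $n$, $C_{p_n}$ with the two sections $S_{1,n}, S_{2,n}$ satisfies $\delta_n\le c_0$ and \eqref{Christodouloum2-m1}, so Theorem~\ref{Chr91} produces a closed trapped surface on the incoming cone through $S_{2,n}$, i.e.\ on $\Cb_{r_n}$, in the maximal future development; letting $n\to\infty$ gives the desired sequence of trapped surfaces accumulating at $e$. The main obstacle I expect is not the soft part but the quantitative mass lower bound: one must show that the generic condition on $\alpha_0$ (a non-degeneracy of a suitable scale-derivative of the renormalized data at $r_e$, as in \cite{Chr99}) actually propagates to a linear-in-$r$ lower bound for the mass content swept out between $\Cb_e$ and $\Cb_{r_n}$ near $e$, uniformly as $p_n\to e$; this requires the delicate a priori estimates on the approximately self-similar solution near the singularity, controlling the error terms so that they do not swamp the main $\delta_n\log(1/\delta_n)$-sized term. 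Once that bound is in hand, matching it against the threshold in Theorem~\ref{Chr91} and choosing $r_n\to r_e^+$ appropriately to make $\delta_n$ small is routine.
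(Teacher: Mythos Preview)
Your proposal has a genuine gap in the key quantitative step. You claim that under the generic condition one obtains $2(m_{2,n}-m_{1,n})\gtrsim r_{2,n}$, a lower bound independent of $\delta_n$. This is far too strong: since $m_{1,n}\ge 0$, it would give $2m_{2,n}/r_{2,n}\gtrsim 1$, so with a constant close to $1$ the sphere $S_{2,n}$ is already (nearly) trapped and Theorem~\ref{Chr91} would be superfluous. More fundamentally, a shell of relative width $\delta_n\to 0$ cannot carry mass comparable to its radius unless the energy flux blows up like $1/\delta_n$, which is not what the generic condition supplies. The correct scale for the mass increment is $r_{2,n}\delta_n$ times a factor that must be shown to beat $\log(1/\delta_n)$; the margin is logarithmic, not polynomial, and this is exactly why the argument is delicate.

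The paper records that Christodoulou's actual proof in \cite{Chr99} proceeds by contradiction, not by a direct lower bound. One introduces dimensionless coordinates $(t,s)$ via $u=u_0 e^{-t}$, $-2r=u_0 e^{s-t}$, assumes that \eqref{Christodouloum2-m1} fails throughout a full neighborhood $\{s\le c_0\}\cap\{0\le\ub\le\varepsilon\}$ of $e$, and then uses the resulting \emph{upper} bound $2(m(s,t)-m(0,t))\le c_1 r(s,t)\,s\log(1/s)$ as an a priori estimate that propagates and controls the solution. This control is what eventually contradicts the generic hypothesis on $\alpha_0$ (or, after a perturbation of the data, forces \eqref{Christodouloum2-m1} to hold at some point in the region). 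The logic is therefore the reverse of yours: the smallness of the mass increment, assumed everywhere, is the input to the estimates, not the output. Your outline does not contain this mechanism, and there is no evident way to salvage the direct bound you propose at the required strength.
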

\begin{remark}
When we say that a condition is \emph{generic}, it means that in a certain sense, almost all initial data satisfy this condition. It was also verified in \cite{Chr99} that the set of initial data not satisfying the \emph{generic} condition in Theorem \ref{Christodoulouinstability} is of co-dimension at least $1$ in the space of all initial data.
\end{remark}
As a consequence by applying Theorem \ref{Chr91}, there exists a sequence of closed trapped surfaces. They are the orbit spheres, approaching the singularity $e$ and their areas tend to zero. Then the apparent horizon issues from $e$ and Theorem \ref{Christodoulougenericity} follows immediately from the conclusions mentioned in Remark \ref{Chr91more}.

Inspired by the above argument, Christodoulou formulated a conjecture \cite{Chr99cqg} for the vacuum Einstein equations, which can be termed the \emph{trapped surface conjecture}, and be viewed as a local version of the weak cosmic censorship conjecture and part of the strong one. It can of course be viewed as a conjecture for the Einstein equations coupled with suitable matter field without any modifications.
\begin{conjecture}[Christodouou, \cite{Chr99cqg}]\label{conjecture}
The maximal future development $(\mathcal{M},g)$ of \emph{generic} asymptotically flat initial data $(\mathcal{H}, \bar{g}, k)$ has the following property. If $P$ is a TIP\footnote{TIP is short for \emph{the terminal indecomposable past set}, which is originally introduced in \cite{G-K-P}. Roughly speaking, a TIP is the past of a piece of the singular boundaries of the maximal future development.}  in $M$ whose trace on $\mathcal{H}$ has compact closure $\mathcal{K}$, then for any open domain $\mathcal{D}$ in $\mathcal{H}$ containing $\mathcal{K}$, the domain of dependence of $\mathcal{D}$ in $\mathcal{M}$ contains a closed trapped surface.
\end{conjecture}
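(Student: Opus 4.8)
The plan---at least as a program---is to reduce Conjecture \ref{conjecture} to a characteristic initial value problem localized at the singular TIP and then combine a symmetry-free trapped surface formation criterion with a Baire-category genericity argument modelled on Theorem \ref{Christodoulougenericity}. Given a TIP $P$ with compact trace $\mathcal{K}\subset\mathcal{H}$ and an open $\mathcal{D}\supset\mathcal{K}$, the whole of $P$, together with a neighborhood of each of its points, lies in the domain of dependence $D(\mathcal{D})$ in $\mathcal{M}$; so it suffices to produce a closed trapped surface inside $D(\mathcal{D})$ arbitrarily close to the singular boundary $\partial^+P$. Following the geometry of Theorem \ref{Christodoulouinstability}, pick a point $e$ on $\partial^+P$ and a sequence $p_n\to e$ with $p_n\in P$; let $\Cb$ be the past null boundary of $P$ (the analog of $\Cb_e$) and, for each $n$, let $C_n$ be the outgoing null hypersurface through $p_n$ and $\Cb_n'$ a nearby incoming null hypersurface still inside $D(\mathcal{D})$. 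The problem then becomes: compare the Hawking masses $m_{1,n},m_{2,n}$ of $S_{1,n}=C_n\cap\Cb$ and $S_{2,n}=C_n\cap\Cb_n'$, with area radii $r_{1,n},r_{2,n}$ and separations $\delta_n=r_{2,n}/r_{1,n}-1\to0$, and show that for generic data the mass gain $2(m_{2,n}-m_{1,n})$ exceeds a formation threshold $\sim c_1 r_{2,n}\delta_n\log(1/\delta_n)$ for every $n$, so that $\Cb_n'$ carries a closed trapped surface near $S_{2,n}$.

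The first pillar is a generalization of Theorem \ref{Chr91} to spacetimes without any symmetry: if, along $C_n$ between $S_{1,n}$ and $S_{2,n}$, the incoming energy flux through $C_n$---the integral of $|\chih|^2$ plus the matter flux $|\ds\phi|^2$-type terms, which is essentially $m_{2,n}-m_{1,n}$---obeys a lower bound of the form $c_1 r_{2,n}\delta_n\log(1/\delta_n)$, then the incoming cone $\Cb_n'$ contains a closed trapped surface near $S_{2,n}$. A criterion of exactly this shape, with the same $\delta\log(1/\delta)$ threshold, is the content of Christodoulou's short-pulse method and its anisotropic refinements, so the \emph{soft} part of the argument is unchanged from spherical symmetry. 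What is needed is to run the a priori estimates of the formation-of-trapped-surfaces scheme on a region shrinking to $e$, with singular characteristic data on $\Cb$ at the vertex but regular data on $C_n$, and to propagate these estimates with constants uniform as $p_n\to e$; this is where the (approximately self-similar) scaling structure of the equations near the singularity must be exploited.

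The second pillar is genericity: fix the scalar field and the singular incoming cone $\Cb$, and regard the free data on the outgoing cones $C_p$ near $e$---the conformal class of the induced metric, equivalently the shear $\chih$---as the variable. One must show that the set of such data for which no sequence of trapped surfaces approaches $e$ is meagre. This is carried out by a perturbation argument parallel to the existence half of Theorem \ref{Christodoulouinstability}: adding to any datum a small transversal perturbation of $\chih$ concentrated on the relevant scale produces, via the criterion above, the required mass gain along $C_{p_n}$ for a sequence $p_n\to e$, because that mass gain is quadratic in and, to leading order, monotone in the perturbation amplitude; hence the ``bad'' set for each scale is the zero set of a nontrivial functional, nowhere dense, and a countable union over the dyadic scales gives a set of first category. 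In the model situation in which $\Cb$ is spherically symmetric and singular at its vertex---the characteristic problem treated in the body of this paper---this second pillar is precisely what is established there, and one expects the general case to follow the same scheme once the first pillar is available without symmetry.

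The main obstacle is the global control of the geometry up to the singular boundary in the absence of symmetry. In spherical symmetry the Hawking mass is monotone along both null families by Raychaudhuri and the dominant energy condition, it is bounded below by a multiple of $r$ at a trapped surface, and a singular TIP is forced to carry a definite mass gap; all three facts enter Theorems \ref{Chr91} and \ref{Christodoulouinstability} crucially. Without symmetry there is no canonical quasi-local mass that is simultaneously monotone, controls the formation criterion, and must jump across the singularity, so the mass monotonicity has to be replaced by a direct energy-flux estimate that is stable under the perturbations above and that survives integration over a region degenerating to $e$. Equivalently, one needs an a priori existence theorem for the Einstein--scalar field equations on a full neighborhood of a singular TIP with estimates uniform in the distance to the singularity---a local-in-retarded-time result of the type alluded to in Remark \ref{Chr91more}---and this, rather than the trapped surface formation or the Baire-category bookkeeping, is the crux.
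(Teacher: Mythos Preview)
The statement you are addressing is a \emph{conjecture}, and the paper does not prove it; it is stated as an open problem that the paper only begins to attack in the very special case where the TIP is the causal past of a \emph{spherically symmetric} scalar-field singularity. So there is no ``paper's own proof'' to compare against, and your write-up is---as you yourself say in the first line---a program rather than a proof. On that level your outline is sensible and indeed tracks the paper's strategy in its special case: a symmetry-free formation-of-trapped-surfaces criterion (your first pillar) combined with a Baire-category genericity argument on the free characteristic data (your second pillar).

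That said, there is a genuine gap beyond the one you flag at the end. Your first pillar is stated too optimistically: the $\delta\log(1/\delta)$ threshold is \emph{not} simply ``the content of Christodoulou's short-pulse method and its anisotropic refinements.'' Christodoulou's original criterion (Theorem~\ref{Chr08}) has a much larger threshold, and the An--Luk refinement (Theorem~\ref{An-Luk}) that reaches the correct scale still requires the incoming cone $\Cb_0$ to be Minkowskian. The paper's own Theorem~\ref{formationrough} is precisely the work of extending this to a spherically symmetric, possibly singular $\Cb_0$, and even there the estimates are delicate (the weight $\Omega_0$, the logarithmic loss $\mathscr{E}$, the auxiliary condition). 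For a general TIP with no symmetry at all, one does not even have the geometric scaffolding you invoke: there is no central line, no canonical point $e$, no obvious analogue of $\Cb_e$ foliated by round spheres, and no function like $\Omega_0(u)$ whose monotone decay to zero encodes the blue-shift. Your sketch implicitly imports all of this structure when you speak of ``a sequence $p_n\to e$'' and ``the past null boundary of $P$.'' Identifying what replaces these objects for a general singular TIP is not a technicality but the heart of the problem, and it is prior to both the formation criterion and the a priori estimates you correctly identify as the crux.
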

The goal of this paper is to initiate the study of the above conjecture without symmetries, or precisely speaking, study how the arguments of proving Theorem \ref{Christodoulougenericity} can be generalized when no symmetries are imposed.
\subsection{Main results}
 In this paper, we study the next simpliest case that the TIP is the past spherical singularities of a massless scalar field studied in \cite{Chr99}. The meaning of a singularity being spherical in this paper is that the causal past of this singularity is a spherically symmetric spacetime and the boundary of this causal past is foliated by orbit spheres. In contrast to the spherical symmetry imposed on the causal past of the singularity, no symmetries are required in the future of the boundary of this causal past. We expect we can gain some insights to attack the problem when the singularity is not assumed to be spherical.

Recalling that in spherical symmetry, a sharp criterion of a point $e$ on the central line being singular is obtained in \cite{Chr93}. It says that if the ratio of the Hawking mass to the radius of the spheres does not tend to zero as we approach $e$ from the past, then the regular solution cannot extend across $e$. Then the question we study is formulated as follows.
\begin{question}
In this paper, we study the following characteristic initial value problem with a spherical singularity of a scalar field. The initial data is given on two intersecting null cone $C_{u_0}$ which is outgoing and $\Cb_0$ which is incoming. The data on $\Cb_0$ is assumed to be spherically symmetric. Denote $r$ to be the area radius of the spherical section on $\Cb_0$ and $m=m(r)$ be the Hawking mass. For $r>0$, it is assumed that $r>2m\ge0$ so no closed trapped surfaces exist on $\Cb_0$ and $\frac{2m}{r}\nrightarrow 0$ as $r\to0^+$. The outgoing null cone $C_{u_0}$ intersects $\Cb_0$ at $r=-u_0$. No symmetries are imposed on the data on $C_{u_0}$. We then solve the Einstein-scalar field equations of such data. This setup is depicted in Figure \ref{fig:setup}.
\end{question}
\begin{figure}
\centering
\includegraphics [width=4.5 in]{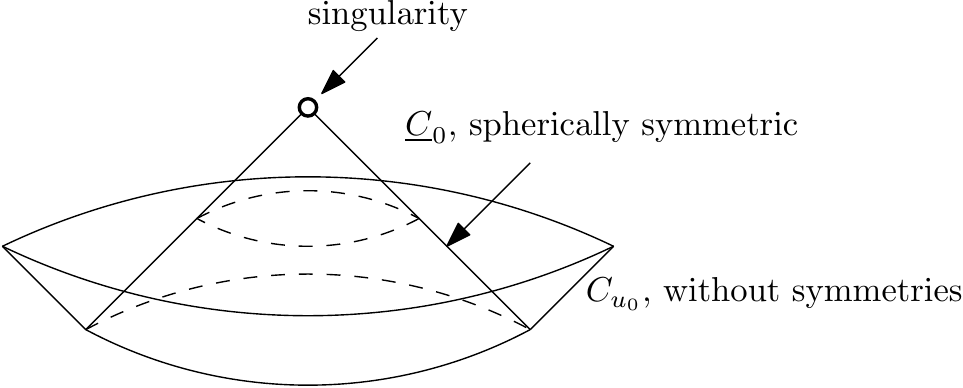}  
\caption{Basic setup}
 \label{fig:setup} 
\end{figure}

In this setup, we are able to prove some \emph{instability} theorems, generalizing Theorem \ref{Christodoulouinstability}, which will be discussed in Section \ref{instabilitysection}. Based on the \emph{instability} theorems, we are able to prove the \emph{genericity} theorem. Recall that the initial data set on $C_{u_0}$ consists of the conformal metric $\widehat{\gs}$ of $C_{u_0}$, the lapse $\Omega$ and the scalar field function $\phi$. For simplicity, we will fix \emph{smooth} the initial data on $\Cb_0$ and the \emph{smooth} lapse $\Omega$ and scalar field $\phi$ on $C_{u_0}$. One can certainly state and prove a complete version allowing perturbations both on the $\widehat{\gs}$ and $\phi$ without essential additional difficulties. We also let $(\ub,u,\vartheta)$ be the double null coordinate system introduced in Section \ref{doublenull}, and then the initial quantities are functions of three variables $(\ub,\vartheta)$. The initial data on $C_{u_0}$ is not assumed to be smooth. We present here a rough form of the main result of the present article, whose precise version is given in Theorem \ref{main1precise}.

\begin{theorem}[Main result]\label{main1}
Let the initial data on $\Cb_0$ with a singular vertex and initial $\Omega$ and $L\phi$ on $C_{u_0}$ be arbitrarily \emph{smoothly} fixed. Let $\mathcal{I}$ be all initial conformal metrics $\widehat{\gs}$ defined on $C_{u_0}$ for $0\le\ub\le 1$ such that $\widehat{\gs} \in C^1_{\ub}H^{10}_\vartheta$ up to the intersection $C_{u_0}\cap\Cb_0$\footnote{This means that $\widehat{\gs}$ at $\ub=0$ must be coincide in $C_{\ub}^1$ level with the data induced by the spherical symmetric initial data on $\Cb_0$. This in particular implies that $\widehat{\gs}$ is standard round metric and the shear $\chih=0$ at $\ub=0$.}. Let $\mathcal{E}$ be the set of $\widehat{\gs} \in \mathcal{I}$ such that the future maximal development does not have any sequences of closed trapped surfaces approaching the singularity. Then $\mathcal{E}$ is of first category in $\mathcal{I}$, i.e., $\mathcal{E}^c$, the complement of $\mathcal{E}$ in $\mathcal{I}$, contains a subset that is a countably intersection of open and dense subsets of $\mathcal{I}$.
\end{theorem}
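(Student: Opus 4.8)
The plan is to run a Baire category argument powered by the instability theorems of Section~\ref{instabilitysection}. For each positive integer $n$, let $\mathcal{G}_n$ be the set of those $\widehat{\gs}\in\mathcal{I}$ whose maximal future development contains a closed trapped surface which is a leaf $S=\{\ub=a,\,u=b\}$ of the double null foliation with $0<a<1/n$, $-1/n<b<0$, and $\mathrm{Area}(S)<1/n$; here one uses that, because the lapse $\Omega$ on $C_{u_0}$ and all the data on $\Cb_0$ are fixed once and for all, the double null coordinates $(\ub,u)$ are themselves fixed and do not vary with $\widehat{\gs}$. If $\widehat{\gs}\in\bigcap_n\mathcal{G}_n$, then selecting one such leaf $S_n=\{\ub=a_n,\,u=b_n\}$ for each $n$ produces a sequence of closed trapped surfaces with $(a_n,b_n)\to(0,0)$ and $\mathrm{Area}(S_n)\to0$, i.e.\ a sequence of closed trapped surfaces shrinking onto the singular vertex $o$ of $\Cb_0$; hence $\bigcap_n\mathcal{G}_n\subset\mathcal{E}^c$. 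Since $\mathcal{I}$, with the $C^1_\ub H^{10}_\vartheta$ topology, is completely metrizable and in particular a Baire space, it suffices to prove that each $\mathcal{G}_n$ is open and dense: then $\bigcap_n\mathcal{G}_n$ is a dense $G_\delta$ contained in $\mathcal{E}^c$, which is the assertion of the theorem, and the ``open and dense before $\Cb_\varepsilon$'' reformulation is immediate, since for a prescribed $\Cb_\varepsilon$ the set of $\widehat{\gs}$ whose development has a trapped surface before $\Cb_\varepsilon$ contains $\mathcal{G}_n$ for every $n\ge 1/\varepsilon$.

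\emph{Openness.} Fix $\widehat{\gs}\in\mathcal{G}_n$ with a witnessing leaf $S=\{\ub=a,\,u=b\}$. The coordinate rectangle $R=\{0\le\ub\le a,\ u_0\le u\le b\}$ is precisely the intersection of the causal past of $S$ with the future of $C_{u_0}\cup\Cb_0$, so it lies inside the maximal development; since $b<0$, its closure avoids the singular vertex $o$ and is a compact region of regularity. As only $\widehat{\gs}$ varies, Cauchy stability for the characteristic initial value problem---concretely, uniformity of the a priori estimates that produce the solution over a fixed compact regular region---gives that for $\widehat{\gs}'$ in a small $C^1_\ub H^{10}_\vartheta$ neighbourhood of $\widehat{\gs}$ the solution still exists on $R$ and, together with its connection coefficients, depends continuously on $\widehat{\gs}'$ there. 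Consequently $\tr\chi$ and $\tr\chib$ on the leaf $\{\ub=a,\,u=b\}$ of the perturbed solution are $C^0$-close to those of $\widehat{\gs}$, hence still everywhere strictly negative (trappedness is an open condition), while $a$ and $b$ are unchanged and $\mathrm{Area}(S)$ is $C^0$-close to the value $<1/n$ it had for $\widehat{\gs}$, so it remains $<1/n$. Thus $\widehat{\gs}'\in\mathcal{G}_n$, and $\mathcal{G}_n$ is open.

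\emph{Density.} Here I invoke the instability theorems of Section~\ref{instabilitysection}: they supply a genericity condition $(\star)$ on the behaviour of the shear $\chih$ of $\widehat{\gs}$ as $\ub\to0^+$ which forces the maximal development to contain a sequence of closed trapped leaves---the analogue of the orbit spheres in Theorem~\ref{Christodoulouinstability}---shrinking onto $o$, so that $(\star)$ implies $\widehat{\gs}\in\bigcap_n\mathcal{G}_n$. It is therefore enough to show that $\{\widehat{\gs}\in\mathcal{I}:(\star)\text{ holds}\}$ is dense in $\mathcal{I}$. Given $\widehat{\gs}_0\in\mathcal{I}$ and $\delta>0$, I take $\widehat{\gs}=\widehat{\gs}_0+\psi$, where $\psi$ is a symmetric trace-free perturbation of the conformal metric, supported in a thin collar $\{0<\ub<\eta\}$, vanishing at $\ub=0$ to high enough order that $\widehat{\gs}$ is again an admissible element of $\mathcal{I}$ (in particular it is still the round metric with $\chih=0$ at $\ub=0$), and carrying a fixed nontrivial angular mode together with an oscillatory/self-similar $\log\ub$-profile so that the induced shear is nondegenerate on a sequence of scales $\ub\to0^+$. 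A small amplitude makes $\|\psi\|_{C^1_\ub H^{10}_\vartheta}<\delta$, while the fact that the spherically symmetric background near $o$ is itself scale-critical---its ratio $2m/r$ does not tend to $0$---is what makes the rescaled perturbation effective at sufficiently small scales, so that $(\star)$ holds for $\widehat{\gs}$. Hence $\mathcal{G}_n$ is dense.

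The principal obstacle is exactly this density step: one must verify that the genericity hypothesis demanded by the instability theorems can be realized by perturbations arbitrarily small in $C^1_\ub H^{10}_\vartheta$, and it is here that the near self-similar structure of the singular background near $o$ and the precise formulation of $(\star)$ must be exploited. A secondary, more technical, point is the uniform Cauchy stability used in the openness step, which has to hold on the closed rectangle $R$ up to and including the witnessing trapped leaf. Granting these, each $\mathcal{G}_n$ is open and dense, $\bigcap_n\mathcal{G}_n$ is a dense $G_\delta$ contained in $\mathcal{E}^c$, and therefore $\mathcal{E}$ is of first category in $\mathcal{I}$.
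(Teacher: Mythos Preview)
Your overall Baire-category architecture is the same as the paper's, but there is a genuine gap in your openness step, and it is precisely the point the paper takes pains to avoid. You define $\mathcal{G}_n$ in terms of the maximal future development of a \emph{non-smooth} $\widehat{\gs}\in C^1_\ub H^{10}_\vartheta$, and then argue openness via Cauchy stability of the trapped leaf. But the paper explicitly flags that for data of this regularity it is ``not a priori clear what `future maximal development' means,'' and that the stability of closed trapped surfaces under $C^1_\ub H^{10}_\vartheta$ perturbations ``will not be proved in this paper.'' In other words, the Cauchy-stability input you are invoking is exactly what is not available at this regularity. Your own closing paragraph concedes this as a ``secondary, more technical'' point, but it is in fact the crux.

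The paper circumvents this by never asking for a solution associated to the non-smooth $\widehat{\gs}$. Instead it defines the open-dense sets purely at the level of the \emph{initial data}: $\mathcal{N}^c_{2-\gamma,\varepsilon}=\{\widehat{\gs}: \exists\,\tu\text{ with }\tdelta(\tu;\gamma)<\varepsilon\text{ and }\Omega_0^{\gamma-2}(\tu)f'(\tu;\gamma)\ge 33\}$, where $f'$ is an explicit integral of $|\Omega\chih|^2$ on $C_{u_0}$. Openness of this set is then trivial (it is open already in $L^\infty_\vartheta H^1_\ub$), and the link to trapped surfaces is supplied by Theorem~\ref{instabilitycorollaryweak}, which is formulated for \emph{sequences of smooth approximations} and thus sidesteps the need for a well-posedness theory at low regularity. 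The exceptional set $\mathcal{E}$ in the precise Theorem~\ref{main1precise} is accordingly defined through such approximating sequences, not through a maximal development of $\widehat{\gs}$ itself.

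Your density sketch is in the right spirit but too vague to stand on its own; the paper's Proposition~\ref{dense} builds an explicit piecewise-constant-in-$\ub$ trace-free matrix $\Psi_{AB}$, normalizes it so that $\frac{1}{\ub}\int_0^{\ub}|\Psi|^2\,\D\ub'\equiv 1$, cuts off near $\ub=0$, and then chooses $\tu_t$ small enough that the resulting shear dominates $\Omega_0^{2-\gamma}(\tu_t)$. The point is that the oscillatory profile you allude to must be carefully normalized and matched to the function $\tdelta(\tu;\gamma)$ coming from the formation-of-trapped-surfaces theorem; an unspecified ``$\log\ub$-profile'' does not by itself verify the condition $(\star)$.
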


\begin{remark} \label{remark-isotropic}
 Although in a weaker sense than the at least $1$ co-dimensionality proved in spherical symmetry, we can still say the set $\mathcal{E}$ is exceptional, or the initial conformal metrics lying in $\mathcal{I}\backslash\mathcal{E}$ are \emph{generic}. What we have proved is essentially the \emph{existence} of the perturbations. To prove the \emph{uniqueness} part, one will need some \emph{fully anisotropic} \emph{instability} theorems which is not proved in this paper. This is one of the main differences between spherical symmetry and non-spherical symmetry. We will try to study this in the future works. \end{remark}
 \begin{remark}
The topology of the space of the initial conformal metrics we work in is the most regular one in which the estimates are done based on the methods in this paper. For non-smooth initial data, it is not a priori clear what ``future maximal development'' means. We will discuss this in Section \ref{instabilitysection}.
\end{remark}
\begin{remark}
In the above theorem, the initial scalar field can be chosen such that the singularity is naked if no gravitational fields are present, i.e., the shear tensor is set to be zero. Such examples do exist by \cite{Chr94}.  We already know from \cite{Chr99} that the naked singularities are not stable under spherical scalar perturbations in the sense of being of at least $1$ co-dimension. In this paper, we find new perturbations contributed from the gravitational fields and establish the instability in the sense of being of first category. Therefore we may say that the spherical naked singularities of a self-gravitating scalar field are not stable under gravitational perturbations.
\end{remark}

In Theorem \ref{main1}, we investigate the exceptionality of the set $\mathcal{E}$ where the future maximal developments of the initial conformal metrics in $\mathcal{E}^c$ has a sequence of closed trapped surfaces approaching the singularity. But if one \emph{only} concerns about the weak cosmic censorship conjecture, then it seems that only one single closed trapped surface is sufficient. We then let $\Cb_\varepsilon$ be the incoming null cone where $\ub=\varepsilon$ and $\mathcal{E}_\varepsilon$ be the set of the initial conformal metrics in $\mathcal{I}$ such that the maximal future development before $\Cb_\varepsilon$ has no closed trapped surfaces. Note that $\mathcal{E}_\varepsilon\subset\mathcal{E}$ for any $\varepsilon>0$, and assuming the closed trapped surfaces are stable under small perturbations of the initial conformal metrics in the topology of $\mathcal{I}$, we will have the following theorem.
\begin{theorem}\label{main2}
$\mathcal{E}_\varepsilon^c$ contains a subset that is open and dense in $\mathcal{I}$ for all $\varepsilon>0$. 
\end{theorem}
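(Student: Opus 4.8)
The plan is to combine three ingredients already available: the first‑category conclusion of Theorem \ref{main1}; the fact that the trapped surfaces produced by the instability mechanism (the non‑symmetric analogue of Theorem \ref{Christodoulouinstability}, feeding into Theorem \ref{Chr91}) are \emph{quantitatively} trapped because they come from the strict inequality \eqref{Christodouloum2-m1}; and the assumed continuous dependence of the maximal future development on the initial conformal metric in the topology of $\mathcal{I}$. Let $\mathcal{U}_\varepsilon$ denote the interior of $\mathcal{E}_\varepsilon^c$ in $\mathcal{I}$. Since $\mathcal{U}_\varepsilon$ is open by definition and $\mathcal{U}_\varepsilon\subset\mathcal{E}_\varepsilon^c$, it suffices to prove that $\mathcal{U}_\varepsilon$ is dense.

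First I would record density of $\mathcal{E}_\varepsilon^c$ itself. Because $\mathcal{E}_\varepsilon\subset\mathcal{E}$ for every $\varepsilon>0$, we have $\mathcal{E}^c\subset\mathcal{E}_\varepsilon^c$; and by Theorem \ref{main1} the set $\mathcal{E}^c$ contains a countable intersection $\mathcal{G}=\bigcap_n\mathcal{O}_n$ of open dense subsets of $\mathcal{I}$, so, $\mathcal{I}$ being a Baire space in its topology (as is already used implicitly in Theorem \ref{main1}), $\mathcal{E}^c$ and hence $\mathcal{E}_\varepsilon^c$ is dense. Alternatively, density follows directly from the instability theorems of Section \ref{instabilitysection}: given $\widehat{\gs}_0\in\mathcal{I}$ and $\eta>0$ they furnish $\widehat{\gs}_1\in\mathcal{I}$ within $\eta$ of $\widehat{\gs}_0$ whose maximal future development admits a sequence of closed trapped surfaces accumulating at the singularity, and in particular one such surface $S$ sitting on an incoming cone $\Cb_{\ub_1}$ with $\ub_1<\varepsilon$, so that $\widehat{\gs}_1\in\mathcal{E}_\varepsilon^c$.

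Next I would upgrade this to density of $\mathcal{U}_\varepsilon$ by showing that the points produced above are in fact \emph{interior} points of $\mathcal{E}_\varepsilon^c$. Fix $\widehat{\gs}_1\in\mathcal{G}$ together with a closed trapped surface $S$ in its development lying on a cone $\Cb_{\ub_1}$ with $\ub_1<\varepsilon$. Because $S$ arises from the strict mass–radius inequality \eqref{Christodouloum2-m1} (in its non‑symmetric form), there is a constant $c>0$ with $\tr\chi<-c$ and $\tr\chib<-c$ on $S$. Now invoke the assumed continuous dependence of the solution on $\widehat{\gs}$: it rests on the local well‑posedness and a priori estimates underlying the rest of the paper, applied on the region $\{\ub\le\ub_1\}$, which sits at a definite distance from the singular vertex of $\Cb_0$, where the spherically symmetric background is regular. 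Hence there is $\eta'>0$ such that every $\widehat{\gs}'\in\mathcal{I}$ within $\eta'$ of $\widehat{\gs}_1$ has a maximal future development still covering a neighborhood of $\{\ub\le\ub_1\}$ on which the metric, and in particular the null expansions of the section cut out on $\Cb_{\ub_1}$, are $C^0$‑close to those for $\widehat{\gs}_1$; that section is then a closed trapped surface with both expansions $<-c/2<0$, lying before $\Cb_\varepsilon$. Thus the $\eta'$‑ball about $\widehat{\gs}_1$ lies in $\mathcal{E}_\varepsilon^c$, so $\widehat{\gs}_1\in\mathcal{U}_\varepsilon$. Since $\mathcal{G}\subset\mathcal{U}_\varepsilon$, the set $\mathcal{U}_\varepsilon$ is dense, and being open and contained in $\mathcal{E}_\varepsilon^c$ it is the desired subset.

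The main obstacle is precisely the continuous‑dependence input, which is the hypothesis flagged in the statement (``assuming the closed trapped surfaces are stable under small perturbations''). Turning it into a theorem requires: (i) local well‑posedness of the characteristic problem with data $\widehat{\gs}\in C^1_{\ub}H^{10}_\vartheta$ on $C_{u_0}$ and smooth data on $\Cb_0$, with quantitative control uniform for conformal metrics in a neighborhood of $\widehat{\gs}_1$; (ii) showing that all the perturbed maximal developments extend past $\Cb_{\ub_1}$, which is plausible because $\{\ub\le\ub_1\}$ is a compact region away from the singular vertex; and (iii) continuous dependence of the solution, and of $\tr\chi$ and $\tr\chib$ restricted to the embedded surface, on the data in these norms. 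None of this should be obstructed in principle — it is the same analysis as in the instability theorems, merely localized away from the singularity — but it is the step that must be carried out with care, and it is the reason the theorem is phrased with the stability of trapped surfaces taken as an assumption.
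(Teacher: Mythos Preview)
Your argument follows the heuristic route the paper sketches in the introduction and explicitly labels non-rigorous: take the dense $G_\delta$ set from Theorem \ref{main1}, then invoke Cauchy stability of trapped surfaces to open it up. You correctly flag the stability step as the gap, but the paper does \emph{not} assume it away; instead it proves the unconditional Theorem \ref{main2precise} by a different mechanism that bypasses continuous dependence of the solution altogether.

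The paper's route is to work with open conditions on the \emph{initial data} rather than on the solution. The sets $\mathcal{N}^c_{2-\gamma,\varepsilon}$ are defined by the strict inequality $\Omega_0^{\gamma-2}(\tu)f'(\tu;\gamma)\ge 33$ for some $\tu$ with $\tdelta(\tu;\gamma)<\varepsilon$, where $f'$ depends only on the initial shear on $C_{u_0}$. These sets are open in $L^\infty_\vartheta H^1_{\ub}$ (hence in $\mathcal{I}$) directly from the form of $f'$, with no need to compare developments. Theorem \ref{instabilitytheoremweak} then guarantees that any smooth approximation of such data, obeying the same bounds $A,E$, produces a strictly trapped $S_{\tdelta,\tu_*}$ with $\tdelta<\varepsilon$; this is what places $\mathcal{N}^c_{2-\gamma,\varepsilon}$ inside $\mathcal{E}_\varepsilon^c$. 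Denseness of $\mathcal{N}^c_{2-\gamma,\varepsilon}$ is obtained by the explicit one-parameter family of Proposition \ref{dense}. Because the threshold $\varepsilon_0$ in the instability theorem depends on the bound $A$, the paper must also stratify $\mathcal{I}$ by the open balls $B_{A_i}$ and take the union $\bigcup_i(\mathcal{N}^c_{2-\gamma,\varepsilon_{0,i}}\cap B_{A_i})$; this union is the open dense subset. Your approach, by contrast, would need a well-posedness and continuous-dependence theory for the characteristic problem at the regularity of $\mathcal{I}$ (where data are not smooth and the notion of maximal development is itself defined via approximation), which the paper deliberately avoids.
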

This in particular implies that \emph{generically} in the sense of being open and dense, there is at least one closed trapped surface surrounding the singularity in the maximal future development. Moreover, assuming Theorem \ref{main2}, Theorem \ref{main1} is true by noting that $\mathcal{E}=\bigcup_{i}\mathcal{E}_{\varepsilon_i}$ for any sequence $\varepsilon_i\to0$. We should remark that the above argument on the equivalence of Theorem \ref{main1} and \ref{main2} is not rigorous since we are not working in smooth solutions and the stability of closed trapped surfaces, which is of course true in more regular solutions by Cauchy stability, will not be proved in this paper. But this argument still illustrates the close relation between these two statements, which will be proved simultaneously in their precise forms in Theorem \ref{main1precise} and \ref{main2precise}.

\subsection{The incoming cone $\uline{C}_0$ and the singularity} In this subsection, we derive some basic properties of various quantities on $\Cb_0$ which will be used throughout the whole paper. Readers who are not familiar with the double null coordinate system should refer to the next section first, where all notations are introduced in detail.

Let $(\ub,u,\vartheta)$ be the double null coordinate system, and the optical function $u$ is defined such that $u=-r$ on $\Cb_0$. Then the vector field $\Lb$, tangent to $\Cb_{\ub}$, preserving the double null foliation, will satisfy $\Lb=\frac{\partial}{\partial u}$ and $\Lb u=1$ on $\Cb_0$. Therefore, the null expansion $\Omega\tr\chib$ along $\Lb$ on $\Cb_0$, simply equals to $-\frac{2}{|u|}$. Let $L'$ be the null vector field tangent to $C_{u}$, with $g(\Lb,L')=-2$, and $\tr\chi'$ be the null expansion relative to $L'$. Denote
\begin{align*}
h=h(u)=\frac{|u|}{2}\cdot \tr\chi'\Big|_{\Cb_0},
\end{align*}
then by the definition of Hawking mass $m=m(u)$, we have $1-\frac{2m}{r}=h$. The condition $r>2m\ge0$ implies that $0<h\le1$ for $|u|>0$. Denote also 
\begin{align*}
\Omega_0=\Omega_0(u)=\Omega\Big|_{\Cb_0}
\end{align*}
where $\Omega$ is the lapse function. $\Omega_0$ can also defined intrinsicly on $\Cb_0$ by $\nabla_{\Lb}\Lb=2(\Lb\log\Omega_0)\Lb$. Then from the null structure equation for $\Db(\Omega\tr\chi)$, we can derive 
\begin{align}\label{equ-DbOmega02h}
\frac{\partial}{\partial u}\left(\log(\Omega_0^2h)\right)=\frac{1}{|u|}\left(1-\frac{1}{h}\right).
\end{align}
This implies that $\Omega_0^2h$ is monotonically decreasing, even when the vertex of $\Cb_0$ is assumed to be regular. According to Lemma 2 in \cite{Chr99}, that $\frac{2m}{r}\nrightarrow0$ implies that 
\begin{align*}
\lim_{u\to0^-}\int_{u_0}^u\frac{1}{|u'|}\left(\frac{1}{h}-1\right)\D u'=+\infty.
\end{align*}
Consequently, $\Omega_0^2h\to0$ as $u\to0^-$. It is mentioned in \cite{Chr99cqg} that the quantity $-\log(\Omega_0^2h)$ measures the blue shift of light received by $e$. Similar to \cite{Chr99}, this fact is crucial in proving the instability of naked singularities.

Now we consider the scalar field on $\Cb_0$. Denote
\begin{align*}
\psi=\psi(u)=|u|\Lb\phi\Big|_{\Cb_0},\ \varphi=\varphi(u)=|u|L\phi\Big|_{\Cb_0}
\end{align*}
where $L$ tangent to $C_u$ preserves the double null foliation, and satisfies $g(\Lb,L)=-2\Omega^2$, or $L=\Omega^2L'$. Consider the Raychaudhuri equation on $\Cb_0$:
$$\Db(\Omega\tr\chib)=-\frac{1}{2}(\Omega\tr\chib)^2-|\Omega\chibh|^2-2(\Lb\phi)^2+2\omegab\Omega\tr\chib.$$
Plugging in $\Omega\tr\chib=-\frac{2}{|u|}$,
we have
$$\frac{\partial}{\partial u}\log\Omega=\omegab=-\frac{1}{2}\frac{\psi^2}{|u|}.$$
Integrating the above equation, we have
\begin{align}\label{Lbphi}-\log\frac{\Omega_0^2(u)}{\Omega_0^2(u_0)}=\int_{u_0}^u\frac{\psi^2}{|u'|}\D u'.\end{align}
This implies that $\Omega_0$ is monotonically decreasing no matter whether the vertex of $\Cb_0$ is singular. This is an important formula which is used to estimate $\Lb\phi$ in the a priori estimates. We also claim that $\Omega_0^2h\to0$ implies $\Omega_0\to0$ as $u\to 0^-$. If not, then $\Omega_0$ has a lower bound because it is decreasing, and then $\Omega_0^2h$ tending to zero implies that $h$ should tend to zero. However, from \eqref{equ-DbOmega02h},  we have $|u|\partial_u(\Omega_0^2h)=\Omega_0^2(h-1)$, which would imply $\Omega_0^2h\to-\infty$, a contradiction. The fact that $\Omega_0$ tends to zero monotonically is used throughout this paper. In summarize, 
\begin{lemma}\label{Omega_0to0}
On $\Cb_0$, $\Omega_0^2h$ is monotonically decreasing approaching the vertex, and if in addition the vertex of $\Cb_0$ is singular, then $\Omega_0^2h\to0$. The same conclusions hold for $\Omega_0$.
\end{lemma}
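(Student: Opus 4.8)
\emph{Proof proposal.} The plan is to read all four assertions off the two identities already derived above — the null structure identity \eqref{equ-DbOmega02h} for $\log(\Omega_0^2h)$ and the integrated Raychaudhuri identity \eqref{Lbphi} for $\Omega_0$ — together with one external input, Lemma 2 of \cite{Chr99}. So the work is a short monotonicity-plus-integration argument.

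For the monotonicity of $\Omega_0^2h$, I would note that the hypothesis $r>2m\ge0$ on $\Cb_0$ gives $0<h\le1$ for $|u|>0$, so the right-hand side $\frac{1}{|u|}\big(1-\frac{1}{h}\big)$ of \eqref{equ-DbOmega02h} is $\le0$; hence $u\mapsto\log(\Omega_0^2h)$ is non-increasing and $\Omega_0^2h$ decreases as $u\to0^-$, with no use made of the singular nature of the vertex. If the vertex is singular, integrating \eqref{equ-DbOmega02h} from $u_0$ to $u$ gives
\[
\log(\Omega_0^2h)(u)=\log(\Omega_0^2h)(u_0)-\int_{u_0}^u\frac{1}{|u'|}\Big(\frac{1}{h}-1\Big)\,\D u',
\]
and Lemma 2 of \cite{Chr99} tells us precisely that the condition $\frac{2m}{r}\nrightarrow0$ forces this integral to $+\infty$ as $u\to0^-$; therefore $\log(\Omega_0^2h)\to-\infty$, i.e.\ $\Omega_0^2h\to0^+$.

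For $\Omega_0$ itself, the computation behind \eqref{Lbphi} — the Raychaudhuri equation on $\Cb_0$ with $\Omega\tr\chib=-\frac{2}{|u|}$ — yields $\partial_u\log\Omega_0=\omegab=-\frac{1}{2}\frac{\psi^2}{|u|}\le0$, so $\Omega_0$ is monotonically decreasing irrespective of the vertex, and being positive it has a limit $\ell\ge0$ as $u\to0^-$. To rule out $\ell>0$ in the singular case I would argue by contradiction: if $\ell>0$ then $\Omega_0^2\to\ell^2>0$, so $\Omega_0^2h\to0$ forces $h\to0$; but rewriting \eqref{equ-DbOmega02h} as $\partial_u(\Omega_0^2h)=\frac{\Omega_0^2(h-1)}{|u|}$ and using $\Omega_0^2(h-1)\to-\ell^2$ shows $\partial_u(\Omega_0^2h)\le-\frac{\ell^2}{2|u|}$ near the vertex, which integrates to $\Omega_0^2h\to-\infty$, contradicting $\Omega_0^2h>0$. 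Hence $\ell=0$, i.e.\ $\Omega_0\to0$.

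The only non-elementary ingredient is the conversion of the qualitative condition $\frac{2m}{r}\nrightarrow0$ into the quantitative divergence $\int_{u_0}^u\frac{1}{|u'|}\big(\frac{1}{h}-1\big)\,\D u'\to+\infty$, supplied by Lemma 2 of \cite{Chr99} (and ultimately the structure of the $h$-equation on $\Cb_0$, which prevents $h$ from returning to $1$ fast enough to make the integral converge). With that lemma imported correctly, everything else is routine, so I do not anticipate a genuine obstacle.
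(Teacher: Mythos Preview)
Your proposal is correct and follows essentially the same route as the paper: both use \eqref{equ-DbOmega02h} with $0<h\le1$ for the monotonicity and decay of $\Omega_0^2h$ (importing Lemma~2 of \cite{Chr99} for the divergence of the integral), then the Raychaudhuri identity behind \eqref{Lbphi} for the monotonicity of $\Omega_0$, and finally the same contradiction argument via $|u|\partial_u(\Omega_0^2h)=\Omega_0^2(h-1)$ to force $\Omega_0\to0$. Your write-up is slightly more explicit in the last step, but there is no substantive difference.
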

\begin{remark}
The choice of $u=-r$ on $\Cb_0$ determines $\Omega_0$ up to a constant multiple. Therefore, throughout the paper, we will assume that $\Omega_0(u_0)\le1$ for simplicity. It follows from the monotonicity of $\Omega_0$ that $\Omega_0(u)\le1$ for all $u\in[u_0,0)$.
\end{remark}

At last, we investigate the function $\varphi$. In terms of $\psi$ and $\varphi$, the wave equation restricted on $\Cb_0$ can be written in the following form:
$$\frac{\partial}{\partial u}\varphi=-\frac{\Omega_0^2h}{|u|}\psi,$$
then we have
\begin{align*}
\varphi(u)=\varphi(u_0)-\int_{u_0}^{u}\frac{\Omega_0^2h\psi}{|u'|}\D u'\end{align*}
The integral on the right hand side can be estimated by
\begin{equation}\label{estimate-varphi}
\begin{split}
\int_{u_0}^{u}\frac{\Omega_0^2h\psi}{|u'|}\D u'&\leq\left(\int_{u_0}^{u}\frac{\Omega_0^2}{|u'|}\D u'\right)^{\frac{1}{2}}\left(\int_{u_0}^{u}\frac{\Omega_0^2|\psi|^2}{|u'|}\D u'\right)^{\frac{1}{2}}\\
&\leq\Omega_0^2(u_0)\left|\log{\frac{|u|}{|u_0|}}\right|^{\frac{1}{2}}\\
\end{split}
\end{equation}
where we use \eqref{Lbphi} to estimate $\int_{u_0}^{u}\frac{\Omega_0^2|\psi|^2}{|u'|}\D u'\le\Omega_0^2(u_0)$. This estimate is related to the fact that the lower bound assumption of $m_2-m_1$ in Theorem \ref{Chr91} is sharp in general situations.

\subsection{Formation of trapped surfaces}

It is not difficult to imagine that in order to prove the \emph{genericity} theorems, a theorem similar to Theorem \ref{Chr91} without symmetries should be proved. This is Theorem \ref{formationoftrappedsurfaces1} in this paper. It is a corollary of  Theorem \ref{formationoftrappedsurfaces}, a more general form of the theorem on the formation of trapped surface. To illustrate the main ideas and difficulties 
we only present Theorem \ref{formationoftrappedsurfaces1} here, which can also be stated as follows.
 \begin{theorem}\label{formationrough}
We assume the smooth data on $\Cb_0$ is spherically symmetric but not necessarily singular at its vertex. Suppose also that the smooth initial data on $C_{u_0}$ between two sections $S_{1}=S_{0,u_0}$ and $S_2=S_{\delta,u_0}$ where $\delta$ is a small parameter, satisfies an initial estimate\footnote{The scale invariant $\H^n$ norm is defined in Section \ref{statementexistencetheorem}.}:
 \begin{equation*}
\max\left\{\sup_{u_0\le u\le u_1}|\varphi(u)|,|u_0|\sup_{0\le\ub\le\delta}\left(\|\Omega\chih\|_{\H^7(\ub,u_0)}+\|\omega, L\phi\|_{\H^5(\ub,u_0)}\right)\right\}\le\Omega_0^2(u_0)a\left|\log\frac{|u_1|}{|u_0|}\right|
 \end{equation*}
for some $a\ge1$, where $u_1\in(u_0,0)$ is defined by
\begin{equation}\label{def-u1introduction}
\Omega^2_0(u_1)|u_1|=C^2\Omega_0^4(u_0)\delta a\left|\log\frac{|u_1|}{|u_0|}\right|
\end{equation}
and such that $\Omega_0^2(u_0)\left|\log\frac{|u_1|}{|u_0|}\right|\ge1$. Then there exists some universal large constant $C_1$ such that the smooth solution of the Einstein-scalar field equations exists for $0\le \ub\le\delta, u_0\le u\le u_1$. If in addition
\begin{equation}\label{ourm2-m1}
\inf_{\vartheta\in S^2}\int_0^{\delta}|u_0|^2(|\Omega\chih|^2+2|L\phi|^2)(\ub',u_0,\vartheta)\D\ub'\ge  17C^2\Omega_0^4(u_0)\delta a\left|\log\frac{|u_1|}{|u_0|}\right|,
\end{equation}
then $S_{\delta,u_1}$ is a closed trapped surface.
\end{theorem}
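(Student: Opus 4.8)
The plan is to run a continuity/bootstrap argument on the region $\{0\le\ub\le\delta,\ u_0\le u\le u_1\}$ in double null gauge, following the structure of Christodoulou's short-pulse method adapted to the singular incoming cone. First I would set up the scale-invariant norms $\H^n$ (introduced in Section \ref{statementexistencetheorem}) and postulate bootstrap assumptions on the connection coefficients $\Omega\chih$, $\omega$, $\Omega\tr\chi-\overline{\Omega\tr\chi}$, $\eta$, $\etab$, $\omegab$, the curvature components, and $L\phi$, $\nablas\phi$, $\Lb\phi$, all measured against the anisotropic weight $\Omega_0^2(u_0)a|\log(|u_1|/|u_0|)|$ dictated by the initial data hypothesis and by the defining relation \eqref{def-u1introduction} for $u_1$. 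The point of the specific choice of $u_1$ is that it makes $\Omega_0^2(u_1)|u_1|$ comparable to $\delta$ times the energy scale, so that the transport equations in the $u$-direction — which pick up factors of $1/|u|$ and of $\Omega_0^2h$ as seen on $\Cb_0$ in \eqref{equ-DbOmega02h}, \eqref{Lbphi}, \eqref{estimate-varphi} — can be integrated from $\Cb_0$ without losing the smallness. I would propagate $\varphi$, then $\psi$ and the scalar field, then the metric coefficients $\gs$, $\chih$, $\tr\chi$ via Raychaudhuri and the $\Db$-equations, then $\eta,\etab,\omega,\omegab$, closing the curvature estimates via the Bianchi equations paired in the usual $(\alpha,\beta)$, $(\beta,\rho,\sigma)$, $(\rho,\sigma,\betab)$, $(\betab,\alphab)$ hierarchies together with elliptic estimates on each $S_{\ub,u}$ for the Hodge systems. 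The local existence for $0\le\ub\le\delta$, $u_0\le u\le u_1$ then follows by the standard last-slice/continuation argument once the bootstrap constants are improved, which fixes the universal constant $C_1$.

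The trapped surface conclusion is the second, essentially algebraic, step. On $\Cb_0$ we have $\Omega\tr\chib=-2/|u|$, so $\Omega\tr\chib$ at $u=u_1$ is manifestly negative; what must be shown is that $\Omega\tr\chi$ at $S_{\delta,u_1}$ is also negative pointwise on $S^2$. I would integrate the Raychaudhuri equation in the $\ub$-direction along $C_{u_1}$,
\begin{equation*}
\ds\frac{\partial}{\partial\ub}(\Omega\tr\chi)=-\frac12(\Omega\tr\chi)^2-|\Omega\chih|^2-2(L\phi)^2+2\omega\,\Omega\tr\chi,
\end{equation*}
starting from the value of $\Omega\tr\chi$ at $S_{0,u_1}$, which is close to its spherically symmetric value $2h(u_1)/|u_1|\cdot(\text{bounded lapse factor})>0$ of size $O(\Omega_0^2(u_1)/|u_1|)$ up to the bootstrap error. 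The key is that the initial lower bound \eqref{ourm2-m1} on $\int_0^\delta|u_0|^2(|\Omega\chih|^2+2|L\phi|^2)\,\D\ub$, propagated from $C_{u_0}$ to $C_{u_1}$ with only multiplicative $O(1)$ distortion controlled by the a priori estimates (here the $1/|u_0|^2$ versus $1/|u_1|$ bookkeeping and the factor $17$ are chosen to absorb the constants $C$, $C_1$ and the error terms), forces $\int_0^\delta|\Omega\chih|^2+2(L\phi)^2\,\D\ub$ at $u=u_1$ to exceed the positive ``seed'' $\Omega\tr\chi|_{S_{0,u_1}}\cdot|u_1|$-type contribution; hence the right-hand side drives $\Omega\tr\chi$ negative before $\ub=\delta$, and once negative it stays negative. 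This gives $\tr\chi<0$ and $\tr\chib<0$ simultaneously on $S_{\delta,u_1}$, i.e. a closed trapped surface.

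The main obstacle I anticipate is not the trapped-surface computation but closing the a priori estimates uniformly up to $u=u_1$, because $u_1\to0^-$ as $\delta\to0$ and the weights $\Omega_0^2(u)$, $h(u)$ degenerate there: the danger is that the $1/|u|$ factors in the $u$-transport equations and in the elliptic estimates (through $1/r^2$) amplify errors logarithmically, and the relation \eqref{def-u1introduction} only barely keeps $\Omega_0^2(u_1)|u_1|$ large enough relative to $\delta a|\log(|u_1|/|u_0|)|$. Controlling $\log(\Omega_0^2h)$ — the blue-shift factor — via \eqref{equ-DbOmega02h} and ensuring that the anisotropy of $\widehat{\gs}$ (which is nontrivial precisely because $\chih\ne0$) does not destroy the Hodge/elliptic estimates on the degenerating spheres is where the real work lies; this is exactly the place where the proof departs from Christodoulou's spherically symmetric argument and where the general form Theorem \ref{formationoftrappedsurfaces} is needed. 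I would handle it by carrying the weighted norms with the sharp powers of $|u|$ and $\Omega_0$ read off from $\Cb_0$, and by using \eqref{estimate-varphi}-type Cauchy--Schwarz bounds to trade the $\Omega_0^2h/|u|$ loss against the energy bound $\int\Omega_0^2|\psi|^2/|u|\le\Omega_0^2(u_0)$.
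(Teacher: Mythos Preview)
Your two-step architecture (bootstrap existence, then Raychaudhuri for trapping) is exactly what the paper does, and your description of the trapped-surface step is essentially right: the paper propagates $\int_0^\delta |u|^2(|\Omega\chih|^2+2|L\phi|^2)\,\D\ub$ from $C_{u_0}$ to $C_{u_1}$ via the $\Db$-equations for $\Omega\chih$ and $L\phi$, then integrates $D\tr\chi'$ at $u=u_1$.

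The gap is in your curvature estimates. You propose the standard Christodoulou hierarchy $(\alpha,\beta)$, $(\beta,\rho,\sigma)$, $(\rho,\sigma,\betab)$, $(\betab,\alphab)$, and this is precisely what fails here. The $\Dbh\alpha$ equation carries a term $+2\omegab\alpha$ and the $\Db\betab$ equation carries $-\omegab\betab$; but on $\Cb_0$ one has $\omegab=-\tfrac12\psi^2/|u|$ with $\psi=|u|\Lb\phi$, and $\psi$ admits no pointwise bound whatsoever---only the integrated quantity $\int_{u_0}^u\psi^2/|u'|\,\D u'=-\log(\Omega_0^2(u)/\Omega_0^2(u_0))$ is controlled (this is \eqref{Lbphi}). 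So any energy identity containing $\omegab$ as a coefficient blows up, and the $(\alpha,\beta)$ and $(\betab,\alphab)$ pairs are unusable. The paper handles this by dropping $\alpha,\alphab$ entirely and using the Luk--Rodnianski/An--Luk renormalization: the Bianchi pairs become $(\Omega\beta-L\phi\nablas\phi,\ K-|u|^{-2},\ \check\sigma)$ and $(K-|u|^{-2},\ \check\sigma,\ \Omega\betab+\Lb\phi\nablas\phi)$, with the further shift $\Omega\betab\to\Omega\betab+\Lb\phi\nablas\phi$ needed to absorb the deadly $\Lb\phi\nablas L\phi$ source in the $D(\Omega\betab)$ equation. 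Every $\Db$-equation is rewritten so that $\omegab$ itself never appears (only its angular derivatives, which are controllable through an elliptic--transport system). This reorganization, not just the choice of weights, is what makes the bootstrap close up to $u_1$.

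A secondary point you miss is the logarithmic loss: the top-order estimates for $\nablas L\phi$ feed into $\nablas\omegab$ and $\nablas\Lb\phi$, producing a factor $|\log(|u_1|/|u_0|)|^{1/2}$ that cannot be absorbed by the smallness condition alone. The paper introduces the quantity $\mathscr{E}$ to track this loss and an additional smallness condition (your \eqref{def-u1introduction} implicitly encodes it) to close the argument.
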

The main difference of this theorem from Theorem \ref{Chr91} is that the initial data should satisfy some a priori bounds and this bound should appear in the lower bound assumption of the initial energy, and a claim on the existence of the solution is needed. This is because in spherically symmetric case, Theorem \ref{Chr91} is proved by contradiction, using monotonicity properties of the Einstein equations in spherical symmetry. These properties break down even when we consider a small perturbation of spherical symmetry. The main difficulty is now the existence of the solution deep into the place where a closed trapped surface has chance to form eventually. In order to overcome this difficulty, we should find a correct form of the a priori estimates which can only be derived by $L^2$ based energy. 
\begin{remark}
As mentioned above, without symmetries, all theorems should be proved under some a priori bounds assumed on the initial data. However, no a priori bounds are assumed on the initial data on $\Cb_0$. We will explain this in Section \ref{existencesection}.
\end{remark}

It is worth mentioning that Theorem \ref{formationrough} should be very carefully written down because only the sharp form of such a theorem can be used to prove the Theorem \ref{main1}. In particular, the dependence of the assumptions of the theorem on $u_0$ should be made explicit, because this theorem is applied for $u_0$ replaced by a sequence of $u_{0,n}\to0^-$. So we compare it with Theorem \ref{Chr91}, which is already stated in its sharp form. First, when $\delta$ is sufficiently small,
\begin{align*}
\inf_{\vartheta\in S^2}\int_0^{\delta}|u_0|^2(|\Omega\chih|^2+2|L\phi|^2)(\ub',u_0,\vartheta)\D\ub'\approx 4\Omega_0^2(u_0)(m_2-m_1)
\end{align*}
where $m_i$ is the Hawking mass of $S_i$.  Second, we have the equation
\begin{align*}
Dr=\frac{r}{2}\overline{\Omega\tr\chi}
\end{align*}
where $\overline{\Omega\tr\chi}$ is the average of $\Omega\tr\chi$ in $S_{\ub,u}$. Recalling $\frac{r}{2}\Omega\tr\chi$ takes value $\Omega_0^2h$ on $\Cb_0$, then
\begin{align*}
r_2-r_1\approx\Omega_0^2(u_0)h(u_0)\delta,
\end{align*}
where $r_i$ is the area radius of $S_i$. Therefore the lower bound conditions \eqref{Christodouloum2-m1} and \eqref{ourm2-m1} only differ by a factor $h(u_0)$. This difference is acceptable. 
At last, $r_2-r_1$ (or $\Omega_0^2(u_0)\delta$ by the above analysis) in the logarithm factor in \eqref{Christodouloum2-m1} is replaced in  \eqref{ourm2-m1} by $|u_1|$, which is the place where a closed trapped surface will form as predicted in Theorem \ref{formationrough}. 

A simple discussion may help to understand why a closed trapped surface will form at $u=u_1$ where $u_1$ is defined by \eqref{def-u1introduction}. Assume the regular solution exists to $u=u_1$ and some correct a priori estimates are obtained. From the Raychaudhuri equation along outgoing null direction, $D\tr\chi'=-\frac{1}{2}(\tr\chi)^2-|\chih|^2-2|\widehat{L}\phi|^2,$
where $\widehat{L}=\Omega^{-1}L$, we have, on $C_u$,
$$\tr\chi'-\frac{2h}{|u|}\approx-\Omega_0^{-2}(u)\int_0^\delta(|\Omega\chih|^2+2|L\phi|^2)\D\ub.$$
In order that to ensure that $\tr\chi'<0$ at $\ub=\delta$, $u=u_1$, we require
\begin{align*}
&\Omega_0^{-2}(u_1)\int_0^\delta|u_0|^2(|\Omega\chih|^2+2|L\phi|^2)(\ub,u_0,\vartheta)\D\ub\\
\approx&\Omega_0^{-2}(u_1)\int_0^\delta|u|^2(|\Omega\chih|^2+2|L\phi|^2)(\ub,u_1,\theta)\D\ub>2|u_1|.
\end{align*}
So the definition \eqref{def-u1introduction} is obtained from the lower bound \eqref{ourm2-m1}. 
\begin{remark}\label{aboutomegab-intro}We make an important remark that it is only the relation written in the vector field $L$
\begin{align*}
\Omega\chih\big|_{C_{u_1}}\approx\Omega\chih\big|_{C_{u_0}},\  L\phi\big|_{C_{u_1}}\approx L\phi\big|_{C_{u_0}}
\end{align*}
can be proven but not the relation written in $\widehat{L}=\Omega^{-1}L$. This is because only written in $L$ the equations for $\Db(\Omega\chih)$ and $\Db L\phi$ do not involve $\omegab$, whose initial value on $\Cb_0$ has no a priori bounds.
\end{remark}

The investigation of the problem of the formation of trapped surfaces when no symmetries are imposed was started in the celebrated work of Christodoulou \cite{Chr}, which also opened the path to the study of large data problem without symmetries. His theorem can be stated in a form similar to Theorem \ref{formationrough} as follows.
\begin{theorem}[Christodoulou, \cite{Chr}]\label{Chr08}
Suppose that the initial incoming null cone $\Cb_0$ is a null cone in Minkowski space, and the smooth initial data on $C_{u_0}$ satisfies
\begin{align*}
|u_0|\sup_{0\le\ub\le\delta}\sum_{i=0}^3\delta^i\|D^i\chih\|_{\H^5(\ub,u_0)}\le \delta^{-\frac{1}{2}}F
\end{align*}
for some constants $\delta$, $F$, and $\Omega\equiv1$ on $C_{u_0}\bigcup\Cb_0$. Then there exists a function $M$ of two variable such that if $\delta\le M(F,u_1)$, the smooth solution of the vacuum Einstein equations exists for $0\le\ub\le\delta$, $u_0\le u\le u_1$. If also
\begin{align}\label{Christodoulou08-m2-m1}
\inf_{\vartheta\in S^2}\int_0^{\delta}|u_0|^2|\chih|^2(\ub,u_0,\vartheta)\D\ub> 2|u_1|,
\end{align}
then $S_{\delta,u_1}$ is a closed trapped surface after modifying $M$ to a smaller function $\widetilde{M}$.
\end{theorem}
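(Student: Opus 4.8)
The plan is to run a continuity (bootstrap) argument for the vacuum Einstein equations in the double null gauge of the two cones $C_{u_0}$ and $\Cb_0$, organised around the \emph{short-pulse hierarchy}. The free datum $\chih$ on $C_{u_0}$ is large, of scale-invariant size $\delta^{-1/2}$, but confined to the $\ub$-layer $0\le\ub\le\delta$; so I would first rescale $\ub\mapsto\ub/\delta$ and work with scale-invariant norms (carrying the appropriate powers of $r\approx|u_0|$) in which $r\chih$ and all of its rescaled transversal and angular derivatives are bounded by $\delta^{-1/2}F$ --- this is precisely what the hypothesis $|u_0|\sup_{\ub}\sum_{i=0}^3\delta^i\|D^i\chih\|_{\H^5(\ub,u_0)}\le\delta^{-1/2}F$ says. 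On the slab $0\le\ub\le\delta$, $u_0\le u\le u_1$ one then makes bootstrap assumptions: each Ricci coefficient ($\tr\chi-\tfrac{2}{|u|}$, $\chih$, $\eta$, $\etab$, $\omega$, $\omegab$, $\tr\chib+\tfrac{2}{|u|}$, $\chibh$) and each null Weyl component ($\alpha,\beta,\rho,\sigma,\betab,\alphab$) is assumed to obey the $\delta$-weighted bound the hierarchy predicts, with a large constant; the goal is to recover all of these bounds with a constant depending only on $F$ and $|u_1|$, so that a standard open--closed argument gives the smooth solution on the whole slab once $\delta\le M(F,u_1)$.

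The Ricci coefficients would be estimated first, by integrating the null structure (transport) equations: the $D$-equations for $\Omega\tr\chi$, $\Omega\chih$, $\omega$, the $\Db$-equations for $\etab$, $\tr\chib$, $\chibh$, $\omegab$, and the mixed equation for $\eta$, each of which gains a power of $\delta$ on integration over the layer. The point to check here is purely structural: every nonlinear source (e.g.\ $\chih\cdot\chih$ in $D(\Omega\tr\chi)$, or quadratic terms such as $\chih\cdot\eta$ or $\chib\cdot\chi$) must sit at the place in the hierarchy where it contributes favourably. The analytic core is then the $L^2$ curvature estimate: contract the Bel--Robinson tensor of the Weyl field with the multipliers $\Lb$ and a suitably $u$-rescaled $L$ to obtain energy fluxes on $C_u$ and $\Cb_{\ub}$, commute the Bianchi equations with the angular rotation fields (whose deformation tensors are controlled by the Ricci coefficients just estimated) enough times to reach the required derivative level, and treat the anomalous components --- chiefly $\alpha$, which sits at the top of the hierarchy --- with care so that the worst nonlinear interactions still close. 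The top-order \emph{angular} derivatives of $\tr\chi$, $\chih$, $\tr\chib$, $\chibh$ and of the Gauss curvature of $S_{\ub,u}$ that the transport scheme does not see are recovered by elliptic (Codazzi/Gauss, Hodge) estimates on the $2$-spheres. Putting this together improves every bootstrap bound and closes the argument, giving the claimed existence for $0\le\ub\le\delta$, $u_0\le u\le u_1$.

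For the trapped surface one integrates the outgoing Raychaudhuri equation along $C_{u_1}$, which in vacuum reads schematically $D\tr\chi'=-\tfrac12(\tr\chi')^2-|\chih'|^2$: starting at $\ub=0$, where $\tr\chi'(0,u_1)=\tfrac{2}{|u_1|}$ (the Minkowski value), one gets $\tr\chi'(\delta,u_1)\le\tfrac{2}{|u_1|}-\int_0^\delta|\chih'|^2(\ub,u_1)\,\D\ub$. By the transport estimate the scale-invariant combination $|u|^2|\chih'|^2$ is essentially preserved along the incoming direction, so $\int_0^\delta|\chih'|^2(\ub,u_1)\,\D\ub\approx\tfrac{1}{|u_1|^2}\int_0^\delta|u_0|^2|\chih|^2(\ub,u_0)\,\D\ub$, and by hypothesis \eqref{Christodoulou08-m2-m1} the right-hand side exceeds $\tfrac{2}{|u_1|}$ uniformly in $\vartheta$; once the lower-order errors are absorbed this forces $\tr\chi'<0$ everywhere on $S_{\delta,u_1}$, which is exactly the point at which $M$ must be shrunk to $\widetilde M$. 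Since $\tr\chib<0$ on $S_{\delta,u_1}$ is inherited from $\Cb_0$ (where $\tr\chib\approx-2/r$) via the smallness of $\tr\chib+\tfrac{2}{|u|}$ established in the bootstrap, both null expansions of $S_{\delta,u_1}$ are negative, so it is a closed trapped surface. The hard part, I expect, is the energy-estimate step of the second paragraph: fixing the exact $\delta$-weight of every quantity so that it is preserved by \emph{all} the structure and Bianchi equations simultaneously, and then closing the \emph{top-order} estimates for the anomalous components without losing a power of $\delta$ --- this borderline bookkeeping is what occupies the bulk of \cite{Chr}.
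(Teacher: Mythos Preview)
This theorem is not proved in the present paper: it is Christodoulou's result from the monograph \cite{Chr}, quoted here as background in the introduction. There is therefore no ``paper's own proof'' to compare against. Your sketch is, however, a faithful high-level outline of Christodoulou's actual argument in \cite{Chr}: the short-pulse hierarchy with $\delta$-weighted scale-invariant norms, the bootstrap on Ricci coefficients via the null structure transport equations, the $L^2$ curvature estimates via the Bel--Robinson tensor contracted with $L$ and $\Lb$ and commuted with the rotation vector fields $O_i$, the elliptic Codazzi/Hodge systems for top-order angular derivatives of $\chi,\chib$, and finally the Raychaudhuri computation on $C_{u_1}$ together with the near-conservation of $|u|^2|\chih|^2$ along the incoming direction to force $\tr\chi'<0$. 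Your identification of the borderline $\delta$-bookkeeping (especially for the anomalous component $\alpha$) as the hard part is also accurate.

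One small remark: the present paper, when it proves its own formation-of-trapped-surfaces theorems (Theorems \ref{formationoftrappedsurfaces} and \ref{formationoftrappedsurfaces1}), does \emph{not} follow Christodoulou's Bel--Robinson approach but instead uses the renormalised null Bianchi equations and direct integration by parts (in the spirit of Luk--Rodnianski \cite{L-R1} and An--Luk \cite{An-Luk}), avoiding $\alpha,\alphab$ altogether in favour of $K,\check\sigma$. So if you were asked to prove the analogue in this paper's setting rather than to summarise \cite{Chr}, the curvature-estimate step would look quite different.
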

\begin{remark}
Christodoulou only proved this theorem when $u_1=-1$. But it is trivial to extend the proof when $u_1\in[u_0,0)$ is arbitrary. In addition, Christodoulou constructed a class of initial data satisfying the assumptions of this theorem in his short pulse ansatz. He confirmed that the closed trapped surfaces are evolutionary.
\end{remark}
\begin{remark}
The choice of $\delta$ in the above theorem is in particular independent of $u_0$ when $|u_0|\ge1$. This fact allowed Christodoulou to pull the initial data back to the past null infinity, and then the condition \eqref{Christodoulou08-m2-m1} has a clear physical meaning. In contrast, Theorem \ref{formationrough}, which we prove in this paper, allows us to push the initial data deep into the vertex.
\end{remark}

The main difference between Theorem \ref{Chr08} and Theorem \ref{formationrough} is the the difference between two lower bound conditions \eqref{Christodoulou08-m2-m1} and \eqref{ourm2-m1}. When $\delta$ is suffciently small, the lower bound in \eqref{Christodoulou08-m2-m1} is much large than \eqref{ourm2-m1}, and we have the following consequences: compared to the closed trapped surface forming in Theorem \ref{Chr08},  the closed trapped surface forming in Theorem \ref{formationrough}  is much smaller, and is located much closer to the vertex. Therefore, more refined a priori estimates are need to capture the growth properties of the solution approaching the vertex in Theorem \ref{formationrough}. The first extension of  Theorem \ref{Chr08} by relaxing the lower bound condition \eqref{Christodoulou08-m2-m1} to a condition similar to  \eqref{ourm2-m1} was given by An-Luk \cite{An-Luk}. They proved
\begin{theorem}[An-Luk, \cite{An-Luk}]\label{An-Luk}
Suppose that the initial incoming null cone $\Cb_0$ is a null cone in Minkowski space, and the smooth initial data on $C_{u_0}$ with $u_0=-1$ satisfies
\begin{align*}
\sup_{0\le\ub\le\delta}\|\chih\|_{H^7(\ub,u_0)}\le A^{\frac{1}{2}}
\end{align*}
and $\Omega\equiv1$ on $C_{u_0}\bigcup\Cb_0$. Then there exists a universal large constant $b_0$ such that if $b_0\le b\le A$ and $\delta A^{\frac{1}{2}}b<1$, then the smooth solution to the vacuum Einstein equations exists for $0\le\ub\le\delta, -1\le u\le-\delta A^{\frac{1}{2}}b$. Moreover, if the initial data also verify the lower bound 
\begin{align}\label{anlukm2-m1}
\inf_{\vartheta}\int_0^\delta|\chih|^2(\ub,-1,\vartheta)\D\ub\ge 4b\delta A^{\frac{1}{2}},
\end{align}
then $S_{\delta, -b\delta A^{\frac{1}{2}}}$ is a closed trapped surface.
 \end{theorem}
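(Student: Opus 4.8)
This is the theorem of An--Luk \cite{An-Luk} just stated; here is how I would prove it. Write $\lambda=\delta A^{1/2}$, so the hypotheses read $b_0\le b\le A$, $\lambda b<1$, and the asserted domain of existence is $\mathcal D=\{0\le\ub\le\delta,\ -1\le u\le-\lambda b\}$. The whole proof is a single continuity/bootstrap argument carried out in \emph{$|u|$-weighted scale-invariant norms}: to each Ricci coefficient and each curvature component one assigns a weight in $A,\delta,|u|$ equal to its expected size, the anomalously large ones being $\chih$ (of size $A^{1/2}|u_0|/|u|=A^{1/2}/|u|$, since $u_0=-1$) and the curvature $\alpha$ (of size $\sim A^{1/2}\delta^{-1}|u|^{-1}$), while the outgoing expansion is only required to satisfy $\tr\chi=\tfrac{2}{|u|}+O(\delta A|u|^{-2})$ --- a deviation large enough to let $\tr\chi$ change sign, which is what will create the trapped surface --- and $\chibh,\eta,\etab,\omega,\omegab,\beta,\rho,\sigma,\betab,\alphab$ sit lower in the hierarchy. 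The feature distinguishing this regime from the short-pulse regime of Theorem~\ref{Chr08} is that with this weighting one has $\delta\,\|\chih\|\lesssim 1/b$ \emph{uniformly} on $\mathcal D$ (it is $\lambda<1/b$ near $|u|\sim1$ and $1/b$ near $|u|\sim\lambda b$), so every nonlinear term produced by a $\ub$-integration over an interval of length $\le\delta$ carries a gain of $1/b$; this is the only source of smallness, and is why $b$ must exceed a large universal constant $b_0$. No smallness in $\delta$ alone is used.

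\textbf{A priori estimates.} Inside the bootstrap I would recover the normalized norms of all Ricci coefficients (up to $7$ angular derivatives) and of all curvature components (up to a comparable number of derivatives, measured in $|u|$-weighted $L^2$ on the null cones) in two coupled steps. First, the Ricci coefficients: integrate the null structure equations --- the $\nabla_3$ and $\nabla_4$ transport equations for $\tr\chi,\chih,\tr\chib,\chibh,\eta,\etab,\omega,\omegab$ --- and use the Gauss equation (for the Gauss curvature of $S_{\ub,u}$) and the two Codazzi equations as elliptic systems on $S_{\ub,u}$ for the top-order control of $\chih$ and $\chibh$; one also needs transport estimates comparing $\gs$ to the round metric of area radius $|u|$ and controlling the lapse $\Omega$ (which equals $1$ only on the initial cones). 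Crucially, $\chih$ and $\chibh$ are propagated through equations that do not see $\alpha,\alphab$, and $\alphab$ is never estimated from an equation containing it: this is arranged by the standard renormalization of the Bianchi system (Klainerman--Rodnianski), working with the renormalized quantities $\rhoc,\sigmac$. Second, the curvature components $\alpha,\beta,\rho,\sigma,\betab,\alphab$ and their angular derivatives are estimated by weighted $L^2$ energy estimates obtained by contracting the renormalized Bianchi equations --- commuted with $\ds$ and with the approximate rotation vector fields --- against the appropriate weight; the weights are chosen so that every bulk and boundary error term closes within the claimed hierarchy using the gain $1/b$, the initial bound $\sup_{0\le\ub\le\delta}\|\chih\|_{H^7(\ub,-1)}\le A^{1/2}$, and the flatness of $\Cb_0$ (which makes all curvature components vanish at $\ub=0$). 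Closing the bootstrap then upgrades local existence to a smooth solution on all of $\mathcal D$, which is the first assertion.

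\textbf{The trapped surface.} For the second assertion, integrate the vacuum Raychaudhuri equation $D\tr\chi'=-\tfrac12(\tr\chi')^2-|\chih|^2$ (the matter-free case of the equation recalled above) along the outgoing cone $C_{-\lambda b}$. Since $\Cb_0$ is flat, $\tr\chi'=\tfrac{2}{\lambda b}$ at $(\ub,u)=(0,-\lambda b)$, hence $\tr\chi'(\delta,-\lambda b,\vartheta)\le\tfrac{2}{\lambda b}-\int_0^\delta|\chih|^2(\ub',-\lambda b,\vartheta)\,\D\ub'+\lot$. The a priori estimates let me replace $|\chih|^2(\ub',-\lambda b,\vartheta)$ by $(\lambda b)^{-2}|\chih|^2(\ub',-1,\vartheta)$ up to controlled errors --- this is exactly where the $|u|$-weight in the norm of $\chih$ is used quantitatively --- so the lower bound $\inf_\vartheta\int_0^\delta|\chih|^2(\ub',-1,\vartheta)\,\D\ub'\ge 4b\delta A^{1/2}=4\lambda b$ yields $\int_0^\delta|\chih|^2(\ub',-\lambda b,\vartheta)\,\D\ub'\gtrsim\tfrac{4}{\lambda b}>\tfrac{2}{\lambda b}$, hence $\tr\chi'<0$ at every point of $S_{\delta,-\lambda b}$; as $\tr\chib<0$ there automatically, $S_{\delta,-\lambda b}$ is a closed trapped surface.

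\textbf{Main obstacle.} The hard part is the curvature step: designing the weights so that the bootstrap closes \emph{simultaneously} for all Ricci and curvature quantities at the borderline $\lambda b<1$ with only the single gain $1/b$ to spend, and so that every estimate survives down to $|u|=\lambda b$, where $\tr\chi$, $\alpha$ and the energies all degenerate. The delicate points are the top-order energy estimates for $\alpha$ and $\beta$ (where the anomalously large $\alpha\sim A^{1/2}\delta^{-1}|u|^{-1}$ must be absorbed), their coupling to the Codazzi estimate that recovers $H^7$ of $\chih$, and carrying the sharp power of $|u|$ through every transport and energy identity. The rest is a careful but essentially routine adaptation of the double null estimates of \cite{Chr}.
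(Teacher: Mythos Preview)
This theorem is quoted from An--Luk \cite{An-Luk} and is not proved in the present paper; the paper only remarks that its own Theorem~\ref{formationoftrappedsurfaces} (with $\phi\equiv0$) recovers it. So there is no ``paper's own proof'' to compare against directly, and your outline must be weighed against the actual An--Luk argument and the machinery this paper builds in Sections~\ref{APrioriEstimate}--\ref{FoTS}.

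Your overall architecture---bootstrap in $|u|$-weighted scale-invariant norms, transport equations for the Ricci coefficients, energy estimates for curvature, then Raychaudhuri on $C_{-\lambda b}$ for the trapped surface---is correct, and your identification of the single smallness parameter $1/b$ (rather than $\delta$) is exactly the point of the An--Luk regime. The trapped-surface step is also right.

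There is, however, a genuine confusion in your curvature step. You list $\alpha,\beta,\rho,\sigma,\betab,\alphab$ as the quantities to be estimated, then say $\alphab$ is avoided by renormalization, and finally flag ``the top-order energy estimates for $\alpha$'' as the main obstacle. In fact the An--Luk proof (and the present paper's Section~\ref{APrioriEstimate}) avoids \emph{both} $\alpha$ and $\alphab$: the renormalized Bianchi system replaces $(\rho,\sigma)$ by $(K-|u|^{-2},\check\sigma)$ and closes energy estimates for the pairs $(\beta,K-|u|^{-2},\check\sigma)$ and $(K-|u|^{-2},\check\sigma,\betab)$ only. This is not a cosmetic choice. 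With $\alpha$ scaling like $A^{1/2}\delta^{-1}|u|^{-1}$, a direct $L^2(C_u)$ estimate for $\alpha$ coupled to $\beta$ via the $(D\beta,\Db\alpha)$ pair would bring in error terms (e.g.\ $\chibh\cdot\alpha$ in $D\rho$ and $\chih\cdot\alphab$ in $\Db\rho$) that do not close at the borderline $|u|\sim\lambda b$ with only the gain $1/b$; the weight needed for $\alpha$ is incompatible with the weight needed for $\beta$. The key observation in \cite{An-Luk}---reproduced in this paper's equation for $\Db(K-|u|^{-2})$---is that after renormalizing to $K$, the linear coefficient becomes $\tfrac{3}{2}\Omega\tr\chib$ rather than $\Omega\tr\chib$, which is precisely what allows the $|u|$-weights to close down to $|u|=\lambda b$. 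So your ``main obstacle'' is misidentified: the hard part is not absorbing a large $\alpha$, but rather designing the renormalization so that $\alpha$ never appears and the $|u|$-weights in the $(K,\check\sigma,\betab)$ energy identity survive to the inner boundary.
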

Setting $\phi\equiv0$, Theorem \ref{formationrough} recovers a special case of the above theorem if we set $A^{\frac{1}{2}}=a\left|\log\frac{|u_1|}{|u_0|}\right|$, $b_0=C_1^2$ and $b=C^2$. A similar condition to $b\le A$ is not required in Theorem \ref{formationrough} but holds when $u_1$ is chosen sufficiently close to zero. However, the lower bound condition \eqref{anlukm2-m1} is even better than \eqref{ourm2-m1}, since we can choose $A$ to be a constant in Theorem \ref{An-Luk}. We should remark that the more general Theorem \ref{formationoftrappedsurfaces} can completely recover Theorem \ref{An-Luk} in vacuum. Nevertheless, when the scalar field is presented, it is not obvious that we can choose $A$, or the corresponding quantity in Theorem \ref{formationoftrappedsurfaces} to be a constant in general situations. There are at least two reasons for this. The first reason is due to a loss on the a priori estimates, which we call $\mathscr{E}$, a quantity involving derivatives of $L\phi$ and growing like $\left|\log\frac{|u_1|}{|u_0|}\right|^{\frac{1}{2}}$. We will explain this quantity in Section \ref{existencesection}. This quantity prevents us from solving the solution even deeper into the future than $u=u_1$ defined in \eqref{def-u1introduction}. Another reason is that $\varphi=|u|L\phi|_{\Cb_0}$ may be unbounded, which prevents  the energy integral
$$\int_0^\delta|u|^2|L\phi|^2(\ub,u,\vartheta)\D\ub$$ 
from being lower bounded before a closed trapped surface forms eventually. On the other hand, both the quantity $\mathscr{E}$ and the estimate \eqref{estimate-varphi} suggest that the lower bound \eqref{ourm2-m1} is sharp up to the factor $h(u_0)$ in general.

 
There are also many other extensions to Theorem \ref{Chr08}. The first extension and simplification was given by Klainerman-Rodnianski \cite{K-R}, by proving the geometric quantities obey some scale-invariant estimates relative to the parabolic scaling on null hypersurfaces. Although they only considered a finite problem, An \cite{An} extended it to an initial value problem on the past null infinity. Luk-Rodnianski proved in \cite{L-R2} among many other things that the formation of trapped surfaces theorem also holds when $\Cb_0$ is not necessarily Minkowskian, while the other existed works should assume $\Cb_0$ to be Minkowskian. Klainerman-Luk-Rodnianski proved in \cite{K-L-R} the lower bound condition \eqref{Christodoulou08-m2-m1} can be relaxed, by replacing $\inf$ by $\sup$. Similar extensions are expected to hold for Theorems \ref{formationrough} and \ref{An-Luk}. We also point out the first paper on the formation of trapped surfaces for non-vacuum equations by Yu \cite{Yu}, which considered the Einstein equations coupled with the electric-magnetic field, and a paper by the first author and Yu \cite{Li-Yu}, in which a class of asymptotically flat Cauchy data whose future development has a closed trapped surface was constructed, while the other existed works were about characteristic initial value problem.

\subsection{Instability theorems}\label{instabilitysection}

We will discuss the \emph{instability theorems} in this subsection. The \emph{instability} theorem proved in \cite{Chr99}, Theorem \ref{Christodoulouinstability} should also be generalized beyond spherical symmetry. The original proof of Theorem \ref{Christodoulouinstability} is by contradiction. By introducing the dimensionless coordinates $(t,s)$ instead of $(u,r)$, the Bondi coordinate, by
\begin{align*}
u=u_0\mathrm{e}^{-t},\ -2r=u_0\mathrm{e}^{s-t},
\end{align*}
that the conclusion of Theorem \ref{Christodoulougenericity} is false implies that there exists some $\varepsilon>0$ such that 
\begin{align}\label{contradictionassumption}
2(m(s,t)-m(0,t))\le c_1r(s,t)s\log\left(\frac{1}{s}\right),
\end{align}
for all $(s,t)\in\{s\le c_0\}\bigcap\{0\le\ub\le\varepsilon\}$. Then \eqref{contradictionassumption} is used to conclude that the opposite of \eqref{contradictionassumption} , or equivalently \eqref{Christodouloum2-m1} holds, however for some $(s_{\varepsilon},t_{\varepsilon})\in\{s\le c_0\}\bigcap\{0\le\ub\le\varepsilon\}$, allowing a perturbation on the initial data. This argument also implicitly use the fact that the solution exists in the region $\{s\le c_0\}\bigcap\{0\le\ub\le\varepsilon\}$.

If we go beyond spherical symmetry, the above arguments may break down for at least two reasons. First, a similar assumption to \eqref{contradictionassumption} is far from enough to control the whole system of the Einstein equations, in contrast to the spherically symmetric case, where the Hawking mass $m$ governs the whole system. Second, the existence region of the solution relies on some suitable a priori estimates of the solution, which is also different from the spherically symmetric case. For these two reasons, instead of \eqref{contradictionassumption}, an appropriate a priori estimate is needed, to guarantee that the solution exists deep enough into the future such that the assumptions of Theorem \ref{formationrough} eventually hold. 
\begin{remark}
It is not difficult to imagine that one should first find another robust argument of proving Theorem \ref{Christodoulougenericity}, or more precisely, Theorem \ref{Christodoulouinstability}, and then generalize it to the problem we study in this paper. We are able to find this argument and this is given in detail in a separated paper \cite{Li-Liu}. 
\end{remark}

\begin{figure}
\centering
\includegraphics [width=4 in]{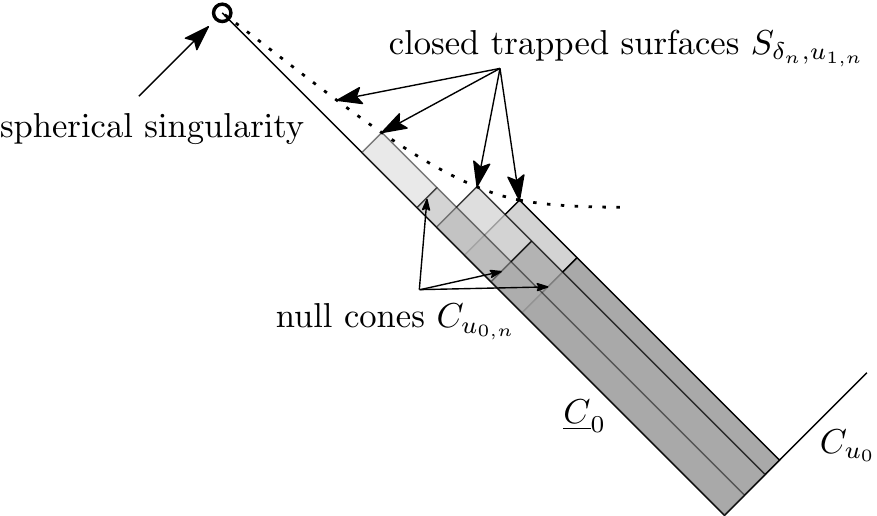}  
\caption{The conclusions of Theorem \ref{instabilityrough} and \ref{instabilityrough2}}

 \label{fig:instability} 
\end{figure}
The instability theorems we prove in this paper is Theorem \ref{instabilitytheorem} and Corollary \ref{instabilitycorollary}. We state a rough version here and depict them in Figure \ref{fig:instability}. Recall that the proof of Theorem \ref{Christodoulouinstability} in \cite{Chr99} is divided into several cases according to the asymptotic behavior of $\varphi$ as $u\to0^-$. Similarly, we divide the proof of the instability theorems in two separated cases in this paper. The first case is the following.
\begin{theorem}[Case 1 of Theorem \ref{instabilitytheorem}]\label{instabilityrough}
If $\varphi(u)$ is unbounded as $u\to0^-$ and the data on $C_{u_0}$ is smooth, then there exists two sequences $\delta_n\to0^+$, $u_{0,n}\to0^-$ such that the smooth solution of the Einstein-scalar field equaitons exists for $0\le\ub\le \delta_n$, $u_0\le u\le u_{0,n}$ and the assumptions of Theorem \ref{formationrough} hold for $\delta=\delta_n$, $u_0=u_{0,n}$ for all $n$. Consequently,  there exists a sequence $u_{1,n}\to0^-$ such that the solution remains smooth for $0\le\ub\le\delta_n$, $u_{0,n}\le u_0\le u_{1,n}$ and $S_{\delta_n,u_{1,n}}$ are closed trapped surfaces for all $n$.
\end{theorem}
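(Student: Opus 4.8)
The plan is to produce the trapped surfaces by applying Theorem~\ref{formationrough} not on the original cone $C_{u_0}$ but on a sequence of outgoing cones $C_{u_{0,n}}$ with $u_{0,n}\to0^-$, the data on which is the one induced by the evolution of the given smooth data; the spherically symmetric incoming cone $\Cb_0$ plays the role of the incoming initial cone throughout, so the symmetry hypothesis of Theorem~\ref{formationrough} is automatically met. I would pick $u_{0,n}$ along $\Cb_0$ where $|\varphi|$ is large: since $\varphi$ is unbounded as $u\to0^-$ while $\Omega_0(u)\to0$ monotonically by Lemma~\ref{Omega_0to0}, one can arrange $|\varphi(u_{0,n})|\to\infty$ together with $|\varphi(u_{0,n})|/\Omega_0(u_{0,n})\to\infty$. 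The mechanism is this: on $C_{u_{0,n}}$ the scalar field at $\ub=0$ equals $\varphi(u_{0,n})/|u_{0,n}|$, so for $\delta_n$ small the term $2|L\phi|^2$ in the lower bound \eqref{ourm2-m1} contributes roughly $2\delta_n|\varphi(u_{0,n})|^2$; on the other hand, \eqref{estimate-varphi} gives $\sup_{u_{0,n}\le u\le u_{1,n}}|\varphi(u)|\le|\varphi(u_{0,n})|+\Omega_0^2(u_{0,n})\bigl|\log\tfrac{|u_{1,n}|}{|u_{0,n}|}\bigr|^{1/2}$, which stays comparable to $|\varphi(u_{0,n})|$ and so keeps the factor $a_n\bigl|\log\tfrac{|u_{1,n}|}{|u_{0,n}|}\bigr|$ on the right side of \eqref{ourm2-m1} proportional to $|\varphi(u_{0,n})|/\Omega_0^2(u_{0,n})$. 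These two are compatible exactly because $\Omega_0(u_{0,n})\ll|\varphi(u_{0,n})|$, which is also what makes $\Omega_0^2(u_{0,n})\bigl|\log\tfrac{|u_{1,n}|}{|u_{0,n}|}\bigr|\ge1$ and $\bigl|\log\tfrac{|u_{1,n}|}{|u_{0,n}|}\bigr|\ll|\varphi(u_{0,n})|^2/\Omega_0^4(u_{0,n})$ simultaneously achievable.

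Carrying this out, I would first establish existence up to $C_{u_{0,n}}$. Invoking the a priori existence theorem of Section~\ref{existencesection} for the given smooth data on $C_{u_0}$ and $\Cb_0$: for $\delta$ small enough the smooth solution exists on $\{0\le\ub\le\delta,\ u_0\le u\le u^{\ast}(\delta)\}$ with $u^{\ast}(\delta)\to0^-$ as $\delta\to0^+$, together with scale-invariant a priori bounds for $\Omega\chih,\ \omega,\ L\phi$ on every $C_u$ in that range. Choosing a suitable sequence $\delta_n\to0^+$ and setting $u_{0,n}:=u^{\ast}(\delta_n)\to0^-$ yields both the asserted existence on $\{0\le\ub\le\delta_n,\ u_0\le u\le u_{0,n}\}$ and quantitative control of the data induced on $C_{u_{0,n}}$. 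From that control I would extract two facts: that $|u_{0,n}|\sup_{0\le\ub\le\delta_n}\bigl(\|\Omega\chih\|_{\H^7(\ub,u_{0,n})}+\|\omega,L\phi\|_{\H^5(\ub,u_{0,n})}\bigr)$ is at most of the order of $|\varphi(u_{0,n})|$ (the $\Omega\chih$ part being small since $\Omega\chih$ vanishes on $\Cb_0$, the $L\phi$ part being of the order of $|\varphi(u_{0,n})|$ since $L\phi|_{\ub=0}=\varphi(u_{0,n})/|u_{0,n}|$); and that for $\delta_n$ small $L\phi(\ub,u_{0,n},\vartheta)$ stays within a small relative error of $\varphi(u_{0,n})/|u_{0,n}|$ for $0\le\ub\le\delta_n$, so that
\begin{equation*}
\inf_{\vartheta\in S^2}\int_0^{\delta_n}|u_{0,n}|^2(|\Omega\chih|^2+2|L\phi|^2)(\ub,u_{0,n},\vartheta)\,\D\ub\ \ge\ (2-o(1))\,\delta_n\,|\varphi(u_{0,n})|^2 .
\end{equation*}

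Then I would run Theorem~\ref{formationrough} with $u_0,\delta$ replaced by $u_{0,n},\delta_n$: pick $a_n\ge1$ so that the initial estimate there holds with equality, i.e.
\begin{equation*}
\Omega_0^2(u_{0,n})\,a_n\,\Bigl|\log\tfrac{|u_{1,n}|}{|u_{0,n}|}\Bigr|=\max\Bigl\{\sup_{u_{0,n}\le u\le u_{1,n}}|\varphi(u)|,\ |u_{0,n}|\sup_{0\le\ub\le\delta_n}\bigl(\|\Omega\chih\|_{\H^7(\ub,u_{0,n})}+\|\omega,L\phi\|_{\H^5(\ub,u_{0,n})}\bigr)\Bigr\},
\end{equation*}
and let $u_{1,n}\in(u_{0,n},0)$ be determined by \eqref{def-u1introduction}. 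Since the right side of \eqref{def-u1introduction} tends to $0$ as $\delta_n\to0^+$ while $\Omega_0^2(u)|u|$ is strictly decreasing in $u$ (by \eqref{Lbphi}), these relations form a well-posed fixed-point problem for $u_{1,n}$, which closes up once \eqref{estimate-varphi} is used to control $\sup|\varphi|$ and the bounds of the previous paragraph to control the shear and lapse terms. Choosing $\delta_n$ in the resulting nonempty, shrinking admissible window, one checks in turn that $a_n\ge1$, that $\Omega_0^2(u_{0,n})\bigl|\log\tfrac{|u_{1,n}|}{|u_{0,n}|}\bigr|\ge1$, and — combining the displayed scalar-field lower bound with $|\varphi(u_{0,n})|\to\infty$, $\Omega_0(u_{0,n})\to0$ — the condition \eqref{ourm2-m1}, since
\begin{equation*}
17\,C^2\,\Omega_0^4(u_{0,n})\,\delta_n\,a_n\,\Bigl|\log\tfrac{|u_{1,n}|}{|u_{0,n}|}\Bigr|\ \lesssim\ \Omega_0^2(u_{0,n})\,\delta_n\,|\varphi(u_{0,n})|\ \le\ (2-o(1))\,\delta_n\,|\varphi(u_{0,n})|^2
\end{equation*}
for $n$ large. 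With all hypotheses of Theorem~\ref{formationrough} in place for $(\delta_n,u_{0,n})$, its conclusion gives that the smooth solution exists for $0\le\ub\le\delta_n$, $u_{0,n}\le u\le u_{1,n}$ and that $S_{\delta_n,u_{1,n}}$ is a closed trapped surface; gluing with the region $u_0\le u\le u_{0,n}$ gives the claimed existence, and $u_{1,n}\to0^-$ because $u_{1,n}$ is tied to $\delta_n\to0^+$ through \eqref{def-u1introduction}.

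The hard part will be the existence input above: no a priori bound is imposed on the data on $\Cb_0$ and $\Omega_0\to0$ degenerates the natural weights, so extracting enough control of the data induced on $C_{u_{0,n}}$ as $u_{0,n}\to0^-$ — in particular the near-constancy of $L\phi$ over $0\le\ub\le\delta_n$, which is what forces the scalar-field term of \eqref{ourm2-m1} to carry the lower bound — requires the full $L^2$-based a priori estimates of Section~\ref{existencesection} and is the technical heart of the matter. By contrast, once those estimates together with \eqref{estimate-varphi} and \eqref{def-u1introduction} are in hand, the parameter bookkeeping — solving the coupled relations for $a_n$ and $u_{1,n}$ and choosing $\delta_n$ so that every inequality of Theorem~\ref{formationrough} holds simultaneously while $\delta_n\to0^+$, $u_{0,n}\to0^-$, $u_{1,n}\to0^-$ — is elementary.
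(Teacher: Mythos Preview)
Your approach is correct and matches the paper's: pick $u_{0,n}$ where $|\varphi|$ is large, propagate to $C_{u_{0,n}}$ via the existence theorem, use that $L\phi$ stays close to $\varphi(u_{0,n})/|u_{0,n}|$ to force the lower bound \eqref{ourm2-m1}, and apply Theorem~\ref{formationrough} with $a=1$. The one place the paper is sharper is the parameter selection, which you should adopt: instead of treating $(a_n,u_{1,n},\delta_n)$ as a fixed-point system and setting $u_{0,n}:=u^{\ast}(\delta_n)$, the paper first chooses $u_{0,n}$ so that $|\varphi(u_{0,n})|=\sup_{u_0\le u'\le u_{0,n}}|\varphi(u')|$ (the running maximum), then \emph{defines} $u_{1,n}$ by $\tfrac{1}{2}\varphi_n^2=17C^2\Omega_0^4(u_{0,n})\bigl|\log\tfrac{|u_{1,n}|}{|u_{0,n}|}\bigr|$ and $\delta_n$ by \eqref{def-u1introduction}; with these explicit choices the smallness conditions on $[u_0,u_{0,n}]$ are verified directly (taking $\mathscr{F}_n=|\varphi_n|$, so $\mathcal{A}_n\le1$ for $n$ large precisely because of the running-maximum property), and your fixed-point problem and the ambiguity in $u^{\ast}(\delta)$ both disappear.
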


In the first case, nothing essential are required on the initial data on $C_{u_0}$. In the second case when $\varphi(u)$ is bounded, it is more complicated. We introduce a monotonically decreasing positive function $\tdelta=\tdelta(\tu)$ defined for $\tu\in[u_0,0)$ where $\tdelta\to0^+$ as $\tu\to0^-$. A crucial fact is that the definition of this function is independent of the initial data on $C_{u_0}$. We will not write down the explicit expression here but it is designed based on the following idea: it is expected that the assumptions of Theorem \ref{formationrough} hold for $\delta=\tdelta$, $u_0=\tu$. Now assume we have such a function $\tdelta=\tdelta(\tu)$. We define another function
\begin{align}\label{def-f-intro}
f(\tu)=\frac{1}{\tdelta(\tu)}\inf_{\vartheta\in S^2}\int_0^{\tdelta(\tu)}(|u_0|^2|\Omega\chih|^2(\ub,u_0,\vartheta)+||u_0|L\phi(\ub,u_0,\vartheta)+(\varphi(\tu)-\varphi(u_0))|^2)\D\ub.
\end{align}
Then we have the following theorem about the second case.
\begin{theorem}[Case 2 of Theorem \ref{instabilitytheorem} and Corollary \ref{instabilitycorollary}]\label{instabilityrough2}
Suppose that $\varphi(u)$ is bounded and the data on $C_{u_0}$ is smooth. We also fix some $\gamma\in(0,2)$ on which the definition of the function $\tdelta=\tdelta(\tu)$ also depends. Then there exists some $\varepsilon>0$ such that if we can find some $|\tu|<\varepsilon$ such that
\begin{align*}
\Omega_0^{\gamma-2}(\tu)f(\tu)\ge2,
\end{align*}
then the smooth solution of the Einstein-scalar field equations exists for $0\le\delta\le\tdelta$, $u_0\le u\le \tu$ and the assumptions of Theorem \ref{formationrough} hold for $\delta=\tdelta$, $u_0=\tu$. Consequently, there exists some $\tu_*$ such that the solution remains smooth for $0\le\ub\le\tdelta$, $\tu\le u\le \tu_*$ and $S_{\tdelta,\tu_*}$ is a closed trapped surface. If in addition \begin{align}\label{instabilitycondition-chih-intro}
\limsup_{\tu\to0^-}\Omega_0^{\gamma-2}(\tu)f(\tu)>2,
\end{align}
then the above conclusions hold for three sequences $\tdelta=\tdelta_n\to0^+$, $\tu=\tu_{0,n}\to0^-$, $\tu_*=\tu_{1,n}\to0^-$.
\end{theorem}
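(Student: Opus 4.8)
The plan is to reduce Theorem~\ref{instabilityrough2} (Case~2 of the instability theorem) to a suitable application of Theorem~\ref{formationrough}, the same way Case~1 is handled, but now carefully tracking the effect of the bounded-$\varphi$ hypothesis through the design of the auxiliary function $\tdelta=\tdelta(\tu)$. The key point is that $\tdelta(\tu)$ and $\tu_1=\tu_1(\tu)$ (the value of $u_1$ associated to $u_0=\tu$, $\delta=\tdelta$ via \eqref{def-u1introduction}) are defined purely in terms of the spherically symmetric data on $\Cb_0$ and the fixed $\Omega$, $L\phi$ on $C_{u_0}$, so that the initial a priori estimate hypothesis of Theorem~\ref{formationrough} at $u_0=\tu$, $\delta=\tdelta$ is \emph{automatically satisfied} once $\tu$ is close enough to $0$. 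Concretely, I would first establish a lemma: there exists $\varepsilon_0>0$ and a choice of $\tdelta(\tu)$, monotonically decreasing to $0$ as $\tu\to0^-$, such that for all $|\tu|<\varepsilon_0$ the quantities $|u_0|\|\Omega\chih\|_{\H^7}$, $|u_0|\|\omega,L\phi\|_{\H^5}$ (on $C_{u_0}$ between $\ub=0$ and $\ub=\tdelta$) and $\sup|\varphi|$ are all bounded by $\Omega_0^2(\tu)\,a\,|\log(|\tu_1|/|\tu|)|$ for some fixed $a\ge1$; here one uses that these norms on $C_{u_0}$ near $\ub=0$ are small (the data coincides with the round metric at $\ub=0$), that $\varphi$ is bounded by hypothesis, and that the right-hand side blows up because $\Omega_0^2(\tu)|\log(|\tu_1|/|\tu|)|\ge1$ is imposed while $\tdelta\to0$ forces $|\tu_1|$ to track $|\tu|$ appropriately.

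The second and main step is to turn the hypothesis $\Omega_0^{\gamma-2}(\tu)f(\tu)\ge2$ into the lower bound \eqref{ourm2-m1} of Theorem~\ref{formationrough} at $u_0=\tu$, $\delta=\tdelta(\tu)$, $u_1=\tu_*$. By Remark~\ref{aboutomegab-intro} the solution preserves, written in the vector field $L$, the approximate identities $\Omega\chih|_{C_{\tu_*}}\approx\Omega\chih|_{C_{u_0}}$ and $L\phi|_{C_{\tu_*}}\approx L\phi|_{C_{u_0}}$; but the relevant transport along $\Cb_0$ shifts $|u_0|L\phi$ by $\varphi(\tu)-\varphi(u_0)$, which is exactly why $f(\tu)$ is defined with the combination $||u_0|L\phi+(\varphi(\tu)-\varphi(u_0))|^2$ inside the integral. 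So $f(\tu)\tdelta(\tu)$ is, up to controlled error, the pointwise-in-$\vartheta$ energy $\int_0^{\tdelta}|u|^2(|\Omega\chih|^2+2|L\phi|^2)(\ub,\tu_*,\vartheta)\,\D\ub$ that enters \eqref{ourm2-m1}. Matching the right-hand side $17C^2\Omega_0^4(u_0)\delta a|\log(|u_1|/|u_0|)|$ against $\Omega_0^2(\tu)\tdelta(\tu)f(\tu)$, using the defining relation \eqref{def-u1introduction} for $\tu_1$ (which gives $\Omega_0^2(\tu_1)|\tu_1|\sim\Omega_0^4(\tu)\tdelta a|\log(|\tu_1|/|\tu|)|$), and feeding in the place $\tu_*$ where $\tr\chi'$ first becomes negative (so $\Omega_0^2(\tu_*)|\tu_*|\approx\Omega_0^2(\tu)\tdelta f(\tu)$), one sees the condition \eqref{ourm2-m1} reduces precisely to a lower bound of the form $\Omega_0^{\gamma-2}(\tu)f(\tu)\gtrsim1$; the exponent $\gamma-2$ and the freedom $\gamma\in(0,2)$ come from absorbing the $\Omega_0$-powers and the logarithmic factors in the chain of estimates. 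Choosing $\varepsilon<\varepsilon_0$ small enough that all the implicit constants work out, the hypothesis $\Omega_0^{\gamma-2}(\tu)f(\tu)\ge2$ then yields both the existence of the smooth solution for $0\le\ub\le\tdelta$, $u_0\le u\le\tu$ (from the existence half of Theorem~\ref{formationrough}) and, continuing, a $\tu_*$ with $\tu\le u\le\tu_*$, $S_{\tdelta,\tu_*}$ trapped.

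For the final ``sequences'' conclusion under \eqref{instabilitycondition-chih-intro}, I would simply note that $\limsup_{\tu\to0^-}\Omega_0^{\gamma-2}(\tu)f(\tu)>2$ produces a sequence $\tu_{0,n}\to0^-$ along which $\Omega_0^{\gamma-2}(\tu_{0,n})f(\tu_{0,n})\ge2$, and then apply the first part of the theorem at each $\tu=\tu_{0,n}$, setting $\tdelta_n=\tdelta(\tu_{0,n})\to0^+$ and obtaining $\tu_{1,n}=\tu_*(\tu_{0,n})\to0^-$ (the latter convergence because $\tu_{1,n}\in(\tu_{0,n},0)$). The genuinely delicate part of the whole argument is the first step together with the bookkeeping in the second: one must \emph{construct} $\tdelta(\tu)$ in closed form (or at least implicitly) so that simultaneously (i) the a priori estimate hypothesis of Theorem~\ref{formationrough} holds for free near the vertex, (ii) the normalization $\Omega_0^2(\tu)|\log(|\tu_1|/|\tu|)|\ge1$ holds, and (iii) the matching of \eqref{ourm2-m1} with the assumption collapses to the clean threshold $\Omega_0^{\gamma-2}(\tu)f(\tu)\ge2$ with a $\tu$-independent constant $2$. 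This requires carefully choosing how fast $\tdelta\to0$ relative to the decay of $\Omega_0$ and the blow-up of $\int_{u_0}^{\tu}|u'|^{-1}(h^{-1}-1)\,\D u'$, and is where the loss quantity $\mathscr{E}\sim|\log(|\tu_1|/|\tu|)|^{1/2}$ and the bound \eqref{estimate-varphi} on $\varphi(\tu)-\varphi(u_0)$ must be balanced against each other; I expect this balancing — rather than any single estimate — to be the main obstacle.
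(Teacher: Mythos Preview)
Your overall architecture is right, but you have blurred the two-stage structure that the paper makes explicit, and this produces several concrete confusions. The paper runs two separate applications of the existence machinery. Stage one applies the general existence theorem (Theorem~\ref{existencetheorem}, \emph{not} Theorem~\ref{formationrough}) on the slab $0\le\ub\le\tdelta$, $u_0\le u\le\tu$, with $\widetilde{\mathscr{F}}=\max\{A,\sup_{[u_0,\tu]}|\varphi|\}$; the smallness and auxiliary conditions are verified using the explicit rate $\tdelta|\tu|^{-1}\le C^{-2}\exp(-\tfrac{1}{17}C^{-2}\tOmega^{-2-\gamma})$, which kills every polynomial in $\widetilde{\mathscr{F}},\widetilde{\mathscr{W}},\widetilde{\mathscr{E}}$. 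This stage \emph{produces} induced data on $C_{\tu}$ with quantitative bounds. Stage two applies Theorem~\ref{formationoftrappedsurfaces1} with starting cone $C_{\tu}$ and $a=1$: one shows $\widetilde{\mathcal{A}}_*:=\mathcal{A}(\tdelta,\tu,\tu_*)=1$ by computing $\widetilde{\mathscr{F}}_*^{-1}\widetilde{\mathscr{F}}\widetilde{\mathscr{W}}^{1/2}\lesssim\tOmega^{\gamma}|\log\tOmega|^{1/2}\to0$, where $\widetilde{\mathscr{F}}_*=\tOmega^2|\log(|\tu_*|/|\tu|)|=\tfrac{1}{17}C^{-2}\tOmega^{-\gamma}$. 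Your first paragraph instead proposes bounding norms ``on $C_{u_0}$ between $\ub=0$ and $\ub=\tdelta$'' and later attributes the existence on $u_0\le u\le\tu$ to ``the existence half of Theorem~\ref{formationrough}''; but Theorem~\ref{formationrough} applied with starting cone $\tu$ only gives existence \emph{from} $\tu$, so you cannot use it for the first leg, and the a~priori bound you need to check is on $C_{\tu}$ (output of stage one), not on the original $C_{u_0}$.

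Two further corrections. First, the lower bound \eqref{ourm2-m1} must be verified on $C_{\tu}$, not on $C_{\tu_*}$; the paper transports $\int_0^{\tdelta}\bigl(|u_0|^2|\Omega\chih|^2+||u_0|L\phi+(\varphi(\tu)-\varphi(u_0))|^2\bigr)\D\ub$ from $C_{u_0}$ to $C_{\tu}$ by integrating \eqref{equ-DbLphi} and \eqref{equ-chih}, with error $O(\tdelta|\tu|^{-1}\widetilde{\mathscr{F}}^2\widetilde{\mathscr{W}})$ that is absorbed by halving the hypothesis (this is the origin of the threshold $2$). Second, $\tu_*$ is \emph{defined} explicitly by the pair
\[
\tOmega^{2-\gamma}=17C^2\tOmega^4\Bigl|\log\tfrac{|\tu_*|}{|\tu|}\Bigr|,\qquad \Omega_0^2(\tu_*)|\tu_*|=C^2\tOmega^4\tdelta\Bigl|\log\tfrac{|\tu_*|}{|\tu|}\Bigr|,
\]
and does not depend on $f(\tu)$ nor on where $\tr\chi'$ first vanishes. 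The first equation is precisely what makes the right side of \eqref{ourm2-m1} (with $u_0=\tu$, $a=1$) equal to $\tdelta\,\tOmega^{2-\gamma}$, so that $\Omega_0^{\gamma-2}(\tu)f(\tu)\ge2$ yields the required lower bound directly. There is no residual ``balancing of $\mathscr{E}$ against $\varphi$'' at this point: the $\mathscr{E}$-loss is handled already inside the existence theorem via \eqref{smallness1-intro}, and your anticipated obstacle is therefore misplaced. The genuinely delicate choice is writing down the two defining equations above; once that is done the argument is a clean chain of inequalities. Your treatment of the sequences conclusion is correct.
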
 

\begin{remark}
In Remark \ref{remark-isotropic}, we have mentioned that \emph{fully anisotropic instability} theorems are needed to prove the positive co-dimensionality of the exceptional set in $\mathcal{I}$. This is to say we will try to replace the $\inf$ in the definition \eqref{def-f-intro} by a $\sup$ in future works. This can be achieved by proving a correct anisotropic version of Theorem \ref{formationrough} based on the techniques introduced in the work \cite{K-L-R}.
\end{remark}

Note that in general we cannot choose $\mathcal{I}$ to be the space of all smooth initial conformal metrics from the form of condition \eqref{instabilitycondition-chih-intro}. Therefore we should study a more general class of the initial data on $C_{u_0}$ including non-smooth data. For the Cauchy problem of the Einstein equations with non-smooth initial data, a best result is a recent development which is the resolution of the bounded $L^2$ conjecture by Klainerman-Rodnianski-Szeftel (see \cite{K-R-S}), allowing the curvature of the initial metric lying in $L^2$. For the characteristic problem, the classical theory (see \cite{Ren}) only provides us a way to solve the equation with sufficiently smooth initial data. Recently, Luk-Rodnianski (see \cite{L-R1}) developed a theory to solve  by limiting argument the Einstein equations with initial $\alpha$ not even in $L^2$ relative to $\ub$ (but the other curvature components should have more regularity). A key ingredient is, they developed a renormalization technique of the Bianchi equations such that the a priori estimates can be derived for initial shear $\chih$ only lying in $L^\infty$ relative to $\ub$ (but $\chih$ should have several order of the angular derivatives) and the convergence and uniqueness can also be proved. In this paper, we also choose to work in the space of initial data such that $\chih\in L^\infty$ relative to $\ub$. It is natural in view of the work \cite{L-R1, An-Luk}, since the a priori estimates are done in terms of that norm. Moreover, we require in addition $\chih\in C^0$ relative to $\ub$ up to the intersection $C_{u_0}\cap \Cb_0$, which is the completion of $C^\infty$ in the norm $L^\infty$. It is better to prove the \emph{genericity} theorems in more regular space because this would mean the singularity is more unstable. But $L^\infty$ relative to $\ub$, or $C^0$, is the strongest norm on which the a priori estimates are based, and in which we can prove the \emph{genericity} theorems using the techniques in this paper.

In principle, one should develop a theory to solve the equations with non-smooth initial data, or apply the theory developed in \cite{L-R1} with some modifications.  However, in order not to obscure our main idea, we are not going to write down the full detail of developing such a theory. Instead, we state and prove Theorem \ref{instabilitycorollaryweak}, which can be stated roughly as follows.
\begin{theorem}\label{instabilityweak-intro}
Fix some singular initial data on $\Cb_0$ as above. Suppose that the initial data $(\widehat{\gs},\Omega,\phi)$ on $C_{u_0}$ satisfies $\chih,\omega,L\phi\in L^\infty_{\ub}H^N(S_{\ub,u_0})$ and $L\phi-\overline{L\phi}\in H^1_{\ub}H^N(S_{\ub,u_0})$ for sufficiently large integer $N$. If $\varphi(u)$ is bounded, we assume in addition
\begin{align}\label{instabilitycondition-chih-introweak}
\limsup_{\tu\to0^-}\Omega_0^{\gamma-2}(\tu)f(\tu)>32.
\end{align} 
Suppose also that we have a sequence of smooth initial data $(\widehat{\gs}_n,\Omega_n,\phi_n)$ such that $\chih_n$ and $(L\phi)_n$ converge to $\chih$ and $L\phi$ in $L^\infty_\vartheta L^2_{\ub}$. Then there exist two sequences $\delta_k\to0^+$, $u_{1,k}\to0^-$ such that, for every $k$, there exists some $N_k$ such that for all $n>N_k$, $S_{\delta_k,u_{1,k}}$ is a closed strictly trapped surface in the maximal future development of $(\widehat{\gs}_n,\Omega_n,\phi_n)$.
\end{theorem}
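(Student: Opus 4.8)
\emph{Proof proposal.} The plan is to apply the instability theorems already in hand --- Theorem \ref{instabilityrough} when $\varphi$ is unbounded and Theorem \ref{instabilityrough2} (equivalently Corollary \ref{instabilitycorollary}) when $\varphi$ is bounded --- not to the limiting data but to each smooth member $(\widehat{\gs}_n,\Omega_n,\phi_n)$ of the approximating sequence, and then to extract sequences $\delta_k\to0^+$, $u_{1,k}\to0^-$ that can be chosen \emph{independently of $n$}. The decisive structural point is that the data on $\Cb_0$ is held fixed: hence $\Omega_0$, $h$, the function $\varphi$, the monotone $\tdelta(\cdot)$, and the value $u_1$ defined implicitly by \eqref{def-u1introduction} do not move along the sequence, so only the energy functionals built from $\Omega\chih$ and $L\phi$ on $C_{u_0}$ have to be tracked under the $L^\infty_\vartheta L^2_{\ub}$ perturbation. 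The constant $32$ in \eqref{instabilitycondition-chih-introweak}, comfortably larger than the thresholds $2$ in Theorem \ref{instabilityrough2} and $17$ in \eqref{ourm2-m1}, is exactly the slack that will absorb the approximation errors, which enter quadratically in $\chih$ and $L\phi$ and also perturb the a priori constants; when $\varphi$ is unbounded the relevant lower bounds are already arbitrarily large, so no such slack is needed.

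Two elementary facts drive the reduction. First, a \emph{stability lemma}: for each fixed $\tu$ the functional producing $f(\tu)$,
\begin{align*}
(\chih,L\phi)\ \longmapsto\ \inf_{\vartheta\in S^2}\int_0^{\tdelta(\tu)}\Bigl(|u_0|^2|\Omega\chih|^2(\ub,u_0,\vartheta)+\bigl||u_0|L\phi(\ub,u_0,\vartheta)+(\varphi(\tu)-\varphi(u_0))\bigr|^2\Bigr)\D\ub,
\end{align*}
being an infimum over $\vartheta\in S^2$ of quadratic $L^2_{\ub}$ expressions plus the fixed constant $\varphi(\tu)-\varphi(u_0)$, is continuous with respect to convergence of $\chih$ and $L\phi$ in $L^\infty_\vartheta L^2_{\ub}$ --- here $L^\infty_\vartheta$, not merely pointwise in $\vartheta$, is what keeps the infimum from dropping; the same applies to the lower-bound integral in \eqref{ourm2-m1}, and $\Omega_n\to\Omega$ with the convergence of the angular normalizations lets one replace $\Omega\chih$ by $\Omega_n\chih_n$. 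Second, a \emph{uniform a priori bound}: one arranges, as part of the construction of the approximants (e.g.\ by angular mollification, which preserves $L^\infty_{\ub}H^N$ bounds and the corner compatibility $\chih|_{\ub=0}=0$), that $(\chih_n,\omega_n,(L\phi)_n)$ is bounded in $L^\infty_{\ub}H^N$ uniformly in $n$; together with the fixed singular data on $\Cb_0$ this forces the a priori-estimate hypothesis of Theorem \ref{formationrough} --- the bound on $\sup|\varphi|$ and on $\|\Omega\chih\|_{\H^7}+\|\omega,L\phi\|_{\H^5}$ --- to hold at the relevant $(\delta,u_0)$ with a constant $a$ independent of $n$. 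This is the only point at which the strong norm enters.

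With these in place the conclusion follows case by case. If $\varphi$ is unbounded, then for $\tu$ near $0^-$ the term $(\varphi(\tu)-\varphi(u_0))^2$ dominates $f(\tu)$ and is independent of $\vartheta$, so along a sequence $u_{0,k}\to0^-$ the lower bound \eqref{ourm2-m1} holds for the limiting data with arbitrarily large margin and a harmless $\inf_\vartheta$, and Theorem \ref{instabilityrough} supplies the accompanying $\delta_k\to0^+$, $u_{1,k}\to0^-$; by the stability lemma the lower bound persists for $(\widehat{\gs}_n,\Omega_n,\phi_n)$ once $n\ge N_k$, the uniform $L^\infty_{\ub}H^N$ bound keeps the a priori hypothesis valid, and Theorem \ref{formationrough} (via Theorem \ref{instabilityrough}) yields the smooth solution on $0\le\ub\le\delta_k$, $u_{0,k}\le u\le u_{1,k}$ together with the closed trapped $S_{\delta_k,u_{1,k}}$, strictness holding because our lower bounds hold strictly with a definite margin. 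If instead $\varphi$ is bounded, choose $\tu_k\to0^-$ with $\Omega_0^{\gamma-2}(\tu_k)f(\tu_k)>32$ from \eqref{instabilitycondition-chih-introweak} and set $\delta_k=\tdelta(\tu_k)$; by the stability lemma $\Omega_0^{\gamma-2}(\tu_k)f_n(\tu_k)>16>2$ for $n\ge N_k$, so Theorem \ref{instabilityrough2} / Corollary \ref{instabilitycorollary} applies to each such $(\widehat{\gs}_n,\Omega_n,\phi_n)$ at $\tu=\tu_k$: the smooth solution exists up to $u=\tu_k$, the hypotheses of Theorem \ref{formationrough} hold at $(\delta_k,\tu_k)$, and a closed (strictly) trapped $S_{\delta_k,u_{1,k}}$ forms at the $u_{1,k}$ read off from \eqref{def-u1introduction}. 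Since that $u_{1,k}$ depends only on $\Omega_0$, $\delta_k$ and the uniform constant $a$ --- not on $n$ --- a single $u_{1,k}$ serves all $n\ge N_k$, and $u_{1,k}\to0^-$ because it lies between $\tu_k$ and $0$.

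The main obstacle is the clash of topologies. The a priori estimates of Theorem \ref{formationrough} need the strong $L^\infty_{\ub}H^N$ control, which $L^\infty_\vartheta L^2_{\ub}$ convergence does not supply; one must therefore work with an approximating sequence that is simultaneously bounded in the strong norm and convergent in the weak one, and then check that every constant produced downstream --- the $a$ in the a priori hypothesis, hence the implicit $u_1$ of \eqref{def-u1introduction}, and the universal constants $C$, $C_1$ --- can be fixed uniformly along the sequence, so that $S_{\delta_k,u_{1,k}}$ is literally the same surface for all large $n$. A secondary point, which is what forces the constant $32$ rather than something close to $2$, is that the $\inf_\vartheta$ in $f$ must remain above the threshold for all $\vartheta$ at once, so the perturbation of $\chih$ and $L\phi$ must be small in $L^\infty_\vartheta$ and its quadratic contribution to $f$ bounded using the uniform upper bounds on $\chih$ and $L\phi$.
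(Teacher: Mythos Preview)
Your proposal is correct and follows essentially the same route as the paper: apply the smooth instability theorems to each approximant, use that the locations $\tdelta,\tu_*$ (and in Case 1 the whole sequence $u_{0,n},\delta_n,u_{1,n}$) depend only on the fixed $\Cb_0$ data, and use $L^\infty_\vartheta L^2_{\ub}$ stability of the functional $f$ to pass the threshold $32$ down to $\ge 2$ for all large $n$; strictness comes from the explicit margins in the proof of Theorem~\ref{formationrough}. One framing difference worth noting: the paper's precise version (Theorems~\ref{instabilitytheoremweak} and~\ref{instabilitycorollaryweak}) \emph{assumes} that the given approximants satisfy the uniform $A,E$ bounds and the metric comparability $\tfrac12|\xi|_{\gs}\le|\xi|_{\gs_n}\le 2|\xi|_{\gs}$, whereas you speak of ``arranging'' these by constructing the approximants --- since the theorem as stated hands you the sequence, the uniform strong-norm control should be recorded as an additional hypothesis rather than as something you may impose.
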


Here the strictness means that both null expansions of $S_{\delta_k,u_{1,k}}$ in the maximal future development of $(\widehat{\gs}_n,\Omega_n,\phi_n)$ are less than a negative number, which depends on $k$ but is independent of $n>N_k$. This theorem has the following implication: If the corresponding sequence of smooth solutions has a limiting spacetime such that both outgoing and incoming null expansions converge pointwisely, then the limiting spacetime has a sequence of closed trapped surfaces approaching the singularity. The limiting spacetime is clearly not unique, and even not necessarily satisfies the Einstein equations. However, any theories of solving the Einstein-scalar field equations with non-smooth initial data by limiting argument, satisfying the requirements in Theorem \ref{instabilityweak-intro} are included. Therefore, we may say that the maximal future development of $(\widehat{\gs},\Omega,\phi)$ satisfying the assumptions of Theorem \ref{instabilityweak-intro} has a sequence of closed trapped surfaces approaching the singularity. Then finally, the proof of Theorem \ref{main1} is to show that
\begin{align*}
\{\widehat{\gs}\in\mathcal{I}|\limsup_{\tu\to0^-}\Omega_0^{\gamma-2}(\tu)f(\tu)>32\}\subset\mathcal{E}
\end{align*}
contains a countably intersection of open and dense subsets of $\mathcal{I}$. This can be achieved by showing that
\begin{align*}
\{\widehat{\gs}\in\mathcal{I}|\text{ there exists some $\tu$ with $\tdelta(\tu)<\varepsilon$ such that }\Omega_0^{\gamma-2}(\tu)f(\tu)>33\}
\end{align*}
is open and dense for all $\varepsilon>0$. Indeed, the above set is a subset of $\mathcal{E}_\varepsilon$ for sufficiently $\varepsilon>0$ and this also proves Theorem \ref{main2} in view of Theorem \ref{instabilityrough2}.


\subsection{A priori estimates and the existence theorem}\label{existencesection}

To establish the formation of trapped surface theorem and the instability theorems, the most important part is prove the existence of the solution. For the initial value problem of the Einstein equations without symmetries, this is usually the most lengthy and technical part. The basic strategy is to do a priori estimates with suitable weight functions through $L^2$ based energy method, which can be traced back to the celebrated work \cite{Ch-K} of the stability of Minkowski space by Christodoulou and Klainerman. While the stability of Minkowski space is essentially a small data problem, the study of large data problem was pioneered by Christodoulou in the work \cite{Chr} on the formation of trapped surface.

There are statements on the existence of the solutions in Theorem \ref{formationrough}, \ref{instabilityrough} and \ref{instabilityrough2}. We need to prove two parts of the existence. The first part is to solve the solution from $u=u_0$ to $u=\tu$ where the assumptions of Theorem \ref{formationrough} hold, which is included in the statement of Theorem \ref{instabilityrough} and \ref{instabilityrough2}. The second part is to solve the solution from $u=\tu$ to $u=\tu_*$ where a closed trapped surface will form, which is included in the statement of Theorem \ref{formationrough}. It turns out that these two parts of the existence can be included in a general form of existence theorem, which is Theorem \ref{existencetheorem} in this paper. Let us state it without a statement on the quantitative behavior of the solution as follows.
\begin{theorem}\label{existencerough}
There exists a universal constant $C_0\ge1$ such that the following statement is true. Let $C$, $u_0$ and $u_1$ be three number such that $C\ge C_0$ and $u_0<u_1<0$. The smooth initial data on $\Cb_0$ is spherically symmetric (not necessarily singular) and on $C_{u_0}$, satisfies
 \begin{equation}\label{def-A-intro}
 \begin{split}
\mathcal{A}= \mathcal{A}(\delta,u_0,u_1):=&\max\left\{1,\sup_{u_0\le u\le u_1}\left(\mathscr{F}^{-1}|\varphi(u)|\right),\right.\\
 &\left.\mathscr{F}^{-1}|u_0|\sup_{0\le\ub\le\delta}\left(\|\Omega\chih\|_{\H^7(\ub,u_0)}+\|\omega\|_{\H^5(\ub,u_0)}+\|L\phi\|_{\H^5(\ub,u_0)}\right)\right\}<+\infty.
 \end{split}
 \end{equation}
for some function $\mathscr{F}=\mathscr{F}(\delta,u_0,u_1)\ge1$, and $\Omega_0(u_0)\le1$. Denote
\begin{align}
\label{def-E-intro}\mathscr{E}=\mathscr{E}(\delta,u_0,u_1):=&\max\left\{1, \mathscr{F}^{-1}\mathcal{A}^{-1}\||u_0|(|u_0|\nablas)L\phi\|_{\L^2_{[0,\delta]}\H^4(u_0)}\left|\log\frac{|u_1|}{|u_0|}\right|^{\frac{1}{2}}\right\}\ge1,\\
\label{def-W-intro}\mathscr{W}=\mathscr{W}(u_0,u_1):=&\max\left\{1,\logOmega\right\}\ge1.
\end{align}
 Suppose that the following {\bf{\emph{smallness conditions}}} hold:
\begin{align}
\label{smallness1-intro}\Omega_0^2(u_0)\delta|u_1|^{-1}\mathscr{E}^2\mathscr{W}&\lesssim 1,\\
\label{smallness-intro} C^2\delta|u_1|^{-1}\mathscr{F}\mathscr{W}\mathcal{A}&\le 1,
\end{align}
and the following {\bf \emph{auxiliary condition}} holds:
\begin{equation}\label{auxiliary-intro}
\begin{split}
\Omega_0^2(u_0)\Omega_0^{-2}(u_1)\delta|u_1|^{-1}\mathscr{F}\mathcal{A}&\lesssim 1.
\end{split}
\end{equation}
Then the smooth solution of the Einstein-scalar field equations exists in the region $0\le\ub\le\delta$, $u_0\le u\le u_1$.
\end{theorem}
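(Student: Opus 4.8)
The plan is to prove Theorem \ref{existencerough} by a bootstrap (continuity) argument in the incoming optical function $u$. Let $\mathbf{C}$ be a large universal constant to be fixed, and for $u_*\in(u_0,u_1]$ let $\mathcal{B}(u_*)$ be the assertion that the smooth solution exists on the slab $\{0\le\ub\le\delta,\ u_0\le u\le u_*\}$, coincides with the given spherically symmetric solution on $\{\ub=0\}$, and satisfies a collection of \emph{bootstrap bounds}: the scale-invariant $\H^n$- and $\L^2_{[0,\delta]}\H^n$-norms of the Ricci coefficients $\tr\chi-\overline{\tr\chi}$, $\Omega\chih$, $\tr\chib-\overline{\tr\chib}$, $\chibh$, $\eta$, $\etab$, $\omega$ and of the appropriately weighted $\omegab$, of the renormalized null curvature components, and of the weighted derivatives of $\phi$, are all bounded by $\mathbf{C}$ times the combinations of $|u_0|^{-1}$, $\Omega_0$, $\mathscr{F}$, $\mathcal{A}$, $\mathscr{E}$ and $\mathscr{W}$ dictated by the data hypothesis \eqref{def-A-intro} together with the transport identities on $\Cb_0$ recorded in the preceding subsections (in particular \eqref{equ-DbOmega02h}, \eqref{Lbphi}, \eqref{estimate-varphi}). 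I would first check that the set of such $u_*$ is nonempty --- by the characteristic local existence theory \cite{Ren} applied near the initial cones, since on $\{u=u_0\}$ the relevant norms are controlled by $\mathcal{A}\mathscr{F}$ and on $\{\ub=0\}$ the solution is the fixed smooth spherically symmetric one --- and relatively closed, by continuity of all the norms in $u_*$. The substance of the proof is the openness: recovering every bootstrap bound with $\mathbf{C}$ replaced by $\mathbf{C}/2$ on the same slab.

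For the openness step I would run the usual hierarchy. (i) Metric and null-lapse estimates: control $\gs$, $\gs^{-1}$, $\log\Omega$, the area radius $r$, and $\tr\chi,\tr\chib$ from the first-variation and Raychaudhuri transport equations, integrating each in its good direction and using that on $\Cb_0$ these quantities are explicit. (ii) Ricci-coefficient estimates from the null structure equations, each integrated along the direction in which it is a genuine transport equation, carrying the curvature terms along. (iii) Energy estimates for the null curvature components via the \emph{renormalized} Bianchi system in the spirit of \cite{L-R1, An-Luk}, which avoids the components that are only $L^\infty$ --- or worse --- in $\ub$; these are coupled to (ii) and closed together. (iv) Elliptic estimates on the spheres $S_{\ub,u}$ to upgrade the transported quantities to top-order angular-derivative bounds. (v) Scalar-field estimates for $\phi$, $L\phi$, $\Lb\phi$ and their angular- and $L$-commuted versions from the wave equation. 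A point I would stress throughout is that \emph{all} evolution equations must be written with the coordinate field $L$ rather than $\widehat L=\Omega^{-1}L$, so that $\omegab$ --- which has no a priori bound on $\Cb_0$ --- never appears as a forcing term; this is precisely why $\Omega\chih$ and $L\phi$, not $\chih$ and $\widehat L\phi$, are the quantities normalized in \eqref{def-A-intro}, and it forces the explicit weight $\Omega_0^2$ to be tracked in every estimate (cf. Remark \ref{aboutomegab-intro}).

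The heart of the bookkeeping is matching the nonlinear error integrals to the three displayed hypotheses. Integrating the $\ub$- and $u$-transport equations, the quadratic errors produce, after inserting the bootstrap bounds, factors of the shape $C^2\delta|u_1|^{-1}\mathscr{F}\mathscr{W}\mathcal{A}$ (the $\mathscr{W}$ from integrating the blue-shift weight $\Omega_0^{-2}$, whose logarithm is $\mathscr{W}$, and the $C^2$ from the two powers of the bootstrap constant), and smallness condition \eqref{smallness-intro} is exactly what makes these absorbable. The one place where a genuine logarithmic loss occurs is the top-order energy for $L\phi$: commuting the wave equation with the maximal number of angular derivatives and integrating the resulting $\Db$-equation against $\Omega_0^{-2}$ produces the amplification $\bigl|\log\tfrac{|u_1|}{|u_0|}\bigr|^{1/2}$ multiplying $\||u_0|(|u_0|\nablas)L\phi\|_{\L^2_{[0,\delta]}\H^4(u_0)}$ --- this is the quantity $\mathscr{E}$ of \eqref{def-E-intro} --- and smallness condition \eqref{smallness1-intro}, with its $\mathscr{E}^2$, is tailored to absorb this single loss. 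The auxiliary condition \eqref{auxiliary-intro}, carrying the factor $\Omega_0^2(u_0)\Omega_0^{-2}(u_1)$, is needed only to keep quantities normalized by $\Omega_0^{-1}(u_1)$ --- most importantly the would-be outgoing expansion $\tr\chi'$ on the last slice, which must retain a definite sign for the trapped-surface conclusion of Theorem \ref{formationrough} --- inside the range where the estimates remain valid all the way to $u=u_1$.

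The main obstacle I anticipate is the weight accounting, not any single estimate: one must carry through every transport and energy estimate with exact control of the powers of $\delta$, $|u|$, $\Omega_0$ and the logarithm so that only \eqref{smallness1-intro}, \eqref{smallness-intro} and \eqref{auxiliary-intro} --- and nothing stronger --- are consumed, and in particular so that the logarithmic loss stays confined to the $\mathscr{E}$-term and does not contaminate the rest of the hierarchy. This is delicate because $\Omega_0$ and $\Omega_0^2h$ degenerate to $0$ at the vertex (Lemma \ref{Omega_0to0}), so naive $L^\infty_{\ub}$ bounds on the shear are useless and must be replaced throughout by the scale-invariant norms in the $L$-normalized formulation; making the renormalized Bianchi system of \cite{L-R1} interact correctly with these degenerate weights, while simultaneously avoiding $\omegab$, is where the real work lies.
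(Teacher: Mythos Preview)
Your overall architecture---bootstrap in $u$, $L$-normalized null structure equations to avoid $\omegab$, renormalized Bianchi energies in the style of \cite{L-R1,An-Luk}, elliptic--transport systems for the top order---matches the paper's approach. Two points in your bookkeeping, however, are misidentified, and the first would prevent the bootstrap from closing.

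\textbf{The role of the auxiliary condition \eqref{auxiliary-intro}.} It is \emph{not} used to control $\tr\chi'$ on the last slice or to protect the trapped-surface argument; that belongs to Theorem~\ref{formationrough}, not to the existence theorem. In the paper the auxiliary condition enters already at the \emph{lower-order} $\etab$ estimate. The transport equation $\Db\etab=\cdots+(\Omega\betab+\Lb\phi\nablas\phi)$ forces one to integrate $\Omega\betab+\Lb\phi\nablas\phi$ along incoming cones, and the only available bound for that quantity is the weighted flux $\|\Omega_0^{-1}|u|^{7/2}(\Omega\betab+\Lb\phi\nablas\phi)\|_{\L^2_{[u_0,u]}}$. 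H\"older then produces a factor $\Omega_0(u_0)\Omega_0^{-1}(u)(\delta|u|^{-1}\mathscr{F}\mathcal{A})^{3/2}$, and it is exactly \eqref{auxiliary-intro} that turns this into $(\delta|u|^{-1}\mathscr{F}\mathcal{A})$. The same $\Omega_0(u_0)\Omega_0^{-1}(u)$ mismatch appears in several multiplier error terms in the curvature and top-order scalar-field energies (wherever a factor carried in $\L^2_{[u_0,u]}$ with weight $\Omega_0^{-1}$ is paired with one carried in $\L^\infty$ with weight $\Omega_0$). If you invoke \eqref{auxiliary-intro} only at the final slice, the $\etab$ estimate---and hence the whole hierarchy---does not close.

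\textbf{Where the logarithmic loss sits.} The top-order $L\phi$ energy itself is loss-free: the paper obtains the sharp bound $\||u|(|u|\nablas)L\phi\|_{\L^2_{\ub}\H^4(u)}^2\lesssim\||u_0|(|u_0|\nablas)L\phi\|_{\L^2_{\ub}\H^4(u_0)}^2+\text{(absorbable)}$. The logarithm appears one step later, when this bound is fed into the $\Db$-transport equations for the top-order derivatives of $\omegab$ and $\Omega\tr\chib$ (via the source terms $\Lb\phi\nablas L\phi$ and $\Omega\tr\chib\,\Lb\phi\nablas L\phi$), and the resulting expression is integrated in $u$: $\bigl(\int_{u_0}^{u}|u'|^{-1}\,\D u'\bigr)^{1/2}=|\log\tfrac{|u|}{|u_0|}|^{1/2}$ multiplies the initial-data norm, which is precisely the definition of $\mathscr{E}$. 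Condition \eqref{smallness1-intro} is then used not in the $L\phi$ energy but when the $\mathscr{E}$-contaminated top-order norms of $\eta,\etab,\nablas\phi$ are substituted back into the $D$-transport equations for $\Omega\chibh$ and $\Omega\tr\chib$. Getting this localization right matters for the reductive structure: the $L\phi$ energy must be closed \emph{first}, with no loss, so that the loss in the subsequent $\omegab$/$\Omega\tr\chib$ estimates is anchored to data and does not feed back.
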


The proof is by establishing the a priori estimates. One of the main difficulties of the proof is that we need to solve the equations deep near the singular vertex, approaching which the geometric quantities may growth. This has been handled first by \cite{An-Luk}, i.e., Theorem \ref{An-Luk} where $\Cb_0$ is assumed to be Minkowskian and the vertex is of course regular. On the other hand, the scalar field and the singularity bring us new difficulties. In the following, we will briefly describe the a priori estimates we want to prove. We will also explain the quantities and the conditions introduced in the above theorem. We will encounter several difficulties and mention how to overcome them.

\subsubsection{Quantitative behavior of the connection coefficients and the scalar field}
 We begin by analyzing the expected behaviors of the connection coefficients and the scalar field, which are similar to those in \cite{An-Luk}. We should keep in mind that we hope to solve the solution up to $u=u_1$ for $0\le\ub\le\delta$. By scaling consideration, we will expect that, for $\ub\in[0,\delta], u\in[u_0,u_1]$, 
\begin{align}\label{estimate-Theta-intro}
\Theta\lesssim|u|^{-1}\mathscr{F}\mathcal{A}\ \text{where}\ \Theta\in\{\Omega\chih, \Omega\tr\chi, L\phi\}.
\end{align}
$\mathscr{F}$ is chosen according to the problem at hand and $\mathcal{A}$ defined in \eqref{def-A-intro} is the scale-invariant bound for the initial data. We usually choose $\mathscr{F}\ge\displaystyle\sup_{u_0\le u\le u_1}\left(|\varphi(u)|\right)$ and hence $\mathscr{F}\mathcal{A}$ is essentially the scale-invariant bound for the initial data on $C_{u_0}$. From the Raychaudhuri equation
\begin{align*}
D(\tr\chi')=\Omega^{-2}\Theta\cdot\Theta
\end{align*}
and from that $\Omega\approx\Omega_0$ which is reasonable, $\Omega\tr\chi$ has another, in fact better estimate:
\begin{align*}
\Omega_0^2(\tr\chi'-\frac{2}{|u|})\lesssim\delta|u|^{-2}\mathscr{F}^2\mathcal{A}^2.
\end{align*}

Now we turn to the following quantites
\begin{align*}
 \Thetas\in\{\eta,\etab,\nablas\phi\},\ \underline{\Theta}\in\{\Omega\chibh,\Omega\tr\chib+\frac{2}{|u|}\},\ \Lb\phi-\frac{\psi}{|u|}.
\end{align*}
They satisfy the following equations in a schematic form:
\begin{equation}\label{equs1-intro}
\begin{split}
D(\Omega\chibh\ \text{or}\ \Omega\tr\chib+\frac{2}{|u|})&=\Omega^2(\nablas\Thetas+\Thetas\cdot\Thetas)+\text{\fbox{$\Omega\tr\chib\Theta$}}+\Theta\cdot\underline{\Theta}+\text{curvature},\\
D(\Lb\phi-\frac{\psi}{|u|})&=\Omega^2(\nablas\Thetas+\Thetas\cdot\Thetas)+\Big[\Omega\tr\chi\Lb\phi\Big]+\text{\fbox{$\Omega\tr\chib\Theta$}},\\
D(\eta\ \text{or}\ \nablas\phi)&=\text{\fbox{$\nablas\Theta$}}+\Theta\cdot\Thetas,\\
\Db\etab&=\underline{\Theta}\cdot\Thetas+\text{\fbox{$\Omega\tr\chib\eta$}}+\Big[\Lb\phi\nablas\phi\Big]+\text{curvature}.
\end{split}
\end{equation}
The boxed and square bracketed terms, which are the so-called \emph{borderline terms}, determine the behavior of the corresponding quantities respectively. First of all, it is reasonable to expect that an angular derivative $\nablas$ costs an $|u|^{-1}$, and $\Lb\phi$, $\Omega\tr\chib$, which are non-vanishing on $\Cb_0$, remain close to their initial values:
\begin{align*}
\Lb\phi\approx\frac{\psi}{|u|},\ \Omega\tr\chib\approx-\frac{2}{|u|}.
\end{align*}
Then after integration, the first three boxed terms in \eqref{equs1-intro} contribute an estimate like $\delta|u|^{-2}\mathscr{F}\mathcal{A}$ and this suggests that
\begin{align}\label{estimate-Thetas-intro}
\Thetas,\underline{\Theta}\ \text{except}\ \etab\lesssim\delta|u|^{-2}\mathscr{F}\mathcal{A}.
\end{align}
The last boxed term is then also contributes an estimate like $\delta|u|^{-2}\mathscr{F}\mathcal{A}$ after integration. Now we should pay more attentions to the square bracketed terms, containing a factor $\Lb\phi$, which behaves only like its initial value, worse than that suggested by signature consideration\footnote{The signature is a number assigned to every geometric quantity according to how many $e_3=\Omega^{-1}\Lb$ and how many $e_4=\Omega^{-1}L$ are used in its definition. See for example \cite{Ch-K} or \cite{K-R} for a detailed description. Here $\Lb\phi$ and $\Omega\chibh$ have the same signature but they have totally different behaviors.}. Moreover, its initial value $\frac{\psi}{|u|}$ is not assumed to have any a priori bounds. The resolution of this difficulty is to use the equality \eqref{Lbphi} to estimate $\Lb\phi$ in terms of the monotonically decreasing function $\Omega_0$, the restriction of $\Omega$ on $\Cb_0$. The price is that $\Omega_0$ then appears as a weight function in our estimates, and we will find out in the proof that this is not a big problem. Precisely, we will have
\begin{align*}
\int_{u_0}^u|u'||\Lb\phi|^2\D u'\lesssim\mathscr{W},\ \text{where $\mathscr{W}=\mathscr{W}(u_0,u_1)$ is defined in \eqref{def-W-intro}}.
\end{align*}
Then the contribution from the square bracketed term of the last equation is
\begin{align*}
|u||\etab|\lesssim\cdots+\int_{u_0}^u|u'||\Lb\phi||\nablas\phi|\D u'\lesssim\delta\mathscr{F}\mathcal{A}\int_{u_0}^u|u'|^{-1}|\Lb\phi|\D u'\lesssim\delta|u|^{-1}\mathscr{F}\mathscr{W}^{\frac{1}{2}}\mathcal{A},
\end{align*}
suggesting that
\begin{align}\label{estimate-etab-intro}
\etab\lesssim\delta|u|^{-2}\mathscr{F}\mathscr{W}^{\frac{1}{2}}\mathcal{A}.
\end{align}
Finally we move to the second equation. By this equation, we have
\begin{align*}
\Lb\phi-\frac{\psi}{|u|}\lesssim\delta|u|^{-2}\mathscr{F}\mathcal{A}(|\psi|+1).
\end{align*}
To eliminate $\psi$ we should integrate this estimate over $u$. For any small enough $\kappa>0$, 
\begin{align}\label{estimate-Lbphi-intro}
\left(|u|^\kappa\int_{u_0}^u|u'|^{3-\kappa}\left|\Lb\phi-\frac{\psi}{|u'|}\right|^2\D u'\right)^{\frac{1}{2}}\lesssim_\kappa\delta\mathscr{F}\mathscr{W}^{\frac{1}{2}}\mathcal{A}.
\end{align}
This is the desired estimate\footnote{A slightly better bound for $\Lb\phi-\frac{\psi}{|u|}$, with $\mathscr{W}^{\frac{1}{2}}$ dropped, is obtained in the actual proof.} for $\Lb\phi-\frac{\psi}{|u|}$. It is crucial that $\kappa=0$ leads to a logarithmic loss, which we will discuss later. 

\subsubsection{Reductive structure and the smallness condition}

The presense of the borderline terms in the equations \eqref{equs1-intro} requires that the estimates for $\Theta$ should be obtained first without knowing any informations of $\Thetas$ and $\underline{\Theta}$. For example, the equation for $\Omega\chih$ reads
\begin{align}\label{equs2-intro}
\Db(\Omega\chih)-\frac{1}{2}\Omega\tr\chib\Omega\chih=\Omega^2(\nablas\Thetas+\Thetas\cdot\Thetas)+\Theta\cdot\underline{\Theta}
\end{align}
If we assume that the estimates \eqref{estimate-Theta-intro}, \eqref{estimate-Thetas-intro} and \eqref{estimate-etab-intro} hold with the right hand side multiplied by a large constant $C^{\frac{1}{4}}$, then the above equation implies that
\begin{align*}
|u||\Omega\chih|\lesssim|u_0||\Omega\chih|\Big|_{C_{u_0}}+\Omega_0^2(u_0)( C^{\frac{1}{4}}\delta|u|^{-1}\mathscr{F}\mathscr{W}^{\frac{1}{2}}\mathcal{A}+ C^{\frac{1}{2}}\delta^2|u|^{-2}\mathscr{F}^2\mathscr{W}\mathcal{A}^2)+ C^{\frac{1}{2}}\delta|u|^{-1}\mathscr{F}^2\mathcal{A}^2.
\end{align*}
We then need the \emph{smallness condition} \eqref{smallness-intro} to obtain the desired estimate \eqref{estimate-Theta-intro} for $|u|\ge|u_1|$. Indeed, by plugging in \eqref{smallness-intro}, the contribution from the right hand side becomes
\begin{align*}
C^{-1}\mathscr{F}\mathcal{A}
\end{align*}
which is $C^{-1}$ smaller than its expected estimate \eqref{estimate-Theta-intro}. We therefore find that the right hand side of the equation \eqref{equs2-intro} contains no borderline terms, i.e., can be absorbed under the smallness condition \eqref{smallness-intro}. Having the estimates for $\Theta$, we are able to estimate $\Thetas$, $\underline{\Theta}$ and $\Lb\phi-\frac{\psi}{|u|}$ using equations \eqref{equs1-intro}. Note that we also have many terms on the right hand sides of \eqref{equs1-intro} which are not borderline and can be handled as above. Finally we will find that the estimates \eqref{estimate-Theta-intro}, \eqref{estimate-Thetas-intro} and \eqref{estimate-etab-intro} hold without $C$ and hence by a bootstrap argument, they really hold for $\ub\in[0,\delta], u\in[u_0,u_1]$.

The above argument shows that the Einstein equations have certain \emph{reductive structure}. It means that the estimates for all related quantities can be divided into several steps. Although the error terms of the equations are nonlinear and highly coupled, either they can be absorbed by a suitable smallness condition, in which case they are non-borderline, or they only involve factors which are already estimated in previous steps.

\subsubsection{Energy estimates for the curvature and the renormalization of the null Bianchi equaitons}
The estimates for the connection coefficients should be coupled with the energy estimates for the curvature components. This is because the null structure equations which we use to estimate the connection coefficients contain curvature terms viewed as sources. The energy estimates for the curvature are based on the contracted Bianchi equations. Instead of introducing the Bel-Robinson tensor as in \cite{Chr, Ch-K}, we will simply do integration by parts based on the so-called null Bianchi equations which are equations about the null components of the Weyl tensor and obtained by decompose the contracted Bianchi equations along null directions. The advantage of this procedure over introducing the Bel-Robinson tensor is that, in the case when some of the null components of the Weyl curvature cannot be controlled due to the nature of the problem, we can  still do renormalization to the null Bianchi equations such that we are able to estimate some quantities instead of the Weyl curvature. This technique was first introduced by Luk-Rodnianski in \cite{L-R1} and subsequently developed or applied in \cite{An-Luk, L-Z1, L-R2, Luk}. 

The problem we study shares many common features with the work \cite{An-Luk}. Instead of estimating the full set of the null components of the Weyl curvature, they estimated $\beta, K-|u|^{-2}, \check{\sigma}=\sigma-\frac{1}{2}\chih\wedge\chibh$ and $\betab$ where $K$ is the Gauss curvature of $S_{\ub,u}$. Recalling we couple the Einstein equations with the scalar field, the null Codazzi equation for $\divs(\Omega\chih)$ reads
\begin{align*}
\divs(\Omega\chih)=\cdots-(\Omega\beta-L\phi\nablas\phi)
\end{align*}
Then $\Omega\beta$ is expected to behave like $|u|^{-2}\mathscr{F}\mathcal{A}$. In the procedure of the energy estimates, the estimates of $\beta$ (and its derivatives) along $C_u$ should be done together with the estimates of $K-|u|^{-2}$ and $\check{\sigma}$ along $\Cb_{\ub}$. The corresponding group of the renormalized null Bianchi equations are
\begin{align*}
\Db(\Omega\beta)+\frac{1}{2}\Omega\tr\chib\Omega\beta=&-\Omega^2(\nablas K-{}^*\nablas\check{\sigma})+\cdots,\\
D\left(K-\frac{1}{|u|^{-2}}\right)=&-\divs(\Omega\beta)+\cdots,\\
D\check{\sigma}=&-\curls(\Omega\beta)+\cdots.
\end{align*}
The first equation should be written in the form $\Db(\Omega\beta)$ but not $\Db\beta$ because the latter one will involve $\omegab$, the same as the equation for $\Db(\Omega\chih)$ which is mentioned in Remark \ref{aboutomegab-intro}. We multiply the first by $|u|^2\Omega\beta$, the second by $|u|^2\Omega^2(K-|u|^{-2})$ and third by $|u|^2\Omega^2\check{\sigma}$, and sum them up. Integrating the resulting equation in the spacetime, we can then expect
\begin{align*}
\int_0^\delta\int_{S_{\ub,u}}|u|^2|\Omega\beta|^2\D\mu_{\gs}\D\ub+\int_{u_0}^u\Omega_0^2(u')|u'|^2\int_{S_{\ub,u'}}\left(\left|K-\frac{1}{|u'|^2}\right|^2+|\check{\sigma}|^2\right)\D\mu_{\gs}\D u'\lesssim\delta\mathscr{F}^2\mathcal{A}^2.
\end{align*}
Here $\Omega_0$ appears again as a weight function. The other group of the renormalized equations are the following:
\begin{align*}
\Db\left(K-\frac{1}{|u|^{2}}\right)+\frac{3}{2}\Omega\tr\chib\left(K-\frac{1}{|u|^{2}}\right)=&\divs(\Omega\betab)+\cdots,\\
\Db\check{\sigma}+\frac{3}{2}\Omega\tr\chib\check{\sigma}=&-\curls(\Omega\betab)+\cdots,\\
D(\Omega\betab)=&\Omega^2(\nablas K+{}^*\nablas\check{\sigma})+\Lb\phi\nablas L\phi+\cdots.
\end{align*}
The rule is again the $\Db$ equations do not involve $\omegab$. According to the coefficient $\frac{3}{2}$ of the second terms of the first\footnote{It is one of the most important observations in \cite{An-Luk} that this first equation can be written in the form that the coefficient of the linear part is $\frac{3}{2}$ and the terms denoted by $\cdots$ are under control. The coefficient of the linear part of the original equation is $1$.}  and second equations, we multiply the first equation by $|u|^4(K-|u|^{-2})$, the second by $|u|^4\check{\sigma}$ and the third by $|u|^4\Omega^{-2}(\Omega\betab)$, sum them up and then integrate the resulting equation. Note that we have a large factor $\Omega^{-2}$ when estimating the error terms coming from the third equation, we cannot expect that $K-|u|^{-2}$ and $\check{\sigma}$ can be estimated without $\Omega_0$ weighted. Instead, we can only hope to estimate the following
\begin{align*}
\Omega_0^2(u)\int_0^\delta\int_{S_{\ub,u}}|u|^4\left(\left|K-\frac{1}{|u'|^2}\right|^2+|\check{\sigma}|^2\right)\D\mu_{\gs}\D\ub
\end{align*}
We will see in the proof that it is bounded by $\delta^3\mathscr{F}^3\mathcal{A}^3$. On the other hand, the term $\Lb\phi\nablas L\phi$ in the third equation is a deadly term. This is because only the angular derivatives of $\Lb\phi$ have good enough behavior but $\Lb\phi$ itself behaves only like its initial value $\frac{\psi}{|u|}$ on $\Cb_0$. Therefore, we should write
\begin{align*}
\Lb\phi\nablas L\phi=D(\Lb\phi\nablas\phi)-D\Lb\phi\nablas\phi.
\end{align*}
Then the first term can be absorbed by considering the null Bianchi equation for $D(\Omega\betab+\Lb\phi\nablas\phi)$ instead of $D(\Omega\betab)$ and the second term can be controlled using the wave equation. As a consequence, only $\Omega\betab+\Lb\phi\nablas\phi$ but not $\Omega\betab$ can be controlled in the norm
\begin{align*}
\Omega_0^2(u)\int_{u_0}^u\Omega_0^{-2}(u')|u'|^4\int_{S_{\ub,u'}}|\Omega\betab+\Lb\phi\nablas\phi|^2\D\mu_{\gs}\D u'
\end{align*}
which is bounded again by $\delta^3\mathscr{F}^3\mathcal{A}^3$. We should remark here that this estimate will cause a minor difficulty. Consider the equation
\begin{align*}
\Db\etab=\cdots+(\Omega\betab+\Lb\phi\nablas\phi)-2\Lb\phi\nablas\phi
\end{align*}
where the last term contributes to the desired estimate for $\etab$ and the terms denoted by $\cdots$ are also under control. Integrating the above equation we will have
\begin{align*}
|u||\etab|\lesssim|u_0||\etab|\big|_{C_{u_0}}+\int_{u_0}^u|u'||\Omega\betab+\Lb\phi\nablas\phi|\D u'+\delta|u|^{-1}\mathscr{F}\mathscr{W}^{\frac{1}{2}}\mathcal{A}.
\end{align*}
The second term on the right hand side should be estimated as, by H\"older inequality,
\begin{align*}
\int_{u_0}^u|u'||\Omega\betab+\Lb\phi\nablas\phi|\D u'\lesssim\left(\int_{u_0}^u\Omega_0^2(u')|u'|^{-4}\D u'\right)^{\frac{1}{2}}\left(\int_{u_0}^u\Omega_0^{-2}(u')|u'|^6|\Omega\betab+\Lb\phi\nablas\phi|^2\D u'\right)^{\frac{1}{2}}.
\end{align*}
Now recalling $\Omega_0$ is monotonically decreasing, the first factor on the right is bounded by $\Omega_0(u_0)|u|^{-\frac{3}{2}}$, and the second factor is bounded by $\Omega_0^{-1}(u)\delta^{\frac{3}{2}}\mathcal{F}^{\frac{3}{2}}\mathcal{A}^{\frac{3}{2}}$ mentioned above. Then, the contribution of the term $\Omega\betab+\Lb\phi\nablas\phi$ is
\begin{align*}
\Omega_0(u_0)\Omega_0^{-1}(u)(\delta|u|^{-1}\mathscr{F}\mathcal{A})^{\frac{3}{2}}.
\end{align*}
We can see that only the condition \eqref{smallness-intro} is not enough to make sure that $\etab$ obeys the expected estimate. Therefore the \emph{auxiliary condition} \eqref{auxiliary-intro} is needed to ensure that the a priori estimates can be done.

\subsubsection{Energy estimates for the scalar field and a logarithmic loss on the top order derivatives}
Recalling we are coupling the Einstein equations with the massless scalar field $\phi$ which satisfying the wave equations, through which we can obtain estimates for the scalar field. We will not introduce the energy momentum tensor but simply do integration by parts using the wave equations written in the double null coordinate system:
\begin{align*}
\Db L\phi+\frac{1}{2}\Omega\tr\chib L\phi&=\Omega^2\Deltas\phi+2\Omega^2(\eta,\ds\phi)-\frac{1}{2}\Omega\tr\chi\Lb\phi,\\
D\ds\phi&=\nablas L\phi,\\
D\Lb\phi+\frac{1}{2}\Omega\tr\chi\Lb\phi&=\Omega^2\Deltas\phi+2\Omega^2(\etab,\ds\phi)-\frac{1}{2}\Omega\tr\chib L\phi,\\
\Db\ds\phi&=\nablas\Lb\phi.
\end{align*}
There are four equations here but the first and the third are exactly the same equation, which is the wave equation. The second and the fourth equations are simply geometric identities. Written in this form, the first two equations can be considered as a group of equations for $L\phi$, $\nablas\phi$ and the last two can be considered as a group of equations for $\nablas\phi$, $\Lb\phi$. They have the same structure as the null Bianchi equations and then the energy estimates can be done in the same way.

Now we want to discuss what will happen for the estimates for the top order derivatives of the geometric quantities. It is not surprising that the top order estimates are worse than the lower order estimates. We may think this is due to the nonlinear nature of the equations, because on a technical level, we use the transport equations to estimate the lower order derivatives, and use however the Hodge-transport coupled system to estimate the top order derivatives. And sometimes different approaches give arise to different estimates. For example, similar to \cite{An-Luk}, the estimate along $\Cb_{\ub}$ for the top order derivative of $\etab$ is worse than $\eta$. Besides this, we will have a logarithmic loss due to the presence of the scalar field. Let us look at the estimate for $\Omega\tr\chib$. Its lower order derivatives can be estimated easily by the equation
\begin{align*}
D(\Omega\tr\chib)=\Omega^2\nablas\eta+\cdots.
\end{align*}
However, integrating this equation will cause a loss of derivative and therefore we cannot use this equation to estimate the top order derivative of $\Omega\tr\chib$, which should be estimated using the equation for $\Db(\Omega\tr\chib)$ coupled with the null Codazzi equation for $\divs(\Omega\chibh)$. Because we only hope to describe here the main ideas, for convenience we suppose that we only estimate $\nablas(\Omega\tr\chib)$ itself using these equations, although we know that the lower order estimates can be obtained much more easily. Now the equation for $\Db(\Omega\tr\chib)$ reads
\begin{equation*}
\Db(\Omega\tr\chib)=-\frac{1}{2}(\Omega\tr\chib)^2+2\omegab\Omega\tr\chib-|\Omega\chibh|^2-2(\Lb\phi)^2
\end{equation*}
and applying an angular derivative $\nablas$, we have
\begin{equation}\label{equ-Dbtrchib-intro}
\begin{split}
&\Db(\Omega^{-2}\nablas(\Omega\tr\chib))+\Omega\tr\chib(\Omega^{-2}\nablas(\Omega\tr\chib))\\
&=2\Omega^{-2}\nablas\omegab(\Omega\tr\chib)-2\Omega^{-2}(\Omega\chibh)\cdot\nablas(\Omega\chibh)
-4\Omega^{-2}\Lb\phi\nablas\Lb\phi.\\
\end{split}
\end{equation}
Note that the equation is written in this form in order that it does not involve $\omegab$ itself (but involves the angular derivatives of $\omegab$). The coefficient $1$ of the second term on the left hand side suggests that the right hand side should be estimated in
\begin{align*}
\int_{u_0}^u|u'|^{3}|\cdot|\D u'.
\end{align*}
From the equation\footnote{The top order estimate for $\omegab$ should be derived using the transport-Hodge coupled system instead of this transport equation, but the term causing trouble is essentially the same.}
\begin{align*}
D\nablas\omegab&=-\Lb\phi\nablas L\phi+\cdots,
\end{align*}
the first term on the right hand side of \eqref{equ-Dbtrchib-intro} is estimated by
\begin{align*}
&\int_{u_0}^u|u'|^{3}|\Omega^{-2}\nablas\omegab(\Omega\tr\chib)|\D u'\\\lesssim&\Omega_0^{-2}(u)\delta\left(\int_{u_0}^u\frac{|\psi|^2}{|u'|}\D u'\right)^{\frac{1}{2}}\sup_{0\le\ub\le\delta}\left(\int_{u_0}^u|u'|^3\|\nablas L\phi\|_{\L^\infty(\ub,u')}^2\D u'\right)^{\frac{1}{2}}+\cdots.
\end{align*}
From the equation
\begin{align*}
D\nablas\Lb\phi&=-\frac{1}{2}\Omega\tr\chib\nablas L\phi+\cdots.
\end{align*}
The last term of \eqref{equ-Dbtrchib-intro} can be estimated similarly by
\begin{align*}
&\int_{u_0}^u|u'|^{3}|\Omega^{-2}\Lb\phi\nablas\Lb\phi|\D u'\\
\lesssim&\Omega_0^{-2}(u)\delta\left(\int_{u_0}^u\frac{|\psi|^2}{|u'|}\D u'\right)^{\frac{1}{2}}\sup_{0\le\ub\le\delta}\left(\int_{u_0}^u|u'|^3\|\nablas L\phi\|_{\L^\infty(\ub,u')}^2\D u'\right)^{\frac{1}{2}}+\cdots.
\end{align*}
We can see both estimates cannot be controlled without a logarithmic loss\footnote{Observe however that the term $\Omega\tr\chib \Lb\phi\nablas L\phi$, which we are dealing with, disappears in the equation for  $D(\nablas\omegab(\Omega\tr\chib)-2\Lb\phi\nablas\Lb\phi)$. We could have killed the logarithmic loss by using this cancellation, which unfortunately requires a loss of derivative and only works for the lower order estimates.} because $\nablas L\phi$ can only be bounded by $|u|^{-2}\mathscr{F}\mathcal{A}$. The worst thing that could happen is that this loss would accumulate and the bootstrap argument cannot be closed. Fortunately, $\nablas L\phi$ satisfies a better estimate:
\begin{align*}
|u||\nablas L\phi|\lesssim|u_0||\nablas L\phi|\big|_{C_{u_0}}+\delta|u|^{-1}\mathscr{F}^2\mathscr{W}^{\frac{1}{2}}\mathcal{A}^2.
\end{align*}
The second term on the right hand side will not suffer the logarithmic loss by first integrating along $\Cb_{\ub}$ and then plugging in the smallness condition \eqref{smallness-intro}. The loss now only depends on the initial value of $\nablas L\phi$. So we introduce the quantity $\mathscr{E}$ defined in \eqref{def-E-intro} and we will find that the top order estimates are expected to suffer a loss $\mathscr{E}$. Moreover, the next to top order derivatives of $\eta,\etab$, $\nablas\phi$ and all curvature components also suffer such a loss. The reason is that, for example, the estimate for the next to top order estimates of $\eta$, $\etab$ and $\nablas\phi$ are derived in terms of the top order derivatives of $\beta$, $\betab+\Lb\phi\nablas\phi$ and $L\phi$ respectively. Recalling in the statement of Theorem \ref{existencerough} that we will estimate up to the \engordnumber{4} order derivatives of the curvature and up to the \engordnumber{5} order derivatives of the connection coefficients and the derivatives of the scalar field function $\phi$. Then it is only the \engordnumber{3} and lower order estimates do not suffer this loss. We remark that in proving Theorem \ref{formationrough}, the formation of trapped surface, $L^\infty$ estimate without logarithmic loss for the $\nablas\eta$ is needed, therefore the estimates for up to the \engordnumber{5} order derivatives of the connection coefficients are needed.

Although only the top and next to top estimates have the loss, we find that it is convenient to only derive estimates with loss for all derivatives of $\eta,\etab,\nablas\phi$ in the proof of existence. The expected estimate obeyed by $\Thetas\in\{\eta,\etab,\nablas\phi\}$ then becomes
\begin{align*}
\Thetas\lesssim\delta|u|^{-1}\mathscr{F}\mathscr{E}\mathscr{W}^{\frac{1}{2}}\mathcal{A}.
\end{align*}
Because of this loss, an addition smallness condition should be introduced. Consider for example the equation for $D(\Omega\chibh)$ in the form
\begin{align*}
D(\Omega\chibh)=\Omega^2(\nablas\etab+\etab\cdot\etab)+\cdots
\end{align*}
The top order derivative for $\etab$ can only be bounded in a scale invariant version of the norm $L^2(C_u)$ by $\delta^{\frac{1}{2}}|u|^{-\frac{3}{2}}\mathscr{F}\mathscr{E}\mathcal{A}$, with a loss $(\delta|u|^{-1})^{\frac{1}{2}}$ as compared to its lower order estimates. Therefore, by integrating the above equation, we have,
\begin{align*}
|\Omega\chibh|\lesssim\Omega_0^2(u)(\delta^{\frac{3}{2}}|u|^{-\frac{5}{2}}\mathscr{F}\mathscr{E}\mathscr{W}^{\frac{1}{2}}\mathcal{A}+\delta^3|u|^{-4}\mathscr{F}^2\mathscr{E}^2\mathscr{W}\mathcal{A}^2)+\cdots.
\end{align*}
The desired estimate can then be obtained by introducing another \emph{smallness condition} \eqref{smallness1-intro}.
\begin{remark}
Strictly speaking, the condition \eqref{smallness1-intro} has nothing to do with the \emph{smallness} because we do not need to require that the quantity on the left hand side is very small. But we still name it the \emph{smallness contidion} because it is used very frequently as the real \emph{smallness condition} \eqref{smallness-intro}.
An unsatisfactory consequence is that when considering the next to top order estimate for $\Omega\chibh$, the first term $\Omega^2\nablas\etab$ on the right hand side of the equation for $D(\Omega\chibh)$ becomes \emph{borderline}. Nevertheless we can also close the estimates because the top order estimate for $\etab$ do not contain borderline terms involving $\Omega\chibh$. \end{remark}

We then close the discussion on the proof of the existence theorem.

\subsection{Outline of the paper} The remainder of this paper is organized as follows. In Section \ref{preliminary}, we introduce the double null coordinate system, and the notations and equations adapted to this coordinate system we will use in the proof. In Section \ref{statementexistencetheorem}, we will state the existence theorem, whose proof is then given in Section \ref{APrioriEstimate}. The formation of trapped surfaces theorems are proven in Section \ref{FoTS} and the instability theorems are proven in Section \ref{Secinstability}. In the last section, Section \ref{proofofmaintheorem}, we will state and prove the precise version of the Theorem \ref{main1}, the main result of this paper.

\subsection{Acknowledgement} Both authors are supported by NSFC 11501582, 11521101. The first author would like to thank Jonathan Luk for valuable discussions on this topic and suggestions on the first version of the manuscript. He would also like to thank Xinliang An for showing his interest in this work. Both authors would like to thank Xi-Ping Zhu for his supports, discussions and continuous encouragements.

\section{Double null coordinate system and equations}\label{preliminary}

\subsection{Double null coordinate system}\label{doublenull}

We first introduce the geometric setup. We use $(M,g)$ to denote the underlying space-time (which will be the solution) with the Lorentzian metric and use $\nabla$ to denote the Levi-Civita connection of the metric $g$. Let $u$ and $\ub$ be two optical functions on $M$, that is $$g(\nabla u,\nabla u)=g(\nabla\ub,\nabla\ub)=0.$$
 The space-time $M$ is foliated by the level sets of $\ub$ and $u$ respectively, and the functions $u$ and $\ub$ are required to increase towards the future. We use $C_u$ to denote the outgoing null hypersurfaces which are the level sets of $u$ and use ${\Cb}_{\ub}$ to denote the incoming null hypersurfaces which are the level sets of $\ub$. We denote the intersection $S_{\ub,u}=\Cb_{\ub} \cap C_u$, which is a  space-like two-sphere.

We define a positive function $\Omega$ by the formula
$$ \Omega^{-2}=-2g(\nabla\ub,\nabla u).$$
  We then define the normalized null pair $(\Lbh, \Lh)$ by
  $$\Lbh=-2\Omega\nabla\ub,\ \Lh=-2\Omega\nabla u,$$ which also be denoted by $e_3,e_4$ respectively, and define one another null pair
  $$\Lb=\Omega \Lbh,\ L=\Omega \Lh.$$
 The flows generated by $\Lb$ and $L$ preserve the double null foliation. On a given two sphere $S_{\ub, u}$ we choose a local frame $
   {e_1,e_2}$. We call ${e_1, e_2, \Lbh, \Lh}$ a \emph{null frame}.  As a convention, throughout the paper, we use capital  Latin letters $A, B, C, \cdots$ to denote an index from $1$ to $2$ and Greek letters $\alpha,\beta,\cdots$ to denote an index from $1$ to $4$, e.g. $e_A$ denotes either $e_1$ or $e_2$.

We define $\psi$ to be a tangential tensorfield if $\psi$ is \textit{a priori} a tensorfield defined on the space-time $M$ and all the possible contractions of $\psi$ with either $\Lbh$ or $\Lh$ are zeros. We use $D\psi$ and $\Db\psi$ to denote the projection to $S_{\ub,u}$ of usual Lie derivatives $\mathcal{L}_L\psi$ and $\mathcal{L}_{\Lb}\psi$. The space-time metric $g$ induces a Riemannian metric $\gs$ on $S_{\ub,u}$ and $\epsilons$ is the volume form of $\gs$ on $S_{\ub,u}$. We use $\ds$ and $\nablas$ to denote the exterior differential and covariant derivative (with respect to $\gs$) on $S_{\ub,u}$.

Using these two optical functions, we can introduce a local coordinate system $(\ub,u,\vartheta^A)$ on $M$. In such a coordinate system, the Lorentzian metric $g$ takes the following form
\begin{align*}
g=-2\Omega^2(\D\ub\otimes\D u+\D u\otimes\D \ub)+\gs_{AB}(\D\vartheta^A-b^A\D\ub)\otimes(\D\vartheta^B-b^B\D\ub),
\end{align*}
such that $\Lb$ and $L$ can be computed as $\Lb=\partial_u+b^A\partial_{\vartheta^A}$ and $L=\partial_{\ub}$. We can also require that $b^A$ vanishes on some specific null cone.

We recall the null decomposition of the connection coefficients using the null frame $(e_1,e_2,\Lbh,\Lh)$ as follows:
\begin{align*}
\chi_{AB}&=g(\nabla_A\Lh,e_B),\quad \eta_A=-\frac{1}{2}g(\nabla_{\Lbh}e_A,\Lh),\quad \omega=\frac{1}{2}\Omega g(\nabla_{\Lh}\Lbh,\Lh),\\
\chib_{AB}&=g(\nabla_A\Lbh,e_B), \quad\etab_A=-\frac{1}{2}g(\nabla_{\Lh}e_A,\Lbh), \quad\omegab=\frac{1}{2}\Omega g(\nabla_{\Lbh}\Lh,\Lbh).
\end{align*}
They are all tangential tensorfields. We also define the following normalized quantities:
$$\chi'=\Omega^{-1}\chi,\ \chib'=\Omega^{-1}\chi,\ \zeta=\frac{1}{2}(\eta-\etab).$$
 The trace of $\chi$ and $\chib$ are denoted by
 $$\tr\chi = \gs^{AB}\chi_{AB},\ \tr\chib = \gs^{AB}\chib_{AB}.$$
  By definition, we can check directly the following useful identities :
  $$\ds\log\Omega=\frac{1}{2}(\eta+\etab),\ D\log\Omega=\omega,\ \Db\log\Omega=\omegab.$$

We can also define the null components of the Weyl curvature tensor
{\bf W}:
\begin{align*}
\alpha_{AB}&=\mathbf{W}(e_A,\Lh,e_B,\Lh),\quad\beta_A=\frac{1}{2}\mathbf{W}(e_A,\Lh,\Lbh,\Lh),\quad\rho=\frac{1}{4}\mathbf{W}(\Lbh,\Lh,\Lbh,\Lh),\\
\alphab_{AB}&=\mathbf{W}(e_A,\Lbh,e_B,\Lbh),\quad\betab_A=\frac{1}{2}\mathbf{W}(e_A,\Lbh,\Lbh,\Lh),\quad\sigma=\frac{1}{4}\mathbf{W}(\Lbh,\Lh,e_A,e_B)\epsilons^{AB}.
\end{align*}

\subsection{Equations}
Before we write down the equations we will use, we first define several kinds of contraction of the tangential tensorfields. For a  symmetric tangential 2-tensorfield $\theta$, we use $\widehat{\theta}$ and $\tr\theta$ to denote the trace-free part and trace of $\theta$ (with respect to $\gs$). If $\theta$ is trace-free, $\Dh\theta$ and $\Dbh\theta$ refer to the trace-free part of $D\theta$ and $\Db\theta$. Let $\xi$ be a tangential $1$-form. We define some products and operators for later use. For the products, we define 
$$(\theta_1,\theta_2)=\gs^{AC}\gs^{BD}(\theta_1)_{AB}(\theta_2)_{CD},\ \ (\xi_1,\xi_2)=\gs^{AB}(\xi_1)_A(\xi_2)_B.$$ This also leads to the following norms 
$$|\theta|^2=(\theta,\theta),\ |\xi|^2=(\xi,\xi).$$ We then define the contractions 
\begin{align*}(\theta\cdot\xi)_A=\theta_A{}^B\xi_B&,\ (\theta_1\cdot \theta_2)_{AB}=(\theta_1)_A{}^C(\theta_2)_{CB},\\
 \theta_1 \wedge\theta_2=\epsilons^{AC}\gs^{BD} (\theta_1)_{AB}(\theta_2)_{CD}&,\ \xi_1\tensor \xi_2=\xi_1\otimes\xi_2+\xi_2\otimes\xi_1-(\xi_1,\xi_2)\gs.
\end{align*} The Hodge dual for $\xi$ is defined by $\prescript{*}{}\xi_A=\epsilons_A{}^C\xi_C$. For the operators, we define 
$$\divs\xi=\nablas^A\xi_A,\ \curls\xi_A=\epsilons^{AB}\nablas_A\xi_B,\ (\divs\theta)_A=\nablas^B\theta_{AB}.$$ We finally define a traceless operator $$(\nablas\tensor\xi)_{AB}=(\nablas\xi)_{AB}+(\nablas\xi)_{BA}-\divs\xi \,\gs_{AB}.$$


The followings are the \textit{null structure equations} (where  $K$ is the Gauss curvature of $S_{\ub,u}$):\footnote{See Chapter 1 of \cite{Chr} for the derivation of these equations in the vacuum case.}
\begin{align*}
\Db(\Omega\tr\chib)&=-\frac{1}{2}(\Omega\tr\chib)^2+2\omegab\Omega\tr\chib-|\Omega\chibh|^2-2(\Lb\phi)^2,\\
D\tr\chi'&=-\frac{1}{2}(\Omega\tr\chi')^2-|\chih|^2-2(\Lh\phi)^2,\\
\Dbh(\Omega\chih)&=\Omega^2(\nablas \tensor \eta + \eta \tensor \eta +\frac{1}{2}\tr\chib\chih-\frac{1}{2}\tr\chi \chibh+\ds\phi\tensor\ds\phi),\\
\Dh(\Omega\chibh)&=\Omega^2(\nablas \tensor \etab + \etab \tensor \etab +\frac{1}{2}\tr\chi\chibh-\frac{1}{2}\tr\chib \chih+\ds\phi\tensor\ds\phi),\\
\Db(\Omega\tr\chi)&=\Omega^2(2\divs\eta+2|\eta|^2-\tr\chi\tr\chib-2K+2|\ds\phi|^2),\\
D(\Omega\tr\chib)&=\Omega^2(2\divs\etab+2|\etab|^2-\tr\chi\tr\chib-2K+2|\ds\phi|^2),\\
D\eta &= (\Omega\chi)\cdot\etab-(\Omega\beta+L\phi\ds\phi),\\
\Db\etab &= (\Omega\chib) \cdot\eta+(\Omega\betab-\Lb\phi\ds\phi),\\
D  \omegab &=\Omega^2(2(\eta,\etab)-|\eta|^2-(\rho+\frac{1}{6}\mathbf{R}+\Lh\phi\Lbh\phi)),\\
\Db  \omega &=\Omega^2(2(\eta,\etab)-|\etab|^2-(\rho+\frac{1}{6}\mathbf{R}+\Lh\phi\Lbh\phi)),\\
K&=-\frac{1}{4}\tr \chi\tr\chib+\frac{1}{2}(\chih,\chibh)-(\rho+\frac{1}{6}\mathbf{R})+|\ds\phi|^2,\\
\divs(\Omega\chih)&=\frac{1}{2}\Omega^2\ds\tr \chi'+\Omega\chih\cdot\etab+\frac{1}{2}\Omega\tr \chi\eta-(\Omega\beta-L\phi\ds\phi),\\
\divs(\Omega\chibh)&=\frac{1}{2}\Omega^2\ds\tr \chib'+\Omega\chibh\cdot\eta+\frac{1}{2}\Omega\tr \chib\etab+(\Omega\betab+\Lb\phi\ds\phi).\\
\end{align*}

Recall the Einstein-scalar field system reads
\begin{equation*}
\begin{cases}\mathbf{R}_{\alpha\beta}=2\partial_{\alpha}\phi\partial_{\beta}\phi,\\
g^{\alpha\beta}\nabla_\alpha\nabla_\beta\phi=0\end{cases}.
\end{equation*}
Consequently, the spacetime scalar curvature $\mathbf{R}=2\partial^{\alpha}\phi\partial_{\alpha}\phi$.
The contracted second Bianchi identity will accordingly be the following inhomogeneous equation:
\begin{equation*}
\nabla^{\alpha}\mathbf{W}_{\alpha\beta\gamma\delta}=\nabla_{[\gamma}\mathbf{R}_{\delta]\beta}+\frac{1}{6}g_{\beta[\gamma}\nabla_{\delta]}\mathbf{R}.
\end{equation*}
This equation can be decomposed using the null frame $(e_1,e_2,e_3,e_4)$ into components, which we call \textit{null Bianchi equations}:\footnote{See Proposition 1.2 of \cite{Chr} in the vacuum case.}
\begin{align*}
&\Dbh\alpha-\frac{1}{2}\Omega\tr\chib \alpha+2\omegab\alpha+\Omega\{-\nablas\tensor\beta -(4\eta+\zeta)\tensor \beta+3\chih (\rho+\frac{1}{6}\mathbf{R})+3{}^*\chih \sigma\}
\\=&-\Omega\left(\nablas\Lh\phi\tensor\ds\phi-\nablas\tensor\ds\phi\Lh\phi\right.\\
&\left.-\frac{1}{2}\tr\chi\ds\phi\tensor\ds\phi-\chih\cdot\ds\phi\tensor\ds\phi+\frac{3}{2}\chih\Lh\phi\Lbh\phi-\chih|\ds\phi|^2+\frac{1}{2}\chibh(\Lh\phi)^2+\zeta\tensor\ds\phi\Lh\phi\right),\\
&D\beta+\frac{3}{2}\Omega\tr\chi\beta-\Omega\chih\cdot\beta-\omega\beta-\Omega\{\divs\alpha+(\etab+2\zeta)\cdot\alpha\}\\
=&\Omega\left(\ds\phi\Lh\Lh\phi-\Lh\phi\nablas\Lh\phi+\frac{1}{2}\tr\chi\ds\phi\Lh\phi+\chih\cdot\ds\phi\Lh\phi-\Omega^{-1}\omega\Lh\phi\ds\phi-(\Lh\phi)^2\zeta\right),\\
&\Db\beta+\frac{1}{2}\Omega\tr\chib\beta-\Omega\chibh \cdot \beta+\omegab \beta-\Omega\{\ds (\rho+\frac{1}{6}\mathbf{R})+{}^*\ds \sigma+3\eta(\rho+\frac{1}{6}\mathbf{R})+3{}^*\eta\sigma+2\chih\cdot\betab\}\\
=&-\Omega\left(\ds\phi\Deltas\phi-\nablas\Lbh\phi\Lh\phi-\frac{1}{2}\tr\chi\Lbh\phi\ds\phi+\chibh\cdot\ds\phi\Lh\phi-(\eta-\zeta)\Lbh\phi\Lh\phi+\eta|\ds\phi|^2\right),\\
&D(\rho+\frac{1}{6}\mathbf{R})+\frac{3}{2}\Omega\tr\chi (\rho+\frac{1}{6}\mathbf{R})-\Omega\{\divs \beta+(2\etab+\zeta,\beta)-\frac{1}{2}(\chibh,\alpha)\}\\=&-\Omega\left(\Lh\phi\Delta\phi-\nablas\Lh\phi\cdot\ds\phi+\chih\cdot\ds\phi\cdot\ds\phi-\frac{1}{2}\tr\chib(\Lh\phi)^2-\zeta\cdot\ds\phi\Lh\phi\right),\\
&D\sigma+\frac{3}{2}\Omega\tr\chi\sigma+\Omega\{\curls\beta+(2\etab+\zeta,{}^*\beta)-\frac{1}{2}\chibh\wedge\alpha\}\\
=&-\Omega\left(\nablas\Lh\phi\wedge\ds\phi-\chih\cdot\ds\phi\wedge\ds\phi+\zeta\wedge\ds\phi\Lh\phi\right),\\
&D\betab+\frac{1}{2}\Omega\tr\chi\betab-\Omega\chih \cdot \betab+\omega \betab+\Omega\{\ds (\rho+\frac{1}{6}\mathbf{R})-{}^*\ds \sigma+3\etab(\rho+\frac{1}{6}\mathbf{R})-3{}^*\etab\sigma-2\chibh\cdot\beta\}\\
=&\Omega\left(\ds\phi\Deltas\phi-\nablas\Lh\phi\Lbh\phi-\frac{1}{2}\tr\chib\Lh\phi\ds\phi+\chih\cdot\ds\phi\Lbh\phi-(\etab+\zeta)\Lh\phi\Lbh\phi+\etab|\ds\phi|^2\right),\\
&\Db(\rho+\frac{1}{6}\mathbf{R})+\frac{3}{2}\Omega\tr\chib (\rho+\frac{1}{6}\mathbf{R})+\Omega\{\divs \betab+(2\eta-\zeta,\betab)+\frac{1}{2}(\chih,\alphab)\}\\
=&-\Omega\left(\Lbh\phi\Delta\phi-\nablas\Lbh\phi\cdot\ds\phi+\chibh\cdot\ds\phi\cdot\ds\phi-\frac{1}{2}\tr\chi(\Lbh\phi)^2+\zeta\cdot\ds\phi\Lbh\phi\right),\\
&\Db\sigma+\frac{3}{2}\Omega\tr\chib\sigma+\Omega\{\curls\betab+(2\eta-\zeta,{}^*\betab)+\frac{1}{2}\chih\wedge\alphab\}\\
=&\Omega\left(\nablas\Lbh\phi\wedge\ds\phi-\chibh\cdot\ds\phi\wedge\ds\phi-\zeta\wedge\ds\phi\Lbh\phi\right),\\
&\Db\betab+\frac{3}{2}\Omega\tr\chib\betab-\Omega\chibh\cdot\betab-\omegab\betab+\Omega\{\divs\alphab+(\eta-2\zeta)\cdot\alphab\}\\
=&-\Omega\left(\ds\phi\Lbh\Lbh\phi-\Lbh\phi\nablas\Lbh\phi+\frac{1}{2}\tr\chib\ds\phi\Lbh\phi+\chibh\cdot\ds\phi\Lbh\phi-\Omega^{-1}\omegab\Lbh\phi\ds\phi+(\Lbh\phi)^2\zeta\right),\\
&\Dh\alphab-\frac{1}{2}\Omega\tr\chi \alphab+2\omega\alphab+\Omega\{\nablas\tensor\betab +(4\etab-\zeta)\tensor \betab+3\chibh (\rho+\frac{1}{6}\mathbf{R})-3{}^*\chibh \sigma\}\\
=&-\Omega\left(\nablas\Lbh\phi\tensor\ds\phi-\nablas\tensor\ds\phi\Lbh\phi\right.\\
&\left.-\frac{1}{2}\tr\chib\ds\phi\tensor\ds\phi-\chibh\cdot\ds\phi\tensor\ds\phi+\frac{3}{2}\chibh\Lbh\phi\Lh\phi-\chibh|\ds\phi|^2+\frac{1}{2}\chih(\Lbh\phi)^2-\zeta\tensor\ds\phi\Lbh\phi\right).\\
\end{align*}
As mentioned in the Introduction, we consider instead the following \textit{renormalized null Bianchi equations}:
\begin{align*}
DK&+\Omega\tr\chi K+\divs(\Omega\beta-L\phi\nablas\phi)\\
&-\Omega\chih\cdot\nablas\etab+\frac{1}{2}\Omega\tr\chi\divs\etab+(\Omega\beta-L\phi\nablas\phi)\cdot\etab-\Omega\chih\cdot\etab\cdot\etab+\frac{1}{2}\Omega\tr\chi|\etab|^2=0\\
D\sigmac&+\frac{3}{2}\Omega\tr\chi\sigmac+\curls(\Omega\beta-L\phi\nablas\phi)+\frac{1}{2}\Omega\chih\wedge(\etab\tensor\etab+\nablas\tensor\etab)\\
&+\etab\wedge(\Omega\beta-L\phi\nablas\phi)+2\nablas L\phi\wedge\nablas\phi=0\\
\Db(\Omega\beta-L\phi\nablas\phi)&+\frac{1}{2}\Omega\tr\chib(\Omega\beta-L\phi\nablas\phi)-\Omega\chibh\cdot(\Omega\beta-L\phi\nablas\phi)+\Omega^2\ds K-\Omega^2{}^*\ds\sigmac\\
&+3\Omega^2(\eta K-{}^*\eta\sigmac)-\frac{1}{2}\Omega^2(\ds(\chih,\chibh)+{}^*\ds(\chih\wedge\chibh))-\frac{3}{2}\Omega^2(\eta(\chih,\chibh)+{}^*\eta(\chih\wedge\chibh))\\
&+\frac{1}{4}\Omega^2\ds(\tr\chi\tr\chib)+\frac{3}{4}\Omega^2\tr\chi\tr\chib\eta-2\Omega\chih\cdot(\Omega\betab+\Lb\phi\nablas\phi)\\
=&-2\Omega^2\Deltas\phi\nablas\phi+\Omega^2\ds|\ds\phi|^2-2\Omega\chih\cdot\nablas\phi\Lb\phi+\Omega\tr\chi\Lb\phi\nablas\phi\\
&-2\Omega^2\eta\cdot\nablas\phi\nablas\phi+2\Omega^2\eta|\ds\phi|^2,\\
\Db(K-\frac{1}{|u|^2})&+\frac{3}{2}\Omega\tr\chib (K-\frac{1}{|u|^2})+(\Omega\tr\chib+\frac{2}{|u|})\frac{1}{|u|^2}\\
=&\divs(\Omega\betab+\Lb\phi\nablas\phi)+\Omega\chibh\cdot\nablas\eta+\frac{1}{2}\Omega\tr\chib\mu\\
&+(\Omega\betab+\Lb\phi\nablas\phi)\cdot\eta+\Omega\chibh\cdot\eta\cdot\eta-\frac{1}{2}\Omega\tr\chib|\eta|^2,\\
\Db\sigmac&+\frac{3}{2}\Omega\tr\chi\sigmac+\curls(\Omega\betab+\Lb\phi\nablas\phi)\\
&-\frac{1}{2}\Omega\chibh\wedge(\eta\tensor\eta+\nablas\tensor\eta)+\eta\wedge(\Omega\betab+\Lb\phi\nablas\phi)-2\nablas \Lb\phi\wedge\nablas\phi=0\\
D(\Omega\betab+\Lb\phi\nablas\phi)&+\frac{1}{2}\Omega\tr\chi(\Omega\betab+\Lb\phi\nablas\phi)-\Omega\chih\cdot(\Omega\betab+\Lb\phi\nablas\phi)-\Omega^2\ds K-\Omega^2{}^*\ds\sigmac\\
&-3\Omega^2(\etab K+{}^*\etab\sigmac)+\frac{1}{2}\Omega^2(\ds(\chih,\chibh)-{}^*\ds(\chih\wedge\chibh))+\frac{3}{2}\Omega^2(\etab(\chih,\chibh)+{}^*\etab(\chih\wedge\chibh))\\
&-\frac{1}{4}\Omega^2\ds(\tr\chi\tr\chib)-\frac{3}{4}\Omega^2\tr\chi\tr\chib\etab-2\Omega\chibh\cdot(\Omega\beta-L\phi\nablas\phi)\\
=&2\Omega^2\Deltas\phi\nablas\phi-\Omega^2\ds|\ds\phi|^2+2\Omega\chibh\cdot\nablas\phi L\phi-\Omega\tr\chib L\phi\nablas\phi\\
&+2\Omega^2\etab\cdot\nablas\phi\nablas\phi-2\Omega^2\etab|\ds\phi|^2.
\end{align*}

In the equation for $\Db (K-\frac{1}{|u|^2})$ above, $\mu$ is defined through
$$\divs\eta=K-\frac{1}{|u|^2}-\mu.$$

We also rewrite the wave equation $g^{\alpha\beta}\nabla_\alpha\nabla_\beta\phi=0$ in the following form:
\begin{align*}
\Db L\phi+\frac{1}{2}\Omega\tr\chib L\phi&=\Omega^2\Deltas\phi+2\Omega^2(\eta,\ds\phi)-\frac{1}{2}\Omega\tr\chi\Lb\phi,\\
D\ds\phi&=\nablas L\phi,\\
D\Lb\phi+\frac{1}{2}\Omega\tr\chi\Lb\phi&=\Omega^2\Deltas\phi+2\Omega^2(\etab,\ds\phi)-\frac{1}{2}\Omega\tr\chib L\phi,\\
\Db\ds\phi&=\nablas\Lb\phi.
\end{align*}

Finally, the following elliptic-transport coupled systems are also needed:


\begin{align*}
&\begin{dcases}\divs\eta=K-\frac{1}{|u|^2}-\mu,\\
\curls\eta=\sigma-\frac{1}{2}\chih\wedge\chibh,\\
D\mu+\Omega\tr\chi\mu=-\Omega\tr\chi\frac{1}{|u|^2}+\divs(2\Omega\chih\cdot\eta-\Omega\tr\chi\etab)+2\nablas L\phi\cdot\ds\phi+2L\phi\Deltas\phi\end{dcases};\\
&\begin{dcases}
\divs\etab=K-\frac{1}{|u|^2}-\mub,\\
\curls\etab=-\sigma+\frac{1}{2}\chih\wedge\chibh,\\
\Db\mub+\Omega\tr\chib\mub=-(\Omega\tr\chib+\frac{2}{|u|})\frac{1}{|u|^2}+\divs(2\Omega\chibh\cdot\etab-\Omega\tr\chib\eta)+2\nablas \Lb\phi\cdot\ds\phi+2\Lb\phi\Deltas\phi\end{dcases};\\
&\begin{dcases}\Deltas\omegab&=\omegabs+\divs(\Omega\betab+\Lb\phi\nablas\phi),\\
D\omegabs&+\Omega\tr\chi\omegabs+2\Omega\chih\cdot\nablas\nablas\omegab+2\divs(\Omega\chih)\cdot\nablas\omegab-\frac{1}{2}\divs(\Omega\tr\chi(\Omega\betab+\Lb\phi\nablas\phi))\\
&+\nablas(\Omega^2)\cdot(\ds(\rho+\frac{1}{6}\mathbf{R})+{}^*\ds\sigma)+\Deltas(\Omega^2)(\rho+\frac{1}{6}\mathbf{R})-\Deltas(\Omega^2(2\eta\cdot\etab-|\eta|^2))\\
&-\divs(\Omega\chih\cdot(\Omega\betab+\Lb\phi\nablas\phi)-2\Omega\chibh\cdot(\Omega\beta-L\phi\nablas\phi)+3\Omega^2\etab(\rho+\frac{1}{6}\mathbf{R})-3\Omega^2{}^*\etab\sigma)\\
\phantom{\Delta}=&-\divs\{2\Omega^2\ds\phi\Deltas\phi+\Lb\phi\nablas L\phi+L\phi\nablas\Lb\phi-\Omega\tr\chib L\phi\ds\phi\\
&+2\Omega\chibh\cdot\ds\phi L\phi+2\Omega^2\etab\cdot\nablas\phi\ds\phi+\Omega^2\etab|\ds\phi|^2)\}\end{dcases}.
\end{align*}
\begin{remark}
The elliptic-transport system for $\omega$ is not needed because we renormalize the equations such that $\omega$ does not appear. However, the system for $\omegab$ is needed because it is crucial in estimating the top order derivatives of $\Omega\tr\chib$.
\end{remark}

Before the end of this section, we list the commutation formulas which are used for the estimates of derivatives\footnote{See Chapter 4 of \cite{Chr} for the first group. The second group can be derived directly by the definition of curvature.}.
\begin{lemma}\label{commutator}
Given integer $i$ and tangential tensorfield $\phi$. we have
\begin{align*}
[D,\nablas^i]\phi&=\sum_{j=1}^i\nablas^j(\Omega\chi)\cdot\nablas^{i-j}\phi,\\
[\Db,\nablas^i]\phi&=\sum_{j=1}^i\nablas^j(\Omega\chib)\cdot\nablas^{i-j}\phi,
\end{align*}
and
\begin{align*}
[\mathcal{D},\nablas^i]\phi&=\sum_{j=1}^i\nablas^{j-1}K\cdot\nablas^{i-j}\phi,\\
[^*\mathcal{D},\nablas^i]\phi&=\sum_{j=1}^i\nablas^{j-1}K\cdot\nablas^{i-j}\phi.
\end{align*}
Here we use ``$\cdot$'' to represent an arbitrary contraction with the coefficients by $\gs$ or $\epsilons$. In addition, if $\phi$ is a function, then when $i=1$, all commutators above are zero; when $i\ge2$, all $i$'s are replaced by $i-1$'s in above formulas.
\end{lemma}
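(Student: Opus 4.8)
The plan is to prove the four identities at order $i=1$ and then induct on $i$. The commutators $[D,\nablas]\phi$ and $[\Db,\nablas]\phi$ measure how the Levi-Civita connection $\nablas$ of the induced metric $\gs$ varies along the flows of $L$ and $\Lb$. Since $\Lh$ and $\Lbh$ are normal to $S_{\ub,u}$, a direct computation from the definitions of $\chi,\chib$ and of the projected Lie derivatives gives $D\gs_{AB}=2\Omega\chi_{AB}$ and $\Db\gs_{AB}=2\Omega\chib_{AB}$; differentiating the formula for the Christoffel symbols $\Gammas$ of $\gs$ then expresses $D\Gammas$ and $\Db\Gammas$ schematically as $\nablas(\Omega\chi)$ and $\nablas(\Omega\chib)$. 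Since $[D,\nablas]\phi$ is, schematically, the action of $D\Gammas$ on $\phi$, this yields the $i=1$ versions of the first two identities. For the Hodge operators, $\mathcal{D}$ is a single $\nablas$ composed with a $\gs$- or $\epsilons$-contraction, so $[\mathcal{D},\nablas]\phi$ is exactly the effect of interchanging two covariant derivatives on the two-dimensional surface $S_{\ub,u}$; by the Ricci identity on $S_{\ub,u}$ the intrinsic Riemann tensor equals $K$ times a universal expression in $\gs$ and $\epsilons$, so $[\mathcal{D},\nablas]\phi=K\cdot\phi$ and likewise $[{}^{*}\mathcal{D},\nablas]\phi=K\cdot\phi$, which are the $i=1$ versions of the last two identities.

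For the inductive step I would write $[D,\nablas^{i+1}]\phi=[D,\nablas]\nablas^{i}\phi+\nablas[D,\nablas^{i}]\phi$. The first term equals $\nablas(\Omega\chi)\cdot\nablas^{i}\phi$ by the base case. For the second I apply $\nablas$ to the inductive hypothesis; because $\nablas\gs=0$ and $\nablas\epsilons=0$, the derivative falls only on the two explicit factors, giving $\sum_{j=1}^{i}\big(\nablas^{j+1}(\Omega\chi)\cdot\nablas^{i-j}\phi+\nablas^{j}(\Omega\chi)\cdot\nablas^{i+1-j}\phi\big)$. Adding the two contributions and reindexing produces $\sum_{j=1}^{i+1}\nablas^{j}(\Omega\chi)\cdot\nablas^{i+1-j}\phi$, as claimed; the case of $\Db$ is identical with $\chib$ in place of $\chi$. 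The same induction runs for $\mathcal{D}$ and $^{*}\mathcal{D}$ with $K$ in place of $\Omega\chi$, the curvature factor carrying one fewer derivative, so that the $j=1$ term is $K\cdot\nablas^{i}\phi$ (coming from the $i=1$ step).

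For a scalar $\phi$ we have $\nablas\phi=\ds\phi$, and the identities $D\ds\phi=\nablas(L\phi)$ and $\Db\ds\phi=\nablas(\Lb\phi)$ — already recorded among the wave equations — say precisely that $[D,\nablas]\phi=[\Db,\nablas]\phi=0$, while $[\mathcal{D},\nablas]\phi=[{}^{*}\mathcal{D},\nablas]\phi=0$ because the Hessian of a scalar on $S_{\ub,u}$ is symmetric; this handles $i=1$. For $i\ge 2$ I would write $\nablas^{i}\phi=\nablas^{i-1}(\ds\phi)$ and use these $i=1$ identities to commute the outermost operator past the innermost $\nablas$, reducing the commutator to the one for the genuine tangential $1$-form $\ds\phi$ at order $i-1$, which is the tensorial formula already proved; this produces exactly the stated substitution $i\mapsto i-1$ in the scalar case.

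I do not anticipate a serious obstacle: the whole computation is routine. The only points requiring care are the index bookkeeping in the inductive step — keeping the summation range $1\le j\le i$ and checking that differentiating a schematic $\gs$- or $\epsilons$-contraction produces no spurious terms, which is precisely where $\nablas\gs=\nablas\epsilons=0$ enters — together with the two-dimensionality of $S_{\ub,u}$, by which the entire intrinsic curvature is encoded in the single function $K$.
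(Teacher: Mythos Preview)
Your argument is correct and is the standard route. The paper itself does not prove this lemma: it merely cites Chapter~4 of Christodoulou's monograph \cite{Chr} for the first pair of identities and remarks that the second pair ``can be derived directly by the definition of curvature.'' Your induction on $i$, with the base case read off from $D\Gammas\sim\nablas(\Omega\chi)$ (via $D\gs=2\Omega\chi$) and from the Ricci identity on the two-sphere, is precisely the computation underlying that reference, and your treatment of the scalar case by reducing $\nablas^i\phi=\nablas^{i-1}(\ds\phi)$ to the tensorial case at order $i-1$ is the correct explanation of the final sentence of the lemma.
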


\section{Statement of the existence theorem}\label{statementexistencetheorem}

\subsection{Formulation of the problem} We formula the problem we are going to consider again. We consider a double characteristic initial value problem of the Einstein-scalar field equations, where the initial data is given on two null cone, $C_{u_0}$ which is outgoing, and $\Cb_0$ which is incoming, intersecting on a sphere $S_{0,u_0}$. The data on $\Cb_0$ is spherically symmetric. The restriction of the optical function $u$ on $\Cb_0$ is chosen such that the level sets of $u$ on $\Cb_0$ are the spherical sections of the symmetry, and in addition $u=-r$ where $r$ is the area radius of the corresponding section.

The data on $\Cb_0$ then consists of $\phi$ up to a constant (or $\Lb\phi$) and $\Omega$, where $\phi$ is the scalar field and $\Lb=-\partial_u$ on $\Cb_0$ ($\chibh$ vanishes because of spherical symmetry).  Now denote
$$\psi=\psi(u)=|u|\Lb\phi\Big|_{\Cb_0},\ h=h(u)=\frac{|u|\tr\chi'}{2}\Big|_{\Cb_0},\ \Omega_0=\Omega_0(u)=\Omega\Big|_{\Cb_0}.$$
We require that these three functions are smooth for $u\in[u_0,0)$  and $0<h\le1$. In fact, from the Raychaudhuri equation along $\Cb_0$, and $\Omega\tr\chib=-\frac{2}{|u|}$ which follows by setting $u=-r$, $\Omega_0$ is determined by $\psi$ through
\begin{align}\label{Omega_0}\Db\log\Omega_0=\left(\omegab\Big|_{\Cb_0}=\right)-\frac{1}{2}\frac{\psi^2}{|u|},\end{align}
and from the null structure equation for $\Db(\Omega\tr\chi)$, $h$ is determined by $\Omega_0$ through
\begin{align}\label{Omega_0^2h}
\Db(\Omega_0^2h)=\frac{1}{|u|}\Omega_0^2(h-1).
\end{align}

The function $\Omega_0$ will play a very important role in the whole paper. It has two important properties stated in Lemma \ref{Omega_0to0} which will be used frequently: {\bf $\Omega_0$ is monotonically decreasing, and if the vertex of $\Cb_0$ is singular, then $\Omega_0\to0$ as $u\to0^-$}.

\subsection{Norms} We first introduce the following scale invariant norms relative to the double null foliation (for $2\le p\le\infty$ and $q=1,2$):
\begin{align*}
\|\xi\|_{\mathbb{L}^p(\ub,u)}=&\left(\int_{S_{\ub,u}}|u|^{-2}|\xi|^p\D\mu_{\gs}\right)^{\frac{1}{p}},\\
\|\xi\|_{\mathbb{H}^n(\ub,u)}=&\sum_{i=0}^n\|(|u|\nablas)^i\xi\|_{\L^2(\ub,u)},\\
\|\xi\|_{\mathbb{L}^q_{[u_1,u_2]}\mathbb{H}^n(\ub)}=& \left(\int_{u_1}^{u_2}|u|^{-1}\|\xi\|_{\H^n(\ub,u')}^q \D u'\right)^{\frac{1}{q}},\\
\|\xi\|_{\L^q_{\ub}\mathbb{H}^n(u)}=&\left(\int_{0}^{\delta}\delta^{-1}\|\xi\|_{\H^n(\ub',u)}^q\D \ub'\right)^{\frac{1}{q}}.\end{align*}
In addition, we define
\begin{align*}
\|\xi\|_{\L^\infty_{\ub}\L^q_{[u_1,u_2]}\mathbb{H}^n}=&\sup_{\ub'\in[0,\delta]}
\left(\int_{u_1}^{u_2}|u|^{-1}\|\xi\|_{\H^n(\ub',u')}^q\D u'\right)^{\frac{1}{q}},\\
\|\xi\|_{\L^\infty_{[u_1,u_2]}\L^q_{\ub}\mathbb{H}^n}=&\sup_{u'\in[u_1,u_2]}
\left(\int_{0}^{\delta}\|\xi\|_{\H^n(\ub',u')}^q\D \ub'\right)^{\frac{1}{q}}.
\end{align*}

We are going to define the norms of various geometric quantities which will be used in the proof of the existence theorem. Set a function
\begin{align*}
\mathscr{F}=\mathscr{F}(\delta,u_0,u_1)\ge1,
\end{align*}
for some $\delta>0$ and $u_1\in(u_0,0)$ and define
\begin{align*}
\mathscr{E}=\mathscr{E}(\delta,u_0,u_1):=&\max\left\{1, \mathscr{F}^{-1}\mathcal{A}^{-1}\||u_0|(|u_0|\nablas)L\phi\|_{\L^2_{[0,\delta]}\H^4(u_0)}\left|\log\frac{|u_1|}{|u_0|}\right|^{\frac{1}{2}}\right\}\ge1,\\
\mathscr{W}=\mathscr{W}(u_0,u_1):=&\max\left\{1,\logOmega\right\}\ge1
\end{align*}
where
\begin{align*}
\mathcal{A}=\mathcal{A}(\delta,u_0,u_1)
\end{align*}
is the bound of the initial data which will be defined precisely in the statement of Theorem \ref{existencetheorem}. Now we define
    \begin{align*}
    &\mathcal{O}(\ub,u;u_0,u_1)\\
    =&\left[\delta^{-1}|u|^2\Omega_0^2\mathscr{F}^{-2}\left\|\tr\chi'-\frac{2h}{|u|}\right\|_{\H^4(\ub,u)}\right]^{\frac{1}{2}}\\
    &+\delta^{-1}|u|^2\mathscr{F}^{-1}\left\|\Omega\tr\chib+\frac{2}{|u|},\Omega\chibh\right\|_{\H^4(\ub,u)}\\
        &+|u|\mathscr{F}^{-1}\left(\|\Omega\chih\|_{\H^4(\ub,u)}+\mathscr{W}^{-\frac{1}{2}}\|\omega\|_{\H^4(\ub,u)}\right)\\&+\delta^{-1}|u|^2\mathscr{F}^{-1}\mathscr{E}^{-1}\left(\|\eta\|_{\H^4(\ub,u)}+\mathscr{W}^{-\frac{1}{2}}\|\etab\|_{\H^4(\ub,u)}\right).
    \end{align*}

    \begin{align*}
    \mathcal{E}(\ub,u;u_0,u_1)=
     &\left[\delta^{-1}\mathscr{F}^{-1}|u|^{\frac{1}{4}}\||u|^{\frac{3}{4}}(|u|\Lb\phi-\psi)\|_{\L_{[u_0,u_1]}^2\H^4(\ub)}\right.\\
        &+\mathscr{F}^{-1}\||u|L\phi-\varphi\|_{\H^4(\ub,u)}\\&\left.+\delta^{-1}|u|^2\mathscr{F}^{-1}\mathscr{E}^{-1}\|\nablas\phi\|_{\H^4(\ub,u)}\right].
    \end{align*}

 \begin{align*}
   & \widetilde{\mathcal{O}}(\ub,u;u_0;u_1)=\\
    &    \left[\delta^{-1}|u|^2\Omega_0^2\mathscr{F}^{-2}\mathscr{E}^{-1}\|(|u|\nablas)\tr\chi'\|_{\H^4(\ub,u)}\right]^{\frac{1}{2}}\\&+\delta^{-1}|u|^2\mathscr{F}^{-1}\mathscr{E}^{-1}\mathscr{W}^{-\frac{1}{2}}\|(|u|\nablas)(\Omega\tr\chib)\|_{\H^4(\ub,u)}\\
   &+|u|\mathscr{F}^{-1}\mathscr{E}^{-1}\|\Omega\chih\|_{\L^2_{\ub}\H^5(u)}\\
   &+\delta^{-\frac{1}{2}}|u|^{\frac{3}{2}}\Omega_0\mathscr{F}^{-1}\mathscr{E}^{-1}\mathscr{W}^{-\frac{1}{2}}\|\etab\|_{\L^2_{\ub}\H^5(u)}\\
   &+\delta^{-\frac{1}{2}}|u|^{\frac{3}{2}}\Omega_0\mathscr{F}^{-1}\mathscr{E}^{-1}\|\eta\|_{\L_{\ub}^{2}\H^5(u)}\\
   &+\delta^{-\frac{1}{2}}\mathscr{F}^{-1}\mathscr{E}^{-1}\|\Omega_0|u|^{\frac{3}{2}}\eta\|_{\L_{[u_0,u]}^2\H^5(\ub)}\\
    &+\delta^{-1}|u|^{\frac{1}{2}}\Omega_0\mathscr{F}^{-1}\mathscr{E}^{-1}\mathscr{W}^{-\frac{1}{2}}\|\Omega_0^{-1}|u|^{\frac{3}{2}}(\Omega\chibh)\|_{\L_{[u_0,u]}^2\H^5(\ub)}\},
    \end{align*}

\begin{align*}
   & \widetilde{\mathcal{E}}(u;u_0;u_1)\\
    =&\mathscr{F}^{-1}\mathscr{E}^{-1}\left(\||u|(|u|\nablas)L\phi\|_{\L_{\ub}^2\H^4(u)}
    +\delta^{-\frac{1}{2}}\||u|^{\frac{3}{2}}\Omega_0(|u|\nablas)\nablas\phi\|_{\L^\infty_{\ub}\L_{[u_0,u]}^2\H^4}\right)\\
    &+\Omega_0\delta^{-\frac{1}{2}}\mathscr{F}^{-1}\mathscr{E}^{-1}\left(\||u|^{\frac{3}{2}}(|u|\nablas)\nablas\phi\|_{\L_{\ub}^2\H^4(u)}
    +\delta^{-\frac{1}{2}}\|\Omega_0^{-1}|u|^{2}(|u|\nablas)\Lb\phi\|_{\L^\infty_{\ub}\L_{[u_0,u]}^2\H^4(\delta)}\right),
 \end{align*}

\begin{align*}
    &\mathcal{R}(u;u_0;u_1)\\
    = &  \mathscr{F}^{-1}\mathscr{E}^{-1}\left(\||u|^2(\Omega\beta-L\phi\ds\phi)\|_{\L_{\ub}^2\H^4(u)} 
    +\delta^{-\frac{1}{2}}\left\||u|^{\frac{5}{2}}\Omega_0(K-\frac{1}{|u|^2},\sigmac)\right\|_{\L^\infty_{\ub}\L_{[u_0,u]}^2\H^4}\right)\\
    &+\left[\Omega_0\delta^{-1}\mathscr{F}^{-\frac{3}{2}}\mathscr{E}^{-1}\right.\\
    &\ \ \ \left.\times\left(\||u|^{3}(K-\frac{1}{|u|^2},\sigmac)\|_{\L_{\ub}^2\H^4(u)}    +\delta^{-\frac{1}{2}}\|\Omega_0^{-1}|u|^{\frac{7}{2}}(\Omega\betab+\Lb\phi\ds\phi)\|_{\L^\infty_{\ub}\L_{[u_0,u]}^2\H^4}\right)\right]^{\frac{2}{3}}.\\
 \end{align*}
 \begin{remark}
 Some remarks should be made about these norms. First, in the definitions above, $\Omega_0$ appearing outside the norm symbol ``$\|\cdot\|$'' means its value at $u$, i.e., $\Omega_0=\Omega_0(u)$. {\bf In the rest of the paper, $\Omega_0$ outside ``$\|\cdot\|$'' should be understood in the same way. On the other hand, if $\Omega_0$ or $\Omega_0^{-1}$ appears inside the norm $\|\cdot\|_{\L^2_{[u_0,u]}}$, then it is a factor of the integrand.} All factors inside the norm $\|\cdot\|_{\L^2_{[u_0,u]}}$ cannot be taken out directly. Second, the norm
 \begin{align*}
 \|(|u|\nablas)(\Omega\tr\chi)\|_{\H^4(\ub,u)}
 \end{align*}
 is a norm about up to the \engordnumber{5} order derivatives of $\Omega\tr\chib$ except itself. For some of the geometric quantities, the derivatives obey better estimates as compared to the quantities themselves.
 \end{remark}
 Finally, we define
\begin{align*}\mathcal{O}=\mathcal{O}(u_0,u_1)&=\sup_{0\le\ub\le\delta\atop u_0\le u\le u_1}\mathcal{O}(\ub,u;u_0,u_1),\\
\mathcal{E}=\mathcal{E}(u_0,u_1)&=\sup_{0\le\ub\le\delta\atop u_0\le u\le u_1}\mathcal{E}(\ub,u;u_0,u_1),\\
\widetilde{\mathcal{O}}=\widetilde{\mathcal{O}}(u_0,u_1)&=\sup_{0\le\ub\le\delta\atop u_0\le u\le u_1}\widetilde{\mathcal{O}}(\ub,u;u_0,u_1),\\
\widetilde{\mathcal{E}}=\widetilde{\mathcal{E}}(u_0,u_1)&=\sup_{u_0\le u\le u_1}\widetilde{\mathcal{E}}(u;u_0,u_1),\\
\mathcal{R}=\mathcal{R}(u_0,u_1)&=\sup_{u_0\le u\le u_1}\mathcal{R}(u;u_0,u_1).
\end{align*}

 \begin{theorem}[Existence Theorem]\label{existencetheorem}
There exists a universal constant $C_0\ge1$ such that the following statement is true. Let $C\ge C_0$, $\delta>0$, $u_0$ and $u_1$ be four numbers such that $u_0<u_1<0$. Suppose that the smooth initial data given on $C_{u_0}\bigcup \Cb_0$ is as described above. The data on $\Cb_0$ is spherically symmetric and $\Omega_0(u_0)\le1$ and the data on $C_{u_0}$ satisfies
 \begin{equation}\label{def-A}
 \begin{split}
 \mathcal{A}:=&\max\left\{1,\sup_{u_0\le u\le u_1}\left(\mathscr{F}^{-1}|\varphi(u)|\right),\right.\\
 &\left.\mathscr{F}^{-1}|u_0|\sup_{0\le\ub\le\delta}\left(\|\Omega\chih\|_{\H^7(\ub,u_0)}+\|\omega\|_{\H^5(\ub,u_0)}+\|L\phi\|_{\H^5(\ub,u_0)}\right)\right\}<+\infty.
 \end{split}
 \end{equation}
 Then if the following {\bf{\emph{smallness conditions}}} hold:
\begin{equation}\label{smallness}
\begin{split}
\Omega_0^2(u_0)\delta|u_1|^{-1}\mathscr{E}^2\mathscr{W}&\le 1,\\
 C^2\delta|u_1|^{-1}\mathscr{F}\mathscr{W}\mathcal{A}&\le 1,
\end{split}
\end{equation}
and the following {\bf \emph{auxiliary condition}} holds:
\begin{equation}\label{auxiliary}
\begin{split}
\Omega_0^2(u_0)\Omega_0^{-2}(u_1)\delta|u_1|^{-1}\mathscr{F}\mathcal{A}&\le 1.
\end{split}
\end{equation}
Then the smooth solution of the Einstein-scalar field equations exists in the region $0\le\ub\le\delta$, $u_0\le u\le u_1$, and the following estimates hold:
$$\mathcal{O}, \mathcal{E}, \widetilde{\mathcal{O}}, \widetilde{\mathcal{E}}, \mathcal{R}\lesssim \mathcal{A}.$$
 \end{theorem}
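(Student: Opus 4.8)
The plan is to prove the existence assertion and the estimates simultaneously by a continuity (bootstrap) argument in the retarded variable $u$. Let $u_*\in(u_0,u_1]$ be the supremum of those $u_1'$ for which the smooth solution exists on $\{0\le\ub\le\delta,\ u_0\le u\le u_1'\}$ and satisfies $\mathcal{O},\mathcal{E},\widetilde{\mathcal{O}},\widetilde{\mathcal{E}},\mathcal{R}\le C^{1/4}\mathcal{A}$ there. By the local characteristic Cauchy theory (cf. \cite{Ren}) this set is nonempty, and by continuity of the solution and of these norms in $u$ together with the same local theory it is open and closed in $(u_0,u_1]$ provided one can \emph{strictly} improve the bootstrap inequalities; hence the entire content is the a priori estimate $\mathcal{O},\mathcal{E},\widetilde{\mathcal{O}},\widetilde{\mathcal{E}},\mathcal{R}\lesssim\mathcal{A}$ under the bootstrap assumptions, with a universal implied constant, which for $C_0$ large enough forces $u_*=u_1$ and yields the theorem.

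The a priori estimates follow the reductive scheme described in the Introduction. One first controls the metric quantities $\gs$, $b^A$ and shows $\Omega\approx\Omega_0$, using $D\log\Omega=\omega$, $\Db\log\Omega=\omegab$, the bootstrap bounds, and crucially the identity \eqref{Lbphi}, which yields $\int_{u_0}^u|u'||\Lb\phi|^2\,\D u'\lesssim\mathscr{W}$; this is the one place where the monotone but possibly vanishing weight $\Omega_0$ is forced into the norms. Next the good components $\Theta\in\{\Omega\chih,\Omega\tr\chi,L\phi\}$ are estimated from their $\Db$-transport equations — written with $\Lb$ rather than $\Lbh$ so that $\omegab$, which has no initial bound, never appears — all error terms being either non-borderline and absorbed by the smallness condition \eqref{smallness} or already controlled, with the Raychaudhuri equation along $L$ giving the sharper $\delta|u|^{-2}$-bound for $\Omega\tr\chi$. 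Then $\eta,\etab,\nablas\phi$, $\Omega\chibh$, $\Omega\tr\chib+2/|u|$ and $\Lb\phi-\psi/|u|$ are estimated from the equations \eqref{equs1-intro}; the borderline terms carrying a factor $\Lb\phi$ are handled by H\"older in $u$ against the $\mathscr{W}$-weight, the $\etab$ estimate picks up an extra $\mathscr{W}^{1/2}$, and — since the curvature term $\Omega\betab+\Lb\phi\nablas\phi$ entering it is only controlled in an $\Omega_0$-weighted flux — closing this step requires the auxiliary condition \eqref{auxiliary}.

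The connection estimates are coupled with $L^2$ energy estimates for the curvature, obtained by integrating by parts directly on the renormalized null Bianchi equations in two groups, as in \cite{An-Luk}: $\{\Omega\beta-L\phi\nablas\phi,\ K-|u|^{-2},\ \sigmac\}$ with $|u|$-weights, producing the $C_u$-flux of $\Omega\beta-L\phi\nablas\phi$ and the $\Cb_{\ub}$-flux of $(K-|u|^{-2},\sigmac)$; and $\{K-|u|^{-2},\ \sigmac,\ \Omega\betab+\Lb\phi\nablas\phi\}$, where the weights are tuned to the coefficient $\tfrac32$ of the linear terms and the resulting fluxes are necessarily $\Omega_0$-weighted. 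The dangerous term $\Lb\phi\nablas L\phi$ in the $D(\Omega\betab)$ equation is removed by writing it as $D(\Lb\phi\nablas\phi)-D\Lb\phi\,\nablas\phi$, the first piece being absorbed into the flux of $\Omega\betab+\Lb\phi\nablas\phi$ — which is exactly why that renormalized quantity is used — and the second controlled via the wave equation; the scalar field energies for $(L\phi,\nablas\phi)$ and $(\nablas\phi,\Lb\phi)$ are obtained in the same way from the null-decomposed wave equation. Everything is carried out for all angular derivatives up to the required order using Lemma \ref{commutator}, the pure transport estimates being replaced at top order by the Hodge--transport systems for $(\eta,\mu)$, $(\etab,\mub)$ and $(\omegab,\omegabs)$; one then finds that the top and next-to-top order estimates acquire the logarithmic loss $\mathscr{E}$ seeded by $\nablas L\phi$ on $C_{u_0}$, which is the origin of the $\mathscr{E}^{-1}$ factors in $\widetilde{\mathcal{O}},\widetilde{\mathcal{E}},\mathcal{R}$ and on $\eta,\etab,\nablas\phi$ in $\mathcal{O},\mathcal{E}$, while the extra $(\delta|u|^{-1})^{1/2}$-loss that the top-order $\etab$ feeds into $\Omega\chibh$ is absorbed by the first line of \eqref{smallness}.

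The main obstacle is the bookkeeping of these interlocking weights: every estimate must be sharp in its powers of $\delta$, $|u|$, $\mathscr{F}$, $\mathcal{A}$, $\mathscr{E}$, $\mathscr{W}$ and $\Omega_0$, and one has to verify that after invoking \eqref{smallness} and \eqref{auxiliary} each error term is a definite amount smaller than the quantity it contributes to, so that the bootstrap constant $C^{1/4}$ is genuinely beaten. The subtle points — propagating $\int_{u_0}^u|u'||\Lb\phi|^2\,\D u'\lesssim\mathscr{W}$ through all derivative levels, combining the $\Omega_0$-weighted curvature fluxes in the $\etab$ estimate via H\"older and the monotonicity of $\Omega_0$, and checking that no estimate secretly needs $\kappa=0$ in \eqref{estimate-Lbphi-intro} at top order — are exactly where the real work of Section \ref{APrioriEstimate} lies. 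Once the a priori bounds are established on $\{u_0\le u\le u_*\}$, strictly improving the bootstrap assumption, the continuity argument closes and the existence up to $u=u_1$ together with the stated estimates follows.
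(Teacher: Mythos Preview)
Your proposal is correct and follows essentially the same approach as the paper's Section~\ref{APrioriEstimate}: a bootstrap argument with assumption $\mathcal{O},\mathcal{E},\widetilde{\mathcal{O}},\widetilde{\mathcal{E}},\mathcal{R}\le C^{1/4}\mathcal{A}$, preliminary geometric lemmas, then the reductive scheme $\mathcal{E}\to\mathcal{O}\to\widetilde{\mathcal{E}}\to\widetilde{\mathcal{O}}\to\mathcal{R}$ in which the renormalized Bianchi and wave-equation energy estimates, the Hodge--transport top-order systems, the $\Omega_0$-weighted fluxes, the use of \eqref{auxiliary} in the $\etab$ step, and the $\mathscr{E}$-loss mechanism all appear exactly as you describe. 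The paper's actual proof is simply the detailed term-by-term execution of this sketch.
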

 \begin{remark}
{\bf Here and in the rest of the paper, the notation $A\lesssim B$ refers to $A\le cB$ where $c$ is some universal constant. }
\end{remark}
\begin{remark}
Using the argument in Chapter 2 of \cite{Chr}, under the assumptions of Theorem \ref{existencetheorem}, we can obtain the initial bound
\begin{align*}
\sup_{0\le\ub\le\delta}&(\mathcal{O}(\ub,u_0;u_0,u_1)+\mathcal{E}(\ub,u_0;u_0,u_1)+\widetilde{\mathcal{O}}(\ub,u_0;u_0,u_1))\\
&+\widetilde{\mathcal{E}}(u_0;u_0,u_1)+\mathcal{R}(u_0;u_0,u_1)\lesssim\mathcal{A}(\delta,u_0,u_1).
\end{align*}
We will omit the proof of this. This is however the bound we actually use in the proof of Theorem \ref{existencetheorem}.
\end{remark}

\section{Proof of the existence theorem --- Theorem \ref{existencetheorem}}\label{APrioriEstimate}

\subsection{Bootstrap assumptions}

As in \cite{Chr}, the main step to prove the existence is to establish a priori estimates. Once we have the a priori estimates, the construction of the solution in double null foliation then follows using a bootstrap argument. We should remark that although we are dealing with the Einstein-scalar field equations but not the vacuum Einstein equations, the argument in \cite{Chr} can also be modified in the current case.

We begin the proof of the a priori estimates by assuming the smooth solution exists for $0\le\ub\le\delta, u_0\le u\le u_1$ and it holds for some $C\ge C_0$ the following {\bf{\emph{bootstrap assumptions}}}:
\begin{equation}\label{bootstrap}\mathcal{O}, \mathcal{E}, \widetilde{\mathcal{O}}, \widetilde{\mathcal{E}}, \mathcal{R}\lesssim C^{\frac{1}{4}}\mathcal{A}.\end{equation} 
Then it suffices to prove that under this bootstrap assumptions \eqref{bootstrap}, the smallness conditions \eqref{smallness} and the auxiliary condition \eqref{auxiliary}, it holds
$$\mathcal{O}, \mathcal{E}, \widetilde{\mathcal{O}}, \widetilde{\mathcal{E}}, \mathcal{R}\lesssim \mathcal{A}.$$
If $C_0$ is sufficiently large, then this is an improvement over the bootstrap assumptions \eqref{bootstrap}. We will prove this in the rest of this section.

\subsection{Preliminary lemmas}
We start by proving some geometric lemmas which will be frequently used in the course of the proof. First of all, we assume
\begin{align}\label{bootstrap-geometric}
\left|\Omega\tr\chi-\frac{2\Omega^2h}{|u|}\right|\lesssim C^{\frac{3}{4}}\delta|u|^{-2}\mathscr{F}^2\mathcal{A}^2,\ |\Omega\chih|\lesssim C^{\frac{1}{2}}|u|^{-1}\mathscr{F}\mathcal{A},\ \frac{1}{4}\Omega_0\le\Omega\le 4\Omega_0.
\end{align}

Let $\Lambda(\ub;u)$ and $\lambda(\ub;u)$ be the larger and smaller eigenvalues of the metric $\gs|_{S_{\ub,u}}$ with respect to $\gs|_{S_{0,u}}$, which is the standard metric on the sphere with radius $|u|$. Define
\begin{align*}
\mu(\ub;u)=\sqrt{\lambda(\ub;u)\Lambda(\ub;u)},\ \nu(\ub;u)=\sqrt{\frac{\Lambda(\ub;u)}{\lambda(\ub;u)}}.
\end{align*}
From the proof of Lemma 5.3 in \cite{Chr}, we have
\begin{align*}
\mu(\ub;u)=\exp\left(\int_0^{\ub}\Omega\tr\chi(\ub',u)\D\ub'\right),\ \nu(\ub;u)\le\exp\left(2\int_0^{\ub}|\Omega\chih(\ub',u)|\D\ub'\right).
\end{align*}
From \eqref{bootstrap-geometric} and the smallness conditions \eqref{smallness}, we have 
\begin{align}\label{estimate-Omegatrchi-geometric}|\Omega\tr\chi|\lesssim|u|^{-1}+C^{-1}|u|^{-1}\mathscr{F}\mathcal{A}\lesssim|u|^{-1}\mathscr{F}\mathcal{A}.
\end{align}
 Therefore, we have, using \eqref{bootstrap-geometric} and \eqref{smallness} again, 
\begin{align*}
|\mu(\ub;u)-1|\lesssim\delta|u|^{-1}\mathscr{F}\mathcal{A}\lesssim C^{-1},\ 1\le\nu(\ub;u)\lesssim C^{\frac{1}{2}}|u|^{-1}\mathscr{F}\mathcal{A} \lesssim C^{-1},
\end{align*}
which implies
\begin{align}\label{eigenvalues}
1-cC_0^{-1}\le\lambda(\ub;u)\le\Lambda(\ub;u)\le 1+cC_0^{-1}.
\end{align}
By a similar argument of the proof of Lemma 5.4 in \cite{Chr}, we have
\begin{align}\label{isoperimetric}
I(\ub,u)\le (1+c C_0^{-1}) I(0,u)\le\frac{1}{2\pi}(1+cC_0^{-1})\lesssim 1
\end{align}
where $I(\ub,u)$ is the isoperimetric constant of the sphere $(S_{\ub,u},\gs)$ and $C_0$ is sufficiently large. Also from \eqref{eigenvalues}, we have
\begin{align*}
1-cC_0^{-1}\le\frac{\mathrm{Area}(S_{\ub,u})}{\mathrm{Area}(S_{0,u})}\le 1+cC_0^{-1},
\end{align*}
which implies if $C_0$ is sufficiently large, 
\begin{align}\label{area}
\frac{1}{2}|u|^2\le\frac{1}{4\pi}\mathrm{Area}(S_{\ub,u})\le 2|u|^2.
\end{align}
Now by Lemma 5.1 in \cite{Chr}, the Sobolev inequalities, we have the following form of the Sobolev inequalities:
\begin{lemma}\label{Sobolev}
Given a tangential tensorfield $\theta$, we have for $q\in(2,+\infty)$,
\begin{align*}
\|\theta\|_{\L^q(\ub,u)}&\lesssim_q \|\theta\|_{\H^1(\ub,u)},\\
\|\theta\|_{\L^\infty(\ub,u)}&\lesssim \|(|u|\nablas)\theta\|_{\L^4(\ub,u)}+ \|\theta\|_{\L^4(\ub,u)}\lesssim\|\theta\|_{\H^2(\ub,u)}.
\end{align*}

\end{lemma}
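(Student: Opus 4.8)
The plan is to reduce these scale-invariant Sobolev inequalities to the standard Sobolev inequalities on the unit sphere $(S^2, \cir{\gs})$ (or equivalently on $(S_{0,u}, \gs|_{S_{0,u}})$, the round sphere of radius $|u|$), using the quantitative comparison of the metric $\gs|_{S_{\ub,u}}$ with the reference round metric that has already been established through the eigenvalue bounds \eqref{eigenvalues}, the isoperimetric bound \eqref{isoperimetric}, and the area bound \eqref{area}. The key point is that all three of these bounds say that $(S_{\ub,u}, \gs)$ is, uniformly in $(\ub,u)$ and in the bootstrap constants, a bounded perturbation of the round sphere of radius $|u|$: the metric coefficients, volume form, and isoperimetric constant are all pinned between universal constants once $C_0$ is large. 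This is exactly the content of Lemma 5.1 in \cite{Chr}, which we are invoking.

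First I would recall the classical Sobolev inequality on a compact surface $(S,\gs)$ in the form controlled only by the isoperimetric constant $I(S)$ and the area, as in \cite{Chr}, Lemma 5.1: for a tensorfield $\theta$ and $q \in (2,\infty)$,
\begin{align*}
\left(\int_S |\theta|^q \D\mu_{\gs}\right)^{\frac 1q} &\lesssim_q \left(\mathrm{Area}(S)\right)^{\frac 1q - \frac 12}\left(\int_S (|\nablas\theta|^2 + |\theta|^2)\D\mu_{\gs}\right)^{\frac 12}\cdot\max\{1, I(S)\}^{\frac12},\\
\|\theta\|_{L^\infty(S)} &\lesssim \left(\mathrm{Area}(S)\right)^{-\frac 12}\left(\int_S (|\nablas\theta|^2 + |\theta|^2)\D\mu_{\gs}\right)^{\frac 12} + \ (\text{higher derivative terms}).
\end{align*}
Then I would substitute $S = S_{\ub,u}$, use \eqref{area} to replace $\mathrm{Area}(S_{\ub,u})$ by $|u|^2$ up to a universal constant, and use \eqref{isoperimetric} to bound $I(S_{\ub,u}) \lesssim 1$. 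Tracking the powers of $|u|$ carefully: the scale-invariant norm $\|\theta\|_{\L^q(\ub,u)} = (\int |u|^{-2}|\theta|^q)^{1/q}$ carries a weight $|u|^{-2/q}$, while $\|\theta\|_{\H^1(\ub,u)} = \|\theta\|_{\L^2(\ub,u)} + \||u|\nablas\theta\|_{\L^2(\ub,u)}$ carries $|u|^{-1}$ overall plus the extra $|u|$ on the derivative term; one checks the powers of $|u|$ match on both sides, which is precisely why these particular weights were chosen. For the $L^\infty$ estimate I would apply the $L^q$ estimate with, say, $q=4$ to $\theta$ and to $|u|\nablas\theta$, then use a standard Morrey-type / Gagliardo–Nirenberg step on the surface to pass from $W^{1,4}$ control to $L^\infty$ control, absorbing the factor $|u|^{-1/2}$ correctly into the scale-invariant normalization; this gives $\|\theta\|_{\L^\infty(\ub,u)} \lesssim \|(|u|\nablas)\theta\|_{\L^4(\ub,u)} + \|\theta\|_{\L^4(\ub,u)}$, and a second application of the first inequality (with $q=4$) to each of $\theta$ and $(|u|\nablas)\theta$ yields the bound by $\|\theta\|_{\H^2(\ub,u)}$.

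I do not expect a serious obstacle here; this is a standard and essentially mechanical argument once \eqref{eigenvalues}, \eqref{isoperimetric}, \eqref{area} are in hand, and \cite{Chr} does the heavy lifting. The only point requiring genuine care is bookkeeping the powers of $|u|$ so that the inequalities are genuinely scale-invariant — that is, verifying that the $|u|$-weights in the definitions of $\|\cdot\|_{\L^p}$, $\|\cdot\|_{\H^n}$ conspire exactly with the scaling $\mathrm{Area} \sim |u|^2$, $\nablas \sim |u|^{-1}$ of the classical inequality. A secondary technical point is that the classical Sobolev constant on $(S_{\ub,u},\gs)$ depends a priori on the geometry, so one must make sure that the dependence is only through $I(S_{\ub,u})$ and $\mathrm{Area}(S_{\ub,u})$ (which is the form of Lemma 5.1 in \cite{Chr}) and not, say, through curvature bounds that we have not yet established at this stage of the bootstrap — but this is exactly why \cite{Chr}'s isoperimetric formulation is used rather than a curvature-based one.
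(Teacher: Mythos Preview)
Your proposal is correct and follows exactly the same approach as the paper: invoke Lemma 5.1 of \cite{Chr} together with the isoperimetric bound \eqref{isoperimetric} and the area bound \eqref{area} to obtain the scale-invariant Sobolev inequalities. The paper in fact gives even less detail than you do, simply citing Lemma 5.1 of \cite{Chr} directly.
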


Here $A\lesssim_q B$ means $A\le c_qB$ where $c_q$ is a constant depending only on $q$.

By the classical H\"older inequality and the Sobolev inequalities, we have
\begin{lemma}\label{Holder}
Given tangential tensorfields $\theta_1,\cdots,\theta_n$, for $i\ge2$, we have
\begin{align*}
\|\theta_1\cdots\theta_n\|_{\H^i(\ub,u)}\lesssim_n\|\theta_1\|_{\H^i(\ub,u)}\cdots\|\theta_n\|_{\H^i(\ub,u)}.
\end{align*}
\end{lemma}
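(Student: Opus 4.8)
The plan is to run the classical ``algebra property of $H^s$ for $s$ large'' argument, adapted to the scale-invariant norms $\H^i(\ub,u)$ and the Sobolev inequalities of Lemma \ref{Sobolev}. The crucial structural point is that $\nablas$ is the Levi-Civita connection of $\gs$, hence metric compatible and torsion free, so it commutes with contractions and obeys the exact Leibniz rule on tensor products; consequently \emph{no} curvature commutators (Lemma \ref{commutator}) are needed here and the whole estimate is algebraic. Thus, for each $0\le k\le i$ one has schematically
\[
(|u|\nablas)^k(\theta_1\cdots\theta_n)=\sum_{k_1+\cdots+k_n=k}c_{k_1,\ldots,k_n}\,(|u|\nablas)^{k_1}\theta_1\cdots(|u|\nablas)^{k_n}\theta_n,
\]
where the coefficients $c_{k_1,\ldots,k_n}$ are bounded in terms of $n$ and $i$, and the dots still denote the fixed contraction pattern of the original product.

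Next I would estimate each term above in the scale-invariant $\L^2(\ub,u)$ norm by H\"older, splitting into cases according to the number $p$ of factors carrying at least one derivative. If $p\le 1$, I place the (at most one) differentiated factor $\theta_1$ in $\L^2$ and all remaining, undifferentiated factors in $\L^\infty$; Lemma \ref{Sobolev} gives $\|(|u|\nablas)^{k_1}\theta_1\|_{\L^2}\le\|\theta_1\|_{\H^{k_1}}\le\|\theta_1\|_{\H^i}$ and $\|\theta_j\|_{\L^\infty}\lesssim\|\theta_j\|_{\H^2}\le\|\theta_j\|_{\H^i}$ (here $i\ge 2$ is used). If $p\ge 2$, say $\theta_1,\ldots,\theta_p$ carry $k_1,\ldots,k_p\ge 1$ derivatives, I use H\"older with exponent $2p$ on each of these $p$ factors and $\infty$ on the rest; since each of the other $p-1$ differentiated factors contributes at least one derivative, $k_l\le k-(p-1)\le i-1$ for every $l\le p$, so Lemma \ref{Sobolev} yields $\|(|u|\nablas)^{k_l}\theta_l\|_{\L^{2p}}\lesssim_{n}\|(|u|\nablas)^{k_l}\theta_l\|_{\H^1}\lesssim\|\theta_l\|_{\H^{k_l+1}}\le\|\theta_l\|_{\H^i}$, while the undifferentiated factors are again handled by $\H^2\hookrightarrow\L^\infty$. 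Summing over $0\le k\le i$ and over the finitely many multi-indices then produces $\|\theta_1\cdots\theta_n\|_{\H^i(\ub,u)}\lesssim_n\prod_{j=1}^n\|\theta_j\|_{\H^i(\ub,u)}$.

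The argument is pure bookkeeping and presents no genuine obstacle; the only point requiring a little care is the exponent count for small $i$ (notably $i=2$), where the second-most-differentiated factor can already carry $i-1$ derivatives, so one cannot afford the endpoint embedding $\H^2\hookrightarrow\L^\infty$ on a differentiated factor and must instead use the finite-exponent embedding $\H^1\hookrightarrow\L^q$, $q<\infty$. This is precisely why the case split on $p$ is arranged so that $\L^\infty$ is only ever applied to undifferentiated factors. Since Lemma \ref{Sobolev} supplies $\|\theta\|_{\L^q}\lesssim_q\|\theta\|_{\H^1}$ for every finite $q>2$, and we only ever need $q=2p\le 2n$, all constants are controlled in terms of $n$ (and the fixed integer $i$), as claimed.
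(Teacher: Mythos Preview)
Your proof is correct and is precisely the argument the paper has in mind: the paper does not write out a proof but simply states that the lemma follows ``by the classical H\"older inequality and the Sobolev inequalities,'' and your Leibniz-rule expansion together with the case split on the number of differentiated factors is the standard way to make this precise. The only point worth noting is that the scale-invariant H\"older inequality $\|\xi_1\cdots\xi_n\|_{\L^p}\le\prod\|\xi_j\|_{\L^{p_j}}$ (with $\sum 1/p_j=1/p$) does hold because the weight $|u|^{-2}$ distributes correctly across the exponents, which you are implicitly using.
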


We then introduce the following Gronwall type estimates which are need to estimate using the transport equations:
\begin{lemma}\label{Gronwall}
For an $s$-covariant tengential tensorfield $\theta$, and any real number $\nu$, we have
\begin{align*}
\|\theta\|_{\L^2(\ub,u)}
\lesssim&_s \left(\|\theta|_{\L^2(0,u)}+\delta\|D\theta\|_{\L_{\ub}^1\L^2(u)}\right).\\
\||u|^{s+\nu}\theta\|_{\L^q(\ub,u)}\lesssim&_{s,\nu}\left(\||u|^{s+\nu}\theta\|_{\L^q(\ub,u_0)}+\||u|^{s+\nu+1}(\Db\theta+\frac{\nu}{2}\Omega\tr\chib\theta)\|_{\L^1_{[u_0,u]}\L^2(\ub)}\right),
\end{align*}\end{lemma}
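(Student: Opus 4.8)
The plan is to derive both bounds by the same device: express the scale-invariant norm as an integral over the fibre sphere against a power of $|u|$, differentiate along the relevant null direction, and close by Gronwall using the geometric estimates of the preliminary lemmas together with the smallness conditions \eqref{smallness}. For the first bound, fix $u$ and set $g(\ub)=\|\theta\|_{\L^2(\ub,u)}$; since $u$ is held fixed the factor $|u|^{-2}$ in the scale-invariant $\L^2$ norm is constant, and I would compute $\tfrac{d}{d\ub}\int_{S_{\ub,u}}|\theta|^2\D\mu_{\gs}$ from the two standard variation formulas for the foliation-preserving flow of $L=\partial_{\ub}$, namely $\mathcal{L}_L\D\mu_{\gs}=\Omega\tr\chi\,\D\mu_{\gs}$ and $\mathcal{L}_L\gs^{AB}=-2(\Omega\chi)^{AB}$ (so that each of the $s$ metric contractions in $|\theta|^2$ produces a factor of $\Omega\chi$), together with $D\theta=\mathcal{L}_L\theta$, which is legitimate because $\theta$ is tangential and $L$ preserves the foliation. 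This gives $\tfrac{d}{d\ub}g^2\le 2\|D\theta\|_{\L^2(\ub,u)}\,g+(2s+1)\|\Omega\chi\|_{\L^\infty(\ub,u)}\,g^2$, hence $g'(\ub)\le\|D\theta\|_{\L^2(\ub,u)}+\tfrac{2s+1}{2}\|\Omega\chi\|_{\L^\infty(\ub,u)}\,g(\ub)$. Integrating in $\ub$ over $[0,\delta]$ and using $\|\Omega\chi\|_{\L^\infty(\ub,u)}\le\|\Omega\chih\|_{\L^\infty(\ub,u)}+\tfrac12\|\Omega\tr\chi\|_{\L^\infty(\ub,u)}\lesssim C^{1/2}|u|^{-1}\mathscr{F}\mathcal{A}$ from \eqref{bootstrap-geometric}--\eqref{estimate-Omegatrchi-geometric}, so that $\int_0^\delta\|\Omega\chi\|_{\L^\infty(\ub',u)}\D\ub'\lesssim C^{1/2}\delta|u|^{-1}\mathscr{F}\mathcal{A}\lesssim1$ by \eqref{smallness} and $|u|\ge|u_1|$, the Gronwall factor is a universal constant; recognising $\int_0^\delta\|D\theta\|_{\L^2(\ub',u)}\D\ub'=\delta\|D\theta\|_{\L^1_{\ub}\L^2(u)}$ yields the first inequality.

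For the second, the same computation along the incoming direction requires care with the weight. Set $F(u)=\||u|^{s+\nu}\theta\|_{\L^q(\ub,u)}^q=\int_{S_{\ub,u}}|u|^{q(s+\nu)-2}|\theta|^q\D\mu_{\gs}$ and differentiate in $u$, using $\mathcal{L}_{\Lb}\D\mu_{\gs}=\Omega\tr\chib\,\D\mu_{\gs}$, $\mathcal{L}_{\Lb}\gs^{AB}=-2(\Omega\chib)^{AB}$, $\partial_u|u|^{a}=-a|u|^{a-1}$, and the rearrangement $\Db(|\theta|^2)=2\langle\Db\theta+\tfrac{\nu}{2}\Omega\tr\chib\,\theta,\theta\rangle-\nu\,\Omega\tr\chib|\theta|^2-2\sum_{j}\langle(\Omega\chib)\bullet_j\theta,\theta\rangle$, where $\bullet_j$ denotes contraction of $\Omega\chib$ into the $j$-th slot of $\theta$. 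Inserting $\Omega\tr\chib=-\tfrac{2}{|u|}+(\Omega\tr\chib+\tfrac{2}{|u|})$ and $\Omega\chib=-\tfrac{1}{|u|}\gs+\Omega\chibh+\tfrac12(\Omega\tr\chib+\tfrac{2}{|u|})\gs$, the weight $|u|^{s+\nu}$ and the modification $\tfrac{\nu}{2}\Omega\tr\chib\,\theta$ are matched precisely so that every genuinely singular term --- those of size $\tfrac1{|u|}|\theta|^q$ --- cancels: the contributions $-(q(s+\nu)-2)$ from the weight, $+q\nu$ from the $\tfrac\nu2\Omega\tr\chib\theta$ term and the area variation, and $+qs$ from the $s$ metric contractions of the $-\tfrac1{|u|}\gs$ part of $\Omega\chib$ add up to zero. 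What survives is $\tfrac{d}{du}F=\int_{S_{\ub,u}}|u|^{q(s+\nu)-2}\big(q|\theta|^{q-2}\langle\Db\theta+\tfrac\nu2\Omega\tr\chib\,\theta,\theta\rangle+O\big((\|\Omega\chibh\|_{\L^\infty(\ub,u)}+\|\Omega\tr\chib+\tfrac2{|u|}\|_{\L^\infty(\ub,u)})|\theta|^q\big)\big)\D\mu_{\gs}$.

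To close, bound the first term on the sphere by H\"older as $q F^{(q-1)/q}\||u|^{s+\nu}(\Db\theta+\tfrac\nu2\Omega\tr\chib\,\theta)\|_{\L^q(\ub,u)}$, obtaining $\tfrac{d}{du}F^{1/q}\le\||u|^{s+\nu}(\Db\theta+\tfrac\nu2\Omega\tr\chib\,\theta)\|_{\L^q(\ub,u)}+\tfrac1q\big(\|\Omega\chibh\|_{\L^\infty(\ub,u)}+\|\Omega\tr\chib+\tfrac2{|u|}\|_{\L^\infty(\ub,u)}\big)F^{1/q}$. Since the bootstrap bound on $\mathcal{O}$ in \eqref{bootstrap} together with Lemma \ref{Sobolev} gives $\|\Omega\chibh\|_{\L^\infty(\ub,u)},\,\|\Omega\tr\chib+\tfrac2{|u|}\|_{\L^\infty(\ub,u)}\lesssim C^{1/4}\delta|u|^{-2}\mathscr{F}\mathcal{A}$, whose integral against $\D u'$ over $[u_0,u]$ is $\lesssim C^{1/4}\delta|u|^{-1}\mathscr{F}\mathcal{A}\lesssim1$ by \eqref{smallness}, Gronwall again produces a universal factor, and using $\||u|^{s+\nu+1}(\cdot)\|_{\L^1_{[u_0,u]}\L^q(\ub)}=\int_{u_0}^u\||u'|^{s+\nu}(\cdot)\|_{\L^q(\ub,u')}\D u'$ gives exactly the stated inequality (the case $q=2$ being the one used later). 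The only genuinely delicate point is the exact cancellation of the $\tfrac1{|u|}$-terms in the second estimate --- i.e.\ verifying that the weight and the $\tfrac\nu2\Omega\tr\chib\,\theta$ modification are correctly matched --- and, as throughout this section, checking that the accumulated Gronwall exponents are absorbed by the smallness conditions; both are short computations once the preliminary lemmas are available.
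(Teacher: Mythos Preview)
Your approach is correct and is exactly the standard argument underlying the lemma; the paper's proof simply delegates the transport estimate itself to Lemmas~4.4--4.7 of \cite{Chr} and checks only that the required integrals $\int|\Omega\chi|\,\D\ub'$ and $\int(|\Omega\chibh|+|\Omega\tr\chib+\tfrac{2}{|u|}|)\,\D u'$ are bounded via \eqref{bootstrap-geometric}, \eqref{bootstrap}, and \eqref{smallness}, whereas you have written that argument out in full. The key geometric inputs you invoke are the same.

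One small bookkeeping slip: your displayed coefficients in the cancellation do not literally add to zero. The weight contributes $-(q(s+\nu)-2)$, the $s$ metric contractions contribute $+qs$, the $\tfrac{\nu}{2}\Omega\tr\chib\,\theta$ modification contributes $+q\nu$, and the area variation $\mathcal{L}_{\Lb}\D\mu_{\gs}=\Omega\tr\chib\,\D\mu_{\gs}$ contributes an additional $-2$ which you folded into the ``$+q\nu$'' term. With that $-2$ restored, the sum is indeed $-(q(s+\nu)-2)+qs+q\nu-2=0$, and your conclusion stands.
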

\begin{proof}
The first inequality is similar to those in Lemma 4.4 and 4.6 in \cite{Chr}. Using the argument in \cite{Chr}, the first inequality holds if for any $\ub',\ub\in[0,\delta]$,
\begin{align*}
\int_{\ub'}^{\ub}\Omega\tr\chi\D \ub'', \int_{\ub'}^{\ub}|\Omega\chih|\D \ub''\lesssim1.
\end{align*}
This is true because both of them are bounded by, 
\begin{align*}
\int_0^\delta C^{\frac{1}{2}}|u|^{-1}\mathscr{F}\mathcal{A}\D\ub'\lesssim C^{\frac{1}{2}}\delta|u|^{-1}\mathscr{F}\mathcal{A}\lesssim C^{-1},
\end{align*}
where we use \eqref{bootstrap-geometric} and the smallness conditions \eqref{smallness}.

The second inequality is similar to those in Lemma 4.5 and 4.7 in \cite{Chr}. It holds if for any $u',u\in[u_0,u_1]$,
\begin{align*}
\int_{u'}^u\left(\Omega\tr\chib+\frac{2}{|u|}\right)\D u'', \int_{u'}^u|\Omega\chibh|\D u''\lesssim1.
\end{align*}
This is true because, from the Sobolev ienqualities, Lemma \ref{Sobolev},
\begin{align*}
\left|\Omega\tr\chib+\frac{2}{|u|}\right|, |\Omega\chibh|\lesssim C^{\frac{1}{4}}\delta|u|^{-2}\mathscr{F}\mathcal{A},
\end{align*}
and then both of the above integrals are bounded by
\begin{align*}
\int_{u'}^{u}C^{\frac{1}{4}}\delta|u|^{-2}\mathscr{F}\mathcal{A}\D u''\lesssim C^{\frac{1}{4}}\delta|u|^{-1}\mathscr{F}\mathcal{A}\lesssim C^{-1}.
\end{align*}
\end{proof}
In addition, we have the Gronwall type estimates for the augular derivatives.
\begin{lemma}\label{Gronwallderivative}
For an $s$-covariant tengential tensorfield $\theta$, and any real number $\nu$, we have, for $n\le4$ (if $s=0$, then $n\le5$)
\begin{align*}
\|\theta\|_{\H^n(\ub,u)}
\lesssim&_{s} \|\theta\|_{\H^n(0,u)}+\delta\|D\theta\|_{\L_{\ub}^1\H^n(u)}.\\
\||u|^{s+\nu}\theta\|_{\H^n(\ub,u)}\lesssim&_{s,\nu}\left(\||u|^{s+\nu}\theta\|_{\H^n(\ub,u_0)}+\||u|^{s+\nu+1}(\Db\theta+\frac{\nu}{2}\Omega\tr\chib\theta)\|_{\L^1_{[u_0,u]}\H^n(\ub)}\right),
\end{align*}
\end{lemma}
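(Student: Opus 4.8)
The plan is to reduce everything to the scalar transport estimates of Lemma~\ref{Gronwall} by commuting angular derivatives through the transport equations, and then absorbing the commutator terms by a Gronwall argument. Fix $u$. Since $|u|$ is constant on each $S_{\ub,u}$, one has $(|u|\nablas)^i\theta=|u|^i\nablas^i\theta$, hence $\|\theta\|_{\H^n(\ub,u)}=\sum_{i=0}^n\|(|u|\nablas)^i\theta\|_{\L^2(\ub,u)}$ and $\||u|^{s+\nu}\theta\|_{\H^n(\ub,u)}=\sum_{i=0}^n\||u|^{(s+i)+\nu}\nablas^i\theta\|_{\L^2(\ub,u)}$. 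For each $i\le n$ we apply the first (resp.\ second) estimate of Lemma~\ref{Gronwall} to the $(s+i)$-covariant tangential tensorfield $\nablas^i\theta$, with weight exponent $(s+i)+\nu$ in the second case. Summing over $i$, the initial terms assemble into $\|\theta\|_{\H^n(0,u)}$ (resp.\ $\||u|^{s+\nu}\theta\|_{\H^n(\ub,u_0)}$), and the task is reduced to estimating $D\nablas^i\theta$, resp.\ $\Db\nablas^i\theta+\tfrac{\nu}{2}\Omega\tr\chib\,\nablas^i\theta$.

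Now we commute, using Lemma~\ref{commutator}:
\begin{align*}
D\nablas^i\theta&=\nablas^i(D\theta)+\sum_{j=1}^i\nablas^j(\Omega\chi)\cdot\nablas^{i-j}\theta,\\
\Db\nablas^i\theta+\tfrac{\nu}{2}\Omega\tr\chib\,\nablas^i\theta
&=\nablas^i\!\left(\Db\theta+\tfrac{\nu}{2}\Omega\tr\chib\,\theta\right)
+\sum_{j=1}^i\nablas^j(\Omega\chib)\cdot\nablas^{i-j}\theta
-\tfrac{\nu}{2}\sum_{j=1}^i\nablas^j(\Omega\tr\chib)\cdot\nablas^{i-j}\theta,
\end{align*}
where, if $\theta$ is a function, all $i$ in the commutator sums become $i-1$ --- which is exactly why the scalar case allows $n\le5$ instead of $n\le4$. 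Summing over $i\le n$, the terms $\nablas^i(D\theta)$ assemble into $\delta\|D\theta\|_{\L^1_{\ub}\H^n(u)}$ and the terms $\nablas^i(\Db\theta+\tfrac{\nu}{2}\Omega\tr\chib\theta)$ into $\||u|^{s+\nu+1}(\Db\theta+\tfrac{\nu}{2}\Omega\tr\chib\theta)\|_{\L^1_{[u_0,u]}\H^n(\ub)}$, i.e.\ the right-hand sides claimed in the lemma. The remaining terms are products $\nablas^j(\Omega\chi)\cdot\nablas^{i-j}\theta$, $\nablas^j(\Omega\chib)\cdot\nablas^{i-j}\theta$ and $\nablas^j(\Omega\tr\chib)\cdot\nablas^{i-j}\theta$ with $1\le j$ and $j+(i-j)\le n$.

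For these, the key point is that the coefficient factor always carries at least one angular derivative (the commutator sums start at $j=1$) and at most four (since $j\le n\le 4$ for tensors and $j\le n-1\le4$ for functions); writing $\Omega\tr\chi=\tfrac{2\Omega^2h}{|u|}+(\Omega\tr\chi-\tfrac{2\Omega^2h}{|u|})$, $\Omega\tr\chib=-\tfrac{2}{|u|}+(\Omega\tr\chib+\tfrac{2}{|u|})$, and using $\nablas^j(1/|u|)=0$ for $j\ge1$ and $\ds\log\Omega=\tfrac12(\eta+\etab)$, every such factor is controlled by the bootstrap assumption~\eqref{bootstrap} (through $\mathcal{O}$) together with~\eqref{bootstrap-geometric}, and in scale-invariant norms is of the ``small'' size (comparable to $\delta|u|^{-2}$, up to the weight bookkeeping). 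Placing whichever of the two factors carries fewer angular derivatives in $\L^\infty$ via the Sobolev inequalities of Lemma~\ref{Sobolev}, and the other in $\L^2$ (invoking Lemma~\ref{Holder} for the top-order products), the total commutator contribution is bounded by $c\int_0^{\ub}\varepsilon(\ub',u)\,\|\theta\|_{\H^n(\ub',u)}\,\D\ub'$ in the $D$-case and by $c\int_{u_0}^u|u'|^{-1}\varepsilon'(\ub,u')\,\||u'|^{s+\nu}\theta\|_{\H^n(\ub,u')}\,\D u'$ in the $\Db$-case, where by~\eqref{bootstrap-geometric}, the smallness conditions~\eqref{smallness} and the auxiliary condition~\eqref{auxiliary} the coefficients satisfy $\int_0^{\delta}\varepsilon\,\D\ub'\lesssim C^{-1}$ and $\int_{u_0}^u|u'|^{-1}\varepsilon'\,\D u'\lesssim C^{-1}$ --- exactly the smallness used in the proof of Lemma~\ref{Gronwall} (e.g.\ $\int_{u_0}^u|\Omega\chibh|\,\D u'\lesssim C^{-1}$). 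A Gronwall argument in $\ub$ (resp.\ in $u$) then absorbs these terms, finishing the proof.

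The main obstacle is not analytic: beyond the scalar Lemma~\ref{Gronwall} there is nothing new here, and the care required is bookkeeping --- keeping the weight $|u|^{s+\nu}$ and the $\tfrac{\nu}{2}\Omega\tr\chib\theta$ correction aligned under commutation (so that $[\nablas^i,\Omega\tr\chib]\theta$ generates only small, already-controlled products), and verifying that every $\nablas^j(\Omega\chi)$, $\nablas^j(\Omega\chib)$ that occurs is of order between $1$ and $4$, hence both small and covered by $\mathcal{O}$ in~\eqref{bootstrap}, so that the smallness conditions alone close the Gronwall loop.
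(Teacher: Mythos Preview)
Your proposal is correct and follows essentially the same approach as the paper: commute $\nablas^i$ through the transport equations via Lemma~\ref{commutator}, apply the scalar estimates of Lemma~\ref{Gronwall} to each $\nablas^i\theta$, and absorb the commutator terms using the bootstrap bounds on $\nablas^j(\Omega\chi)$, $\nablas^j(\Omega\chib)$ (with $1\le j\le4$) and the smallness conditions~\eqref{smallness}. The paper closes slightly differently---by taking supremum in $\ub$ and absorbing a $C^{-1}\|\theta\|_{\L^\infty_{\ub}\H^n}$ term directly rather than by a Gronwall loop---but this is equivalent; your invocation of the auxiliary condition~\eqref{auxiliary} is unnecessary here, since the commutator coefficients are already small enough by~\eqref{smallness} alone.
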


\begin{proof}
For the first inequality, we apply the above Gronwall estimate, Lemma \ref{Gronwall} to the following equation:
\begin{align*}
D\nablas^i\theta=\nablas^iD\theta+[D,\nablas^i]\theta=\nablas^iD\theta+\sum_{j=1}^i\nablas^j(\Omega\chi)\nablas^{i-j}\theta.
\end{align*}
By H\"older inequality and Sobolev inequalies,
\begin{align*}
\|\theta\|_{\H^n(\ub,u)}
\lesssim \|\theta\|_{\H^n(0,u)}+\delta\|D\theta\|_{\L_{\ub}^1\H^n(u)}+\delta\|\Omega\chi\|_{\L^\infty_{\ub}\H^n(\ub)}\|\theta\|_{\L^\infty_{\ub}\H^n(u)},
\end{align*}
where we have used the following H\"older type inequality:
\begin{align}\label{Holder-ub}
\|\cdot\|_{\L^1_{\ub}}\lesssim\|\cdot\|_{\L^2_{\ub}}\lesssim\|\cdot\|_{\L^\infty_{\ub}}.
\end{align}
By \eqref{estimate-Omegatrchi-geometric}, the bootstrap assumptions \eqref{bootstrap} on $\Omega\chih$ and the smallness conditions \eqref{smallness}, we have, for $n\le 4$,
\begin{align*}
\|\theta\|_{\H^n(\ub,u)}
\lesssim& \|\theta\|_{\H^n(0,u)}+\delta\|D\theta\|_{\L_{\ub}^1\H^n(u)}+C^{\frac{1}{4}}\delta|u|^{-1}\mathscr{F}\mathcal{A}\|\theta\|_{\L^\infty_{\ub}\H^n(u)}\\
\lesssim& \|\theta\|_{\H^n(0,u)}+\delta\|D\theta\|_{\L_{\ub}^1\H^n(u)}+C^{-1}\|\theta\|_{\L^\infty_{\ub}\H^n(u)},
\end{align*}
Then if $C_0$ is sufficiently large, the last term on the right hand side can be absorbed by the left hand side after taking supremum on $\ub$. We then obtain the desired inequality. If $s=0$, that is, when $\theta$ is a function, then $D\nablas\theta=\nablas D\theta$. Therefore $[D,\nablas^5]\theta$ contains only four derivatives of $\Omega\chi$.

The second inequality can be obtained similarly.
\end{proof}

We then introduce an estimate for $\Omega$ and its derivatives:

\begin{lemma}
\begin{equation}\label{estimate-Omega}
\frac{1}{2}\Omega_0\leq\Omega\leq2\Omega_0,
\end{equation}
and
\begin{equation*}
\|\Omega^s\|_{\H^4(\ub,u)}\lesssim_s \Omega_0^s.
\end{equation*}
Moreover, we have
\begin{equation}\label{estimate-Omega-improve}
\|\Omega^s-\Omega_0^s\|_{\H^4(\ub,u)}\lesssim_s C^{\frac{1}{4}}\Omega_0^s\delta|u|^{-1}\mathscr{F}\mathscr{W}^{\frac{1}{2}}\mathcal{A}.
\end{equation}
\end{lemma}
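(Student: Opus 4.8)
The plan is to exploit that the lapse obeys the simple transport equation $D\log\Omega=\omega$ along the generators of the outgoing cones, together with the normalization $\Omega|_{\Cb_0}=\Omega_0$. Since $L=\partial_{\ub}$ acts on functions as $\partial_{\ub}$, integrating in $\ub$ at fixed $(u,\vartheta)$ gives $\log\Omega(\ub,u,\vartheta)=\log\Omega_0(u)+F(\ub,u,\vartheta)$, where $F:=\int_0^{\ub}\omega(\ub',u,\vartheta)\,\D\ub'$ satisfies $DF=\omega$ and $F|_{\ub=0}=0$; equivalently $\Omega^s=\Omega_0^s\,\mathrm{e}^{sF}$, and because $\Omega_0$ depends on $u$ only it factors out of every norm $\|\cdot\|_{\H^4(\ub,u)}$, which lives on $S_{\ub,u}$. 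The whole lemma thus reduces to controlling $F$ and then the composition $\mathrm{e}^{sF}-1$.

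First I would bound $F$ in $L^\infty$. By the Sobolev inequality (Lemma \ref{Sobolev}), $\|\omega\|_{\L^\infty(\ub,u)}\lesssim\|\omega\|_{\H^2(\ub,u)}\le\|\omega\|_{\H^4(\ub,u)}$, and the bootstrap assumption \eqref{bootstrap} for $\mathcal{O}$ gives $\|\omega\|_{\H^4(\ub,u)}\lesssim C^{\frac14}|u|^{-1}\mathscr{F}\mathscr{W}^{\frac12}\mathcal{A}$. Integrating over $\ub\in[0,\delta]$ yields $|F|\lesssim C^{\frac14}\delta|u|^{-1}\mathscr{F}\mathscr{W}^{\frac12}\mathcal{A}$, and the smallness condition \eqref{smallness} (using $|u|\ge|u_1|$) forces the right-hand side to be $\lesssim C^{-\frac74}\le\log 2$ once $C_0$ is large, hence $\tfrac12\Omega_0\le\Omega\le2\Omega_0$; in particular this improves the crude a priori bound $\tfrac14\Omega_0\le\Omega\le4\Omega_0$ in \eqref{bootstrap-geometric}. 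Next I would run the transport--Gronwall estimate of Lemma \ref{Gronwallderivative} on $DF=\omega$ with vanishing data at $\ub=0$: this gives $\|F\|_{\H^4(\ub,u)}\lesssim\delta\|\omega\|_{\L^1_{\ub}\H^4(u)}\lesssim\delta\sup_{\ub'}\|\omega\|_{\H^4(\ub',u)}\lesssim C^{\frac14}\delta|u|^{-1}\mathscr{F}\mathscr{W}^{\frac12}\mathcal{A}$, which by \eqref{smallness} is $\lesssim C^{-\frac74}\le1$.

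Finally I would feed these back into $\Omega^s=\Omega_0^s\mathrm{e}^{sF}$. Expanding $\mathrm{e}^{sF}=1+\sum_{k\ge1}(sF)^k/k!$ and iterating the bilinear H\"older/Sobolev product estimate (Lemma \ref{Holder}) gives $\|F^k\|_{\H^4(\ub,u)}\lesssim c^{\,k-1}\|F\|_{\H^4(\ub,u)}^k$; summing and using $\|F\|_{\H^4(\ub,u)}\le1$ yields $\|\mathrm{e}^{sF}\|_{\H^4(\ub,u)}\lesssim_s1$ (note $\|1\|_{\H^4(\ub,u)}\sim1$ by \eqref{area}), hence $\|\Omega^s\|_{\H^4(\ub,u)}\lesssim_s\Omega_0^s$; factoring $\|F\|_{\H^4(\ub,u)}$ out of the same sum gives $\|\mathrm{e}^{sF}-1\|_{\H^4(\ub,u)}\lesssim_s\|F\|_{\H^4(\ub,u)}\lesssim_s C^{\frac14}\delta|u|^{-1}\mathscr{F}\mathscr{W}^{\frac12}\mathcal{A}$, hence $\|\Omega^s-\Omega_0^s\|_{\H^4(\ub,u)}\lesssim_s C^{\frac14}\Omega_0^s\delta|u|^{-1}\mathscr{F}\mathscr{W}^{\frac12}\mathcal{A}$, which is \eqref{estimate-Omega-improve}. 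The only genuinely delicate point is that $\nablas$ and the $\ub$-integration defining $F$ do not commute, because the induced metric $\gs$ depends on $\ub$; this is precisely why the angular estimates must be obtained by applying the transport-type Gronwall inequality of Lemma \ref{Gronwallderivative} directly to $F$, whose proof already absorbs the commutator $[D,\nablas^i]F$ using the bootstrap bounds on $\Omega\chi$ and the smallness conditions. Everything else is bookkeeping: the $s$-dependence of the constants is harmless since $s$ ranges over a fixed finite set of exponents, and the only essential use of smallness is to make $C^{\frac14}\delta|u|^{-1}\mathscr{F}\mathscr{W}^{\frac12}\mathcal{A}$ at most $1$ (and at most $\log 2$) so that the series for $\mathrm{e}^{sF}$ converges with the claimed linear dependence on $\|F\|_{\H^4}$.
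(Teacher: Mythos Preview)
Your proof is correct and follows essentially the same approach as the paper: both set $F=\log\Omega-\log\Omega_0$, use the transport equation $DF=\omega$ together with Lemma \ref{Gronwallderivative} and the bootstrap bound on $\omega$ to get $\|F\|_{\H^4(\ub,u)}\lesssim C^{\frac14}\delta|u|^{-1}\mathscr{F}\mathscr{W}^{\frac12}\mathcal{A}$, and then Sobolev to obtain the pointwise bound on $\Omega/\Omega_0$. The only cosmetic difference is in the last two estimates: the paper handles $\|\Omega^s\|_{\H^4}$ via the chain-rule type inequality $\|\Omega^s\|_{\H^4}\lesssim_s\|\Omega^s\|_{\L^\infty}(1+\|F\|_{\H^4}^4)$ and then derives \eqref{estimate-Omega-improve} by applying the transport estimate directly to $D\Omega^s=s\Omega^s\omega$, whereas you expand $\mathrm{e}^{sF}$ as a power series and use the algebra property of $\H^4$; both routes are equivalent and equally short.
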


\begin{proof}
Since $D\log{\Omega}=\omega$, we then write $D(\log\Omega-\log\Omega_0)=\omega$. Applying Lemma \ref{Gronwall}, 
\begin{equation*}
\|\log{\Omega}-\log{\Omega_0}\|_{\H^4(\ub,u)}\lesssim\delta\|\omega\|_{\L_{\ub}^1\H^4(u)}
\lesssim C^{\frac{1}{4}} \delta|u|^{-1}\mathscr{F}\mathscr{W}^{\frac{1}{2}}\mathcal{A}\lesssim C^{-1}.
\end{equation*}
In particular, if $C_0$ is sufficiently large, we have, by the Sobolev inequalities,
\begin{equation*}
\left|\log\frac{\Omega}{\Omega_0}\right|\le\log 2
\end{equation*}
which implies the first desired inequality. For the second, if $s\ne0$, then
\begin{align*}
&\|\Omega^s\|_{\H^4(\ub,u)}
\lesssim_s\|\Omega^s\|_{\L^{\infty}(\ub,u)}(1+\|\log{\Omega}-\log\Omega_0\|_{\H^4(\ub,u)}^4)\lesssim \Omega_0^s.
\end{align*}
The last estimate \eqref{estimate-Omega-improve} follows similarly from the equation $D\Omega^s=s\Omega^{s}\omega$ and the bootstrap assumptions \eqref{bootstrap}.
\end{proof}
This lemma implies that for a tangential tensorfield $\theta$, 
\begin{align}\label{estimate-Omegaequ}
\|\Omega^s\theta\|_{\H^n(\ub,u)}\lesssim \Omega_0^s\|\theta\|_{\H^n(\ub,u)}
\end{align}
for $n\le 4$. This inequality says that we do not need to worry about the appearance of $\Omega$ if we take less than four derivatives.

This lemma also improves the bound for $\Omega$ in \eqref{bootstrap-geometric}. The bounds for $\Omega\tr\chi$ and $\Omega\chih$ can also be improved by using the bootstrap assumptions \eqref{bootstrap} on $\mathcal{O}$ and the Sobolev inequalities, Lemma \ref{Sobolev}. The constants $C^{\frac{3}{4}}, C^{\frac{1}{2}}$ in \eqref{bootstrap-geometric} can be improved to $C^{\frac{1}{2}}, C^{\frac{1}{4}}$ respectively if $C_0$ is sufficiently large. This implies that all conclusions in this subsection hold without assuming \eqref{bootstrap-geometric}.

Finally, we introduce the elliptic estimates.
\begin{lemma}\label{elliptic}
Suppose that
\begin{align*}\|K\|_{\H^2(\ub,u)}\lesssim|u|^{-2}.\end{align*}
Now assume that $\theta$ is a tangential symmetric trace-free $(0,2)$ type tensorfield with
$$\divs\theta=f,$$ where $f$ is a tangential one-form. Then
\begin{align*}
\|\theta\|_{\H^5(\ub,u)}\lesssim\||u|f\|_{\H^4(\ub,u)}+(1+\||u|^2K\|_{\H^3(\ub,u)})\|\theta\|_{\H^2(\ub,u)}.
\end{align*}
Assume that $\xi$ is a tangential $(0,1)$ type tensorfield with
\begin{align*}
\divs\xi=f,
\curls\xi=g
\end{align*} where $f,g$ are functions. Then
\begin{align*}
\|\xi\|_{\H^5(\ub,u)}\lesssim\||u|(f,g)\|_{\H^4(\ub,u)}+(1+\||u|^2K\|_{\H^3(\ub,u)})\|\xi\|_{\H^4(\ub,u)}.
\end{align*}
Assume that $\phi$ is a function with
\begin{equation*}
\Deltas\phi=f.
\end{equation*}
We have
\begin{align*}
\|(|u|\nablas)\phi\|_{\H^4(\ub,u)} \lesssim\||u|^2f\|_{\H^3(\ub,u)}.
\end{align*}
\end{lemma}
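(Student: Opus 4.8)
The plan is to treat these as the standard $L^2$ elliptic (Hodge) estimates on the spheres $S_{\ub,u}$, proved by induction on the number of angular derivatives. The two inputs are the geometric control already established above --- that $\gs|_{S_{\ub,u}}$ is uniformly close to the round metric of radius $|u|$ (see \eqref{eigenvalues}), that the isoperimetric constant and $|u|^{-2}\mathrm{Area}(S_{\ub,u})$ are bounded (see \eqref{isoperimetric}, \eqref{area}), and hence that the Sobolev inequalities of Lemmas \ref{Sobolev}, \ref{Holder} and a Poincar\'e inequality hold with universal constants --- together with the commutation formulas of Lemma \ref{commutator}, which convert $\nablas^k\divs$, $\nablas^k\curls$, $\nablas^k\Deltas$ into $\divs\nablas^k$, $\curls\nablas^k$, $\Deltas\nablas^k$ up to terms of the form $\sum_j\nablas^{j-1}K\cdot\nablas^{k-j}(\cdot)$.

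First I would establish the base (first-order) estimates. For a symmetric trace-free $(0,2)$-tensor $\theta$ the Bochner-type integral identity reads $\int_{S_{\ub,u}}|\divs\theta|^2\,\D\mu_{\gs}=\frac{1}{2}\int_{S_{\ub,u}}|\nablas\theta|^2\,\D\mu_{\gs}+\int_{S_{\ub,u}}K|\theta|^2\,\D\mu_{\gs}$; for a $1$-form $\xi$ one has $\int_{S_{\ub,u}}(|\divs\xi|^2+|\curls\xi|^2)\,\D\mu_{\gs}=\int_{S_{\ub,u}}|\nablas\xi|^2\,\D\mu_{\gs}+\int_{S_{\ub,u}}K|\xi|^2\,\D\mu_{\gs}$; and for a function $\phi$ one has $\int_{S_{\ub,u}}|\nablas^2\phi|^2\,\D\mu_{\gs}=\int_{S_{\ub,u}}|\Deltas\phi|^2\,\D\mu_{\gs}-\int_{S_{\ub,u}}K|\nablas\phi|^2\,\D\mu_{\gs}$. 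In each case the curvature term is absorbed using $\|K\|_{\mathbb{L}^\infty(\ub,u)}\lesssim\|K\|_{\H^2(\ub,u)}\lesssim|u|^{-2}$ (the hypothesis), and for the last identity the zeroth-order term $\nablas\phi$ is first bounded by $\||u|^2 f\|$ via $\int|\nablas\phi|^2=-\int\phi\,\Deltas\phi$ and the Poincar\'e inequality on $(S_{\ub,u},\gs)$. This yields, in the scale-invariant norms, $\|(|u|\nablas)\theta\|_{\L^2(\ub,u)}\lesssim\||u|f\|_{\L^2(\ub,u)}+\|\theta\|_{\L^2(\ub,u)}$, the analogous bound for $\xi$, and $\|(|u|\nablas)\phi\|_{\L^2(\ub,u)}\lesssim\||u|^2 f\|_{\L^2(\ub,u)}$ in the function case.

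Next I would run the induction on the order. Applying $(|u|\nablas)^{k-1}$ to the Hodge system and invoking Lemma \ref{commutator}, the field $\nablas^{k-1}\theta$ (resp. $\nablas^{k-1}\xi$, $\nablas^{k-1}\phi$) solves a Hodge system of the same type whose source is $\nablas^{k-1}f$ plus commutator terms $\sum_{j=1}^{k-1}\nablas^{j-1}K\cdot\nablas^{k-1-j}\theta$. Feeding this into the base estimate and bounding the commutator terms in $\L^2(\ub,u)$ by H\"older and the Sobolev inequalities of Lemmas \ref{Sobolev}, \ref{Holder}, the genuinely top-order contribution is $\nablas^{k-2}K\cdot\theta$, controlled by $\|\nablas^{k-2}K\|_{\L^2}\|\theta\|_{\mathbb{L}^\infty}\lesssim\||u|^2K\|_{\H^{k-2}}\,|u|^{-2}\|\theta\|_{\H^2}$, which for $k\le5$ produces exactly the stated factor $(1+\||u|^2K\|_{\H^3})\|\theta\|_{\H^2}$, while every other product of derivatives of $K$ with derivatives of $\theta$ carries at most two derivatives on $K$ and is controlled with no loss by $\|K\|_{\H^2(\ub,u)}\lesssim|u|^{-2}$. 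Summing over $k=1,\dots,5$ (over $k\le4$ in the function case, since $(|u|\nablas)\phi$ is differentiated only four times) and adding the trivial bound of the low-order norms by $\|\theta\|_{\H^2}$ gives the three claimed inequalities; in the $1$-form case one keeps $\|\xi\|_{\H^4}$ rather than $\|\xi\|_{\H^2}$ on the right, since there one commutes four derivatives through and feeds $\nablas^4\xi$ into the base identity directly rather than re-solving, thereby avoiding a vector Poincar\'e inequality.

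The main obstacle is purely \emph{bookkeeping}: one must track the powers of $|u|$ and the placement of the weight $\||u|^2K\|_{\H^3}$ carefully enough that exactly the stated top-order term appears and nothing worse --- in particular checking at the highest order that the only commutator term forcing three derivatives onto $K$ is $\nablas^3K\cdot\theta$, paired against $\theta\in\H^2\hookrightarrow\mathbb{L}^\infty$, and that all remaining terms either carry $\le2$ derivatives on $K$ or are strictly lower order in $\theta$. No step is conceptually deep, but the weighted commutator estimates require some care.
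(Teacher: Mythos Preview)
Your approach is correct and essentially the same as the paper's: both rely on the standard $L^2$ Hodge/Bochner identities on $S_{\ub,u}$ (the paper simply cites Section~7.3 of \cite{Chr}), iterated via the commutation formulas of Lemma~\ref{commutator}, with the commutator terms controlled using $\|K\|_{\H^2}\lesssim|u|^{-2}$ and the Sobolev inequalities. The only cosmetic difference is in the function case: you work directly with the Bochner formula for functions and Poincar\'e to kill the $\|(|u|\nablas)\phi\|_{\L^2}$ term, whereas the paper applies the one-form estimate to $\nablas\phi$ (with $\divs\nablas\phi=f$, $\curls\nablas\phi=0$) at one order lower, so that only $\|K\|_{\H^2}$ enters and the stated bound follows without the $\||u|^2K\|_{\H^3}$ factor. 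One small caution on your phrasing: $\nablas^{k-1}\theta$ is a higher-rank tensor, not literally a symmetric trace-free $2$-tensor, so it does not solve ``a Hodge system of the same type''; the actual iteration in \cite{Chr} alternates the Hodge operator with its adjoint (or equivalently uses $\mathcal{D}_2{}^*\mathcal{D}_2=-\tfrac12\Deltas+K$), but this is a matter of implementation and does not affect the outcome you describe.
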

\begin{proof}
The first and second conclusions are obtained by repeatedly use the argument of Section 7.3 in \cite{Chr}. For the last conclusion, we apply the second conclusion for the one form $\nablas\phi$ for up to the fourth order derivatives. Therefore $\|K\|_{\H^3(\ub,u)}$ does not come in and
\begin{align*}
\|(|u|\nablas)\phi\|_{\H^4(\ub,u)} \lesssim\||u|^2f\|_{\H^3(\ub,u)}+\|(|u|\nablas)\phi\|_{\L^2(\ub,u)}\end{align*}
Integration by parts gives
$$\int_{S_{\ub,u}}|\nablas\phi|^2\D\mu_{\gs}\lesssim \int_{S_{\ub,u}}|f|^2\D\mu_{\gs}.$$
These two inequalities give the desired conclusion.
\end{proof}

We end this section by the following estimates for $\Omega\tr\chi$, $\Omega\tr\chib$ and $\Lb\phi$. The zeroth order bounds of these quantities are worse than their derivatives.
\begin{lemma}
Under the assumptions of the Theorem \ref{existencetheorem} and the bootstrap assumptions \eqref{bootstrap}, we have
\begin{align*}
|u|\|\Omega\tr\chi\|_{\H^4(\ub,u)}\lesssim\mathscr{F}\mathcal{A},\ |u|\|\Omega\tr\chib\|_{\H^4(\ub,u)}\lesssim1,\ \||u|\Lb\phi\|_{\L^2_{[u_0,u]}\H^4(\ub)}\lesssim\mathscr{W}^{\frac{1}{2}}.
\end{align*}\end{lemma}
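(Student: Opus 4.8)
The plan is to establish the three estimates by integrating the relevant evolution equations along $\Cb_{\ub}$ (for $\Omega\tr\chib$ and $\Lb\phi$) and along $C_u$ (for $\Omega\tr\chi$), using the Gronwall-type Lemma \ref{Gronwallderivative} together with the bootstrap assumptions \eqref{bootstrap} and the structure of the null structure equations. The key point is that, unlike the derivatives of these quantities which are controlled by the norms $\mathcal{O}$, $\widetilde{\mathcal{O}}$ (and hence by $C^{\frac14}\mathcal{A}$ times the appropriate weights), the zeroth-order parts of $\Omega\tr\chi$, $\Omega\tr\chib$ and $\Lb\phi$ do \emph{not} vanish or decay on $\Cb_0$, so they must be handled separately and their bounds are genuinely weaker — indeed for $\Lb\phi$ we must pay the weight $\mathscr{W}^{\frac12}$ coming from \eqref{Lbphi}.

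First I would treat $\Omega\tr\chi$. Writing $\Omega\tr\chi = \Omega^2\tr\chi'$ and using the Raychaudhuri equation $D\tr\chi' = -\tfrac12(\Omega\tr\chi')^2 - |\chih|^2 - 2(\Lh\phi)^2$ together with $D\Omega^2 = 2\Omega^2\omega$, one gets a transport equation for $\Omega\tr\chi$ along $C_u$ whose right-hand side, in view of $\Omega\approx\Omega_0$, the bootstrap bounds on $\mathcal{O}$ for $\Omega\chih$ and $\omega$, and the bound on $\|\tr\chi'-\tfrac{2h}{|u|}\|_{\H^4}$, is controlled in $\L^1_{\ub}\H^4(u)$ by $\delta|u|^{-2}\mathscr{F}\mathcal{A}$ times factors that are small by \eqref{smallness}. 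Applying the first inequality of Lemma \ref{Gronwallderivative} with initial value at $\ub=0$ equal to $\Omega_0^2\cdot\frac{2h}{|u|}$, which satisfies $|u|\|\cdot\|_{\H^4}\lesssim\Omega_0^2(u)\le 1\le\mathscr{F}\mathcal{A}$ since the data on $\Cb_0$ is spherically symmetric (so only the trace part survives and $0<h\le 1$), yields $|u|\|\Omega\tr\chi\|_{\H^4(\ub,u)}\lesssim\mathscr{F}\mathcal{A}$. For $\Omega\tr\chib$ I would use the evolution equation $D(\Omega\tr\chib) = \Omega^2(2\divs\etab + 2|\etab|^2 - \tr\chi\tr\chib - 2K + 2|\ds\phi|^2)$; here the $-\tr\chi\tr\chib$ term is the potentially dangerous one, but since $\Omega\tr\chi\approx\Omega_0^2\cdot\frac{2h}{|u|}$ and $\Omega\tr\chib$ has leading behavior $-\tfrac{2}{|u|}$ on $\Cb_0$, one rewrites the equation as one for $\Omega\tr\chib + \tfrac{2}{|u|}$ and Gronwall-integrates, using that all error terms are controlled (with room to spare, via $\mathscr{E}^{-1}$, $\mathscr{W}^{-\frac12}$ factors in $\mathcal{O}$) to get $|u|\|\Omega\tr\chib + \tfrac2{|u|}\|_{\H^4}\lesssim\delta|u|^{-1}\mathscr{F}\mathcal{A}\mathscr{W}^{1/2}\lesssim 1$; combined with $|u|\cdot\tfrac{2}{|u|}=2$ this gives $|u|\|\Omega\tr\chib\|_{\H^4}\lesssim 1$.

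The main obstacle is the estimate for $\|\,|u|\Lb\phi\|_{\L^2_{[u_0,u]}\H^4(\ub)}$, because $\Lb\phi$ restricted to $\Cb_0$ is $\psi/|u|$ and $\psi$ carries \emph{no} a priori bound — only the integrated quantity $\int_{u_0}^u \frac{\psi^2}{|u'|}\D u' = -\log\frac{\Omega_0^2(u)}{\Omega_0^2(u_0)}\le\mathscr{W}$ is controlled, by \eqref{Lbphi}. So the approach is: split $\Lb\phi = \frac{\psi}{|u|} + (\Lb\phi - \frac{\psi}{|u|})$. The difference $\Lb\phi - \frac{\psi}{|u|}$, and its angular derivatives up to order $4$, are controlled through the norm $\mathcal{E}$ (the term $\delta^{-1}|u|^{\frac14}\mathscr{F}^{-1}\|\,|u|^{3/4}(|u|\Lb\phi-\psi)\|_{\L^2_{[u_0,u_1]}\H^4(\ub)}$) by $C^{1/4}\mathcal{A}$; unpacking the weights, this contributes $\lesssim\delta\mathscr{F}\mathcal{A}\cdot(\text{small})\lesssim\mathscr{W}^{1/2}$ to $\|\,|u|\Lb\phi\|_{\L^2_{[u_0,u]}\H^4}$ after accounting for the $|u|^{1/2}$ measure discrepancy. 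For the singular piece $\frac{\psi}{|u|}$, note $\psi=\psi(u)$ is a function of $u$ only (spherically symmetric data on $\Cb_0$), so $\nablas^i\frac{\psi}{|u|}=0$ for $i\ge 1$ on $\Cb_0$; but we need it on $S_{\ub,u}$, where $\psi/|u|$ is still transported as a scalar — more carefully, one applies the commuted transport equation $D\nablas^i\phi$ and uses that $\nablas^i$ of the $\Cb_0$-datum vanishes, so $\nablas^i(\Lb\phi)$ for $i\ge 1$ is entirely captured by $\mathcal{E}$ and $\widetilde{\mathcal{E}}$ with good weights, while the $i=0$ part of $\|\,|u|\Lb\phi\|_{\L^2_{[u_0,u]}}$ is bounded by $\|\,|u|\frac{\psi}{|u|}\|_{\L^2_{[u_0,u]}} = \big(\int_{u_0}^u|u'|^{-1}\|\psi\|_{\L^2(\ub,u')}^2\D u'\big)^{1/2}$; using the area bound \eqref{area} and $\int\frac{\psi^2}{|u'|}\D u'\le\mathscr{W}$ this is $\lesssim\mathscr{W}^{1/2}$, plus the contribution of $\|\,|u|(\Lb\phi-\psi/|u|)\|_{\L^2_{[u_0,u]}}$ which is even smaller. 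Summing all contributions gives $\|\,|u|\Lb\phi\|_{\L^2_{[u_0,u]}\H^4(\ub)}\lesssim\mathscr{W}^{1/2}$, as claimed. The delicate bookkeeping is matching the various scale-invariant weights in the definitions of $\mathcal{E}$, $\widetilde{\mathcal{E}}$ against the target norm, and ensuring the smallness conditions \eqref{smallness} absorb the nonlinear error terms with a factor of $C^{-1}$ to spare.
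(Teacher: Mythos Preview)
Your proposal would ultimately work, but it takes a much longer route than the paper's proof, which is essentially a three-line algebraic computation using only the bootstrap assumptions \eqref{bootstrap} and the triangle inequality --- no transport equations are integrated at all. For $\Omega\tr\chi$ the paper simply writes $\Omega\tr\chi=\Omega^2\tr\chi'$, uses $\Omega\approx\Omega_0$ from \eqref{estimate-Omegaequ}, and splits $\tr\chi'=\tfrac{2h}{|u|}+(\tr\chi'-\tfrac{2h}{|u|})$; the second piece is already controlled by the $\mathcal{O}$-bootstrap (giving $C^{1/2}\delta|u|^{-1}\mathscr{F}^2\mathcal{A}^2\lesssim C^{-1}\mathscr{F}\mathcal{A}$ after \eqref{smallness}), and the first piece is $\le 1\le\mathscr{F}\mathcal{A}$. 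For $\Omega\tr\chib$ the paper does the identical splitting $\Omega\tr\chib=-\tfrac{2}{|u|}+(\Omega\tr\chib+\tfrac{2}{|u|})$, again reading the second piece directly off $\mathcal{O}$. For $\Lb\phi$ the paper splits $|u|\Lb\phi=\psi+(|u|\Lb\phi-\psi)$: the first piece is a function of $u$ alone (so $\|\psi\|_{\H^4(\ub,u)}=\|\psi\|_{\L^2(\ub,u)}\approx|\psi(u)|$ on \emph{every} sphere, not just on $\Cb_0$), and its $\L^2_{[u_0,u]}$ norm is $(\int\psi^2/|u'|\,\D u')^{1/2}\le\mathscr{W}^{1/2}$ by \eqref{Lbphi}; the second piece is read directly off the $\mathcal{E}$-bootstrap.

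By contrast, you propose to integrate the $D$-equations for $\Omega\tr\chi$ and $\Omega\tr\chib$ via Lemma \ref{Gronwallderivative}. This is not wrong, but it is exactly the work done later in Proposition \ref{estimate-O} to \emph{improve} the bootstrap constants --- for the present lemma that effort is wasted, since the bootstrap itself already gives the needed bounds on the renormalized pieces. Your treatment of $\Lb\phi$ is in spirit the same as the paper's, but you worry unnecessarily about the angular derivatives of $\psi/|u|$ away from $\Cb_0$: since $\psi$ is a function of $u$ only, $\nablas^i(\psi/|u|)\equiv 0$ for $i\ge1$ on every $S_{\ub,u}$, so there is no need to invoke commuted transport equations for that piece.
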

\begin{proof}
From the bootstrap assumptions \eqref{bootstrap} and the smallness conditions \eqref{smallness}:
\begin{align}\label{estimate-trchi1}
|u|\|\Omega\tr\chi\|_{\H^4(\ub,u)}\lesssim&|u|\|\Omega_0^2\tr\chi'\|_{\H^4(\ub,u)}\lesssim1+C^{\frac{1}{2}}\delta|u|^{-1}\mathscr{F}^2\mathcal{A}^2\lesssim(1+C^{-1})\mathscr{F}\mathcal{A}\lesssim\mathscr{F}\mathcal{A},\\
\label{estimate-trchib1}|u|\|\Omega\tr\chib\|_{\H^4(\ub,u)}\lesssim& 1+C^{\frac{1}{4}}\delta|u|^{-1}\mathscr{F}\mathcal{A}\lesssim(1+C^{-1})\lesssim 1,\\
\begin{split}\label{estimate-Lbphi1}\||u|\Lb\phi\|_{\L^2_{[u_0,u]}\H^4(\ub)}\lesssim&\left(\int_{u_0}^u\frac{|\psi|^2}{|u'|}\D u'\right)^{\frac{1}{2}}+|u|^{-\frac{1}{4}}\||u|^{\frac{3}{4}}(|u|\Lb\phi-\psi)\|_{\L^2_{[u_0,u]}\H^4(\ub)}\\
\lesssim&\logOmega^{\frac{1}{2}}+C^{\frac{1}{4}}\delta|u|^{-1}\mathscr{F}\mathcal{A}\lesssim\mathscr{W}^{\frac{1}{2}}.
\end{split}
\end{align}

\end{proof}

We have finished stating and proving the preliminary lemmas and begin to estimate the scalar field, the connection coefficients and the curvature components. {\bf From now on, we will frequently use the bootstrap assumptions \eqref{bootstrap} and the smallness conditions \eqref{smallness} and we may not point this out in the text.}

\subsection{Estimates for $\mathcal{E}$}
We first estimate the lower order derivatives of the derivative of the wave function $\phi$.
\begin{proposition}\label{estimate-E}
Under the assumptions of Theorem \ref{existencetheorem} and the bootstrap assumptions \eqref{bootstrap}, we have $$\mathcal{E}\lesssim \mathcal{A}+\widetilde{\mathcal{E}}.$$
\end{proposition}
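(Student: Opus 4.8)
The goal is to bound the scale-invariant norm $\mathcal{E}$, which collects the lower-order ($\le 4$ derivatives) norms of the three scalar-field quantities $|u|\Lb\phi-\psi$, $|u|L\phi-\varphi$, and $\nablas\phi$, in terms of $\mathcal{A}$ and $\widetilde{\mathcal{E}}$. The natural tool is the system of wave equations written in double null form, namely
\begin{align*}
\Db L\phi+\tfrac{1}{2}\Omega\tr\chib L\phi&=\Omega^2\Deltas\phi+2\Omega^2(\eta,\ds\phi)-\tfrac{1}{2}\Omega\tr\chi\Lb\phi,\\
D\Lb\phi+\tfrac{1}{2}\Omega\tr\chi\Lb\phi&=\Omega^2\Deltas\phi+2\Omega^2(\etab,\ds\phi)-\tfrac{1}{2}\Omega\tr\chib L\phi,\\
D\ds\phi&=\nablas L\phi,\qquad \Db\ds\phi=\nablas\Lb\phi,
\end{align*}
together with the Gronwall-type transport estimates of Lemma \ref{Gronwall} and Lemma \ref{Gronwallderivative}. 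First I would handle $|u|L\phi-\varphi$: since $\varphi=\varphi(u)$ is the restriction of $|u|L\phi$ to $\Cb_0$ and satisfies $\frac{\partial}{\partial u}\varphi=-\frac{\Omega_0^2 h}{|u|}\psi$, the difference $|u|L\phi-\varphi$ vanishes at $\ub=0$, so integrating the $\Db$ (equivalently $D$ in the $\ub$-direction — here it is the equation with $D=\partial_{\ub}$) wave equation from $\ub=0$ and applying the first inequality of Lemma \ref{Gronwallderivative} expresses $\|\,|u|L\phi-\varphi\|_{\H^4(\ub,u)}$ in terms of $\delta$ times $\L^1_{\ub}\H^4$ norms of $\Omega^2\Deltas\phi$, $\Omega^2(\eta,\ds\phi)$, $\Omega\tr\chi\Lb\phi$, and the commutator/$\Omega\tr\chib L\phi$ terms; the $\Deltas\phi=\divs\ds\phi$ term costs one derivative of $\ds\phi$ and is precisely where $\widetilde{\mathcal{E}}$ (which controls $\||u|(|u|\nablas)\nablas\phi\|$-type quantities) enters, while all the other factors are lower order and estimated by the bootstrap bounds on $\mathcal{O},\mathcal{E}$ together with the smallness conditions \eqref{smallness}, giving a contribution of size $C^{-1}\mathcal{E}$ that can be absorbed.

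Next I would estimate $\nablas\phi$ using $D\ds\phi=\nablas L\phi$. Integrating in $\ub$ from the initial cone $C_{u_0}$ — wait, $\nablas\phi$ is transported by $D=\partial_{\ub}$, so this gives $\nablas\phi$ on $S_{\ub,u}$ in terms of $\nablas\phi|_{\ub=0}$ (which is zero by the spherical symmetry of $\Cb_0$, as noted in the footnote to Theorem \ref{main1}) plus $\int_0^{\ub}\nablas L\phi$. Thus $\|\nablas\phi\|_{\H^4(\ub,u)}\lesssim \delta\|\nablas L\phi\|_{\L^\infty_{\ub}\H^4(u)}$, and $\nablas L\phi=\nablas(|u|^{-1}(|u|L\phi))$ is controlled by $\|(|u|\nablas)(|u|L\phi-\varphi)\|$ plus the angular derivative of $\varphi$ (which is zero since $\varphi$ depends only on $u$); this ties back to the $\widetilde{\mathcal{E}}$-type norm $\||u|(|u|\nablas)L\phi\|_{\L^2_{\ub}\H^4}$. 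Using Hölder in $\ub$ (inequality \eqref{Holder-ub}) to pass from $\L^2_{\ub}$ to the needed norm and the scaling weights built into $\mathcal{E}$'s definition, one gets $\|\nablas\phi\|$ bounded by $\widetilde{\mathcal{E}}$ up to constants. Finally, for $|u|\Lb\phi-\psi$, which appears in $\mathcal{E}$ in the weighted norm $|u|^{1/4}\||u|^{3/4}(|u|\Lb\phi-\psi)\|_{\L^2_{[u_0,u_1]}\H^4(\ub)}$, I would integrate the $D\Lb\phi$ equation along the outgoing direction $u$: the quantity $\Lb\phi$ equals $\psi/|u|$ on $\Cb_0$ and satisfies $D(|u|\Lb\phi-\psi)=$ (lower order), so the second inequality of Lemma \ref{Gronwallderivative} with the appropriate choice of $\nu$ controls this in terms of $\L^1_{[u_0,u]}\H^4(\ub)$ norms of the right-hand side — again $\Omega^2\Deltas\phi$ contributes the $\widetilde{\mathcal{E}}$ piece and the rest contributes absorbable $C^{-1}$ factors. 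Throughout, $\psi$ itself has no a priori bound, so one must be careful that it only appears in combinations ($|u|\Lb\phi-\psi$, or inside $\int|\psi|^2/|u'|\,\D u'\lesssim\mathscr{W}$) where it either cancels or is controlled via \eqref{Lbphi}; the estimate $\||u|\Lb\phi\|_{\L^2_{[u_0,u]}\H^4(\ub)}\lesssim\mathscr{W}^{1/2}$ proved just above is the right bookkeeping device here.

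The main obstacle I anticipate is not any single estimate but the \emph{coupling and weight-tracking}: the three scalar quantities feed into each other (e.g., $L\phi$ into $\nablas\phi$ via $D\ds\phi=\nablas L\phi$, and $\nablas\phi$ back into $L\phi$ via $\Deltas\phi$ on the right-hand side of the $\Db L\phi$ equation), so one must order the estimates so that each appeals only to already-controlled quantities or to $\widetilde{\mathcal{E}}$, and verify that every appearance of a lower-order scalar factor comes multiplied by enough powers of $\delta|u|^{-1}$ (or $\Omega_0$) that the relevant smallness condition in \eqref{smallness} renders it $\le C^{-1}$ and hence absorbable after taking suprema. The scaling weights in the definitions of $\mathcal{E}$ and $\widetilde{\mathcal{E}}$ are delicately chosen precisely to make this bookkeeping close; checking that the half-integer weights (the $|u|^{3/4}$, $|u|^{3/2}$, $\Omega_0$ factors) match up correctly on both sides of each transport estimate is the part that requires genuine care rather than routine computation. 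I do not expect the borderline-term phenomenon emphasized in the introduction to bite here, since $\mathcal{E}$ is a lower-order norm and the logarithmic loss $\mathscr{E}$ and the weight $\mathscr{W}$ only contaminate the next-to-top and top order estimates; at this order the bound should come out cleanly as $\mathcal{E}\lesssim\mathcal{A}+\widetilde{\mathcal{E}}$.
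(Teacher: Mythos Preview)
Your overall architecture is right: use the three transport equations for $L\phi$, $\nablas\phi$, $\Lb\phi$, feed the $\Deltas\phi$ term into $\widetilde{\mathcal{E}}$, and absorb everything else via the smallness conditions. The $\nablas\phi$ step is essentially as in the paper. But you have systematically reversed the integration directions for the other two, and this is not a cosmetic slip.

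For $|u|L\phi-\varphi$: you propose to exploit that this vanishes at $\ub=0$ and ``integrate the $\Db$ \dots\ equation from $\ub=0$'' using the first inequality of Lemma~\ref{Gronwallderivative}. There is no useful $D$-transport equation for $L\phi$; the equation you correctly list the right-hand side of, $\Db L\phi+\tfrac12\Omega\tr\chib L\phi=\cdots$, is a $\Db$-equation and must be integrated in $u$ from $u_0$ via the \emph{second} inequality of Lemma~\ref{Gronwallderivative}. The initial contribution is $\||u_0|L\phi-\varphi(u_0)\|_{\H^4(\ub,u_0)}$, which is part of $\mathcal{A}$, not zero. The paper does exactly this: it rewrites the equation as \eqref{equ-DbLphi} for $L\phi-\varphi/|u|$ and estimates the source in $\||u|^2\cdot\|_{\L^1_{[u_0,u]}\H^4(\ub)}$.

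For $|u|\Lb\phi-\psi$: you say to ``integrate the $D\Lb\phi$ equation along the outgoing direction $u$'' with the second Gronwall inequality. Again reversed: $D\Lb\phi$ is a $\ub$-direction equation, integrated from $\ub=0$ (where indeed $\Lb\phi=\psi/|u|$) via the \emph{first} Gronwall inequality. The $\L^2_{[u_0,u]}$ norm in the definition of $\mathcal{E}$ then comes from a \emph{second} integration: one first bounds $\delta\|\text{RHS}\|_{\L^1_\ub\H^4(u)}$ pointwise in $u$, then squares, multiplies by $|u|^{5/2}$, and integrates over $[u_0,u]$. This two-step structure is essential because $\psi$ has no pointwise bound and only $\int|\psi|^2/|u'|\,\D u'\lesssim\mathscr{W}$ is available; the paper carries this out in \eqref{RHS-Lbphi}--\eqref{LHS-Lbphi}. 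Your single $u$-integration would not produce the right weighted $\L^2_u$ norm. Fix these two directions and your sketch becomes the paper's proof.
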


\begin{proof}  {\bf \underline{Estimate for $L\phi$}:} The estimate for $L\phi$ makes use of the following equation, which is rewritten in terms of $L\phi-\varphi/|u|$:
\begin{equation}\label{equ-DbLphi}
\begin{split}
&\Db (L\phi-\varphi/|u|)+\frac{1}{2}\Omega\tr\chib (L\phi-\varphi/|u|)=\Omega^2\Deltas\phi+2\Omega^2(\eta,\ds\phi)\\
&-\frac{1}{2}(\Omega\tr\chib+\frac{2}{|u|})\frac{\varphi}{|u|}-\frac{1}{2}\Omega\tr\chi(\Lb\phi-\psi/|u|)
-\frac{1}{2}(\Omega\tr\chi-\frac{2\Omega^2h}{|u|})\frac{\psi}{|u|}-(\Omega^2-\Omega_0^2)\frac{h\psi}{|u|^2},\\
\end{split}
\end{equation}
To estimate the $\H^4(\ub,u)$ norm of $L\phi-\varphi/|u|$, we need to estimate the right hand side in $\||u|^2\cdot\|_{\L^1_{[u_0,u]}\H^4(\ub)}$. We estimate them term by term.

 The \engordnumber{1} term on the right hand side is estimated by
\begin{align*}\lesssim \Omega_0(u_0)|u|^{-\frac{1}{2}}\|\Omega_0|u|^{\frac{3}{2}}(|u|\nablas)\nablas\phi\|_{\L^2_{[u_0,u]}\H^4(\ub)}\lesssim\mathscr{F}\widetilde{\mathcal{E}},\end{align*}
where we have used a frequently used the following H\"older type inequality for $s>0$:
\begin{align}\label{Holder-u}
\||u|^{-s}\cdot\|_{\L^1_{[u_0,u]}}\lesssim\||u|^{-s}\|_{\L^2_{[u_0,u]}}\|\cdot\|_{\L^2_{[u_0,u]}}\lesssim|u|^{-s}\|\cdot\|_{\L^2_{[u_0,u]}}.
\end{align}
 The \engordnumber{2} term on the right hand side is estimated by
\begin{align*}\lesssim\Omega_0^2(u_0)|u|^{-2}\||u|^2\eta\|_{\L^\infty_{[u_0,u]}\H^4(\ub)}\||u|^2\nablas\phi\|_{\L^\infty_{[u_0,u]}\H^4(\ub)}
\lesssim\Omega_0^2(u_0)C^{\frac{1}{2}}\delta^2|u|^{-2}\mathscr{F}^2\mathscr{E}^2\mathcal{A}^2\lesssim C^{-1}\mathscr{F}\mathcal{A}.\end{align*}
 The \engordnumber{3} term on the right hand side is estimated by
\begin{align*}\lesssim|u|^{-1}\left(\sup_{u_0\le u'\le u_1}|\varphi(u')|\right)\left\||u|^2\left(\Omega\tr\chib+\frac{2}{|u|}\right)\right\|_{\L^\infty_{[u_0,u]}\H^4(\ub)}
\lesssim C^{\frac{1}{4}}\delta|u|^{-1}\mathscr{F}^2\mathcal{A}\lesssim C^{-1}\mathscr{F}\mathcal{A}.\end{align*}
 The \engordnumber{4} term on the right hand side is estimated by (using \eqref{estimate-trchi1})
\begin{align*}\lesssim&|u|^{-1}\||u|\Omega\tr\chi\|_{\L^\infty_{[u_0,u]}\H^4(\ub)}|u|^{\frac{1}{4}}\||u|^{\frac{3}{4}}(|u|\Lb\phi-\psi)\|_{\L^2_{[u_0,u]}\H^4(\ub)}\\
\lesssim&C^{\frac{1}{4}}\delta|u|^{-1}\mathscr{F}\mathcal{A}\cdot\mathscr{F}\mathcal{A}\lesssim C^{-1}\mathscr{F}\mathcal{A}.\end{align*}
The \engordnumber{5} term on the right hand side is estimated by
\begin{align*}\lesssim&|u|^{-1}\left\||u|^2\Omega_0^2\left(\tr\chi'-\frac{2h}{|u|^2}\right)\right\|_{\L^\infty_{[u_0,u]}\H^4(\ub)}\left(\int_{u_0}^u\frac{|\psi|^2}{|u'|}\D u'\right)^{\frac{1}{2}}\\
\lesssim&C^{\frac{1}{2}}\delta|u|^{-1}\mathscr{F}^2\left|\log\frac{\Omega(u_1)}{\Omega(u_0)}\right|^{\frac{1}{2}}\mathcal{A}^2\lesssim C^{-1}\mathscr{F}\mathcal{A}.\end{align*}
At last, the \engordnumber{6} term on the right hand side is estimated by, using \eqref{estimate-Omega-improve},
\begin{align*}\lesssim&|u|^{-1}\left\||u|(\Omega^2-\Omega_0^2)\right\|_{\L^\infty_{[u_0,u]}\H^4(\ub)}\left(\int_{u_0}^u\frac{|\psi|^2}{|u'|}\D u'\right)^{\frac{1}{2}}\\
\lesssim&C^{\frac{1}{4}}\delta|u|^{-1}\mathscr{F}\left|\log\frac{\Omega(u_1)}{\Omega(u_0)}\right|^{\frac{1}{2}}\mathscr{W}^{\frac{1}{2}}\mathcal{A}\lesssim C^{-1}\mathscr{F}\mathcal{A}.\end{align*}

 Summing up the above all estimates gives
\begin{equation*}\||u|L\phi-\varphi\|_{\H^4(\ub,u)}\lesssim \||u_0|L\phi-\varphi\|_{\H^4(\ub,u_0)}+\mathscr{F}(\mathcal{A}+\widetilde{\mathcal{E}})\lesssim \mathscr{F}(\mathcal{A}+\widetilde{\mathcal{E}}).\end{equation*}
This also implies
\begin{equation}\label{estimate-Lphi}\||u|L\phi\|_{\H^4(\ub,u)}\lesssim \mathscr{F}(\mathcal{A}+\widetilde{\mathcal{E}}).\end{equation}
\begin{remark} $\widetilde{\mathcal{E}}$ appears in the above inequality because we have not done anything to estimate the \engordnumber{1} term on the right hand side but simply use the definition of $\widetilde{\mathcal{E}}$. This term is also a \emph{borderline term} although it is linear. But we will see from the proof that $\widetilde{\mathcal{E}}$ can be controlled without knowing \eqref{estimate-Lphi}. 
\end{remark}
{\bf \underline{Estimate for $\nablas\phi$}:} We use the equation \begin{equation}\label{equ-Dnablasphi}
D\nablas\phi=\nablas L\phi.
\end{equation} 
From this equation, we have
\begin{equation}\label{estimate-nablasphi}
\begin{split}
\|\nablas\phi\|_{\H^4(\ub,u)}&\lesssim\delta \|\nablas L\phi\|_{\L^1_{\ub}\H^4(u)}\lesssim  \delta|u|^{-2}\||u|(|u|\nablas)L\phi\|_{\L_{\ub}^2\H^4(u)}\lesssim\delta|u|^{-2}\mathscr{F}\mathscr{E}\widetilde{\mathcal{E}}.
\end{split}
\end{equation}

{\bf \underline{Estimate for $\Lb\phi$}:} To estimate $\Lb\phi$, we use the equation for $D\Lb\phi$:
\begin{equation}\label{equ-Lbphi}D(\Lb\phi-\psi/|u|)+\frac{1}{2}\Omega\tr\chi\Lb\phi=\Omega^2\Deltas\phi+2\Omega^2(\etab,\nablas\phi)-\frac{1}{2}\Omega\tr\chib L\phi.\end{equation}
To estimate $|u|\Lb\phi-\psi$ in $\L^2_{[u_0,u]}\H^4(\ub)$, we first estimate the right hand side in $\delta\|\cdot\|_{\L^1_{\ub}\H^4(u)}$ and then integrate over $[u_0,u]$. We remark that $\Lb\phi$ cannot be estimated in $\L^\infty_{u}$ because $\psi$ can only be estimated in $\L^2_u$. Now the right hand side can be estimated by, with the last term being estimated using \eqref{estimate-trchib1} and \eqref{estimate-Lphi},
\begin{align*}
\delta\|\text{RHS}\|_{\L^1_{\ub}{\H^4(u)}}&\lesssim \Omega_0\delta^{\frac{3}{2}}|u|^{-\frac{5}{2}}\mathscr{F}\mathscr{E}\widetilde{\mathcal{E}}+\Omega_0^2C^{\frac{1}{2}}\delta^3|u|^{-4}\mathscr{F}^2\mathscr{E}^2\mathscr{W}^{\frac{1}{2}}\mathcal{A}^2+\delta|u|^{-2}\mathscr{F}(\mathcal{A}+\widetilde{\mathcal{E}})\\
&\lesssim \delta|u|^{-2}\mathscr{F}(\mathcal{A}+\widetilde{\mathcal{E}}).
\end{align*}
 Multiplying the square of the above estimate by $|u|^{\frac{5}{2}}$ and then integrating over $[u_0,u]$, we have
\begin{equation}\label{RHS-Lbphi}
\left(\int_{u_0}^u|u'|^{\frac{5}{2}}\delta\|\text{RHS}\|^2_{\L^1_{\ub}{\H^4(u')}}\D u'\right)^{\frac{1}{2}}\lesssim \delta|u|^{-\frac{1}{4}}\mathscr{F}(\mathcal{A}+\widetilde{\mathcal{E}}).
\end{equation}
\begin{remark}
The last term on the right hand side is again a \emph{borderline term}, the factor $L\phi$ in which cannot be estimated using the bootstrap assumptions \eqref{bootstrap}. Instead, we should estimate $L\phi$ by \eqref{estimate-Lphi} which we derived above. {\bf From now on, we will point it out when we are going to use an estimate which is derived previously. Otherwise, we will only use the bootstrap assumptions \eqref{bootstrap}, or the estimates \eqref{estimate-Omegaequ}, \eqref{estimate-trchi1}, \eqref{estimate-trchib1} and \eqref{estimate-Lbphi1}, or simply use the definitions of $\widetilde{\mathcal{E}}$, $\widetilde{\mathcal{O}}$ and $\mathcal{R}$, even if we have derived the corresponding estimate previously.}
\end{remark}

The second term on the left hand side of \eqref{equ-Lbphi} can be estimated by
\begin{align*}
\delta\|\Omega\tr\chi\Lb\phi\|_{\L^1_{\ub}{\H^4(u)}}\lesssim&\delta\|\Omega\tr\chi\|_{\L^\infty_{\ub}\H^4(u)}\left(\frac{|\psi|}{|u|}+\|\Lb\phi-\psi/|u|\|_{\L^2_{\ub}\H^4(u)}\right)\\
\lesssim&\delta|u|^{-1}(2\Omega_0^2h+C^{\frac{1}{2}}\delta|u|^{-1}\mathscr{F}^2\mathcal{A}^2)\left(\frac{|\psi|}{|u|}+\|\Lb\phi-\psi/|u|\|_{\L^2_{\ub}\H^4(u)}\right).
\end{align*}
Multiply the square of the above estimate by $|u|^{\frac{5}{2}}$ and then integrate over $[u_0,u]$, we have
\begin{equation}\label{LHS-Lbphi}
\begin{split}
&\left(\int_{u_0}^u|u'|^{\frac{5}{2}}\delta^2\|\Omega\tr\chi\Lb\phi\|^2_{\L^1_{\ub}{\H^4(u')}}\D u'\right)^{\frac{1}{2}}\\
\lesssim&\delta|u|^{-\frac{1}{4}}\left(\int_{u_0}^u\frac{\Omega_0^4|\psi|^2}{|u'|}\D u'\right)^{\frac{1}{2}}+C^{\frac{1}{2}}\delta^2|u|^{-\frac{5}{4}}\mathscr{F}^2\mathcal{A}^2\left(\int_{u_0}^u\frac{|\psi|^2}{|u'|}\D u'\right)^{\frac{1}{2}}\\
&+(1+C^{-1}\mathscr{F}\mathcal{A})\delta|u|^{-1}\||u|^{\frac{3}{4}}(|u|\Lb\phi-\psi)\|_{\L^\infty_{\ub}\L^2_{[u_0,u]}\H^4}\\
\lesssim&\delta|u|^{-\frac{1}{4}}\mathscr{F}\mathcal{A}
\end{split}
\end{equation}
Combining \eqref{RHS-Lbphi} and \eqref{LHS-Lbphi}, we have
\begin{equation}\label{estimate-Lbphi}
\delta^{-1}|u|^{\frac{1}{4}}\||u|^{\frac{3}{4}}(|u|\Lb \phi-\psi)\|_{\L^2_{[u_0,u]}\H^4(\ub)}\lesssim \mathscr{F}\mathcal{A}.
\end{equation}
The estimates \eqref{estimate-Lphi}, \eqref{estimate-nablasphi}, \eqref{estimate-Lbphi} give
\begin{align*}
\mathcal{E}\lesssim\mathcal{A}+\widetilde{\mathcal{E}}.
\end{align*}

\end{proof}

\subsection{Estimates for $\mathcal{O}$}
The next proposition is about the estimates for $\mathcal{O}$, which is about the lower order derivatives of the connection coefficients:
\begin{proposition}\label{estimate-O}
Under the assumptions of Theorem \ref{existencetheorem} and the bootstrap assumptions \eqref{bootstrap}, we have $$\mathcal{O}\lesssim \mathcal{A}+\widetilde{\mathcal{O}}[\eta,\etab]+\widetilde{\mathcal{E}}+\mathcal{R}+\mathcal{A}^{-\frac{1}{2}}\mathcal{R}^{\frac{3}{2}}.$$
\end{proposition}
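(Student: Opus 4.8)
The plan is to estimate, one by one, the lower-order (up to fourth angular derivative) norms of the connection coefficients $\tr\chi'-\frac{2h}{|u|}$, $\Omega\tr\chib+\frac{2}{|u|}$, $\Omega\chibh$, $\Omega\chih$, $\omega$, $\eta$, $\etab$ that enter the definition of $\mathcal{O}(\ub,u;u_0,u_1)$, using the null structure equations listed in Section~\ref{preliminary} together with the Gronwall-type estimates of Lemma~\ref{Gronwall} and Lemma~\ref{Gronwallderivative}, the $\Omega$-estimates \eqref{estimate-Omega}--\eqref{estimate-Omega-improve}, and the already-established bound $\mathcal{E}\lesssim\mathcal{A}+\widetilde{\mathcal{E}}$ from Proposition~\ref{estimate-E}. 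Throughout I will freely use the bootstrap assumptions \eqref{bootstrap}, the smallness conditions \eqref{smallness}, the auxiliary condition \eqref{auxiliary}, and the preliminary estimates \eqref{estimate-Omegaequ}, \eqref{estimate-trchi1}, \eqref{estimate-trchib1}, \eqref{estimate-Lbphi1}. The guiding principle is the \emph{reductive structure} described in the introduction: the quantities $\Theta\in\{\Omega\chih,\Omega\tr\chi,L\phi\}$ must be estimated first, with the right-hand sides of their transport equations containing no genuine borderline terms once \eqref{smallness} is inserted; then $\Omega\chibh$, $\Omega\tr\chib+\frac{2}{|u|}$, $\Lb\phi-\psi/|u|$; then $\eta$ via its $D$-equation; and finally $\etab$ via its $\Db$-equation, where the factor $\mathscr{W}^{1/2}$ from \eqref{Lbphi} and the auxiliary condition enter.

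Concretely: (i) \emph{Estimate for $\Omega\chih$.} Use $\Dbh(\Omega\chih)-\frac12\Omega\tr\chib\,\Omega\chih=\Omega^2(\nablas\tensor\eta+\eta\tensor\eta+\tfrac12\tr\chib\chih-\tfrac12\tr\chi\chibh+\ds\phi\tensor\ds\phi)$, apply the second inequality of Lemma~\ref{Gronwallderivative} with the weight $|u|$, and bound the right-hand side by $\mathscr{F}(\mathcal{A}+\widetilde{\mathcal{O}}[\eta,\etab])$ plus $C^{-1}$-small terms; note the $\nablas\eta$ term is handled by the $\widetilde{\mathcal{O}}[\eta,\etab]$ part. (ii) \emph{Estimate for $\tr\chi'-\frac{2h}{|u|}$.} Use the Raychaudhuri equation $D\tr\chi'=-\tfrac12(\Omega\tr\chi')^2-|\chih|^2-2(\Lh\phi)^2$, subtract the corresponding equation for $\frac{2h}{|u|}$ on $\Cb_0$, integrate in $\ub$ using Lemma~\ref{Gronwallderivative} (first inequality), and control the quadratic source by the already-estimated $\Omega\chih$ and $L\phi$ (the latter via \eqref{estimate-Lphi}), producing a bound $\lesssim\delta^{1/2}|u|^{-2}\Omega_0^{-1}\mathscr{F}(\cdots)$ matching the $\tfrac12$-power weight in $\mathcal{O}$. (iii) \emph{Estimate for $\Omega\tr\chib+\frac{2}{|u|}$ and $\Omega\chibh$.} Use $D(\Omega\tr\chib)=\Omega^2(2\divs\etab+2|\etab|^2-\tr\chi\tr\chib-2K+2|\ds\phi|^2)$ and $\Dh(\Omega\chibh)=\Omega^2(\nablas\tensor\etab+\etab\tensor\etab+\cdots)$; integrate in $\ub$, feeding in $K-|u|^{-2}$ via $\mathcal{R}$ and $\etab$ via $\widetilde{\mathcal{O}}[\eta,\etab]$, $\mathcal{E}$ and the already-obtained bounds. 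The appearance of $\mathcal{A}^{-1/2}\mathcal{R}^{3/2}$ comes precisely from the Hölder step described in the introduction: $\int|u'||\Omega\betab+\Lb\phi\nablas\phi|\,\D u'$ split as $(\int\Omega_0^2|u'|^{-4})^{1/2}(\int\Omega_0^{-2}|u'|^6|\Omega\betab+\Lb\phi\nablas\phi|^2)^{1/2}$, which together with the $\tfrac23$-power definition of the $\betab$-part of $\mathcal{R}$ yields a contribution of size $\Omega_0(u_0)\Omega_0^{-1}(u)(\delta|u|^{-1}\mathscr{F}\mathcal{A})^{3/2}$, absorbed using the auxiliary condition \eqref{auxiliary}. (iv) \emph{Estimate for $\omega$.} Use $\Db\omega=\Omega^2(2(\eta,\etab)-|\etab|^2-(\rho+\tfrac16\mathbf{R}+\Lh\phi\Lbh\phi))$; integrate in $u$, and bound $\rho+\tfrac16\mathbf{R}$ — which is not separately in $\mathcal{R}$ — through the Gauss equation $K=-\tfrac14\tr\chi\tr\chib+\tfrac12(\chih,\chibh)-(\rho+\tfrac16\mathbf{R})+|\ds\phi|^2$, so that $\rho+\tfrac16\mathbf{R}$ is expressed via $K-|u|^{-2}$ (controlled by $\mathcal{R}$) and already-estimated connection coefficients; this is where the $\mathscr{W}^{1/2}$-weight on $\omega$ in $\mathcal{O}$ arises, matching \eqref{Lbphi}. (v) \emph{Estimates for $\eta$ and $\etab$.} For $\eta$ use $D\eta=(\Omega\chi)\cdot\etab-(\Omega\beta-L\phi\nablas\phi)$ — note the sign: written with $D\eta$ the source is $\Omega\beta+L\phi\nablas\phi$ per the listed equation, so integrate in $\ub$, using $\mathcal{R}$ for $\Omega\beta-L\phi\nablas\phi$ and \eqref{estimate-Lphi}, \eqref{estimate-nablasphi} for the scalar-field term. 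For $\etab$ use $\Db\etab=(\Omega\chib)\cdot\eta+(\Omega\betab+\Lb\phi\nablas\phi)-2\Lb\phi\nablas\phi$; integrate in $u$, with the last term giving the main $\delta|u|^{-2}\mathscr{F}\mathscr{W}^{1/2}\mathcal{A}$ contribution after using \eqref{estimate-nablasphi} and \eqref{estimate-Lbphi1}, and the $\Omega\betab+\Lb\phi\nablas\phi$ term handled as in (iii).

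The main obstacle, as in \cite{An-Luk}, is the $\etab$ estimate: $\etab$ genuinely loses a power of $\mathscr{W}^{1/2}$ relative to $\eta$ because the borderline term $2\Lb\phi\nablas\phi$ involves $\Lb\phi$, whose initial value $\psi/|u|$ carries no a priori bound and can only be controlled through \eqref{Lbphi}, i.e.\ in $\L^2_u$ with the monotone weight $\Omega_0$; and the renormalized quantity $\Omega\betab+\Lb\phi\nablas\phi$, not $\Omega\betab$ itself, is the one appearing in $\mathcal{R}$, so its contribution is only controllable via the Hölder split above and requires the auxiliary condition \eqref{auxiliary}. Everything else is bookkeeping: each source term either carries an explicit smallness factor $\delta|u_1|^{-1}$ that is absorbed by \eqref{smallness}, or is already among the previously estimated quantities, or is precisely one of $\widetilde{\mathcal{O}}[\eta,\etab]$, $\widetilde{\mathcal{E}}$, $\mathcal{R}$ that appears on the right-hand side of the asserted inequality. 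Summing the seven estimates and taking the supremum over $0\le\ub\le\delta$, $u_0\le u\le u_1$ yields $\mathcal{O}\lesssim\mathcal{A}+\widetilde{\mathcal{O}}[\eta,\etab]+\widetilde{\mathcal{E}}+\mathcal{R}+\mathcal{A}^{-1/2}\mathcal{R}^{3/2}$, as claimed.
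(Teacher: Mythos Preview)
Your overall strategy is the same as the paper's --- estimate each connection coefficient via its null structure transport equation, in the right order so that borderline terms are fed by quantities already controlled, and collect the contributions $\widetilde{\mathcal{O}}[\eta,\etab]$, $\widetilde{\mathcal{E}}$, $\mathcal{R}$, $\mathcal{A}^{-1/2}\mathcal{R}^{3/2}$ on the right-hand side. The identification of where $\mathscr{W}^{1/2}$ enters (the $L\phi\,\Lb\phi$ term for $\omega$, the $\Lb\phi\,\nablas\phi$ term for $\etab$) and the Gauss-equation substitution for $\rho+\tfrac16\mathbf{R}$ in the $\omega$ estimate are both correct.

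There is, however, one genuine misattribution you should fix. In your step~(iii) you integrate the $D$-equations for $\Omega\chibh$ and $\Omega\tr\chib$ \emph{in $\ub$}; those equations involve $\nablas\etab$, $|\etab|^2$, $K$, $\nablas\phi$, but \emph{not} $\Omega\betab+\Lb\phi\nablas\phi$. So the H\"older split $\int_{u_0}^u|u'|\,|\Omega\betab+\Lb\phi\nablas\phi|\,\D u'$ --- an integral in $u$, not $\ub$ --- cannot arise there, and the paper's bounds for $\Omega\chibh$, $\Omega\tr\chib$ read simply $\lesssim\delta|u|^{-2}\mathscr{F}(\mathcal{A}+\widetilde{\mathcal{O}}[\eta,\etab])$, with no $\mathcal{A}^{-1/2}\mathcal{R}^{3/2}$. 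That term, and the use of the auxiliary condition~\eqref{auxiliary}, belong \emph{only} in your step~(v), the $\Db\etab$ equation integrated in $u$ (which you do correctly describe). Relatedly, the paper estimates $\eta$ and $\etab$ \emph{before} $\Omega\chibh$, $\Omega\tr\chib$, because the term $\tfrac12\Omega\tr\chib\,\Omega\chih$ in $D(\Omega\chibh)$ is borderline and needs the already-derived bound~\eqref{estimate-chih} on $\Omega\chih$; your ordering still closes since you route $\nablas\etab$ through $\widetilde{\mathcal{O}}[\etab]$, but the exposition in~(iii) is internally inconsistent as written.
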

Here $\widetilde{\mathcal{O}}[\eta,\eta]$ means the norms of $\eta,\etab$ in the definition of $\widetilde{\mathcal{O}}$.
\begin{proof}
{\bf \underline{Estimate for $\Omega\chih$}:} We consider the structure equation for $\Db(\Omega\chih)$:
\begin{equation}\label{equ-chih}\Dbh(\Omega\chih)-\frac{1}{2}\Omega\tr\chib\Omega\chih=\Omega^2(\nablas \tensor \eta + \eta \tensor \eta -\frac{1}{2}\tr\chi \chibh+\nablas\phi\tensor\nablas\phi).\end{equation}
The right hand side should be estimated in $\||u|^2\cdot\|_{\L^1_{[u_0,u]}\H^4(\ub)}$. The \engordnumber{1}, \engordnumber{2} and \engordnumber{4} term (denoted by $\uppercase\expandafter{\romannumeral1}$) on the right hand side can be estimated by
\begin{align*}
\||u|^2\uppercase\expandafter{\romannumeral1}\|_{\L^1_{[u_0,u]}\H^4(\ub)}\lesssim&\Omega_0(u_0)|u|^{-\frac{1}{2}}\||u|^{\frac{3}{2}}\Omega_0\eta\|_{\L^2_{[u_0,u]}\H^5(\ub)}+\||u|^{-2}\|_{\L^1_{[u_0,u]}}\||u|^2(\eta,\nablas\phi)\|_{\L^\infty_{[u_0,u]}\H^4(\ub)}^2\\
\lesssim& \Omega_0(u_0)\delta^{\frac{1}{2}}|u|^{-\frac{1}{2}}\mathscr{F}\mathscr{E}(\mathcal{A}+\widetilde{\mathcal{O}}[\eta])+\Omega_0^2(u_0)C^{\frac{1}{2}}\delta^2|u|^{-2}\mathscr{F}^2\mathscr{E}^2\mathcal{A}^2\\
\lesssim& \mathscr{F}(\mathcal{A}+\widetilde{\mathcal{O}}[\eta]).
\end{align*}
The \engordnumber{3} term on the right hand side is estimated by
\begin{align*}
\||u|^2\Omega\tr\chi\Omega\chibh\|\|_{\L^1_{[u_0,u]}\H^4(\ub)}\lesssim |u|^{-1}\mathscr{F}\mathcal{A}\cdot C^{\frac{1}{4}}\delta\mathscr{F}\mathcal{A}\lesssim C^{-1}\mathscr{F}\mathcal{A}.
\end{align*}
Therefore we have
\begin{align}\label{estimate-chih}
|u|\|\Omega\chih\|_{\H^4(\ub,u)}\lesssim|u_0|\|\Omega\chih\|_{\H^4(\ub,u_0)}+\mathscr{F}(\mathcal{A}+\widetilde{\mathcal{O}}[\eta])\lesssim\mathscr{F}(\mathcal{A}+\widetilde{\mathcal{O}}[\eta]).\end{align}

{\bf \underline{Estimate for $\tr\chi'$}:} We estimate $\tr\chi'$ by the equation for $D\tr\chi'$ written in form:
\begin{equation}\label{equ-Dtrchi}D\left(\tr\chi'-\frac{2h}{|u|}\right)=\Omega^{-2}\left(-\frac{1}{2}(\Omega^2\tr\chi')^2-|\Omega\chih|^2-2(L\phi)^2\right).\end{equation}
We then have  (using \eqref{estimate-trchi1}, \eqref{estimate-Lphi} and \eqref{estimate-chih})
\begin{align}\label{estimate-trchi}
\left\|\tr\chi'-\frac{2h}{|u|}\right\|_{\H^4(\ub,u)}\lesssim&\Omega_0^{-2}\delta|u|^{-2}\mathscr{F}^2(\mathcal{A}^2+\widetilde{\mathcal{E}}^2+\widetilde{\mathcal{O}}[\eta]^2).
\end{align}

{\bf \underline{Estimate for $\eta$}:} We write the equation for $D\eta$ in the following form:
\begin{align}\label{equ-Deta}
D\eta = (\Omega\chi)\cdot\etab-2L\phi\nablas\phi-(\Omega\beta-L\phi\nablas\phi).\end{align}
The first two terms on the right hand side can be estimated in $\delta\|\cdot\|_{\L^1_{\ub}\H^4(u)}$ by
\begin{align*}
\lesssim \delta|u|^{-1}\cdot C^{\frac{1}{4}}\mathscr{F}\mathcal{A}\cdot C^{\frac{1}{4}}\delta|u|^{-2}\mathscr{F}\mathscr{E}\mathscr{W}^{\frac{1}{2}}\mathcal{A}\lesssim C^{-1}\delta|u|^{-2}\mathscr{F}\mathscr{E}\mathcal{A}.
\end{align*}
The last term on the right hand side can be estimated in $\delta\|\cdot\|_{\L^1_{\ub}\H^4(u)}$ by
\begin{align*}
\lesssim\delta\|\Omega\beta-L\phi\nablas\phi\|_{\L^2_{\ub}\H^4(u)}\lesssim\delta|u|^{-2}\mathscr{F}\mathscr{E}\mathcal{R}.
\end{align*}
We have
\begin{equation}\label{estimate-eta}
\|\eta\|_{\H^4(\ub,u)}\lesssim\delta|u|^{-2}\mathscr{F}\mathscr{E}(C^{-1}\mathcal{A}+\mathcal{R})\lesssim\delta|u|^{-2}\mathscr{F}\mathscr{E}(\mathcal{A}+\mathcal{R}).
\end{equation}

{\bf \underline{Estimate for $\etab$}:} We then write the equation for $\Db\etab$ in the following form:
\begin{equation}\label{equ-Dbetab}
\Db\etab = (\Omega\chib) \cdot\eta-2\Lb\phi\nablas\phi+(\Omega\betab+\Lb\phi\nablas\phi).
\end{equation}
The right hand side should be estimated in $\||u|^2\cdot\|_{\L^1_{[u_0,u]}\H^4(\ub)}$. Note that
\begin{align*}
\|\Omega\chib\|_{\H^4(\ub,u)}\lesssim\|\Omega\tr\chib\|_{\H^4(\ub,u)}+\|\Omega\chibh\|_{\H^4(\ub,u)}\lesssim|u|^{-1}+C^{\frac{1}{4}}\delta|u|^{-2}\mathscr{F}\mathcal{A}\lesssim|u|^{-1}.
\end{align*}
Using \eqref{estimate-eta}, the \engordnumber{1} term on the right hand side is estimated by
\begin{align*}
\||u|^2\Omega\chib\eta\|_{\L^1_{[u_0,u]}\H^4(\ub)}\lesssim\delta|u|^{-1}\mathscr{F}\mathscr{E}(\mathcal{A}+\mathcal{R}).
\end{align*}
Using \eqref{estimate-Lbphi1} and \eqref{estimate-nablasphi}, the \engordnumber{2} term is estimated by
\begin{align*}
\||u|^2\Lb\phi\nablas\phi\|_{\L^1_{[u_0,u]}\H^4(\ub)}\lesssim&\||u|\Lb\phi\|_{\L^2_{[u_0,u]}\H^4(\ub)}\cdot\||u|\nablas\phi\|_{\L^2_{[u_0,u]}\H^4(\ub)}\\
\lesssim&\delta|u|^{-1}\mathscr{F}\mathscr{E}\mathscr{W}^{\frac{1}{2}}\widetilde{\mathcal{E}}.
\end{align*}
The last term is estimated by
\begin{align*}
\||u|^2(\Omega\betab+\Lb\phi\nablas\phi)\|_{\L^1_{[u_0,u]}\H^4(\ub)}\lesssim&\|\Omega_0|u|^{-\frac{3}{2}}\|_{\L^2_{[u_0,u]}\H^4(\ub)}\cdot\|\Omega_0^{-1}|u|^{\frac{7}{2}}(\Omega\betab+\Lb\phi\nablas\phi)\|_{\L^2_{[u_0,u]}\H^4(\ub)}\\
\lesssim&\Omega_0(u_0)\Omega_0^{-1}\delta^{\frac{3}{2}}|u|^{-\frac{3}{2}}\mathscr{F}^{\frac{3}{2}}\mathscr{E}\mathcal{R}^{\frac{3}{2}}\lesssim\delta|u|^{-1}\mathscr{F}\mathscr{E}\mathcal{A}^{-\frac{1}{2}}\mathcal{R}^{\frac{3}{2}}.
\end{align*}
where we have used the auxiliary condition \eqref{auxiliary}. Combining all estimates above, we have
\begin{equation}\label{estimate-etab}
\begin{split}
\||u|\etab\|_{\H^4(\ub,u)}\lesssim&\||u_0|\etab\|_{\H^4(\ub,u_0)}+\delta|u|^{-1}\mathscr{F}\mathscr{E}\mathscr{W}^{\frac{1}{2}}(\mathcal{A}+\mathcal{A}^{-\frac{1}{2}}\mathcal{R}^{\frac{3}{2}}+\widetilde{\mathcal{E}})\\\lesssim&\delta|u|^{-1}\mathscr{F}\mathscr{E}\mathscr{W}^{\frac{1}{2}}(\mathcal{A}+\mathcal{A}^{-\frac{1}{2}}\mathcal{R}^{\frac{3}{2}}+\widetilde{\mathcal{E}}).
\end{split}
\end{equation}

{\bf \underline{Estimate for $\Omega\chibh$}:} Remember the equation for $D(\Omega\chibh)$:
\begin{align}\label{equ-Dchibh}\Dh(\Omega\chibh)=\Omega^2(\nablas \tensor \etab + \etab \tensor \etab +\ds\phi\tensor\ds\phi)+\frac{1}{2}\Omega\tr\chi\Omega\chibh-\frac{1}{2}\Omega\tr\chib \Omega\chih.\end{align}
The first three terms on the right hand side are estimated in $\delta\|\cdot\|_{\L_{\ub}^1\H^4(u)}$ by
\begin{align}\label{estimate-Dchi123}
\lesssim \Omega_0(u_0)\delta^{\frac{3}{2}}|u|^{-\frac{5}{2}}\mathscr{F}\mathscr{E}\mathscr{W}^{\frac{1}{2}}\widetilde{\mathcal{O}}[\etab]+\Omega_0^2(u_0)C^{\frac{1}{2}}\delta^3|u|^{-4}\mathscr{F}^2\mathscr{E}^2\mathscr{W}\mathcal{A}^2\lesssim \delta|u|^{-2}\mathscr{F}(\mathcal{A}+\widetilde{\mathcal{O}}[\etab]).
\end{align}
 In the same norm, the \engordnumber{4} term is estimated by $\lesssim\delta\cdot|u|^{-1}\mathscr{F}\mathcal{A}\cdot C^{\frac{1}{4}}\delta|u|^{-2}\mathscr{F}\mathcal{A}\lesssim C^{-1}\delta|u|^{-2}\mathscr{F}\mathcal{A}$. The \engordnumber{5} term is estimated by (using \eqref{estimate-trchib1} and \eqref{estimate-chih}) $\lesssim\delta|u|^{-2}\mathscr{F}(\mathcal{A}+\widetilde{\mathcal{O}}[\eta])$.
Therefore, we have
\begin{align}\label{estimate-chibh}
\|\Omega\chibh\|_{\H^4(\ub,u)}\lesssim \delta|u|^{-2}\mathscr{F}(\mathcal{A}+\widetilde{\mathcal{O}}[\eta,\etab]).
\end{align}

{\bf \underline{Estimate for $\Omega\tr\chib$}:} Write the equation for $D(\Omega\tr\chib)$:
\begin{equation}\label{equ-Dtrchib}D\left(\Omega\tr\chib+\frac{2}{|u|}\right)=\Omega^2(2\divs\etab+2|\etab|^2+2|\ds\phi|^2-2K)-\Omega\tr\chi\Omega\tr\chib.\end{equation}
The right hand side is estimated in $\delta\|\cdot\|_{\L_{\ub}^1\H^4(u)}$. The estimates for the first three terms are the same with \eqref{estimate-Dchi123}. The last term is estimated by $\lesssim\delta|u|^{-2}\mathscr{F}\mathcal{A}$. The Gauss curvature term $\Omega^2K$ is estimated by
\begin{equation}\label{estimate-K1}
\begin{split}
\delta\|\Omega^2K\|_{\L^1_{\ub}\H^4(u)}\lesssim&\delta\Omega_0^2\|K-|u|^{-2}\|_{\L^2_{\ub}\H^4(u)}+\delta|u|^{-2}\\\lesssim& \Omega_0C^{\frac{3}{8}}\delta^2|u|^{-3}\mathscr{F}^{\frac{3}{2}}\mathscr{E}\mathcal{A}^{\frac{3}{2}}+\delta|u|^{-2}\lesssim\delta|u|^{-2}\mathscr{F}\mathcal{A}.
\end{split}
\end{equation}
Combining the estimates above we have
\begin{equation}\label{estimate-trchib}
\|\Omega\tr\chib+2|u|^{-1}\|_{\H^4(\ub,u)}\lesssim\delta|u|^{-2}\mathscr{F}(\mathcal{A}+\widetilde{\mathcal{O}}[\eta]).\end{equation}

{\bf \underline{Improved estimate for the derivatives of $\Omega\tr\chib$}:}
We should mention that the derivatives of $\Omega\tr\chib$ behave better from the perspective of the power of $\delta|u|^{-1}$.  This is because the derivatives of $\Omega\tr\chi, \Omega\tr\chib$ and $K$ behave better than themselves. Note that
\begin{align*}
&\delta\|(|u|\nablas)(-2\Omega^2K-\Omega\tr\chi\Omega\tr\chib)\|_{\L^1_{\ub}\H^3(u)}\\
\lesssim&\delta\Omega_0^2\|K-|u|^{-2}\|_{\L^2_{\ub}\H^4(u)}+\delta\Omega_0^2\|\tr\chi'-2h|u|^{-1}\|_{\L^\infty_{\ub}\H^4(\ub,u)}\|\Omega\tr\chib+2|u|^{-1}\|_{\L^\infty_{\ub}\H^4(\ub,u)}\\
&+\delta\Omega_0^2\left(\|\Omega\tr\chib\|_{\L^\infty_{\ub}\L^\infty(u)}\|\tr\chi'-2h|u|^{-1}\|_{\L^\infty_{\ub}\H^4(\ub,u)}+\|\tr\chi'\|_{\L^\infty_{\ub}\L^\infty(u)}\|\Omega\tr\chib+2|u|^{-1}\|_{\L^\infty_{\ub}\H^4(\ub,u)}\right)\\
\lesssim&\Omega_0C^{\frac{3}{8}}\delta^2|u|^{-3}\mathscr{F}^{\frac{3}{2}}\mathscr{E}\mathcal{A}^{\frac{3}{2}}+C^{\frac{3}{4}}\delta^3|u|^{-4}\mathscr{F}^3\mathcal{A}^3+C^{\frac{1}{4}}\delta^2|u|^{-3}\mathscr{F}^2\mathcal{A}^2
\end{align*}
Combining the above estimates with the estimates for the first three terms, which are bounded by $C^{\frac{1}{4}}\Omega_0\delta^{\frac{3}{2}}|u|^{-\frac{5}{2}}\mathscr{F}\mathscr{E}\mathscr{W}^{\frac{1}{2}}\mathcal{A}$ in $\delta\|\cdot\|_{\L^1_{\ub}\H^4(u)}$, gives
\begin{align}\label{estimate-detrchib}
\|(|u|\nablas)(\Omega\tr\chib)\|_{\H^3(\ub,u)}\lesssim C^{\frac{1}{4}}\delta^{\frac{3}{2}}|u|^{-\frac{5}{2}}\mathscr{F}\mathscr{E}\mathscr{W}^{\frac{1}{2}}\mathcal{A}.
\end{align}

{\bf \underline{Estimate for $\omega$}:}  Consider the equation 
\begin{equation}\label{equ-Dbomega}
\Db  \omega =\Omega^2(2(\eta,\etab)-|\etab|^2-|\ds\phi|^2+K)+\frac{1}{4}\Omega\tr \chi\Omega\tr\chib-\frac{1}{2}(\Omega\chih,\Omega\chibh)+L\phi\Lb\phi,
\end{equation}
The right hand side should be estimated in $\||u|\cdot\|_{\L^1_{[u_0,u]}\H^4(\ub)}$. The first three terms are estimated by
\begin{align*}
\lesssim \Omega_0^2(u_0)C^{\frac{1}{2}}\delta^2|u|^{-3}\mathscr{F}^2\mathscr{E}^2\mathscr{W}\mathcal{A}^2\lesssim C^{-2} |u|^{-1}\mathscr{F}\mathcal{A}.
\end{align*}
The last three terms are estimated by, using \eqref{estimate-Lphi},
\begin{align*}
\lesssim|u|^{-1}\mathscr{F}\mathscr{W}^{\frac{1}{2}}(\mathcal{A}+\widetilde{\mathcal{E}}).
\end{align*}
The Gauss curvature term $\Omega^2K$ is estimated by
\begin{align*}
\||u|^2\Omega^2K\|_{\L^1_{[u_0,u]}\H^4(\ub)}\lesssim&|u|^{-1}+\Omega_0|u|^{-\frac{3}{2}}\|\Omega_0|u|^{\frac{5}{2}}(K-|u|^{-2})\|_{\L^2_{[u_0,u]}\H^4(\ub)}\\\lesssim&|u|^{-1}+\Omega_0\delta^{\frac{1}{2}}|u|^{-\frac{3}{2}}\mathscr{F}\mathscr{E}(\mathcal{A}+\mathcal{R})\\
\lesssim&|u|^{-1}\mathscr{F}(\mathcal{A}+\mathcal{R}).
\end{align*}
Combining the above estimates gives
\begin{align}\label{estimate-omega}
\|\omega\|_{\H^4(\ub,u)}\lesssim|u|^{-1}\mathscr{F}\mathscr{W}^{\frac{1}{2}}(\mathcal{A}+\widetilde{\mathcal{E}}+\mathcal{R}).
\end{align}

The estimates \eqref{estimate-chih}, \eqref{estimate-trchi}, \eqref{estimate-eta}, \eqref{estimate-etab}, \eqref{estimate-chibh}, \eqref{estimate-trchib}, \eqref{estimate-omega} are the desired estimates and the proof is completed.\end{proof}

\subsection{Estimates for $\widetilde{\mathcal{E}}$} We then turn to $\widetilde{\mathcal{E}}$, which is about the derivatives of the derivatives of the wave function $\phi$.
\begin{proposition}\label{estimate-tildeE}
Under the assumptions of Theorem \ref{existencetheorem} and the bootstrap assumptions \eqref{bootstrap}, we have
$$\widetilde{\mathcal{E}}\lesssim\mathcal{A}.$$
\end{proposition}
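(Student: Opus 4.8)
The plan is to control the top angular derivatives of $L\phi$, $\nablas\phi$ and $\Lb\phi$ by an $L^2$ energy argument applied to the wave equation written as two Hodge--transport systems, mimicking the energy estimates for the renormalized null Bianchi equations. Commuting the identities
$$\Db L\phi+\tfrac12\Omega\tr\chib L\phi=\Omega^2\Deltas\phi+2\Omega^2(\eta,\ds\phi)-\tfrac12\Omega\tr\chi\Lb\phi,\qquad D\ds\phi=\nablas L\phi$$
with $\nablas^i$ for $i\le5$ (via Lemma \ref{commutator}; since $L\phi$ is a function the commutator costs only one fewer derivative of $\Omega\chi,\Omega\chib$), multiplying the resulting $\Db$-equation for $\nablas^iL\phi$ and the $D$-equation for $\Omega\,\nablas^i\ds\phi$ by the powers of $|u|$ and $\Omega_0$ dictated by the definition of $\widetilde{\mathcal{E}}$, adding and integrating over $\{0\le\ub\le\delta,\ u_0\le u'\le u\}$, the principal cross terms $\nablas^iL\phi\cdot\nablas^i\Deltas\phi$ and $\nablas^i\ds\phi\cdot\nablas^{i+1}L\phi$ cancel after one integration by parts on the sphere. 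What remains are boundary integrals on $C_{u_0}$ (the initial data), commutator errors $\nablas^j(\Omega\chi)\cdot\nablas^{i-j}L\phi$, $\nablas^j(\Omega\chib)\cdot\nablas^{i-j}L\phi$, and errors from the lower-order terms $2\Omega^2(\eta,\ds\phi)$, $\Omega\tr\chib L\phi$, $\Omega\tr\chi\Lb\phi$. The same procedure applied to $D\Lb\phi+\tfrac12\Omega\tr\chi\Lb\phi=\Omega^2\Deltas\phi+2\Omega^2(\etab,\ds\phi)-\tfrac12\Omega\tr\chib L\phi$ together with $\Db\ds\phi=\nablas\Lb\phi$ produces the $\Omega_0$-weighted norms of $\nablas^5\nablas\phi$ and $\nablas^5\Lb\phi$ in $\widetilde{\mathcal{E}}$; the large factor $\Omega^{-2}$ appearing when one estimates the error coming from the $D\Lb\phi$ equation is exactly what forces those $\Omega_0$ weights and explains why only $\Omega_0^{-1}|u|^2(|u|\nablas)\Lb\phi$, not $(|u|\nablas)\Lb\phi$ itself, is controllable.

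The error terms are then handled term by term as in the proofs of Propositions \ref{estimate-E} and \ref{estimate-O}, using \eqref{bootstrap}, the Sobolev and H\"older lemmas, the Gronwall estimates of Lemmas \ref{Gronwall} and \ref{Gronwallderivative}, the elliptic estimates of Lemma \ref{elliptic} to pass between $\Deltas\phi$ and $(|u|\nablas)^2\phi$, and the basic bounds $|u|\|\Omega\tr\chi\|_{\H^4}\lesssim\mathscr{F}\mathcal{A}$, $|u|\|\Omega\tr\chib\|_{\H^4}\lesssim1$, $\||u|\Lb\phi\|_{\L^2_{[u_0,u]}\H^4}\lesssim\mathscr{W}^{\frac12}$ established in the preliminary lemmas. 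The commutator terms carrying a top derivative of $\Omega\chi$ or $\Omega\chib$ are bounded through the $\widetilde{\mathcal{O}}$-part of the bootstrap assumption, and a smallness condition from \eqref{smallness} converts the resulting $C^{\frac14}$ into $C^{-1}$, so they are non-borderline; the quadratic terms $2\Omega^2(\eta,\ds\phi)$ and $2\Omega^2(\etab,\ds\phi)$ are absorbed in the same way. The genuinely dangerous term is the borderline $-\tfrac12\Omega\tr\chib L\phi$ in the $D\Lb\phi$ equation (resp. $-\tfrac12\Omega\tr\chi\Lb\phi$ in the $\Db L\phi$ equation): its factor $L\phi$ is not covered by \eqref{bootstrap} and must be fed in from \eqref{estimate-Lphi} of Proposition \ref{estimate-E} (resp. $\Lb\phi$ fed in via $\||u|\Lb\phi\|_{\L^2_{[u_0,u]}\H^4}\lesssim\mathscr{W}^{\frac12}$ and $\int_{u_0}^u|u'||\Lb\phi|^2\lesssim\mathscr{W}$, using that $\Omega_0$ is monotonically decreasing so that it survives as a weight).

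The main obstacle, and the reason for the precise shape of $\widetilde{\mathcal{E}}$, is the logarithmic loss at top order. Since $\nablas^5L\phi$ propagates only with a bound $\sim|u|^{-2}\mathscr{F}\mathcal{A}$ and no gain, the time integrals $\int_{u_0}^u|u'|^{\cdots}\|\nablas^5L\phi\|^2\D u'$ generated by the commutator and borderline terms produce a factor $\bigl|\log\tfrac{|u_1|}{|u_0|}\bigr|$. As indicated in the introduction, the propagated part $\delta|u|^{-1}\mathscr{F}^2\mathscr{W}^{\frac12}\mathcal{A}^2$ of the pointwise bound for $\nablas L\phi$ contributes nothing harmful after integrating along $\Cb_{\ub}$ and using \eqref{smallness}; only the initial contribution $\||u_0|(|u_0|\nablas)L\phi\|_{\L^2_{[0,\delta]}\H^4(u_0)}$ survives, and $\mathscr{E}$ was defined with the factor $\bigl|\log\tfrac{|u_1|}{|u_0|}\bigr|^{\frac12}$ precisely so that $\||u_0|(|u_0|\nablas)L\phi\|_{\L^2_{[0,\delta]}\H^4(u_0)}\le\mathscr{F}\mathcal{A}\mathscr{E}\bigl|\log\tfrac{|u_1|}{|u_0|}\bigr|^{-\frac12}$. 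Hence the $\bigl|\log\tfrac{|u_1|}{|u_0|}\bigr|^{\frac12}$ produced by the energy estimate is exactly cancelled, the output is $\lesssim\mathscr{F}\mathscr{E}\mathcal{A}$, matching the normalization built into $\widetilde{\mathcal{E}}$, and with every other error term either quadratic/non-borderline (absorbed by \eqref{smallness}, \eqref{auxiliary}) or reproducing the $\widetilde{\mathcal{E}}$-structure with a constant coefficient, one obtains $\widetilde{\mathcal{E}}\lesssim\mathcal{A}$ with no power of $C$, improving the bootstrap assumption.
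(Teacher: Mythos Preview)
Your proposal is correct and follows the same strategy as the paper: energy estimates on the two transport--Hodge pairs coming from the wave equation, with the logarithmic loss absorbed into the factor $\mathscr{E}$. The paper's implementation makes one structural point explicit that your sketch leaves implicit: the $(L\phi,\nablas\phi)$ pair is closed \emph{first} to obtain the improved estimate \eqref{estimate-nablas5Lphi-improve}, which separates the initial-data contribution from an error carrying an extra factor $(\delta|u|^{-1}\mathscr{W}(\mathscr{F}\mathcal{A}+\Omega_0^2(u_0)\mathscr{E}^2))^{1/4}$; this refined estimate (not merely the bootstrap bound or the heuristic pointwise bound from the introduction) is then inserted into the crucial borderline term $|\Omega\tr\chib|\,\|(|u'|\nablas)L\phi\|_4$ of the second pair, so that the $u$-integral $\int_{u_0}^u|u'|^{-1}(\cdot)\,\D u'$ produces $\bigl|\log\tfrac{|u_1|}{|u_0|}\bigr|$ only against the initial data piece, exactly matching the definition of $\mathscr{E}$.
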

\begin{proof}
Recall the following equations which are essentially the wave equation:
\begin{equation}\label{equ-wave}
\begin{split}
\Db L\phi+\frac{1}{2}\Omega\tr\chib L\phi&=\Omega^2\divs\nablas\phi+2\Omega^2(\eta,\nablas\phi)-\frac{1}{2}\Omega\tr\chi\Lb\phi,\\
D\nablas\phi&=\nablas L\phi,\\
D\Lb\phi+\frac{1}{2}\Omega\tr\chi\Lb\phi&=\Omega^2\divs\nablas\phi+2\Omega^2(\etab,\nablas\phi)-\frac{1}{2}\Omega\tr\chib L\phi,\\
\Db\nablas\phi&=\nablas\Lb\phi.
\end{split}
\end{equation}
We compute
\begin{align*}
&\Db(|(|u|\nablas)^iL\phi|^2\D\mu_{\gs})+D(\Omega^2|(|u|\nablas)^i\nablas\phi|^2\D\mu_{\gs})\\
=&|u|^{2i}\nablas^A(\Omega^2\nablas^i_{B_1\cdots B_5}\nablas_A\phi\cdot\nablas^{i,B_1\cdots B_5}L\phi)+\tau_1
\end{align*}
for $1\le i\le 5$, where $\tau_1$ contains no $(i+1)^{\text{st}}$ order derivatives of the derivative of $\phi$. By divergence theorem, we obtain
\begin{align*}
&\delta\||u|(|u|\nablas)L\phi\|_{\L^2_{\ub}\H^4(u)}^2+\||u|^{\frac{3}{2}}\Omega(|u|\nablas)\nablas\phi\|_{\L^2_{[u_0,u]}\H^4(\ub)}^2\\
\lesssim&\delta\||u|(|u|\nablas)L\phi\|_{\L^2_{\ub}\H^4(u_0)}^2+\||u|^{\frac{3}{2}}\Omega(|u|\nablas)\nablas\phi\|_{\L^2_{[u_0,u]}\H^4(0)}^2+\int_0^{\delta}\D\ub'\int_{u_0}^u\D u'\int_{S_{\ub',u'}}|\tau_1|\D\mu_{\gs}
\end{align*}
We remark that the power $1$ of the multiple $|u|$ before $(|u|\nablas)L\phi$ above comes in because the coefficient of $\Omega\tr\chib L\phi$ on the left hand side is $\frac{1}{2}$, and then $\Omega\tr\chib$ itself will not appear in the following estimates (instead, only the normalized quantity $\Omega\tr\chib+2|u|^{-1}$ comes in). And by direct computation, 
\begin{align*}
\int_{S_{\ub',u'}}|\tau_1|\D\mu_{\gs}\lesssim\int_{S_{\ub',u'}}|\tau_{1,1}|\D\mu_{\gs}+\int_{S_{\ub',u'}}|\tau_{1,2}|\D\mu_{\gs}
\end{align*}
where (the norm $\|\cdot\|_i$ refers to $\|\cdot\|_{\H^i(\ub',u')}$)
\begin{equation}\label{tau11}
\begin{split}
&\int_{S_{\ub',u'}}|\tau_{1,1}|\D\mu_{\gs}\\
\lesssim&|u'|^2\|(|u'|\nablas)L\phi\|_4(\|\Omega\tr\chib+2|u'|^{-2}\|_5\|L\phi\|_4+\|\Omega\tr\chib+2|u'|^{-2}\|_4\|(|u'|\nablas)L\phi\|_4)\\
&+|u'|^2\|(|u'|\nablas)\Omega^2\|_4(|u'|^{-1}\|(|u|\nablas)\nablas\phi\|_4)\|(|u'|\nablas)L\phi\|_4\\&+|u'|^2\|(|u'|\nablas)L\phi\|_4\left(\|\Omega^2\|_5\|\eta\|_4\|\nablas\phi\|_4+\|\Omega^2\|_4\|\eta\|_5\|\nablas\phi\|_4+\|\Omega^2\|_4\|\eta\|_4\|\nablas\phi\|_5\right)\\
&+|u'|^2\|(|u'|\nablas)L\phi\|_4\left(\|\Omega\tr\chi\|_4\|(|u'|\nablas)\Lb\phi\|_4+\|(|u'|\nablas)(\Omega\tr\chi)\|_4\|\Lb\phi\|_4\right)\\
&+|u'|^2\Omega_0^2(u')|\omega|\|(|u'|\nablas)\nablas\phi\|_4^2
\end{split}
\end{equation}
which we call the multiplier terms, and
\begin{equation}\label{tau12}
\begin{split}
\int_{S_{\ub',u'}}|\tau_{1,2}|\D\mu_{\gs}\lesssim&|u'|^2\|(|u'|\nablas)L\phi\|_4\left(\|(|u'|\nablas)(\Omega\chib)\|_3\|L\phi\|_4+\Omega_0^2(u')|u'|\|K\|_4\|\nablas\phi\|_4\right)\\
&+|u'|^2\Omega_0^2(u')\|(|u'|\nablas)\nablas\phi\|_4\left(\|(|u'|\nablas)(\Omega\chi)\|_4\|\nablas\phi\|_4+|u'|\|K\|_3\|L\phi\|_4\right)
\end{split}
\end{equation}
which we call the commutation terms. We will estimate these terms in $L^1_{\ub}L^1_{u}$. The \engordnumber{1} term of the \engordnumber{1} line on the right hand side of \eqref{tau11} is estimated by
\begin{align*}
\lesssim& \delta|u|^{-1}\||u|(|u|\nablas)L\phi\|_{\L^2_{\ub}\H^4(u)}\||u|^2(\Omega\tr\chib+2|u|^{-1})\|_{\L^\infty_{\ub}\L^\infty_{[u_0,u]}\H^5}\||u|L\phi\|_{\L^\infty_{\ub}\L^\infty_{[u_0,u]}\H^4}\\
\lesssim&\delta|u|^{-1}\cdot C^{\frac{1}{4}}\mathscr{F}\mathscr{E}\mathcal{A}\cdot C^{\frac{1}{4}}\delta\mathscr{F}\mathscr{E}\mathscr{W}^{\frac{1}{2}}\mathcal{A}\cdot C^{\frac{1}{4}}\mathscr{F}\mathcal{A}
\lesssim C^{-\frac{1}{4}}\delta\mathscr{F}^2\mathscr{E}^2\mathcal{A}^2\cdot(\delta|u|^{-1}\mathscr{F}\mathcal{A})^{\frac{1}{2}}.
\end{align*}
The \engordnumber{2} term of the \engordnumber{1} line of \eqref{tau11} is estimated by
\begin{align*}
\lesssim& \delta|u|^{-1}\||u|(|u|\nablas)L\phi\|^2_{\L^2_{\ub}\H^4(u)}\||u|^2(\Omega\tr\chib+2|u|^{-1})\|_{\L^\infty_{\ub}\L^\infty_{[u_0,u]}\H^4}\\
\lesssim&\delta|u|^{-1}\cdot C^{\frac{1}{2}}\mathscr{F}^2\mathscr{E}^2\mathcal{A}^2\cdot C^{\frac{1}{4}}\delta\mathscr{F}\mathcal{A}
\lesssim C^{-\frac{1}{4}}\delta\mathscr{F}^2\mathscr{E}^2\mathcal{A}^2\cdot(\delta|u|^{-1}\mathscr{F}\mathcal{A})^{\frac{1}{2}}.
\end{align*}
Noting that
\begin{equation}\label{estimate-Omega5}\|(|u'|\nablas)\Omega^2\|_4\lesssim\Omega_0^2(|u'|\|\eta\|_4+|u'|\|\etab\|_4)\lesssim\Omega_0^2 C^{\frac{1}{4}}\delta|u|^{-1}\mathscr{F}\mathscr{E}\mathscr{W}^{\frac{1}{2}}\mathcal{A}\lesssim C^{-\frac{3}{2}} \Omega_0^2\mathscr{E},\end{equation}
the \engordnumber{2} line of \eqref{tau11} is estimated by,
\begin{align*}
\lesssim&\delta|u|^{-\frac{1}{2}}\|\Omega_0^{-1}(|u|\nablas)\Omega^2\|_{\L^\infty_{\ub}\L^\infty_{[u_0,u]}\H^4}\|\Omega_0|u|^{\frac{3}{2}}\nablas\phi\|_{\L^\infty_{\ub}\L^2_{[u_0,u]}\H^5}\||u|(|u|\nablas)L\phi\|_{\L^\infty_{[u_0,u]}\L^2_{\ub}\H^4}\\
\lesssim&\delta|u|^{-\frac{1}{2}}\cdot C^{-\frac{3}{2}}\Omega_0(u_0)\mathscr{E}\cdot C^{\frac{1}{4}}\delta^{\frac{1}{2}}\mathscr{F}\mathscr{E}\mathcal{A}\cdot C^{\frac{1}{4}}\mathscr{F}\mathscr{E}\mathcal{A}
\lesssim C^{-1}\delta\mathscr{F}^2\mathscr{E}^2\mathcal{A}^2\cdot(\Omega^2(u_0)\delta|u|^{-1}\mathscr{E}^2)^{\frac{1}{2}}.
\end{align*}
The last two terms of the \engordnumber{3} line is estimated the same as above, and the \engordnumber{1} term is estimated by
\begin{align*}
\lesssim&\delta|u|^{-2}\|\Omega^2\|_{\L^\infty_{\ub}\L^\infty_{[u_0,u]}\H^5}\||u|^2\eta\|_{\L^\infty_{\ub}\L^\infty_{[u_0,u]}\H^4}\||u|^2\nablas\phi\|_{\L^\infty_{\ub}\L^\infty_{[u_0,u]}\H^4}\||u|(|u|\nablas)L\phi\|_{\L^\infty_{[u_0,u]}\L^2_{\ub}\H^4}\\
\lesssim&\delta|u|^{-2}\cdot \Omega_0^2(u_0)\mathscr{E}\cdot C^{\frac{1}{4}}\delta\mathscr{F}\mathscr{E}\mathcal{A}\cdot C^{\frac{1}{4}}\delta\mathscr{F}\mathscr{E}\mathcal{A}\cdot C^{\frac{1}{4}}\mathscr{F}\mathscr{E}\mathcal{A}
\lesssim C^{-1}\delta\mathscr{F}^2\mathscr{E}^2\mathcal{A}^2\cdot(\Omega_0^2(u_0)\delta|u|^{-1}\mathscr{E}^2)
\end{align*}
The \engordnumber{4} line is obtained by taking derivatives on $\Omega\tr\chi\Lb\phi$. It is important that at least one derivative applies on $\Omega\tr\chi$ or $\Lb\phi$, both of which have worst estimates at zeroth order. The \engordnumber{1} term is estimated by
\begin{align*}
\lesssim&\delta|u|^{-\frac{3}{4}}\||u|\Omega\tr\chi\|_{\L^\infty_{\ub}\L^\infty_{[u_0,u]}\H^4}\cdot\Omega_0(u_0)\|\Omega_0^{-1}|u|^{\frac{3}{4}}(|u|\Lb\phi-\psi)\|_{\L^\infty_{\ub}\L^2_{[u_0,u]}\H^5}\||u|(|u|\nablas)L\phi\|_{\L^\infty_{[u_0,u]}\L^2_{\ub}\H^4}\\
\lesssim&\delta|u|^{-\frac{3}{4}}\cdot \mathscr{F}\mathcal{A}\cdot C^{\frac{1}{4}}\delta|u|^{-\frac{1}{4}}\Omega_0(u_0)\Omega_0^{-1}\mathscr{F}\mathscr{E}\mathcal{A}\cdot C^{\frac{1}{4}}\mathscr{F}\mathscr{E}\mathcal{A}
\lesssim \delta\mathscr{F}^2\mathscr{E}^2\mathcal{A}^2\cdot(\delta|u|^{-1}\mathscr{F}\mathcal{A})^{\frac{1}{4}},
\end{align*}
where we have used both the smallness condition (the second one) and the auxiliary condition \eqref{auxiliary}. The \engordnumber{2} term is estimated by
\begin{align*}
\lesssim&\delta|u|^{-1}\||u|^2\Omega^2(\tr\chi'-2h/|u|)\|_{\L^\infty_{\ub}\L^\infty_{[u_0,u]}\H^5}\||u|\Lb\phi\|_{\L^\infty_{\ub}\L^\infty_{[u_0,u]}\H^4}\||u|(|u|\nablas)L\phi\|_{\L^\infty_{[u_0,u]}\L^2_{\ub}\H^4}\\
\lesssim&\delta|u|^{-1}\cdot C^{\frac{1}{2}}\delta\mathscr{F}^2\mathscr{E}\mathcal{A}^2\cdot \mathscr{W}^{\frac{1}{2}}\cdot C^{\frac{1}{4}}\mathscr{F}\mathscr{E}\mathcal{A}\\
\lesssim& C^{-\frac{1}{4}}\delta\mathscr{F}^2\mathscr{E}^2\mathcal{A}^2\cdot\left(\delta|u|^{-1}\mathscr{F}\mathscr{W}^{\frac{1}{2}}\mathcal{A}\right)^{\frac{1}{2}}.
\end{align*}
\begin{remark}
Here we have used the estimate of $\|(|u|\nablas)^5(\Omega\tr\chi)\|_{\L^2(\ub,u)}$  which is (the lower order estimates are given in \eqref{estimate-trchi1})
\begin{equation}\label{estimate-nablas5Omegatrchi-bootstrap}
\begin{split}
\|(|u|\nablas)^5(\Omega\tr\chi)\|_{\L^2(\ub,u)}\lesssim&\|(|u|\nablas)\Omega^2\|_{\H^4(\ub,u)}\|\tr\chi'\|_{\H^4(\ub,u)}+\Omega_0^2\|(|u|\nablas)\tr\chi'\|_{\H^4(\ub,u)}\\
\lesssim&C^{\frac{1}{4}}\delta|u|^{-1}\mathscr{F}\mathscr{E}\mathscr{W}^{\frac{1}{2}}\mathcal{A}\cdot|u|^{-1}\mathscr{F}\mathcal{A}+ C^{\frac{1}{2}}\delta|u|^{-2}\mathscr{F}^2\mathscr{E}\mathcal{A}^2\\
\lesssim& C^{\frac{1}{2}}\delta|u|^{-2}\mathscr{F}^2\mathscr{E}\mathscr{W}^{\frac{1}{2}}\mathcal{A}^2.
\end{split}
\end{equation}
\end{remark}
The last term of \eqref{tau11} is estimated by
\begin{align*}
\lesssim&\delta|u|^{-1}\||u|\omega\|_{\L^\infty_{\ub}\L^\infty_{[u_0,u]}\H^4}\|\Omega_0|u|^{\frac{3}{2}}\nablas\phi\|_{\L^\infty_{\ub}\L^2_{[u_0,u]}\H^5}^2\\
\lesssim&\delta|u|^{-1}\cdot C^{\frac{1}{4}}\mathscr{F}\mathscr{W}^{\frac{1}{2}}\mathcal{A}\cdot C^{\frac{1}{2}}\delta\mathscr{F}^2\mathscr{E}^2\mathcal{A}^2\\
\lesssim& C^{-\frac{1}{4}}\delta\mathscr{F}^2\mathscr{E}^2\mathcal{A}^2\cdot\left(\delta|u|^{-1}\mathscr{F}\mathscr{W}^{\frac{1}{2}}\mathcal{A}\right)^{\frac{1}{2}}.
\end{align*}

Then we turn to the estimates of the right hand side of \eqref{tau12} in $L_{\ub}^1L_u^1$. The \engordnumber{1} term of the \engordnumber{1} line is estimated by
\begin{align*}
\lesssim&\delta|u|^{-1}\||u|^2(|u|\nablas)(\Omega\chib)\|_{\L^\infty_{\ub}\L^\infty_{[u_0,u]}\H^3}\||u|L\phi\|_{\L^\infty_{\ub}\L^\infty_{[u_0,u]}\H^4}\||u|(|u|\nablas)L\phi\|_{\L^\infty_{[u_0,u]}\L^2_{\ub}\H^4}\\
\lesssim&\delta|u|^{-1}\cdot C^{\frac{1}{4}}\delta\mathscr{F}\mathcal{A}\cdot C^{\frac{1}{4}}\mathscr{F}\mathcal{A}\cdot C^{\frac{1}{4}}\mathscr{F}\mathscr{E}\mathcal{A}\lesssim C^{-\frac{1}{4}}\delta\mathscr{F}^2\mathscr{E}\mathcal{A}^2\cdot(\delta|u|^{-1}\mathscr{F}\mathcal{A})^{\frac{1}{2}}.
\end{align*}
The \engordnumber{1} term of the \engordnumber{2} line is estimated by (using both \eqref{estimate-nablas5Omegatrchi-bootstrap})
\begin{align*}
\lesssim&\delta|u|^{-\frac{3}{2}}\Omega_0(u_0)\||u|(|u|\nablas)(\Omega\chi)\|_{\L^\infty_{[u_0,u]}\L^2_{\ub}\H^4}\||u|^2\nablas\phi\|_{\L^\infty_{\ub}\L^\infty_{[u_0,u]}\H^4}\|\Omega_0|u|^{\frac{3}{2}}(|u|\nablas)\nablas\phi\|_{\L^\infty_{\ub}\L^2_{[u_0,u]}\H^5}\\
\lesssim&\delta|u|^{-\frac{3}{2}}\Omega_0(u_0)\cdot C^{\frac{1}{4}}\mathscr{F}\mathscr{E}\mathcal{A}\cdot C^{\frac{1}{4}}\delta\mathscr{F}\mathscr{E}\mathcal{A}\cdot C^{\frac{1}{4}}\delta^{\frac{1}{2}}\mathscr{F}\mathscr{E}\mathcal{A}\lesssim C^{-1}\delta\mathscr{F}^2\mathscr{E}^2\mathcal{A}^2\cdot(\Omega_0^2(u_0)\delta|u|^{-1}\mathscr{E}^2)^{\frac{1}{2}}.
\end{align*}
The estimates of the remaining terms require an estimate of $K$. Similar to \eqref{estimate-K1}, we can deduce that
\begin{equation}\label{estimate-K2}
\Omega_0\|K\|_{\L^2_{\ub}\H^4(\ub)}\lesssim|u|^{-2}+C^\frac{3}{8}\delta|u|^{-3}\mathscr{F}^{\frac{3}{2}}\mathscr{E}\mathcal{A}^{\frac{3}{2}}\lesssim|u|^{-2}\mathscr{F}^{\frac{1}{2}}\mathscr{E}\mathcal{A}^{\frac{1}{2}}.
\end{equation}
Then the \engordnumber{2} term of the \engordnumber{1} line of \eqref{tau12} is estimated by
\begin{align*}
\lesssim&\delta|u|^{-1}\Omega_0(u_0)\|\Omega_0|u|^2K\|_{\L^\infty_{[u_0,u]}\L^2_{\ub}\H^4}\||u|^2\nablas\phi\|_{\L^\infty_{\ub}\L^\infty_{[u_0,u]}\H^4}\||u|(|u|\nablas)L\phi\|_{\L^\infty_{[u_0,u]}\L^2_{\ub}\H^4}\\
\lesssim&\delta|u|^{-1}\cdot \mathscr{F}^{\frac{1}{2}}\mathscr{E}\mathcal{A}^{\frac{1}{2}}\cdot C^{\frac{1}{4}}\delta\mathscr{F}\mathcal{A}\cdot C^{\frac{1}{4}}\mathscr{F}\mathscr{E}\mathcal{A}\lesssim C^{-\frac{1}{2}}\delta\mathscr{F}^2\mathscr{E}^2\mathcal{A}^2\cdot(\delta|u|^{-1}\mathscr{F}\mathcal{A})^{\frac{1}{2}}.
\end{align*}
And the last term of \eqref{tau12} is estimated by
\begin{align*}
\lesssim&\delta|u|^{-\frac{1}{2}}\Omega_0(u_0)\|\Omega_0|u|^2K\|_{\L^\infty_{[u_0,u]}\L^2_{\ub}\H^3}\||u|L\phi\|_{\L^\infty_{\ub}\L^\infty_{[u_0,u]}\H^4}\||u|^\frac{3}{2}(|u|\nablas)\nablas\phi\|_{\L^\infty_{\ub}\L^2_{[u_0,u]}\H^4}\\
\lesssim&\delta|u|^{-\frac{1}{2}}\Omega_0(u_0)\cdot \mathscr{F}^{\frac{1}{2}}\mathscr{E}\mathcal{A}^{\frac{1}{2}}\cdot C^{\frac{1}{4}}\mathscr{F}\mathcal{A}\cdot C^{\frac{1}{4}}\delta^{\frac{1}{2}}\mathscr{F}\mathscr{E}\mathcal{A}\lesssim \delta\mathscr{F}^2\mathscr{E}^2\mathcal{A}^2\cdot(\delta|u|^{-1}\mathscr{F}\mathcal{A})^{\frac{1}{4}}.
\end{align*}

Now combining all the estimates of \eqref{tau11} and \eqref{tau12} above, we achieve that (note that $\nablas\phi(\ub=0)=0$)
\begin{equation}\label{estimate-nablas5Lphi-improve}
\begin{split}
&\||u|(|u|\nablas)L\phi\|_{\L^2_{\ub}\H^4(u)}^2+\delta^{-1}\||u|^{\frac{3}{2}}\Omega_0(|u|\nablas)\nablas\phi\|_{\L^2_{[u_0,u]}\H^4(\ub)}^2\\
\lesssim&\||u|(|u|\nablas)L\phi\|_{\L^2_{\ub}\H^4(u_0)}^2+\mathscr{F}^2\mathscr{E}^2\mathcal{A}^2\cdot\left(\delta|u|^{-1}\mathscr{W}(\mathscr{F}\mathcal{A}+\Omega_0^2(u_0)\mathscr{E}^2)\right)^{\frac{1}{4}}.
\end{split}
\end{equation}
In particular, this implies
\begin{equation}\label{estimate-nablas5Lphi}
\||u|(|u|\nablas)L\phi\|_{\L^2_{\ub}\H^4(u)}+\delta^{-\frac{1}{2}}\||u|^{\frac{3}{2}}\Omega_0(|u|\nablas)\nablas\phi\|_{\L^2_{[u_0,u]}\H^4(\ub)}\lesssim\mathscr{F}\mathscr{E}\mathcal{A}.
\end{equation}
\begin{remark}
The improved estimate \eqref{estimate-nablas5Lphi-improve} is crucial in the following parts of the proof.
\end{remark}

We consider the last two equations of \eqref{equ-wave}. We compute
\begin{align*}
&\Db(|u||(|u|\nablas)^i\nablas\phi|^2\D\mu_{\gs})+D(\Omega^{-2}|u||(|u|\nablas)^i\Lb\phi|^2\D\mu_{\gs})\\
=&|u|^{2i+1}\nablas^A(\nablas^i_{B_1\cdots B_i}\Lb\phi\cdot\nablas^{i,B_1\cdots B_i}\nablas_A\phi)+|u|\tau_2
\end{align*}
for $1\le i\le 5$, where $\tau_2$ contains no sixth order derivatives of the derivative of $\phi$. By divergence theorem, we have
\begin{align*}
&\delta\Omega_0^2(u)\||u|^{\frac{3}{2}}(|u|\nablas)\nablas\phi\|_{\L^2_{\ub}\H^4(u)}^2+\Omega_0^2(u)\||u|^{2}\Omega^{-1}(|u|\nablas)\Lb\phi\|_{\L^2_{[u_0,u]}\H^4(\ub)}^2\\
\lesssim&\delta\Omega_0^2(u)\||u|^{\frac{3}{2}}(|u|\nablas)\nablas\phi\|_{\L^2_{\ub}\H^4(u_0)}^2+\Omega_0^2(u)\||u|^{2}\Omega^{-1}(|u|\nablas)\Lb\phi\|_{\L^2_{[u_0,u]}\H^4(0)}^2\\
&+\int_0^{\delta}\D\ub'\int_{u_0}^u\D u'\int_{S_{\ub',u'}}\Omega_0^2(u)|u'||\tau_2|\D\mu_{\gs}
\end{align*}
where $\tau_2$ contains no sixth order derivatives of the derivative of $\phi$. By direct computation, 
\begin{align*}
\int_{S_{\ub',u'}}\Omega_0^2(u)|u'||\tau_2|\D\mu_{\gs}\lesssim\int_{S_{\ub',u'}}\Omega_0^2(u)|u'||\tau_{2,1}|\D\mu_{\gs}+\int_{S_{\ub',u'}}\Omega_0^2(u)|u'||\tau_{2,2}|\D\mu_{\gs}
\end{align*}
where the multiplier terms are
\begin{equation}\label{tau21}
\begin{split}
&\int_{S_{\ub',u'}}\Omega_0^2(u)|u'||\tau_{2,1}|\D\mu_{\gs}\\
\lesssim
&\Omega_0^2(u)\Omega_0^{-2}(u')\left[|u'|^3\|(|u'|\nablas)\Lb\phi\|_4\left(|\Omega\tr\chi|\|(|u'|\nablas)\Lb\phi\|_4+\|(|u'|\nablas)(\Omega\tr\chi)\|_4\|\Lb\phi\|_4\right)\right.\\
&+|u'|^3|\omega|\|(|u'|\nablas)\Lb\phi\|_4^2\\
&+|u'|^3\|(|u'|\nablas)\Omega^2\|_4(|u'|^{-1}\|(|u|\nablas)\nablas\phi\|_4)\|(|u'|\nablas)\Lb\phi\|_4\\&+|u'|^3\|(|u'|\nablas)\Lb\phi\|_4\left(\|\Omega^2\|_5\|\etab\|_4\|\nablas\phi\|_4+\|\Omega^2\|_4\|\etab\|_5\|\nablas\phi\|_4+\|\Omega^2\|_4\|\etab\|_4\|\nablas\phi\|_5\right)\\
&\left.+|u'|^3\|(|u'|\nablas)\Lb\phi\|_4\left(|\Omega\tr\chib|\|(|u'|\nablas)L\phi\|_4+\|(|u'|\nablas)(\Omega\tr\chib)\|_4\|L\phi\|_4\right)\right]\\
&+\Omega_0^2(u)|u'|^3\|(|u'|\nablas)\nablas\phi\|_4(|\Omega\tr\chib|\|(|u|\nablas)\nablas\phi\|_4+\|\Omega\tr\chib+2|u'|^{-2}\|_5\|\nablas\phi\|_4)\
\end{split}
\end{equation}
and the commutation terms are
\begin{equation}\label{tau22}
\begin{split}
&\int_{S_{\ub',u'}}|u'||\tau_{2,2}|\D\mu_{\gs}\\
\lesssim&\Omega_0^2(u)\Omega_0^{-2}(u')|u'|^3\|(|u'|\nablas)\Lb\phi\|_4\left(\|(|u'|\nablas)(\Omega\chi)\|_3\|\Lb\phi\|_4+\Omega_0^2(u')|u'|\|K\|_4\|\nablas\phi\|_4\right)\\
&+\Omega_0^2(u)|u'|^3\|(|u'|\nablas)\nablas\phi\|_4\left(\|(|u'|\nablas)(\Omega\chib)\|_4\|\nablas\phi\|_4+|u'|\|K\|_3\|(|u|\nablas)\Lb\phi\|_3\right).
\end{split}
\end{equation}
The right hand side of \eqref{tau21} and \eqref{tau22} should be estimated in $L_{\ub}^1L^1_u$. The \engordnumber{1} terms of the first two lines of \eqref{tau21} is estimated by
\begin{align*}
\lesssim&\delta|u|^{-1}\||u|\Omega\tr\chi, |u|\omega\|_{\L^\infty_{\ub}\L^\infty_{[u_0,u]}\H^4}\cdot\Omega_0\|\Omega_0^{-1}|u|^2(|u|\nablas)\Lb\phi\|^2_{\L^\infty_{\ub}\L^2_{[u_0,u]}\H^4}\\
\lesssim& \delta|u|^{-1}\cdot C^{\frac{1}{4}}\mathscr{F}\mathscr{W}^{\frac{1}{2}}\mathcal{A}\cdot C^{\frac{1}{2}}\delta^2\mathscr{F}^2\mathscr{E}^2\mathcal{A}^2\lesssim C^{-1}\delta^2\mathscr{F}^2\mathscr{E}^2\mathcal{A}^2.
\end{align*}
The \engordnumber{2} term of the \engordnumber{1} line is estimated by (using \eqref{estimate-nablas5Omegatrchi-bootstrap})
\begin{align*}
\lesssim&\delta|u|^{-1}\||u|^2\Omega^2(\tr\chi'-2h/|u|)\|_{\L^\infty_{\ub}\L^\infty_{[u_0,u]}\H^5}\||u|\Lb\phi\|_{\L^\infty_{\ub}\L^2_{[u_0,u]}\H^4}\cdot\Omega_0\|\Omega_0^{-1}|u|^2(|u|\nablas)\Lb\phi\|_{\L^\infty_{\ub}\L^2_{[u_0,u]}\H^4}\\
\lesssim& \delta|u|^{-1}\cdot C^{\frac{1}{2}}\delta\mathscr{F}^2\mathscr{E}\mathscr{W}^{\frac{1}{2}}\mathcal{A}^2\cdot \mathscr{W}^{\frac{1}{2}}\cdot C^{\frac{1}{4}}\delta\mathscr{F}\mathscr{E}\mathcal{A}\lesssim C^{-1}\delta^2\mathscr{F}^2\mathscr{E}^2\mathcal{A}^2.
\end{align*}
In view of \eqref{estimate-Omega5}, all terms of the \engordnumber{3} and \engordnumber{4} lines of \eqref{tau21} are estimated in the same manner. They are estimated by
\begin{align*}
\lesssim&\delta|u|^{-\frac{3}{2}}\Omega_0\||u|^2(\nablas\phi,\eta, \etab)\|_{\L^\infty_{\ub}\L^\infty_{[u_0,u]}\H^4}\cdot\Omega_0\||u|^\frac{3}{2}(\nablas\phi,\etab)\|_{\L^\infty_{[u_0,u]}\L^2_{\ub}\H^5}\cdot\Omega_0\|\Omega_0^{-1}|u|^2(|u|\nablas)\Lb\phi\|_{\L^\infty_{\ub}\L^2_{[u_0,u]}\H^4}\\
\lesssim&\delta|u|^{-\frac{3}{2}}\Omega_0\cdot C^{\frac{1}{4}}\delta\mathscr{F}\mathscr{E}\mathscr{W}^{\frac{1}{2}}\mathcal{A}\cdot C^{\frac{1}{4}}\delta^{\frac{1}{2}}\mathscr{F}\mathscr{E}\mathscr{W}^{\frac{1}{2}}\mathcal{A}\cdot C^{\frac{1}{4}}\delta\mathscr{F}\mathscr{E}\mathcal{A}\lesssim C^{-1}\delta^2\mathscr{F}^2\mathscr{E}^2\mathcal{A}^2.
\end{align*}
The \engordnumber{2} term of the \engordnumber{5} line is estimated by
\begin{align*}
\lesssim&\delta|u|^{-1}\||u|^2(\Omega\tr\chib+2|u|^{-2})\|_{\L^\infty_{\ub}\L^\infty_{[u_0,u]}\H^5}\||u|L\phi\|_{\L^\infty_{\ub}\L^\infty_{[u_0,u]}\H^4}\cdot\Omega_0\|\Omega_0^{-1}|u|^2(|u|\nablas)\Lb\phi\|_{\L^\infty_{\ub}\L^2_{[u_0,u]}\H^4}\\
\lesssim& \delta|u|^{-1}\cdot C^{\frac{1}{4}}\delta\mathscr{F}\mathscr{E}\mathscr{W}^{\frac{1}{2}}\mathcal{A}\cdot C^{\frac{1}{4}}\mathscr{F}\mathcal{A}\cdot C^{\frac{1}{4}}\delta\mathscr{F}\mathscr{E}\mathcal{A}\lesssim C^{-1}\delta^2\mathscr{F}^2\mathscr{E}^2\mathcal{A}^2.
\end{align*}
The \engordnumber{2} term of the last line is estimated by
\begin{align*}
\lesssim&\delta|u|^{-\frac{3}{2}}\Omega_0\||u|^2(\Omega\tr\chib+2|u|^{-2})\|_{\L^\infty_{\ub}\L^\infty_{[u_0,u]}\H^5}\||u|^2\nablas\phi\|_{\L^\infty_{\ub}\L^\infty_{[u_0,u]}\H^4}\cdot \Omega_0\||u|^\frac{3}{2}(|u|\nablas)\nablas\phi\|_{\L^\infty_{[u_0,u]}\L^2_{\ub}\H^4}\\
\lesssim& \delta|u|^{-\frac{3}{2}}\Omega_0\cdot C^{\frac{1}{4}}\delta\mathscr{F}\mathscr{E}\mathscr{W}^{\frac{1}{2}}\mathcal{A}\cdot C^{\frac{1}{4}}\delta\mathscr{F}\mathscr{E}\mathcal{A}\cdot C^{\frac{1}{4}}\delta^{\frac{1}{2}}\mathscr{F}\mathscr{E}\mathcal{A}\lesssim C^{-1}\delta^2\mathscr{F}^2\mathscr{E}^2\mathcal{A}^2.
\end{align*}

Now we are going to the crucial terms, the \engordnumber{1} terms of the \engordnumber{5} line and the last line. Recalling the definition of $\mathscr{E}$, using \eqref{estimate-nablas5Lphi-improve}, the \engordnumber{1} term of the \engordnumber{5} line is estimated by
\begin{align*}
\lesssim&\delta\left(\int_{u_0}^u|u'|^{-1}||u'|\Omega\tr\chib|^2\||u|(|u|\nablas)L\phi\|_{\L^2_{\ub}\H^4(u')}^2\D u'\right)^{\frac{1}{2}}\cdot\Omega_0\|\Omega_0^{-1}|u|^2(|u|\nablas)\Lb\phi\|_{\L^\infty_{\ub}\L^2_{[u_0,u]}\H^4}\\
\lesssim&\delta\left(\int_{u_0}^u|u'|^{-1}\left(\||u|(|u|\nablas)L\phi\|_{\L^2_{\ub}\H^4(u_0)}^2+\mathscr{F}^2\mathscr{E}^2\mathcal{A}^2\cdot\left(\delta|u|^{-1}\mathscr{W}(\mathscr{F}\mathcal{A}+\mathscr{E}^2)\right)^{\frac{1}{4}}\right)\D u'\right)^{\frac{1}{2}}\\
&\times\Omega_0\|\Omega_0^{-1}|u|^2(|u|\nablas)\Lb\phi\|_{\L^\infty_{\ub}\L^2_{[u_0,u]}\H^4}\\
\lesssim&\delta\left(\left|\log\frac{|u_1|}{|u_0|}\right|^{\frac{1}{2}}\||u|(|u|\nablas)L\phi\|_{\L^2_{\ub}\H^4(u_0)}+\mathscr{F}\mathscr{E}\mathcal{A}\cdot\left(\delta|u|^{-1}\mathscr{W}(\mathscr{F}\mathcal{A}+\mathscr{E}^2)\right)^{\frac{1}{8}}\right)\\
&\times\Omega_0\|\Omega_0^{-1}|u|^2(|u|\nablas)\Lb\phi\|_{\L^\infty_{\ub}\L^2_{[u_0,u]}\H^4}\\
\lesssim&\delta\mathscr{F}\mathscr{E}\mathcal{A}\cdot\Omega_0\|\Omega_0^{-1}|u|^2(|u|\nablas)\Lb\phi\|_{\L^\infty_{\ub}\L^2_{[u_0,u]}\H^4}.
\end{align*}
Using \eqref{estimate-nablas5Lphi}, the \engordnumber{1} term of the last line is estimated by
\begin{align*}
\lesssim\delta\||u|\Omega\tr\chi\|_{\L^\infty_{\ub}\L^\infty_{[u_0,u]}\H^4}\|\Omega_0|u|^{\frac{3}{2}}(|u|\nablas)\nablas\phi\|^2_{\L^\infty_{\ub}\L^2_{[u_0,u]}\H^4}\lesssim\delta^2\mathscr{F}^2\mathscr{E}^2\mathcal{A}^2.
\end{align*}

Finally, we turn to the estimates of \eqref{tau22} in $L^1_{\ub}L^1_u$. The estimates of the \engordnumber{1} terms of both lines are similar to the \engordnumber{2} terms of the last two lines of \eqref{tau22}, and the \engordnumber{2} term of the \engordnumber{1} line is estimated by (using \eqref{estimate-K2})
\begin{align*}
\lesssim&\delta|u|^{-1}\Omega_0\|\Omega_0|u|^2K\|_{\L^\infty_{[u_0,u]}\L^2_{\ub}\H^4}\||u|^2\nablas\phi\|_{\L^\infty_{\ub}\L^\infty_{[u_0,u]}\H^4}\cdot\Omega_0\|\Omega_0^{-1}|u|^2(|u|\nablas)\Lb\phi\|_{\L^\infty_{\ub}\L^2_{[u_0,u]}\H^4}\\
\lesssim&\delta|u|^{-1}\Omega_0\cdot \mathscr{F}^{\frac{1}{2}}\mathscr{E}\mathcal{A}^{\frac{1}{2}}\cdot C^{\frac{1}{4}}\delta\mathscr{F}\mathscr{E}\mathcal{A}\cdot C^{\frac{1}{4}}\delta\mathscr{F}\mathscr{E}\mathcal{A}\lesssim C^{-1}\delta^2\mathscr{F}^2\mathscr{E}^2\mathcal{A}^2.
\end{align*}
And the last term of \eqref{tau22} is estimated by
\begin{align*}
\lesssim&\delta|u|^{-\frac{1}{4}}\Omega_0\||u|^2K\|_{\L^\infty_{[u_0,u]}\L^2_{\ub}\H^3}\||u|^\frac{7}{4}(\Lb\phi-\psi/|u|)\|_{\L^\infty_{\ub}\L^\infty_{[u_0,u]}\H^4}\cdot\Omega_0\||u|^{\frac{3}{2}}(|u|\nablas)\nablas\phi\|_{\L^\infty_{[u_0,u]}\L^2_{\ub}\H^4}\\
\lesssim&\delta|u|^{-\frac{1}{4}}\cdot\mathscr{F}^{\frac{1}{2}}\mathscr{E}\mathcal{A}^{\frac{1}{2}}\cdot C^{\frac{1}{4}}\delta|u|^{-\frac{1}{4}}\mathscr{F}\mathcal{A}\cdot C^{\frac{1}{4}}\delta^{\frac{1}{2}}\mathscr{F}\mathscr{E}\mathcal{A}\lesssim C^{-\frac{1}{2}}\delta^2\mathscr{F}^2\mathscr{E}^2\mathcal{A}^2.
\end{align*}

Combining all estimates above together, noting that $\nablas\Lb\phi(\ub=0)=0$, we have
\begin{equation*}
\begin{split}
&\delta\Omega_0^2(u)\||u|^{\frac{3}{2}}(|u|\nablas)\nablas\phi\|_{\L^2_{\ub}\H^4(u)}^2+\Omega_0^2(u)\||u|^{2}\Omega_0^{-1}(|u|\nablas)\Lb\phi\|_{\L^2_{[u_0,u]}\H^4(\ub)}^2\\
\lesssim&\delta\||u|^{\frac{3}{2}}(|u|\nablas)\nablas\phi\|_{\L^2_{\ub}\H^4(u_0)}^2+\delta^2\mathscr{F}^2\mathscr{E}^2\mathcal{A}^2+\delta\mathscr{F}\mathscr{E}\mathcal{A}\cdot\Omega_0(u)\|\Omega_0^{-1}|u|^2(|u|\nablas)\Lb\phi\|_{\L^\infty_{\ub}\L^2_{[u_0,u]}\H^4},
\end{split}
\end{equation*}
which implies
\begin{equation}\label{estimate-nablas5nablasphi}
\begin{split}
\Omega_0(u)\||u|^{\frac{3}{2}}(|u|\nablas)\nablas\phi\|_{\L^2_{\ub}\H^4(u)}+\delta^{-\frac{1}{2}}\Omega_0(u)\||u|^{2}\Omega_0^{-1}(|u|\nablas)\Lb\phi\|_{\L^2_{[u_0,u]}\H^4(\ub)}\lesssim\delta^{\frac{1}{2}}\mathscr{F}\mathscr{E}\mathcal{A}
\end{split}
\end{equation}

The proof of Proposition \ref{estimate-tildeE} is then complete.\end{proof}

From Proposition \ref{estimate-E} and \ref{estimate-tildeE}, we conclude
\begin{align*}
\mathcal{E}\lesssim\mathcal{A}.
\end{align*}
In particular, $\widetilde{\mathcal{E}}$ in all estimates in Proposition \ref{estimate-O} can be dropped.

\subsection{Estimates for $\widetilde{\mathcal{O}}$} We then use the elliptic systems to estimate $\widetilde{\mathcal{O}}$, which involves the top order derivatives of the connection coefficients. 
\begin{proposition}\label{estimate-tildeO}
Under the assumptions of Theorem \ref{existencetheorem} and the bootstrap assumptions \eqref{bootstrap}, we have
$$\widetilde{\mathcal{O}}\lesssim\mathcal{A}+\mathcal{R}.$$
\end{proposition}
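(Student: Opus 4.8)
The plan is to control the top-order (\engordnumber{5}) derivatives of the connection coefficients through the Hodge--transport coupled systems and Codazzi equations collected at the end of Section \ref{preliminary}, feeding in the curvature information encoded in $\mathcal{R}$, the already-established lower-order bounds $\mathcal{O},\mathcal{E}\lesssim\mathcal{A}$ (together with the self-referential term $\widetilde{\mathcal{O}}[\eta,\etab]$, which will be absorbed at the very end), and the elliptic estimates of Lemma \ref{elliptic}. The key preliminary observation is that under the bootstrap assumptions one has $\|K\|_{\H^2(\ub,u)}\lesssim|u|^{-2}$ and $\||u|^2K\|_{\H^3(\ub,u)}\lesssim 1$: this follows from the Gauss equation $K=-\frac14\tr\chi\tr\chib+\frac12(\chih,\chibh)-(\rho+\frac16\mathbf R)+|\ds\phi|^2$, rewriting $\rho+\frac16\mathbf R$ via $K-|u|^{-2}$ and $\sigmac$, plus the bounds for $\tr\chi'$, $\Omega\chih$, $\Omega\chibh$, $\nablas\phi$ from $\mathcal{O},\mathcal{E}$ and for the curvature from $\mathcal{R}$. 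This makes all three parts of Lemma \ref{elliptic} applicable throughout the argument.

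First I would estimate $\eta$ and $\etab$. Commuting the Hodge system $\divs\eta=K-|u|^{-2}-\mu$, $\curls\eta=\sigma-\frac12\chih\wedge\chibh$ with up to four angular derivatives (using Lemma \ref{commutator}), the elliptic estimate of Lemma \ref{elliptic} reduces $\|\eta\|_{\H^5(\ub,u)}$ to $\||u|\,(K-|u|^{-2},\,\sigmac,\,\mu)\|_{\H^4(\ub,u)}$ modulo lower-order terms already dominated by $\mathcal{O},\mathcal{E}\lesssim\mathcal{A}$, where I have written $\sigma=\sigmac+\frac12\chih\wedge\chibh$. The first two contributions are bounded by $\mathcal{R}$ by definition; $\mu$ is estimated from its transport equation $D\mu+\Omega\tr\chi\mu=-\Omega\tr\chi|u|^{-2}+\divs(2\Omega\chih\cdot\eta-\Omega\tr\chi\etab)+2\nablas L\phi\cdot\ds\phi+2L\phi\Deltas\phi$ integrated along $C_u$, where the term $\divs(2\Omega\chih\cdot\eta-\Omega\tr\chi\etab)$ is borderline and absorbed by the smallness conditions \eqref{smallness} at the price of the factor $\widetilde{\mathcal{O}}[\eta,\etab]$, while the scalar-field source is handled through the $\widetilde{\mathcal{E}}$-bounds \eqref{estimate-nablas5Lphi}. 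The same scheme, now integrated in the $\Db$-direction, treats $\etab$ via $\divs\etab$, $\curls\etab$ and the transport equation for $\mub$; the weight $\Omega_0$ attached to $\etab$ in the definition of $\widetilde{\mathcal{O}}$ enters exactly from the factor $\Omega^2$ multiplying the curvature source and from estimating $\Lb\phi$ through \eqref{Lbphi}. The energy identity associated to each transport equation simultaneously produces both the $\L^2_{\ub}\H^5(u)$ component and the $\L^2_{[u_0,u]}\H^5(\ub)$ component of $\widetilde{\mathcal{O}}[\eta,\etab]$.

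With $\eta,\etab$ in hand, I would then run the Codazzi equations $\divs(\Omega\chih)=\frac12\Omega^2\ds\tr\chi'+\Omega\chih\cdot\etab+\frac12\Omega\tr\chi\eta-(\Omega\beta-L\phi\ds\phi)$ and its conjugate: Lemma \ref{elliptic} trades $\|\Omega\chih\|_{\H^5}$ for $\||u|\,(\text{RHS})\|_{\H^4}$, in which $\ds\tr\chi'$ is controlled by the improved derivative estimate for $\tr\chi'$ obtained by commuting $D\tr\chi'=\Omega^{-2}(-\frac12(\Omega^2\tr\chi')^2-|\Omega\chih|^2-2(L\phi)^2)$ (so one always differentiates a quadratic, gaining a power of $\delta|u|^{-1}$), and the curvature term is precisely $\Omega\beta-L\phi\ds\phi$, bounded by $\mathcal{R}$; the conjugate handles $\Omega\chibh$ in $\L^2_{[u_0,u]}\H^5(\ub)$ with its $\Omega_0^{-1}$-weight coming from $\Omega\betab+\Lb\phi\ds\phi$. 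Finally the \engordnumber{5} derivatives of $\Omega\tr\chib$ and $\tr\chi'$ cannot be obtained by directly integrating $D(\Omega\tr\chib)=\Omega^2(2\divs\etab+\cdots)$ since that loses a derivative on $\etab$; instead, as flagged in Section \ref{preliminary}, one closes the top-order estimate for $\Omega\tr\chib$ through the elliptic--transport system for $\omegab$, where the dangerous term $\Lb\phi\nablas L\phi$ is dealt with by the renormalization that removes $D(\Omega\betab)$ in favour of $D(\Omega\betab+\Lb\phi\nablas\phi)$ together with the wave equation, at the cost of the logarithmic loss already built into $\mathscr{E}$; $\tr\chi'$ at top order is then recovered from $D\tr\chi'$ coupled with $\divs(\Omega\chih)$.

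The main obstacle is the bookkeeping of weights and of the logarithmic loss: every top-order norm in $\widetilde{\mathcal{O}}$ carries an extra factor $\mathscr{E}$ relative to $\mathcal{O}$, and one must check that all error terms --- those produced by commuting curvature into the Hodge systems, those with an $\Omega^{-2}$ arising from the conjugate equations, and those involving $\Lb\phi$ estimated only through the $\mathscr{W}$-weighted bound $\||u|\Lb\phi\|_{\L^2_{[u_0,u]}\H^4(\ub)}\lesssim\mathscr{W}^{1/2}$ --- fall strictly below the allowed thresholds once the smallness conditions \eqref{smallness} and the auxiliary condition \eqref{auxiliary} are inserted. In particular the term $\Omega^2\nablas\etab$ in the equation for $D(\Omega\chibh)$ is borderline at the next-to-top level, so the estimates must be ordered so that the top-order bound for $\etab$ (which contains no borderline contribution from $\Omega\chibh$) is established first; and all $\Omega_0$-weighted norms must be manipulated using only the monotonicity of $\Omega_0$, never extracting $\Omega_0(u')$ from a $u'$-integral. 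Once all the quantities entering $\widetilde{\mathcal{O}}$ are bounded by a universal multiple of $\mathcal{A}+\mathcal{R}$ (choosing $C_0$ large to absorb the resulting factors of $C$), the term $\widetilde{\mathcal{O}}[\eta,\etab]$ appearing on the right-hand side of Proposition \ref{estimate-O} can be absorbed, which both completes the proof of Proposition \ref{estimate-tildeO} and upgrades Proposition \ref{estimate-O} to $\mathcal{O}\lesssim\mathcal{A}+\mathcal{R}+\mathcal{A}^{-1/2}\mathcal{R}^{3/2}$.
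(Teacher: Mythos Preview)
Your plan matches the paper's proof almost exactly: the paper estimates $K$ pointwise, then $\mu$ and $\eta$ via the Hodge--transport system, then $\mub$ and $\etab$ analogously, then $\Omega\chih$ from Codazzi, $\nablas\tr\chi'$ from the differentiated Raychaudhuri equation, $\nablas\omegab$ from its elliptic--transport system (exploiting the improved estimate \eqref{estimate-nablas5Lphi-improve} and the definition of $\mathscr{E}$ to control the dangerous $\Lb\phi\,\nablas L\phi$ term), then $(|u|\nablas)(\Omega\tr\chib)$ from \eqref{equ-Dbnablastrchib}, and finally $\Omega\chibh$ from the conjugate Codazzi equation.

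One small correction: your proposed route to the pointwise bound $\|K\|_{\H^2(\ub,u)}\lesssim|u|^{-2}$ via the Gauss equation is circular, since the Gauss equation is precisely the relation between $K$ and $\rho+\tfrac16\mathbf R$; the curvature norm $\mathcal R$ only controls $K-|u|^{-2}$ in $\L^2_{\ub}$ or $\L^2_{[u_0,u]}$, not pointwise in $(\ub,u)$. The paper obtains the needed pointwise $\H^2$ and $\H^3$ bounds on $K$ by integrating the transport equation $D(K-|u|^{-2})+\Omega\tr\chi K=\divs\divs(\Omega\chih)-\tfrac12\Deltas(\Omega\tr\chi)$ along $C_u$ (see \eqref{equ-DK}--\eqref{estimate-K3}), which is what you should use instead.
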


\begin{proof} 

{\bf \underline{Estimate for $K$}:} In order to apply elliptic estimate, we need an appropriate estimate for $K$. Consider the equation for $DK$:
\begin{equation}\label{equ-DK}D(K-|u|^{-2})+\Omega\tr\chi K=\divs\divs(\Omega\chih)-\frac{1}{2}\Deltas(\Omega\tr\chi).\end{equation}
Integrate the equation \eqref{equ-DK}, for up to second order derivatives of $K$, using \eqref{estimate-chih} and \eqref{estimate-trchi}, we have
\begin{align}\label{estimate-K-lower}
\|K-|u|^{-2}\|_{\H^2(\ub,u)}\lesssim&\delta|u|^{-3}\mathscr{F}\mathcal{A}+\delta^2|u|^{-4}\mathscr{F}^2\mathcal{A}^2\lesssim\delta|u|^{-3}\mathscr{F}\mathcal{A}.
\end{align}
Using \eqref{estimate-nablas5Omegatrchi-bootstrap}, we have
\begin{align*}
\|K-|u|^{-2}\|_{\H^3(\ub,u)}\lesssim C^{\frac{1}{4}}\delta|u|^{-3}\mathscr{F}\mathscr{E}\mathcal{A}+C^{\frac{1}{2}}\delta^2|u|^{-4}\delta\mathscr{F}^2\mathscr{E}\mathcal{A}^2\lesssim C^{\frac{1}{4}}\delta|u|^{-3}\mathscr{F}\mathscr{E}\mathcal{A}.
\end{align*}
This then in particular implies
\begin{align}\label{estimate-K3}
\|K\|_{\H^2(\ub,u)}\lesssim|u|^{-2}, \ \|K\|_{\H^3(\ub,u)}\lesssim|u|^{-2}\mathscr{E}.
\end{align}
\begin{remark}
We have a loss $\mathscr{E}$ on the third order derivative of $K$.
\end{remark}


{\bf \underline{Estimates for $\mu$ and $\eta$ involving the top order derivatives}:} Consider first the equation for $D\mu$,
\begin{equation}\label{equ-Dmu}D\mu+\Omega\tr\chi\mu=-\Omega\tr\chi\frac{1}{|u|^2}+\divs(2\Omega\chih\cdot\eta-\Omega\tr\chi\etab)+2\nablas L\phi\cdot\ds\phi+2L\phi\Deltas\phi.\end{equation}
The right hand side should estimated in $\delta\|\cdot\|_{\L^1_{\ub}\H^4(u)}$. Noting that $\Omega\chih,\Omega\tr\chi, L\phi$ share a similar bound and $\eta, \etab,\nablas\phi$ share a similar bound, the last three terms are estimated by
\begin{align*}
\lesssim&\delta|u|^{-1}\|\Omega\chih,\Omega\tr\chi,L\phi\|_{\L^2_{\ub}\H^5(u)}\|\eta,\etab,\nablas\phi\|_{\L^\infty_{\ub}\H^4(u)}+\delta|u|^{-1}\|\chih,\tr\chi,L\phi\|_{\L^\infty_{\ub}\H^4(u)}\|\eta,\etab,\nablas\phi\|_{\L^2_{\ub}\H^5(u)}\\
\lesssim&\delta|u|^{-1}\cdot C^{\frac{1}{4}}|u|^{-1}\mathscr{F}\mathscr{E}\mathcal{A}\cdot C^{\frac{1}{4}}\delta|u|^{-2}\mathscr{F}\mathscr{E}\mathscr{W}^{\frac{1}{2}}\mathcal{A}+\delta|u|^{-1}\cdot C^{\frac{1}{4}}|u|^{-1}\mathscr{F}\mathcal{A}\cdot C^{\frac{1}{4}}\Omega_0^{-1}\delta^{\frac{1}{2}}|u|^{-\frac{3}{2}}\mathscr{F}\mathscr{E}\mathscr{W}^{\frac{1}{2}}\mathcal{A}\\
\lesssim& C^{\frac{1}{2}}\Omega_0^{-1}\delta^{\frac{3}{2}}|u|^{-\frac{7}{2}}\mathscr{F}^2\mathscr{E}\mathscr{W}^{\frac{1}{2}}\mathcal{A}^2.
\end{align*}
Therefore, with $\||u|^{-2}\Omega\tr\chi\|_{\H^4(\ub,u)}\lesssim|u|^{-3}\mathscr{F}\mathcal{A}$, 
\begin{align*}
\|\mu\|_{\H^4(\ub,u)}\lesssim\delta|u|^{-1}\mathscr{F}\mathcal{A}\|\mu\|_{\L^\infty_{\ub}\H^4(\ub,u)}+\Omega_0^{-1}\delta|u|^{-3}\mathscr{F}^{\frac{3}{2}}\mathscr{E}\mathcal{A}^{\frac{3}{2}}.\end{align*}
Because $\delta|u|^{-1}\mathscr{F}\mathcal{A}\le C^{-2}\le C_0^{-2}$, if $C_0$ is sufficiently large, the first term above can be absorbed by the left hand side, and then we have
\begin{align}\label{estimate-mu}
\|\mu\|_{\H^4(\ub,u)}\lesssim\Omega_0^{-1}\delta|u|^{-3}\mathscr{F}^{\frac{3}{2}}\mathscr{E}\mathcal{A}^{\frac{3}{2}}.\end{align}

Then from the elliptic system for $\eta$
\begin{align}\label{equ-ellipticeta}
\begin{cases}\divs\eta=K-\frac{1}{|u|^2}-\mu\\
\curls\eta=\check{\sigma}\end{cases},
\end{align}
we have, by elliptic estimate and \eqref{estimate-eta},
\begin{align*}
\|\eta\|_{\H^5(\ub,u)}\lesssim&|u|\|K-|u|^{-2},\check{\sigma}\|_{\H^4(\ub,u)}+|u|\|\mu\|_{\H^4(\ub,u)}+\mathscr{E}\|\eta\|_{\H^4(\ub,u)}\\
\lesssim&|u|\|K-|u|^{-2},\check{\sigma}\|_{\H^4(\ub,u)}+\Omega_0^{-1}\delta|u|^{-2}\mathscr{F}\mathscr{E}(\Omega_0\mathscr{E}(\mathcal{A}+\mathcal{R})+\mathscr{F}^{\frac{1}{2}}\mathcal{A}^{\frac{3}{2}}).
\end{align*}
Then we have,
\begin{equation}\label{estimate-nablas5etaub}
\begin{split}
\|\eta\|_{\L^2_{\ub}\H^5(u)}\lesssim&|u|\|K-|u|^{-2},\check{\sigma}\|_{\L^2_{\ub}\H^4(u)}+\Omega_0^{-1}\delta|u|^{-2}\mathscr{F}\mathscr{E}(C^{\frac{1}{4}}\Omega_0\mathscr{E}\mathcal{A}+\mathscr{F}^{\frac{1}{2}}\mathcal{A}^{\frac{3}{2}})\\
\lesssim&C^{\frac{3}{8}}\Omega_0^{-1}\delta|u|^{-2}\mathscr{F}^{\frac{3}{2}}\mathscr{E}\mathcal{A}^{\frac{3}{2}}+\delta|u|^{-2}\mathscr{F}\mathscr{E}^2(\mathcal{A}+\mathcal{R})\lesssim \Omega_0^{-1}\delta^{\frac{1}{2}}|u|^{-\frac{3}{2}}\mathscr{F}\mathscr{E}(\mathcal{A}+\mathcal{R}),
\end{split}
\end{equation}
and 
\begin{equation}\label{estimate-nablas5etau}
\begin{split}
\|\Omega_0|u|^{\frac{3}{2}}\eta\|_{\L^2_{[u_0,u]}\H^5(\ub)}\lesssim&\|\Omega_0|u|^{\frac{5}{2}}(K-|u|^{-2},\check{\sigma})\|_{\L^2_{[u_0,u]}\H^4(\ub)}+\delta|u|^{-\frac{1}{2}}\mathscr{F}\mathscr{E}(\Omega_0\mathscr{E}(\mathcal{A}+\mathcal{R})+\mathscr{F}^{\frac{1}{2}}\mathcal{A}^{\frac{3}{2}})\\
\lesssim&\delta^{\frac{1}{2}}\mathscr{F}\mathscr{E}(\mathcal{A}+\mathcal{R}),
\end{split}
\end{equation}

{\bf \underline{Improved estimate for the derivatives of $\mu$}:} We can improve the estimates for the derivatives of $\mu$. This is because the first term $-|u|^{-2}\Omega\tr\chi$ has better estimates for its derivatives. We estimate
\begin{align*}
-|u|\nablas(\Omega\tr\chi\mu)-|u|\nablas(\Omega\tr\chi)\frac{1}{|u|^2}
\end{align*}
in $\delta\|\cdot\|_{\L^1_{\ub}\H^3(u)}$ by, using \eqref{estimate-mu},
\begin{align*}
\lesssim\delta|u|^{-1}\mathscr{F}\mathcal{A}\|(|u|\nablas)\mu\|_{\L^\infty_{\ub}\H^3(u)}+\delta\cdot C^{\frac{1}{2}}\delta|u|^{-2}\mathscr{F}^2\mathcal{A}^2\cdot\Omega_0^{-1}|u|^{-2}\mathscr{F}^{\frac{1}{2}}\mathscr{E}\mathcal{A}^{\frac{1}{2}}.
\end{align*}
Therefore, if $C_0$ is sufficiently large such that the first term can be absorbed, we have, combining with the estimates for the last three terms of \eqref{equ-Dmu},
\begin{equation}\label{estimate-nablasmu}
\|(|u|\nablas)\mu\|_{\H^3(\ub,u)}\lesssim C^{\frac{1}{2}}\Omega_0^{-1}\delta^{\frac{3}{2}}|u|^{-\frac{7}{2}}\mathscr{F}^2\mathscr{E}\mathscr{W}^{\frac{1}{2}}\mathcal{A}^2.
\end{equation}

{\bf \underline{Estimates for $\mub$ and $\etab$ involving the top order derivatives}:}  We consider the equation for $\Db\mub$:
\begin{equation}\label{equ-Dbmub}
\Db\mub+\Omega\tr\chib\mub=-(\Omega\tr\chib+\frac{2}{|u|})\frac{1}{|u|^2}+\divs(2\Omega\chibh\cdot\etab-\Omega\tr\chib\eta)+2\nablas \Lb\phi\cdot\ds\phi+2\Lb\phi\Deltas\phi
\end{equation}

We compute the initial value of $\mub$ on $C_{u_0}$ first. Because $\divs\etab=K-\frac{1}{|u|^2}-\mub$, then
\begin{align}\label{estimate-mub0}
\|\mub\|_{\L^2_{\ub}\H^4(u_0)}\lesssim|u|^{-1}\|\etab\|_{\L^2_{\ub}\H^5(u_0)}+\|K-|u|^{-2}\|_{\L^2_{\ub}\H^4(u_0)}\lesssim\Omega_0^{-1}(u_0)\delta^{\frac{1}{2}}|u|^{-\frac{5}{2}}\mathscr{F}\mathscr{E}\mathscr{W}^{\frac{1}{2}}\mathcal{A}.
\end{align}
So in the following, we estimate $\mub$ in $\L^2_{\ub}\H^4(u)$. The right hand side of \eqref{equ-Dbmub} should be estimated in $\||u|^3\cdot\|_{\L^1_{[u_0,u]}\H^4(\ub)}$. The first term is easily estimated by $\lesssim C^{\frac{1}{4}}\delta|u|^{-1}\mathscr{F}\mathcal{A}$. The term $\divs(2\Omega\chibh\cdot\etab)$ is estimated by
\begin{align*}
&|u|^{-\frac{3}{2}}\Omega_0(u_0)\|\Omega_0^{-1}|u|^{\frac{3}{2}}\Omega\chibh\|_{\L^2_{[u_0,u]}\H^5(\ub)}\||u|^{2}\etab\|_{\L^\infty_{[u_0,u]}\H^4(\ub)}\\
&+\Omega_0^{-1}|u|^{-1}\||u|^{2}\Omega\chibh\|_{\L^\infty_{[u_0,u]}\H^4(\ub)}\|\Omega_0|u|\etab\|_{\L^2_{[u_0,u]}\H^5(\ub)}\\
\lesssim&|u|^{-2}\Omega_0(u_0)\cdot \Omega_0^{-1}C^{\frac{1}{4}}\delta\mathscr{F}\mathscr{E}\mathscr{W}^{\frac{1}{2}}\mathcal{A}\cdot C^{\frac{1}{4}}\delta\mathscr{F}\mathscr{E}\mathscr{W}^{\frac{1}{2}}\mathcal{A}+\Omega_0^{-1}C^{\frac{1}{4}}|u|^{-1}\delta\mathscr{F}\mathcal{A}\|\Omega_0|u|\etab\|_{\L^2_{[u_0,u]}\H^5(\ub)}.
\end{align*}
The term $\divs(\Omega\tr\chib\eta)$ is estimated by, using \eqref{estimate-nablas5etau},
\begin{align*}
&|u|^{-2}\||u|^{2}(|u|\nablas)(\Omega\tr\chib)\|_{\L^\infty_{[u_0,u]}\H^4(\ub)}\||u|^{2}\etab\|_{\L^\infty_{[u_0,u]}\H^4(\ub)}\\
&+\Omega_0^{-1}|u|^{-\frac{1}{2}}\||u|\Omega\tr\chib\|_{\L^\infty_{[u_0,u]}\H^4(\ub)}\|\Omega_0|u|^{\frac{3}{2}}\eta\|_{\L^2_{[u_0,u]}\H^5(\ub)}\\
\lesssim&|u|^{-2}\Omega^{-1}_0\Omega_0\cdot C^\frac{1}{4}\delta\mathscr{F}\mathscr{E}\mathscr{W}^{\frac{1}{2}}\mathcal{A}\cdot C^{\frac{1}{4}}\delta\mathscr{F}\mathscr{E}\mathscr{W}^{\frac{1}{2}}\mathcal{A}+\Omega_0^{-1}|u|^{-\frac{1}{2}}\cdot1\cdot \delta^\frac{1}{2}\mathscr{F}\mathscr{E}(\mathcal{A}+\mathcal{R}).
\end{align*}
The last two terms are estimated by, using \eqref{estimate-nablas5Lphi},
\begin{align*}
&|u|^{-2}\Omega_0(u_0)\|\Omega_0^{-1}|u|^{-2}(|u|\nablas)\Lb\phi\|_{\L^2_{[u_0,u]}\H^4(\ub)}\||u|^{2}\nablas\phi\|_{\L^\infty_{[u_0,u]}\H^4(\ub)}\\
&+\Omega_0^{-1}|u|^{-\frac{1}{2}}\||u|\Lb\phi\|_{\L^\infty_{[u_0,u]}\H^4(\ub)}\|\Omega_0|u|^{\frac{3}{2}}\nablas\phi\|_{\L^2_{[u_0,u]}\H^5(\ub)}\\
\lesssim&\Omega_0(u_0)|u|^{-2}\cdot C^{\frac{1}{4}}\Omega_0^{-1}\delta\mathscr{F}\mathscr{E}\mathcal{A}\cdot C^{\frac{1}{4}}\delta\mathscr{F}\mathscr{E}\mathcal{A}+\Omega_0^{-1}|u|^{-\frac{1}{2}}\cdot\mathscr{W}^{\frac{1}{2}}\cdot\delta^{\frac{1}{2}}\mathscr{F}\mathscr{E}\mathcal{A}.
\end{align*}
Combining the estimates above, we have, using \eqref{estimate-mub0},
\begin{align*}
\||u|^2\mub\|_{\H^4(\ub,u)}\lesssim&\||u|^2\mub\|_{\H^4(\ub,u_0)}+\Omega_0^{-1}\delta^{\frac{1}{2}}|u|^{-\frac{1}{2}}\mathscr{F}\mathscr{E}\mathscr{W}^{\frac{1}{2}}(\mathcal{A}+\mathcal{R})+C^{-1}\Omega_0^{-1}\|\Omega_0|u|\etab\|_{\L^2_{[u_0,u]}\H^5(\ub)}\\
\lesssim&\Omega_0^{-1}\delta^{\frac{1}{2}}|u|^{-\frac{1}{2}}\mathscr{F}\mathscr{E}\mathscr{W}^{\frac{1}{2}}(\mathcal{A}+\mathcal{R})+C^{-1}\Omega_0^{-1}\|\Omega_0|u|\etab\|_{\L^2_{[u_0,u]}\H^5(\ub)}.
\end{align*}
Therefore, taking $\|\cdot\|_{\L^2_{\ub}}$ of the above inequality, we have
\begin{equation}\label{estimate-mub}
\begin{split}
\||u|^2\mub\|_{\L^2_{\ub}\H^4(u)}\lesssim&\Omega_0^{-1}\delta^{\frac{1}{2}}|u|^{-\frac{1}{2}}\mathscr{F}\mathscr{E}\mathscr{W}^{\frac{1}{2}}(\mathcal{A}+\mathcal{R})+C^{-1}\Omega_0^{-1}\|\Omega_0|u|\etab\|_{\L^2_{[u_0,u]}\L^2_{\ub}\H^5}\\
\lesssim&\Omega_0^{-1}\delta^{\frac{1}{2}}|u|^{-\frac{1}{2}}\mathscr{F}\mathscr{E}\mathscr{W}^{\frac{1}{2}}(\mathcal{A}+\mathcal{R})+C^{-1}\Omega_0^{-1}\cdot\|C^{\frac{1}{4}}\delta|u|^{-\frac{1}{2}}\mathscr{F}\mathscr{E}\mathscr{W}^{\frac{1}{2}}\mathcal{A}\|_{\L^2_{[u_0,u]}}\\
\lesssim&\Omega_0^{-1}\delta^{\frac{1}{2}}|u|^{-\frac{1}{2}}\mathscr{F}\mathscr{E}\mathscr{W}^{\frac{1}{2}}(\mathcal{A}+\mathcal{R}).
\end{split}
\end{equation}

Then from the elliptic system for $\etab$:
\begin{equation}\label{equ-ellipticetab}
\begin{cases}\divs\etab=K-\frac{1}{|u|^2}-\mub\\
\curls\etab=-\check{\sigma}\end{cases},
\end{equation}
we have
\begin{align*}
\|\etab\|_{\H^5(\ub,u)}\lesssim&|u|\|K-|u|^{-2},\check{\sigma}\|_{\H^4(\ub,u)}+|u|\|\mub\|_{\H^4(\ub,u)}+\mathscr{E}\|\etab\|_{\H^4(\ub,u)}.
\end{align*}
We then have, using \eqref{estimate-etab}, \begin{equation}\label{estimate-nablas5etab}
\begin{split}
\|\etab\|_{\L^2_{\ub}\H^5(u)}\lesssim&|u|\|K-|u|^{-2},\check{\sigma}\|_{\L^2_{\ub}\H^4(u)}+|u|\|\mub\|_{\L^2_{\ub}\H^4(u)}+\delta|u|^{-2}\mathscr{F}\mathscr{E}^2\mathscr{W}^{\frac{1}{2}}(\mathcal{A}+\mathcal{A}^{-\frac{1}{2}}\mathcal{R}^{\frac{3}{2}})\\
\lesssim&\Omega_0^{-1}\delta^{\frac{1}{2}}|u|^{-\frac{3}{2}}\mathscr{F}\mathscr{E}\mathscr{W}^{\frac{1}{2}}(\mathcal{A}+\mathcal{R}+\mathcal{A}^{-\frac{1}{2}}\mathcal{R}^{\frac{3}{2}}).
\end{split}
\end{equation}

From \eqref{estimate-nablas5etaub}, \eqref{estimate-nablas5etau} and \eqref{estimate-nablas5etab}, the dependence on $\widetilde{\mathcal{O}}$ of the estimates \eqref{estimate-chih}, \eqref{estimate-trchi}, \eqref{estimate-chibh} and \eqref{estimate-trchib}, and of Proposition \ref{estimate-O} can now be replaced by $\mathcal{R}+\mathcal{A}^{-\frac{1}{2}}\mathcal{R}^{\frac{3}{2}}$.

{\bf \underline{Estimates for $\Omega\chih$ involving the top order derivatives}:}  Now we consider the equation for $\divs(\Omega\chih)$:
\begin{align}\label{equ-divschih}
\divs(\Omega\chih)=\frac{1}{2}\Omega^2\nablas\tr \chi'+\Omega\chih\cdot\etab+\frac{1}{2}\Omega\tr \chi\eta-(\Omega\beta-L\phi\nablas\phi).\end{align}
The terms on the  right hand side are estimated in $|u|\|\cdot\|_{\H^4(\ub,u)}$ by
\begin{align*}
&\Omega_0^2\||u|\nablas\tr\chi'\|_{\H^4(\ub,u)}+|u|(\|\Omega\chih\etab\|_{\H^4(\ub,u)}+\|\Omega_0^2\tr\chi'\eta\|_{\H^4(\ub,u)}+\|\Omega\beta-L\phi\nablas\phi\|_{\H^4(\ub,u)})\\
\lesssim&C^{\frac{1}{4}}\delta|u|^{-2}\mathscr{F}^2\mathscr{E}\mathcal{A}^2+C^{\frac{1}{4}}|u|^{-1}\mathscr{F}\mathcal{A}\cdot C^{\frac{1}{4}}\delta|u|^{-1}\mathscr{F}\mathscr{E}\mathscr{W}^{\frac{1}{2}}\mathcal{A}+\|\Omega\beta-L\phi\nablas\phi\|_{\H^4(\ub,u)}\\
\lesssim&C^{-1}|u|^{-1}\mathscr{F}\mathscr{E}\mathcal{A}+|u|\|\Omega\beta-L\phi\nablas\phi\|_{\H^4(\ub,u)}.
\end{align*}
Then by elliptic estimate (using \eqref{estimate-chih}, \eqref{estimate-K3}),
\begin{align*}
\|\Omega\chih\|_{\H^5(\ub,u)}\lesssim&|u|\|\divs(\Omega\chih)\|_{\H^4(\ub,u)}+\mathscr{E}\|\Omega\chih\|_{\H^4(\ub,u)}\\
\lesssim&|u|^{-1}\mathscr{F}\mathscr{E}(\mathcal{A}+\mathcal{R}+\mathcal{A}^{-\frac{1}{2}}\mathcal{R}^{\frac{3}{2}})+|u|\|\Omega\beta-L\phi\nablas\phi\|_{\H^4(\ub,u)},
\end{align*}
whose $\L^2_{\ub}$ norm is then bounded by
\begin{align}\label{estimate-nablas5Omegachih}
\|\Omega\chih\|_{\L^2_{\ub}\H^5(u)}\lesssim|u|^{-1}\mathscr{F}\mathscr{E}(\mathcal{A}+\mathcal{R}+\mathcal{A}^{-\frac{1}{2}}\mathcal{R}^{\frac{3}{2}}).
\end{align}

{\bf \underline{Estimates for $\tr\chi'$ involving the top order derivatives}:} Now we consider the equation for $D\nablas\tr\chi'$, written in the following form:
\begin{equation}\label{equ-Dnablastrchi}
\begin{split}
D\nablas\tr\chi'=&\Omega^{-2}(\eta+\etab)\left(\frac{1}{2}(\Omega\tr\chi)^2+|\Omega\chih|^2+2(L\phi)^2\right)\\
&-(\Omega\tr\chi)\nablas\tr\chi'-2\Omega^{-2}\nablas(\Omega\chih)\cdot\Omega\chih-4\Omega^{-2}\nablas L\phi L\phi.
\end{split}
\end{equation}
which is derived by taking $\nablas$ to \eqref{equ-Dtrchi}. Here we use $2\nablas\log\Omega=\eta+\etab$. Now consider the fourth order derivative $\nablas^4$ of the above equation. To estimate $\nablas\tr\chi'$ in $\H^4$, the right hand side should be estimated in $\delta\|\cdot\|_{\L^1_{\ub}\H^4(u)}\lesssim\delta\|\cdot\|_{\L^2_{\ub}\H^4(u)}$. The first line is estimated by
\begin{align*}
\lesssim&\delta\Omega_0^{-2}\|\eta,\etab\|_{\L^\infty_{\ub}\H^4(u)}\|\Omega\tr\chi,\Omega\chih,L\phi\|_{\L^\infty_{\ub}\H^4(u)}^2\\
\lesssim&\delta\Omega_0^{-2}\cdot C^{\frac{1}{4}}\delta|u|^{-2}\mathscr{F}\mathscr{E}\mathscr{W}^{\frac{1}{2}}\mathcal{A}\cdot C^{\frac{1}{2}}|u|^{-2}\mathscr{F}^2\mathcal{A}^2\\
\lesssim&C^{-1}\Omega_0^{-2}\delta|u|^{-3}\mathscr{F}^2\mathscr{E}\mathcal{A}^2.
\end{align*}
The \engordnumber{2} line involves fifth order derivative of the connection coefficients and $L\phi$, which should be estimated using \eqref{estimate-Lphi}, \eqref{estimate-chih}, \eqref{estimate-nablas5Lphi} and \eqref{estimate-nablas5Omegachih}:
\begin{align*}
&\delta\Omega_0^{-2}\|\Omega_0^2\nablas\tr\chi'\|_{\L^2_{\ub}\H^4(u)}\|\Omega\tr\chi\|_{\L^\infty_{\ub}\H^4(u)}+\delta\Omega_0^{-2}\|\nablas(\Omega\chih),\nablas L\phi\|_{\L^2_{\ub}\H^4(u)}\|\Omega\chih,L\phi\|_{\L^\infty_{\ub}\H^4(u)}\\
\lesssim&\delta\Omega_0^{-2}(C^{\frac{1}{4}}\delta|u|^{-3}\mathscr{F}^2\mathscr{E}\mathcal{A}^2\cdot|u|^{-1}\mathscr{F}\mathcal{A}+|u|^{-2}\mathscr{F}\mathscr{E}(\mathcal{A}+\mathcal{R}+\mathcal{A}^{-\frac{1}{2}}\mathcal{R}^{\frac{3}{2}})\cdot|u|^{-1}\mathscr{F}(\mathcal{A}+\mathcal{R}+\mathcal{A}^{-\frac{1}{2}}\mathcal{R}^{\frac{3}{2}})\\\lesssim&\Omega_0^{-2}\delta|u|^{-3}\mathscr{F}^2\mathscr{E}(\mathcal{A}+\mathcal{R}+\mathcal{A}^{-\frac{1}{2}}\mathcal{R}^{\frac{3}{2}})^2.
\end{align*}
Combining the above estimates, we have
\begin{align}\label{estimate-nablas5trchi}
\|(|u|\nablas)\tr\chi'\|_{\H^4(\ub,u)}\lesssim\Omega_0^{-2}\delta|u|^{-2}\mathscr{F}^2\mathscr{E}(\mathcal{A}+\mathcal{R}+\mathcal{A}^{-\frac{1}{2}}\mathcal{R}^{\frac{3}{2}})^2.
\end{align}

{\bf \underline{Estimates for $\Omega\tr\chib$ involving the top order derivatives}:} The next step is to estimate $\nablas^5(\Omega\tr\chib)$. This is the most subtle estimate in the whole argument. Before this, we should first obtain an estimate of $\nablas^i\omegab$.
Recall the elliptic-transport system for $\omegab$:
\begin{align*}
&\begin{cases}\Deltas\omegab&=\omegabs+\divs(\Omega\betab+\Lb\phi\nablas\phi),\\
D\omegabs&+\Omega\tr\chi\omegabs+2\Omega\chih\cdot\nablas\nablas\omegab+2\divs(\Omega\chih)\cdot\nablas\omegab-\frac{1}{2}\divs(\Omega\tr\chi(\Omega\betab+\Lb\phi\nablas\phi))\\
&+\nablas(\Omega^2)\cdot(\ds(\rho+\frac{1}{6}\mathbf{R})+{}^*\ds\sigma)+\Deltas(\Omega^2)(\rho+\frac{1}{6}\mathbf{R})-\Deltas(\Omega^2(2\eta\cdot\etab-|\eta|^2))\\
&-\divs(\Omega\chih\cdot(\Omega\betab+\Lb\phi\nablas\phi)-2\Omega\chibh\cdot(\Omega\beta-L\phi\nablas\phi)+3\Omega^2\etab(\rho+\frac{1}{6}\mathbf{R})-3\Omega^2{}^*\etab\sigma)\\
\phantom{\Delta}=&-\divs\{2\Omega^2\ds\phi\Deltas\phi+\Lb\phi\nablas L\phi+L\phi\nablas\Lb\phi-\Omega\tr\chib L\phi\ds\phi\\
&+2\Omega\chibh\cdot\ds\phi L\phi+2\Omega^2\etab\cdot\nablas\phi\ds\phi+\Omega^2\etab|\ds\phi|^2)\}\end{cases}.
\end{align*}

Because $\omegab$ is a function and satisfies a Poisson equation, the elliptic estimate for $\nablas\omegab$ does not depend on $\omegab$ itself. Therefore no bounds for the lower derivatives of $\omegab$ are needed. We have
\begin{align}\label{elliptic-omegab}
\|(|u|\nablas)\omegab\|_{\H^4(\ub,u)}\lesssim|u|^2\|\omegabs\|_{\H^3(\ub,u)}+|u|\|\Omega\betab+\Lb\phi\nablas\phi\|_{\H^4(\ub,u)}.
\end{align}

The second equation above looks complicated. We can estimate the right hand side term by term, but it is not hard to see the equation can be written in the following schematic form: (we will write $\nablas\log\Omega\sim|u|^{-1}$, $\nablas^2\log\Omega\sim\nablas(\eta+\etab)$ and use $K=-\frac{1}{4}\tr \chi\tr\chib+\frac{1}{2}(\chih,\chibh)-(\rho+\frac{1}{6}\mathbf{R})+|\ds\phi|^2$, $\check{\sigma}=\sigma-\frac{1}{2}\chih\wedge\chibh$)
\begin{equation}\label{equ-Domegabs-schematic}
\begin{split}
D\omegabs=&\Omega\tr\chi\nablas^2\omegab+\sum_{i+j=1}\nablas^i(\Omega\chih)\nablas^j\nablas\omegab\\
&+\sum_{i+j=1}\nablas^i(\Omega\chih\ \text{or}\ \Omega\tr\chi)\nablas^j(\Omega\betab+\Lb\phi\nablas\phi)\\
&+\Omega^2\sum_{i+j=1}\nablas^i(\eta\ \text{or}\ \etab)\nablas^j(K\ \text{or}\ \check{\sigma})\\
&+\sum_{i+j+k=1}\nablas^i(\eta\ \text{or}\ \etab)\nablas^j(\Omega\chih\ \text{or}\ \Omega\tr\chi)\nablas^k(\Omega\chibh\ \text{or}\ \Omega\tr\chib)+\Omega^2\nablas(\eta\ \text{or}\ \etab)\cdot(\eta\ \text{or}\ \etab)^2\\
&+\Omega^2\sum_{i+j=2}\nablas^i(\eta\ \text{or}\ \etab)\nablas^j(\eta\ \text{or}\ \etab)\\
&+\sum_{i+j=1}\nablas^i(\Omega\chibh)\nablas^j(\Omega\beta-L\phi\nablas\phi)\\
&+\Omega^2\sum_{i+j=1}\nablas^i\nablas\phi\nablas^j\nablas^2\phi+\sum_{i+j=2}\nablas^i\Lb\phi\nablas^jL\phi\\
&+\Omega^2\sum_{i+j+k=1}\nablas^i(\eta\ \text{or}\ \etab)\nablas^j\nablas\phi\nablas^k\nablas\phi+\sum_{i+j+k=1}\nablas^i(\Omega\tr\chib\ \text{or}\ \Omega\chibh)\nablas^jL\phi\nablas^k\nablas\phi
\end{split}
\end{equation}
We want to estimate $\omegabs$ in $\H^3$ so the right hand side should be estimated in $\delta\|\cdot\|_{\L^1_{\ub}\H^3(u)}\lesssim\delta\|\cdot\|_{\L^2_{\ub}\H^3(u)}$. The first two lines, placing $\Omega\tr\chi$ and $\Omega\chih$ in $\L^\infty_{\ub}\H^4$,  are estimated by
\begin{align*}
\lesssim\delta|u|^{-2}\cdot C^{\frac{1}{4}}|u|^{-1}\mathscr{F}\mathcal{A}(\|(|u|\nablas)\omegab\|_{\L^1_{\ub}\H^4(u)}+|u|\|\Omega\betab+\Lb\phi\nablas\phi\|_{\L^1_{\ub}\H^4(u)}).
\end{align*}
We have also used the relation $\nablas\sim|u|^{-1}$ here and will use this in the following steps. The \engordnumber{3} line, placing $\eta$ or $\etab$ in $\L^\infty_{\ub}\H^4$, is estimated by
\begin{align*}
\lesssim&\delta|u|^{-1}\Omega_0\cdot C^{\frac{1}{4}}\delta|u|^{-2}\mathscr{F}\mathscr{E}\mathscr{W}^{\frac{1}{2}}\mathcal{A}\cdot\Omega_0\|K,\check{\sigma}\|_{\L^2_{\ub}\H^4(u)}\lesssim\delta|u|^{-1}\Omega_0\cdot C^{\frac{1}{4}}\delta|u|^{-2}\mathscr{F}\mathscr{E}\mathscr{W}^{\frac{1}{2}}\mathcal{A}\cdot|u|^{-2}\mathscr{F}^{\frac{1}{2}}\mathscr{E}\mathcal{A}^{\frac{1}{2}}\\
\lesssim &C^{-\frac{1}{2}}\delta|u|^{-4}\mathscr{F}\mathscr{E}\mathcal{A}\cdot(\Omega_0^2\delta|u|^{-1}\mathscr{E}^2)^{\frac{1}{2}}.
\end{align*}
The \engordnumber{4} line consists of terms with three factors, all of which are placed in $\H^4$. The \engordnumber{1} term of this line is estimated by
\begin{align*}
\lesssim\delta|u|^{-1}\cdot C^{\frac{1}{4}}\delta|u|^{-2}\mathscr{F}\mathscr{E}\mathscr{W}^{\frac{1}{2}}\mathcal{A}\cdot C^{\frac{1}{4}}|u|^{-1}\mathscr{F}\mathcal{A}\cdot|u|^{-1}\lesssim C^{-\frac{1}{2}} \delta|u|^{-4}\mathscr{F}\mathscr{E}\mathcal{A}\cdot(\delta|u|^{-1}\mathscr{F}\mathcal{A})^{\frac{1}{2}}
\end{align*}
and the \engordnumber{2} term of this line is estimated by
\begin{align*}
\lesssim\Omega_0^2\delta|u|^{-1}\cdot(C^{\frac{1}{4}}\delta|u|^{-2}\mathscr{F}\mathscr{E}\mathscr{W}^{\frac{1}{2}}\mathcal{A})^3\lesssim C^{-1}\delta|u|^{-4}\mathscr{F}\mathscr{E}\mathcal{A}\cdot(\delta|u|^{-1}\mathscr{F}\mathcal{A}).
\end{align*}
Note that $\nablas\phi$ shares the same estimate with $\eta,\etab$ and $L\phi$ shares the same estimate with $\Omega\chih$ for lower order (up to fourth order derivatives), so the last line can be estimated in the same manner as above. The \engordnumber{6} line, placing $\Omega\chibh$ in $\L^\infty_{\ub}\H^4$, is estimated by
\begin{align*}
\lesssim&\delta|u|^{-1}\cdot C^{\frac{1}{4}}\delta|u|^{-2}\mathscr{F}\mathcal{A}\cdot\|\Omega\beta-L\phi\nablas\phi\|_{\L^2_{\ub}\H^4(u)}\lesssim\delta|u|^{-1}\cdot C^{\frac{1}{4}}\delta|u|^{-2}\mathscr{F}\mathcal{A}\cdot C^{\frac{1}{4}}|u|^{-2}\mathscr{F}\mathscr{E}\mathcal{A}\\
\lesssim&C^{-\frac{1}{2}} \delta|u|^{-4}\mathscr{F}\mathscr{E}\mathcal{A}\cdot(\delta|u|^{-1}\mathscr{F}\mathcal{A})^{\frac{1}{2}}.\end{align*}

It remains to estimate the \engordnumber{5} and the \engordnumber{7} lines. They contain terms with factors which are second order derivative of $\eta,\etab,\nablas\phi,L\phi,\Lb\phi$, which are of the highest order. The \engordnumber{5} line and the \engordnumber{1} term of the \engordnumber{7} line, placing the highest order $\eta,\etab$ and $\nablas\phi$ in $\L^2_{\ub}\H^5(u)$ and the other factor in $\L^\infty_{\ub}\H^4$, are estimated by
\begin{align*}
\Omega_0^2\delta|u|^{-2}\cdot C^{\frac{1}{4}}\delta|u|^{-2}\mathscr{F}\mathscr{E}\mathscr{W}^{\frac{1}{2}}\mathcal{A}\cdot \Omega_0^{-1}C^{\frac{1}{4}}\delta^{\frac{1}{2}}|u|^{-\frac{3}{2}}\mathscr{F}\mathscr{E}\mathscr{W}^{\frac{1}{2}}\mathcal{A}\lesssim C^{-\frac{1}{2}}\delta|u|^{-4}\mathscr{F}\mathscr{E}\mathcal{A}\cdot(\delta|u|^{-1}\mathscr{F}\mathcal{A}).
\end{align*}

Finally, we turn the \engordnumber{2} term of the \engordnumber{7} line, which is the most subtle term. If at least one derivative has applied to $\Lb\phi$, then $L\phi$ can be placed in $\L^\infty_{\ub}\H^4$ (using \eqref{estimate-Lphi} with $\widetilde{\mathcal{E}}$ dropped) and this term is estimated by
\begin{align*}
\lesssim\delta|u|^{-2}\cdot\|(|u|\nablas)\Lb\phi\|_{\L^1_{\ub}\H^4(u)}\cdot|u|^{-1}\mathscr{F}\mathcal{A}=\delta|u|^{-3}\mathscr{F}\mathcal{A}\|(|u|\nablas)\Lb\phi\|_{\L^1_{\ub}\H^4(u)}.
\end{align*}
If no derivatives have applied to $\Lb\phi$, then the term, using \eqref{estimate-nablas5Lphi-improve}, is estimated by
\begin{align*}
\lesssim\delta|u|^{-3}\|\Lb\phi\|_{\L^\infty_{\ub}\H^4(u)}\left(\||u|(|u|\nablas)L\phi\|_{\L^2_{\ub}\H^4(u_0)}+\mathscr{F}\mathscr{E}\mathcal{A}\cdot\left(\delta|u|^{-1}\mathscr{W}(\mathscr{F}\mathcal{A}+\Omega_0^2(u_0)\mathscr{E}^2)\right)^{\frac{1}{8}}\right).
\end{align*}

Combining all estimates above, and using \eqref{elliptic-omegab}, we have
\begin{equation}\label{estimate-omegab-step1}
\begin{split}
&\|(|u|\nablas)\omegab\|_{\H^4(\ub,u)}\\
\lesssim&C^{\frac{1}{4}}\delta|u|^{-1}\mathscr{F}\mathcal{A}\|(|u|\nablas)\omegab\|_{\L^1_{\ub}\H^4(u)}+C^{\frac{1}{4}}\delta\mathscr{F}\mathcal{A}\|\Omega\betab+\Lb\phi\nablas\phi\|_{\L^1_{\ub}\H^4(u)}\\
&+|u|\|\Omega\betab+\Lb\phi\nablas\phi\|_{\H^4(\ub,u)}+\delta|u|^{-1}\mathscr{F}\mathcal{A}\|(|u|\nablas)\Lb\phi\|_{\L^1_{\ub}\H^4(u)}\\
&+\delta|u|^{-1}\|\Lb\phi\|_{\L^1_{\ub}\H^4(u)}\left(\||u|(|u|\nablas)L\phi\|_{\L^2_{\ub}\H^4(u_0)}+\mathscr{F}\mathscr{E}\mathcal{A}\cdot\left(\delta|u|^{-1}\mathscr{W}(\mathscr{F}\mathcal{A}+\Omega_0^2(u_0)\mathscr{E}^2)\right)^{\frac{1}{8}}\right)\\
&+\delta|u|^{-2}\mathscr{F}\mathscr{E}\mathcal{A}\cdot(\delta|u|^{-1}(\mathscr{F}\mathcal{A}+\Omega_0^2(u_0)\mathscr{E}^2))^{\frac{1}{2}}.
\end{split}
\end{equation}

Then we estimate $\|\Omega_0^{-1}|u|^2\cdot\|_{\L^1_{[u_0,u]}}$ of the above quantity, which means that we compute $\int_{u_0}^u\Omega_0^{-1}(u')|u'|\cdot\D u'$ of the above inequality. Note that $\L^1_{\ub}$ and $\L^1_{[u_0,u]}$ commute and then $\||u|^{-s}\cdot\|_{\L^1_{[u_0,u]}\L^1_{\ub}\H^i}=\||u|^{-s}\cdot\|_{\L^1_{\ub}\L^1_{[u_0,u]}\H^i}\lesssim|u|^{-s}\|\cdot\|_{\L^\infty_{\ub}\L^2_{[u_0,u]}\H^i}$ if $s>0$, we have
\begin{equation}\label{estimate-omegab-step2}
\begin{split}
&\|\Omega_0^{-1}|u|^2(|u|\nablas)\omegab\|_{\L^1_{[u_0,u]}\H^4(\ub)}\\\lesssim&C^{\frac{1}{4}}\delta|u|^{-1}\mathscr{F}\mathcal{A}\|\Omega_0^{-1}|u|^2(|u|\nablas)\omegab\|_{\L^\infty_{\ub}\L^1_{[u_0,u]}\H^4}+|u|^{-\frac{1}{2}}\|\Omega_0^{-1}|u|^{\frac{7}{2}}(\Omega\betab+\Lb\phi\nablas\phi)\|_{\L^\infty_{\ub}\L^2_{[u_0,u]}\H^4}\\
&+\delta|u|^{-1}\mathscr{F}\mathcal{A}\|\Omega_0^{-1}|u|^2(|u|\nablas)\Lb\phi\|_{\L^\infty_{\ub}\L^2_{[u_0,u]}\H^4}\\
&+\delta\Omega_0^{-1}(u)\|1\|_{\L^\infty_{\ub}\L^2_{[u_0,u]}\H^4}\||u|\Lb\phi\|_{\L^\infty_{\ub}\L^2_{[u_0,u]}\H^4}\||u|(|u|\nablas)L\phi\|_{\L^2_{\ub}\H^4(u_0)}\\
&+\delta\Omega_0^{-1}(u)\||u|\Lb\phi\|_{\L^\infty_{\ub}\L^2_{[u_0,u]}\H^4}\mathscr{F}\mathscr{E}\mathcal{A}\cdot\left(\delta|u|^{-1}\mathscr{W}(\mathscr{F}\mathcal{A}+\Omega_0^2(u_0)\mathscr{E}^2)\right)^{\frac{1}{8}}\\
&+\delta\Omega_0^{-1}(u)\mathscr{F}\mathscr{E}\mathcal{A}\cdot(\delta|u|^{-1}(\mathscr{F}\mathcal{A}+\Omega_0^2(u_0)\mathscr{E}^2))^{\frac{1}{2}}.
\end{split}
\end{equation}
Here $\|1\|_{\L^\infty_{\ub}\L^2_{[u_0,u]}\H^4}\lesssim\left|\log\frac{|u_1|}{|u_0|}\right|^{\frac{1}{2}}$ will contribute for the logarithmic loss. We assume
\begin{equation}\label{bootstrap-omegab}
\|\Omega_0^{-1}|u|^2(|u|\nablas)\omegab\|_{\L^\infty_{\ub}\L^1_{[u_0,u]}\H^4}\lesssim C^{\frac{1}{4}}\Omega_0^{-1}(u)\delta\mathscr{F}\mathscr{E}\mathscr{W}^{\frac{1}{2}}\mathcal{A}.
\end{equation}
Then \eqref{estimate-omegab-step2} implies, recalling the definition of $\mathscr{E}$,
\begin{equation}\label{estimate-omegab-step3}
\begin{split}
&\|\Omega_0^{-1}|u|^2(|u|\nablas)\omegab\|_{\L^1_{[u_0,u]}\H^4(\ub)}\\\lesssim&\delta|u|^{-1}\mathscr{F}\mathcal{A}\cdot C^{\frac{1}{4}}\Omega_0^{-1}(u)\delta\mathscr{F}\mathscr{E}\mathscr{W}^{\frac{1}{2}}\mathcal{A}+|u|^{-\frac{1}{2}}\cdot C^{\frac{3}{8}}\Omega_0^{-1}(u)\delta^{\frac{3}{2}}\mathscr{F}^{\frac{3}{2}}\mathscr{E}\mathcal{A}^{\frac{3}{2}}\\
&+\delta|u|^{-1}\mathscr{F}\mathcal{A}\cdot\Omega_0^{-1}(u)\delta\mathscr{F}\mathscr{E}\mathcal{A}\\
&+\delta\Omega_0^{-1}(u)\mathscr{F}\mathscr{E}\mathscr{W}^{\frac{1}{2}}\mathcal{A}\\
&+\delta\Omega_0^{-1}(u)\mathscr{F}\mathscr{E}\mathscr{W}^{\frac{1}{2}}\mathcal{A}\cdot\left(\delta|u|^{-1}\mathscr{W}(\mathscr{F}\mathcal{A}+\Omega_0^2(u_0)\mathscr{E}^2)\right)^{\frac{1}{8}}\\
&+\delta\Omega_0^{-1}(u)\mathscr{F}\mathscr{E}\mathcal{A}\cdot(\delta|u|^{-1}(\mathscr{F}\mathcal{A}+\Omega_0^2(u_0)\mathscr{E}^2))^{\frac{1}{2}}\\
\lesssim&\delta\Omega_0^{-1}(u)\mathscr{F}\mathscr{E}\mathscr{W}^{\frac{1}{2}}\mathcal{A}.
\end{split}
\end{equation}
If $C_0$ is sufficiently large, this estimate improves \eqref{bootstrap-omegab} and this implies that \eqref{estimate-omegab-step3} holds without assuming \eqref{bootstrap-omegab}.

Now we are ready to the estimate for $\nablas^5(\Omega\tr\chib)$. Recall the equation for $\Db(\Omega\tr\chib)$:
\begin{equation}\label{equ-Dbtrchib}
\Db(\Omega\tr\chib)=-\frac{1}{2}(\Omega\tr\chib)^2+2\omegab\Omega\tr\chib-|\Omega\chibh|^2-2(\Lb\phi)^2.
\end{equation}
Apply $\nablas$ to the above equation, and write it in the following form:
\begin{equation}\label{equ-Dbnablastrchib}
\begin{split}
&\Db(\Omega^{-2}\nablas(\Omega\tr\chib))+\Omega\tr\chib(\Omega^{-2}\nablas(\Omega\tr\chib))\\
&=2\Omega^{-2}\nablas\omegab(\Omega\tr\chib)-2\Omega^{-2}(\Omega\chibh)\cdot\nablas(\Omega\chibh)
-4\Omega^{-2}\Lb\phi\nablas\Lb\phi.\\
\end{split}
\end{equation}
The right hand side should be estimated in $\||u|^4\cdot\|_{\L^1_{[u_0,u]}\H^4(\ub)}$. The \engordnumber{1} term is estimated by, using $\|\Omega\tr\chib\|_{\H^4(\ub,u)}\lesssim|u|^{-1}$, 
\begin{align*}
\|\Omega^{-2}|u|^3(|u|\nablas)\omegab\Omega\tr\chib\|_{\L^1_{[u_0,u]}\H^4(\ub)}\lesssim\Omega_0^{-1}\|\Omega_0^{-1}|u|^2(|u|\nablas)\omegab\|_{\L^1_{[u_0,u]}\H^4(\ub)}\lesssim\Omega_0^{-2}\delta\mathscr{F}\mathscr{E}\mathscr{W}^{\frac{1}{2}}\mathcal{A}.
\end{align*}
The \engordnumber{2} term is estimated by, \begin{align*}
&\|\Omega^{-2}|u|^3(|u|\nablas)(\Omega\chibh)\cdot(\Omega\chibh)\|_{\L^1_{[u_0,u]}\H^4(\ub)}\lesssim C^{\frac{1}{4}}\delta\mathscr{F}\mathcal{A}\|\Omega_0^{-2}|u|\Omega\chibh\|_{\L^1_{[u_0,u]}\H^5(\ub)}\\
\lesssim&\Omega_0^{-1}C^{\frac{1}{4}}\delta\mathscr{F}\mathcal{A}\||u|^{-\frac{1}{2}}\|_{\L^2_{[u_0,u]}}\|\Omega_0^{-1}|u|^{\frac{3}{2}}\Omega\chibh\|_{\L^2_{[u_0,u]}\H^5(\ub)}\lesssim C^{\frac{1}{2}}\Omega_0^{-2}\delta^2|u|^{-1}\mathscr{F}^2\mathscr{E}\mathscr{W}^{\frac{1}{2}}\mathcal{A}^2\\
\lesssim& C^{-1}\Omega_0^{-2}\delta\mathscr{F}\mathscr{E}\mathscr{W}^{\frac{1}{2}}\mathcal{A}.
\end{align*}
The last term is estimated by, using \eqref{estimate-nablas5nablasphi} and \eqref{estimate-Lbphi1},
\begin{align*}
&\|\Omega^{-2}|u|^3(|u|\nablas)(\Lb\phi)\cdot\Lb\phi\|_{\L^1_{[u_0,u]}\H^4(\ub)}\\\lesssim&\Omega_0^{-1}\|\Omega^{-1}|u|^2(|u|\nablas)(\Lb\phi)\|_{\L^2_{[u_0,u]}\H^4(\ub)}\||u|\Lb\phi\|_{\L^2_{[u_0,u]}\H^4(\ub)}\\
\lesssim&\Omega_0^{-2}\delta\mathscr{F}\mathscr{E}\mathscr{W}^{\frac{1}{2}}\mathcal{A}.
\end{align*}
Combining the above estimates,  we have
\begin{equation}\label{estimate-Omegatrchib}
\||u|^2(|u|\nablas)(\Omega\tr\chib)\|_{\H^4(\ub,u)}\lesssim\delta\mathscr{F}\mathscr{E}\mathscr{W}^{\frac{1}{2}}\mathcal{A}.
\end{equation}

{\bf \underline{Estimates for $\Omega\chibh$ involving the top order derivatives}:} Consider the equation for $\divs(\Omega\chibh)$:
\begin{equation}\label{equ-divschibh}
\divs(\Omega\chibh)=\frac{1}{2}\ds(\Omega\tr \chib)+\Omega\chibh\cdot\eta-\frac{1}{2}\Omega\tr \chib\eta+(\Omega\betab+\Lb\phi\ds\phi).\end{equation}
Using \eqref{estimate-Omegatrchib}, the right hand side is estimated in $\H^4(\ub,u)$ by, using in particular \eqref{estimate-eta},
\begin{align*}
\lesssim\delta|u|^{-3}\mathscr{F}\mathscr{E}\mathscr{W}^{\frac{1}{2}}\mathcal{A}+|u|^{-1}\cdot\delta|u|^{-3}\mathscr{F}\mathscr{E}(\mathcal{A}+\mathcal{R})+\|\Omega\betab+\Lb\phi\nablas\phi\|_{\H^4(\ub,u)}.
\end{align*}
By elliptic estimate, using \eqref{estimate-chibh} by replacing $\widetilde{\mathcal{O}}$ by $\mathcal{R}+\mathcal{A}^{-\frac{1}{2}}\mathcal{R}^{\frac{3}{2}}$,
\begin{align*}
\|\Omega\chibh\|_{\H^5(\ub,u)}\lesssim&|u|\|\divs(\Omega\chibh)\|_{\H^4(\ub,u)}+\mathscr{E}\|\Omega\chibh\|_{\H^4(\ub,u)}\\
\lesssim&\delta|u|^{-2}\mathscr{F}\mathscr{E}\mathscr{W}^{\frac{1}{2}}(\mathcal{A}+\mathcal{R}+\mathcal{A}^{-\frac{1}{2}}\mathcal{R}^{\frac{3}{2}})+|u|\|\Omega\betab+\Lb\phi\nablas\phi\|_{\H^4(\ub,u)}.
\end{align*}
Then
\begin{equation}\label{estimate-Omegachibh}
\begin{split}
&|u|^{\frac{1}{2}}\|\Omega_0^{-1}|u|^{\frac{3}{2}}\Omega\chibh\|_{\L^2_{[u_0,u]}\H^5(\ub)}\\
\lesssim&\Omega_0^{-1}\delta\mathscr{F}\mathscr{E}\mathscr{W}^{\frac{1}{2}}(\mathcal{A}+\mathcal{R}+\mathcal{A}^{-\frac{1}{2}}\mathcal{R}^{\frac{3}{2}})+|u|^{-\frac{1}{2}}\|\Omega_0^{-1}|u|^{\frac{7}{2}}(\Omega\betab+\Lb\phi\nablas\phi)\|_{_{\L^2_{[u_0,u]}\H^4(\ub)}}\\
\lesssim&\Omega_0^{-1}\delta\mathscr{F}\mathscr{E}\mathscr{W}^{\frac{1}{2}}(\mathcal{A}+\mathcal{R}+\mathcal{A}^{-\frac{1}{2}}\mathcal{R}^{\frac{3}{2}})+C^{\frac{3}{8}}\Omega_0^{-1}|u|^{-\frac{1}{2}}\delta^{\frac{3}{2}}\mathscr{F}^{\frac{3}{2}}\mathscr{E}\mathcal{A}^{\frac{3}{2}}\\
\lesssim&\Omega_0^{-1}\delta\mathscr{F}\mathscr{E}\mathscr{W}^{\frac{1}{2}}(\mathcal{A}+\mathcal{R}+\mathcal{A}^{-\frac{1}{2}}\mathcal{R}^{\frac{3}{2}}).
\end{split}
\end{equation}
This concludes the estimate for $\nablas^5(\Omega\chibh)$.
We have completed the proof of Proposition \ref{estimate-tildeO}.

\end{proof}

\subsection{Estimates for $\mathcal{R}$}

Finally we turn to the estimates for $\mathcal{R}$.
\begin{proposition}\label{estimate-R}
Under the assumptions of Theorem \ref{existencetheorem} and the bootstrap assumptions \eqref{bootstrap}, we have
$$\mathcal{R}\lesssim\mathcal{A}.$$
\end{proposition}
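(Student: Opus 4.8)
The plan is to close the bootstrap by establishing the energy estimates for the renormalized curvature components, following the scheme already used in \cite{L-R1, An-Luk}, but now tracking the weight $\Omega_0$ and the loss $\mathscr{E}$. The quantity $\mathcal{R}$ splits into two groups, each handled by integration by parts on a suitably weighted divergence identity built from the renormalized null Bianchi equations. For the first group, one pairs the $\Db$-equation for $\Omega\beta-L\phi\nablas\phi$ with the $D$-equations for $K-|u|^{-2}$ and $\sigmac$: multiply the first by $|u|^4(\Omega\beta-L\phi\nablas\phi)$, the second and third by $|u|^4\Omega_0^2(K-|u|^{-2})$ and $|u|^4\Omega_0^2\sigmac$, add, and integrate over $\{0\le\ub'\le\delta, u_0\le u'\le u\}$; the $\nablas K$, ${}^*\nablas\sigmac$, $\divs(\Omega\beta-L\phi\nablas\phi)$, $\curls(\Omega\beta-L\phi\nablas\phi)$ terms cancel after integration by parts on $S_{\ub',u'}$, modulo lower-order terms involving $\nablas\gs$ which are absorbed using the geometric lemmas of Section~4.2. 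This gives
\begin{align*}
\||u|^2(\Omega\beta-L\phi\nablas\phi)\|_{\L^2_{\ub}\H^4(u)}+\delta^{-\frac{1}{2}}\|\Omega_0|u|^{\frac{5}{2}}(K-|u|^{-2},\sigmac)\|_{\L^\infty_{\ub}\L^2_{[u_0,u]}\H^4}\lesssim\mathscr{F}\mathscr{E}\mathcal{A}
\end{align*}
provided every source term, commutator term, and multiplier term is shown to be $\lesssim\mathscr{F}^2\mathscr{E}^2\mathcal{A}^2$ times a small power of the smallness/auxiliary quantities. For the second group, one pairs the $\Db$-equations for $K-|u|^{-2}$ and $\sigmac$ (whose linear coefficient is $\tfrac32\Omega\tr\chib$, the crucial observation of \cite{An-Luk}) with the $D$-equation for $\Omega\betab+\Lb\phi\nablas\phi$: multiply by $|u|^6(K-|u|^{-2})$, $|u|^6\sigmac$ and $|u|^6\Omega_0^{-2}(\Omega\betab+\Lb\phi\nablas\phi)$ respectively, add, integrate, and deduce the third line of the definition of $\mathcal{R}$, again bounded by $\mathscr{F}^{\frac32}\mathscr{E}\mathcal{A}^{\frac32}$ after taking the $\tfrac23$ power. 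The point of using $\Omega\betab+\Lb\phi\nablas\phi$ rather than $\Omega\betab$ is exactly, as explained in the Introduction, to absorb the dangerous $D(\Lb\phi\nablas\phi)$ piece of $\Lb\phi\nablas L\phi$; the remaining piece $-D\Lb\phi\nablas\phi$ is controlled by the wave equation, i.e.\ by $\widetilde{\mathcal{E}}$, which has already been bounded by $\mathcal{A}$ in Proposition~\ref{estimate-tildeE}.

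The bookkeeping proceeds as in the previous propositions. After the divergence-theorem step one is left with, besides the initial data on $C_{u_0}$ (bounded by $\mathcal{A}$ via the remark after Theorem~\ref{existencetheorem}) and the data on $\Cb_0$ for $K-|u|^{-2}$ and $\sigmac$ (which are explicitly computable from the spherically symmetric data and enjoy the required bounds), the bulk integrals $\int_0^\delta\!\D\ub'\int_{u_0}^u\!\D u'\int_{S_{\ub',u'}}|\tau|$. These $\tau$ consist of (i) error terms intrinsic to the renormalized Bianchi system --- products such as $\Omega^2\eta K$, $\Omega^2\ds(\tr\chi\tr\chib)$, $\Omega\chih\cdot(\Omega\betab+\Lb\phi\nablas\phi)$, the scalar-field sources $\Omega^2\Deltas\phi\nablas\phi$, $\Omega\tr\chi\Lb\phi\nablas\phi$, etc.; (ii) commutator terms coming from $[\Db,\nablas^i]$ and $[D,\nablas^i]$ via Lemma~\ref{commutator}, which reproduce factors of $\nablas^j(\Omega\chi)$, $\nablas^j(\Omega\chib)$ and $\nablas^{j-1}K$; and (iii) multiplier terms coming from the weights $|u|^4$, $\Omega_0^2$, $\Omega_0^{-2}$ hitting $\Omega\tr\chi$, $\Omega\tr\chib$, $\omega$. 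Each is estimated exactly as in Propositions~\ref{estimate-O} and \ref{estimate-tildeE}: put the worst-behaved factor (a top-order derivative of a connection coefficient, or $\Lb\phi$ itself, or $\omega$) in the $\L^2$ norm along the appropriate cone using the bootstrap assumption \eqref{bootstrap}, put the rest in $\L^\infty_{\ub}\H^4$ or $\L^\infty\H^4$ via Sobolev (Lemma~\ref{Sobolev}) and the estimates already derived, then absorb the excess by one of the conditions \eqref{smallness}, \eqref{auxiliary} or by the $C^{-1}$ saved from the bootstrap power $C^{1/4}$. Gronwall (Lemma~\ref{Gronwallderivative}) closes the resulting inequality.

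The main obstacle, as the Introduction already signals, is the term $\Lb\phi\nablas L\phi$ (equivalently $2\nablas\Lb\phi\wedge\nablas\phi$ in the $\Db\sigmac$ equation and $\Lb\phi\nablas\phi$ entering $\Omega\betab$): $\Lb\phi$ has no a priori bound, only $\||u|\Lb\phi\|_{\L^2_{[u_0,u]}\H^4(\ub)}\lesssim\mathscr{W}^{\frac12}$, so one cannot afford an $\L^\infty_u$ estimate for it and must always pair it, via H\"older in $u$, with a factor carrying a favorable $\Omega_0$ weight, accepting the logarithmic factor $\mathscr{W}$ in the loss. Handling this forces the renormalization $\Omega\betab\rightsquigarrow\Omega\betab+\Lb\phi\nablas\phi$ and, when it propagates to $\etab$ and to $\nablas^5(\Omega\tr\chib)$, produces exactly the auxiliary condition \eqref{auxiliary} and the $\mathcal{A}^{-1/2}\mathcal{R}^{3/2}$ nonlinearity; one then checks that the coupled system $\mathcal{O},\widetilde{\mathcal{O}},\mathcal{R}\lesssim\mathcal{A}+\mathcal{R}+\mathcal{A}^{-1/2}\mathcal{R}^{3/2}$ together with $\mathcal{R}\lesssim\mathcal{A}$ (obtained here, with $\mathcal{R}$ on the right absorbed since its coefficient is $C^{-1}$) is consistent, closing all bootstrap assumptions \eqref{bootstrap} for $C_0$ large. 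A secondary technical point is that the top-order energy for $\sigmac$ and $K-|u|^{-2}$ along $C_u$ (needed with weight $\Omega_0^2$, not without) must be fed the improved estimate \eqref{estimate-nablas5Lphi-improve} for $\nablas L\phi$ to avoid an uncontrolled accumulation of the logarithmic loss, just as in the proof of Proposition~\ref{estimate-tildeE}.
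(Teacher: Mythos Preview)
Your outline follows the paper's architecture --- two renormalized Bianchi pairs, integration by parts, and error bookkeeping against the smallness and auxiliary conditions --- and correctly identifies the renormalization $\Omega\betab\rightsquigarrow\Omega\betab+\Lb\phi\nablas\phi$ as the device that neutralizes $\Lb\phi\nablas L\phi$. Two points need correction.

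First, the multiplier weights are off by $|u|^2$ in both groups. The coefficient $\tfrac12\Omega\tr\chib$ in the $\Db(\Omega\beta-L\phi\nablas\phi)$ equation forces the weight $|u|^2$ (not $|u|^4$) on the squared density, and the coefficient $\tfrac32\Omega\tr\chib$ in $\Db(K-|u|^{-2})$, $\Db\sigmac$ forces $|u|^4$ (not $|u|^6$). Your target norms are the right ones, so this is a slip, but with your stated weights the linear $\Omega\tr\chib$ terms would not cancel and an uncontrolled bulk term would remain.

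Second, and more seriously, you miss a structural obstruction in the second group that the paper singles out explicitly. The $\Db(K-|u|^{-2})$ equation carries the sources $(\Omega\tr\chib+\tfrac{2}{|u|})|u|^{-2}$ and $\tfrac12\Omega\tr\chib\,\mu$. At the zeroth angular order these are \emph{not} small enough: $\|\Omega\tr\chib+2/|u|\|_{\H^4}$ and $\|\mu\|_{\H^4}$ lack the extra $(\delta|u|^{-1})^{1/2}$ factor needed to reach $\delta^3\mathscr{F}^3\mathscr{E}^2\mathcal{A}^3$. The paper therefore runs the energy identity only for $1\le i\le 4$, so that at least one $\nablas$ hits these factors and one may invoke the improved derivative bounds \eqref{estimate-detrchib} and \eqref{estimate-nablasmu}. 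The missing zeroth-order estimates are then recovered by a separate, cheaper argument: for $K-|u|^{-2}$ directly from \eqref{estimate-K-lower}, and for $\sigmac$ and $\Omega\betab+\Lb\phi\nablas\phi$ by integrating their individual transport equations $\Db\sigmac$ and $D(\Omega\betab+\Lb\phi\nablas\phi)$, accepting a one-derivative loss (harmless at bottom order) and exploiting that neither of these two equations contains the two offending terms. Without this split your scheme would not close at $i=0$.

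A minor remark: the improved estimate \eqref{estimate-nablas5Lphi-improve} is not actually invoked in this proposition; it was essential earlier (for $\omegab$ and $\nablas^5(\Omega\tr\chib)$ in Proposition~\ref{estimate-tildeO}), but here the already-established $\widetilde{\mathcal{E}}\lesssim\mathcal{A}$ suffices for all scalar-field contributions.
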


\begin{proof}
The proof is by using the renormalized Bianchi equations. At first, we consider the equations for $\Db(\Omega\beta-L\phi\nablas\phi)$-$DK$-$D\check{\sigma}$. 
\begin{align*}
D(K-|u|^{-2})&+\Omega\tr\chi K+\divs(\Omega\beta-L\phi\nablas\phi)\\
&-\Omega\chih\cdot\nablas\etab+\frac{1}{2}\Omega\tr\chi\divs\etab+(\Omega\beta-L\phi\nablas\phi)\cdot\etab-\Omega\chih\cdot\etab\cdot\etab+\frac{1}{2}\Omega\tr\chi|\etab|^2=0\\
D\sigmac&+\frac{3}{2}\Omega\tr\chi\sigmac+\curls(\Omega\beta-L\phi\nablas\phi)+\frac{1}{2}\Omega\chih\wedge(\etab\tensor\etab+\nablas\tensor\etab)\\
&+\etab\wedge(\Omega\beta-L\phi\nablas\phi)+2\nablas L\phi\wedge\nablas\phi=0\\
\Db(\Omega\beta-L\phi\nablas\phi)&+\frac{1}{2}\Omega\tr\chib(\Omega\beta-L\phi\nablas\phi)-\Omega\chibh\cdot(\Omega\beta-L\phi\nablas\phi)+\Omega^2\ds (K-|u|^{-2})-\Omega^2{}^*\ds\sigmac\\
&+3\Omega^2(\eta K-{}^*\eta\sigmac)-\frac{1}{2}\Omega^2(\ds(\chih,\chibh)+{}^*\ds(\chih\wedge\chibh))-\frac{3}{2}\Omega^2(\eta(\chih,\chibh)+{}^*\eta(\chih\wedge\chibh))\\
&+\frac{1}{4}\Omega^2\ds(\tr\chi\tr\chib)+\frac{3}{4}\Omega^2\tr\chi\tr\chib\eta-2\Omega\chih\cdot(\Omega\betab+\Lb\phi\nablas\phi)\\
=&-2\Omega^2\Deltas\phi\nablas\phi+\Omega^2\ds|\ds\phi|^2-2\Omega\chih\cdot\nablas\phi\Lb\phi+\Omega\tr\chi\Lb\phi\nablas\phi\\
&-2\Omega^2\eta\cdot\nablas\phi\nablas\phi+2\Omega^2\eta|\ds\phi|^2,
\end{align*}
We compute
\begin{align*}
&\Db(|u|^2|(|u|\nablas)^i(\Omega\beta-L\phi\nablas\phi)|^2\D\mu_{\gs})+D(\Omega^2|u|^2|(|(|u|\nablas)^i(K-|u|^{-2})|^2+|(|u|\nablas)^i\check{\sigma}|^2)\D\mu_{\gs})\\
=&-|u|^{2+2i}\nablas^A(\Omega^2\nablas^i_{B_1\cdots B_i}(\Omega\beta-L\phi\nablas\phi)_A\nablas^{i,B_1\cdots B_i}(K-|u|^{-2})\\&+\Omega^2\nablas^i_{B_1\cdots B_i}{}^*(\Omega\beta-L\phi\nablas\phi)_A\nablas^{i,B_1\cdots B_i}\sigmac)+|u|^2\tau_{3}
\end{align*}
for $0\le i\le4$, where $\tau_3$ contains no fifth order derivative of curvature components. The weight $|u|^2$ respects the coefficient $\frac{1}{2}$ of the term $\frac{1}{2}\Omega\tr\chib(\Omega\beta-L\phi\nablas\phi)$ of the third equation. By divergence theorem, we have
\begin{align*}
&\delta\||u|^2(\Omega\beta-L\phi\nablas\phi)\|_{\L^2_{\ub}\H^4(u)}^2+\||u|^{\frac{5}{2}}\Omega(K-|u|^{-2},\check{\sigma})\|_{\L^2_{[u_0,u]}\H^4(\ub)}^2\\
\lesssim&\delta\||u|^2(\Omega\beta-L\phi\nablas\phi)\|_{\L^2_{\ub}\H^4(u_0)}^2+\||u|^{\frac{5}{2}}\Omega(K-|u|^{-2},\check{\sigma})\|_{\L^2_{[u_0,u]}\H^4(0)}^2\\
&+\int_0^{\delta}\D\ub'\int_{u_0}^u\D u'\int_{S_{\ub',u'}}|u|^2|\tau_3|\D\mu_{\gs}
\end{align*}
where
\begin{align*}
\int_{S_{\ub',u'}}|u'|^2|\tau_3|\D\mu_{\gs}\lesssim\int_{S_{\ub',u'}}|u'|^2|\tau_{3,1}|\D\mu_{\gs}+\int_{S_{\ub',u'}}|u'|^2|\tau_{3,2}|\D\mu_{\gs}
\end{align*}
and the multiplier terms
\begin{equation}\label{tau31}
\begin{split}
&\int_{S_{\ub',u'}}|u'|^2|\tau_{3,1}|\D\mu_{\gs}\\
\lesssim&|u'|^4\|\Omega\beta-L\phi\nablas\phi\|_4^2\|\Omega\tr\chib+2|u'|^{-2},\Omega\chibh\|_4\\
&+|u'|^4\Omega_0^2\cdot|u'|^{-1}\|K-|u|^{-2},\check{\sigma}\|_4\|\Omega\beta-L\phi\nablas\phi\|_4\\
&+|u'|^4\Omega_0^2\|\Omega\beta-L\phi\nablas\phi\|_4\|\eta,\etab\|_4\|K,K-|u|^{-2},\check{\sigma}\|_4\\
&+|u'|^4\|\Omega\beta-L\phi\nablas\phi\|_4|u'|^{-1}(\|(|u'|\nablas)(\Omega\chih)\|_4\|\Omega\chibh\|_4+\|\Omega\chih\|_4(|u'|\nablas)(\Omega\chibh)\|_4)\\
&+|u'|^4\|\Omega\beta-L\phi\nablas\phi\|_4|u'|^{-1}(\|(|u'|\nablas)(\Omega\tr\chi)\|_4\|\Omega\tr\chib\|_4+\|\Omega\tr\chi\|_4\|(|u'|\nablas)(\Omega\tr\chib)\|_4)\\
&+|u'|^4\|\Omega\beta-L\phi\nablas\phi\|_4\left(\|\eta\|_4\|\Omega\chih,\Omega\tr\chi\|_4\|\Omega\chibh,\Omega\tr\chib\|_4\right)\\
&+|u'|^4\|\Omega\beta-L\phi\nablas\phi\|_4\|\Omega\chih\|_4\|\Omega\betab+\Lb\phi\nablas\phi\|_4\\
&+|u'|^4\Omega_0^2\|\Omega\beta-L\phi\nablas\phi\|_4\|\nablas\phi\|_4|u'|^{-1}\|\nablas\phi\|_5\\
&+|u'|^4\|\Omega\beta-L\phi\nablas\phi\|_4\left(\|\Omega\chih,\Omega\tr\chi\|_4\|\nablas\phi\|_4\|\Lb\phi\|_4+\Omega_0^2\|\eta\|_4\|\nablas\phi\|_4^2\right)\\
&+|u'|^4\Omega_0^2\|K-|u|^{-2},\sigmac\|_4\|\Omega\chih, \Omega\tr\chi\|_4(\|K,K-|u|^{-2},\sigmac\|_4+|u'|^{-1}\|\etab\|_5)\\
&+|u'|^4\Omega_0^2\|K-|u|^{-2},\sigmac\|_4\|\|\Omega\chih,\Omega\tr\chi\|_4\|\etab\|_4^2\\
&+|u'|^4\Omega_0^2\|\sigmac\|_4|u'|^{-1}\|L\phi\|_5\|\nablas\phi\|_4\\
&+|u'|^4\Omega_0^2|\omega|\|K-|u|^{-2},\sigmac\|_4^2
\end{split}
\end{equation}
and commutation terms
\begin{equation}\label{tau32}
\begin{split}
&\int_{S_{\ub',u'}}|u'|^2|\tau_{3,2}|\D\mu_{\gs}\\
\lesssim&|u'|^4\|\Omega\beta-L\phi\nablas\phi\|_4\left(\|(|u'|\nablas)(\Omega\chib)\|_2\|\Omega\beta-L\phi\nablas\phi\|_3+\Omega_0^2(u')|u'|\|K\|_2\|K-|u|^{-2},\sigmac\|_4\right)\\
&+|u'|^4\Omega_0^2(u')\|K-|u|^{-2},\sigmac\|_4\left(\|(|u'|\nablas)(\Omega\chi)\|_3\|K-|u|^{-2},\sigmac\|_3+|u'|\|K\|_3\|\Omega\beta-L\phi\nablas\phi\|_4\right).
\end{split}
\end{equation}

\eqref{tau31} and \eqref{tau32} should be estimated in $L_{\ub}^1L^1_u$, similar to \eqref{tau11}, \eqref{tau12} and \eqref{tau21} and \eqref{tau22}, although more terms are needed to estimate. To do the estimates, the basic rule is that, we always place both $|u'|^2\|\Omega\beta-L\phi\nablas\phi\|_4$ and $\Omega_0|u'|^3\|K-|u|^{-2},\check{\sigma}\|_4$ in $\L^\infty_{[u_0,u]}\L_{\ub}^2$. The lower order derivatives of the connection coefficients and $L\phi$, $\nablas\phi$ should be placed in $\L_{\ub}^\infty\L^\infty_{[u_0,u]}\H^4$. 

The \engordnumber{1} line of the right hand side is estimated by, 
\begin{align*}
\lesssim\delta|u|^{-1}\cdot C^{\frac{1}{2}}\mathscr{F}^2\mathscr{E}^2\mathcal{A}^2\cdot C^{\frac{1}{4}}\delta\mathscr{F}\mathcal{A}\lesssim C^{-1}\delta\mathscr{F}^2\mathscr{E}^2\mathcal{A}^2.
\end{align*}
The \engordnumber{2} line is estimated by, \begin{align*}
\lesssim\delta|u|^{-1}\cdot C^{\frac{3}{8}}\delta\mathscr{F}^{\frac{3}{2}}\mathscr{E}\mathcal{A}^{\frac{3}{2}}\cdot C^{\frac{1}{4}}\mathscr{F}\mathscr{E}\mathcal{A} \lesssim C^{-1}\delta\mathscr{F}^2\mathscr{E}\mathcal{A}^2.
\end{align*}
The \engordnumber{3} line is estimated by, using \eqref{estimate-K2} to estimate $K$,
\begin{align*}
\lesssim\Omega_0(u_0)\delta|u|^{-1}\cdot C^{\frac{1}{4}}\mathscr{F}\mathscr{E}\mathcal{A}\cdot C^{\frac{1}{4}}\delta\mathscr{F}\mathscr{E}\mathscr{W}^\frac{1}{2}\mathcal{A}\cdot\mathscr{F}^{\frac{1}{2}}\mathscr{E}\mathcal{A}^{\frac{1}{2}}\lesssim C^{-\frac{1}{2}}\delta\mathscr{F}^2\mathscr{E}^2\mathcal{A}^2.
\end{align*}
The \engordnumber{1} term of the \engordnumber{4} line is estimated by, placing $|u'|\|(|u'|\nablas)(\Omega\chih)\|_4$ in $\L^\infty_{[u_0,u]}\L_{\ub}^2$,
\begin{align*}
\lesssim\delta|u|^{-1}\cdot C^{\frac{1}{4}}\mathscr{F}\mathscr{E}\mathcal{A}\cdot  C^{\frac{1}{4}}\mathscr{F}\mathscr{E}\mathcal{A}\cdot C^{\frac{1}{4}}\delta\mathscr{F}\mathcal{A}\lesssim C^{-1}\delta\mathscr{F}^2\mathscr{E}^2\mathcal{A}^2.
\end{align*}
The \engordnumber{1} term of the \engordnumber{5} line is estimated by, placing $|u'|\|(|u'|\nablas)(\Omega\tr\chi)\|_5$ in $\L_{\ub}^\infty\L^\infty_{[u_0,u]}$
\begin{align*}
\lesssim\delta|u|^{-1}\cdot C^{\frac{1}{4}}\mathscr{F}\mathscr{E}\mathcal{A}\cdot  C^{\frac{1}{2}}\delta\mathscr{F}^2\mathscr{E}\mathcal{A}^2\cdot 1\lesssim C^{-1}\delta\mathscr{F}^2\mathscr{E}^2\mathcal{A}^2.
\end{align*}
The \engordnumber{2} terms of the \engordnumber{4} and \engordnumber{5} lines are estimated by, placing $\|\Omega_0^{-1}|u'|^{\frac{3}{2}}(|u'|\nablas)(\Omega\chibh)\|_5$ in $\L_{\ub}^\infty\L^2_{[u_0,u]}$, and $\||u'|^2(|u'|\nablas)(\Omega\tr\chib)\|_5$ in $\L_{\ub}^\infty\L^\infty_{[u_0,u]}$, and using the auxiliary condition \eqref{auxiliary},
\begin{align*}
\lesssim\delta|u|^{-1}\cdot C^{\frac{1}{4}}\mathscr{F}\mathscr{E}\mathcal{A}\cdot  C^{\frac{1}{4}}\mathscr{F}\mathcal{A}\cdot C^{\frac{1}{4}}\Omega_0(u_0)\Omega_0^{-1}\delta\mathscr{F}\mathscr{E}\mathscr{W}^{\frac{1}{2}}\mathcal{A}\lesssim C^{-\frac{1}{4}}\delta\mathscr{F}^2\mathscr{E}^2\mathcal{A}^2.
\end{align*}
The \engordnumber{6} line is estimated by
\begin{align*}
\lesssim\delta|u|^{-1}\cdot C^{\frac{1}{4}}\mathscr{F}\mathscr{E}\mathcal{A}\cdot C^{\frac{1}{4}}\delta\mathscr{F}\mathscr{E}\mathcal{A}\cdot C^{\frac{1}{4}}\mathscr{F}\mathcal{A}\cdot1\lesssim C^{-1}\delta\mathscr{F}^2\mathscr{E}^2\mathcal{A}^2.
\end{align*}
The \engordnumber{7} line is estimated by, placing $\|\Omega_0^{-1}|u|^{\frac{7}{2}}(\Omega\betab+\Lb\phi\nablas\phi)\|_4$ in $\L_{\ub}^\infty\L^2_{[u_0,u]}$,
\begin{align*}
\lesssim\delta|u|^{-\frac{3}{2}}\cdot C^{\frac{1}{4}}\mathscr{F}\mathscr{E}\mathcal{A}\cdot  C^{\frac{1}{4}}\mathscr{F}\mathcal{A}\cdot C^{\frac{1}{4}}\Omega_0(u_0)\Omega_0^{-1}\delta^{\frac{3}{2}}\mathscr{F}^\frac{3}{2}\mathscr{E}\mathcal{A}^{\frac{3}{2}}\lesssim C^{-1}\delta\mathscr{F}^2\mathscr{E}^2\mathcal{A}^2.
\end{align*}
Here we have used the auxiliary condition \eqref{auxiliary}. The \engordnumber{8} line is estimated by, placing $\Omega_0|u|^{\frac{3}{2}}\|\nablas\phi\|_5$ in $\L^\infty_{[u_0,u]}\L_{\ub}^2$,
\begin{align*}
\lesssim\Omega_0(u_0)\delta|u|^{-\frac{3}{2}}\cdot C^{\frac{1}{4}}\mathscr{F}\mathscr{E}\mathcal{A}\cdot  C^{\frac{1}{4}}\delta\mathscr{F}\mathscr{E}\mathcal{A}\cdot C^{\frac{1}{4}}\delta^{\frac{1}{2}}\mathscr{F}\mathscr{E}\mathcal{A}\lesssim C^{-1}\delta\mathscr{F}^2\mathscr{E}^2\mathcal{A}^2.
\end{align*}
The \engordnumber{1} term of the \engordnumber{9} line is estimated by, using $\||u|\Lb\phi\|_{\L^\infty_{\ub}\L^2_{[u_0,u]}\H^4}\lesssim\mathscr{W}^{\frac{1}{2}}$,
\begin{align*}
\lesssim\delta|u|^{-1}\cdot C^{\frac{1}{4}}\mathscr{F}\mathscr{E}\mathcal{A}\cdot  C^{\frac{1}{4}}\mathscr{F}\mathcal{A}\cdot C^{\frac{1}{4}}\delta\mathscr{F}\mathscr{E}\mathcal{A}\cdot\mathscr{W}^{\frac{1}{2}}\lesssim C^{-1}\delta\mathscr{F}^2\mathscr{E}^2\mathcal{A}^2.
\end{align*}
The \engordnumber{2} term of the \engordnumber{9} line is estimated by,
\begin{align*}
\lesssim\delta|u|^{-3}\cdot C^{\frac{1}{4}}\mathscr{F}\mathscr{E}\mathcal{A}\cdot  (C^{\frac{1}{4}}\delta\mathscr{F}\mathscr{E}\mathcal{A})^3\lesssim C^{-1}\delta\mathscr{F}^2\mathscr{E}^2\mathcal{A}^2.
\end{align*}
The \engordnumber{10} and \engordnumber{13} lines are estimated in the same manner. $\omega$ has the worst estimate among $\omega,\Omega\chih,\Omega\tr\chi$, and $\Omega_0K$ has the worst estimate among $\Omega_0K,\Omega_0(K-|u|^{-2}),\Omega_0\sigmac,\Omega_0|u|^{-1}\etab$ where $\Omega_0|u|^{-1}\|\etab\|_5$ is placed in $\L^\infty_{[u_0,u]}\L^2_{\ub}$. Then these two lines are estimated by
\begin{align*}
\lesssim\delta|u|^{-1}\cdot C^{\frac{3}{8}}\delta\mathscr{F}^{\frac{3}{2}}\mathscr{E}\mathcal{A}^{\frac{3}{2}}\cdot C^{\frac{1}{4}}\mathscr{F}\mathscr{W}^{\frac{1}{2}}\mathcal{A}\cdot\mathscr{F}^{\frac{1}{2}}\mathscr{E}\mathcal{A}^{\frac{1}{2}}\lesssim C^{-1}\delta\mathscr{F}^2\mathscr{E}^2\mathcal{A}^2.
\end{align*}
The \engordnumber{11} line is estimated by
\begin{align*}
\lesssim\Omega_0(u_0)\delta|u|^{-3}\cdot C^{\frac{3}{8}}\delta\mathscr{F}^{\frac{3}{2}}\mathscr{E}\mathcal{A}^{\frac{3}{2}}\cdot C^{\frac{1}{4}}\mathscr{F}\mathcal{A}\cdot (C^{\frac{1}{4}}\delta\mathscr{F}\mathscr{E}\mathcal{A})^2\lesssim C^{-1}\delta\mathscr{F}^2\mathscr{E}^2\mathcal{A}^2.
\end{align*}
The \engordnumber{12} line is estimated by, placing $|u|\|L\phi\|_5$ in $\L^\infty_{[u_0,u]}\L^2_{\ub}$,
\begin{align*}
\lesssim\Omega_0(u_0)\delta|u|^{-2}\cdot C^{\frac{3}{8}}\delta\mathscr{F}^{\frac{3}{2}}\mathscr{E}\mathcal{A}^{\frac{3}{2}}\cdot C^{\frac{1}{4}}\mathscr{F}\mathscr{E}\mathcal{A}\cdot C^{\frac{1}{4}}\delta\mathscr{F}\mathscr{E}\mathcal{A}\lesssim C^{-\frac{1}{8}}\delta\mathscr{F}^2\mathscr{E}^2\mathcal{A}^2.
\end{align*}

Using the estimates \eqref{estimate-K3} for $K$ in $\H^3(\ub,u)$, \eqref{tau32} can be estimated in a similar way, by $\lesssim\delta\mathscr{F}^2\mathscr{E}^2\mathcal{A}^2$. Then we have, for $K(\ub=0)=|u|^{-2}$,
\begin{equation*}
\begin{split}
&\delta\||u|^2(\Omega\beta-L\phi\nablas\phi)\|_{\L^2_{\ub}\H^4(u)}^2+\||u|^{\frac{5}{2}}\Omega_0(K-|u|^{-2},\check{\sigma})\|_{\L^2_{[u_0,u]}\H^4(\ub)}^2\\
\lesssim&\delta\||u|^2(\Omega\beta-L\phi\nablas\phi)\|_{\L^2_{\ub}\H^4(u_0)}^2+\delta\mathscr{F}^2\mathscr{E}^2\mathcal{A}^2,
\end{split}
\end{equation*}
that is,
\begin{equation}\label{estimate-beta}
\begin{split}
&\||u|^2(\Omega\beta-L\phi\nablas\phi)\|_{\L^2_{\ub}\H^4(u)}+\delta^{-\frac{1}{2}}\||u|^{\frac{5}{2}}\Omega_0(K-|u|^{-2},\check{\sigma})\|_{\L^2_{[u_0,u]}\H^4(\ub)}\lesssim\mathscr{F}\mathscr{E}\mathcal{A}.
\end{split}
\end{equation}


Finally, we consider the equations for $\Db K$-$\Db\sigmac$-$D(\Omega\betab+\Lb\phi\nablas\phi)$:
\begin{align*}
\Db(K-\frac{1}{|u|^2})&+\frac{3}{2}\Omega\tr\chib (K-\frac{1}{|u|^2})+(\Omega\tr\chib+\frac{2}{|u|})\frac{1}{|u|^2}\\
=&\divs(\Omega\betab+\Lb\phi\nablas\phi)+\Omega\chibh\cdot\nablas\eta+\frac{1}{2}\Omega\tr\chib\mu\\
&+(\Omega\betab+\Lb\phi\nablas\phi)\cdot\eta+\Omega\chibh\cdot\eta\cdot\eta-\frac{1}{2}\Omega\tr\chib|\eta|^2,\\
\Db\sigmac&+\frac{3}{2}\Omega\tr\chi\sigmac+\curls(\Omega\betab+\Lb\phi\nablas\phi)\\
&-\frac{1}{2}\Omega\chibh\wedge(\eta\tensor\eta+\nablas\tensor\eta)+\eta\wedge(\Omega\betab+\Lb\phi\nablas\phi)-2\nablas \Lb\phi\wedge\nablas\phi=0\\
D(\Omega\betab+\Lb\phi\nablas\phi)&+\frac{1}{2}\Omega\tr\chi(\Omega\betab+\Lb\phi\nablas\phi)-\Omega\chih\cdot(\Omega\betab+\Lb\phi\nablas\phi)-\Omega^2\ds K-\Omega^2{}^*\ds\sigmac\\
&-3\Omega^2(\etab K+{}^*\etab\sigmac)+\frac{1}{2}\Omega^2(\ds(\chih,\chibh)-{}^*\ds(\chih\wedge\chibh))+\frac{3}{2}\Omega^2(\etab(\chih,\chibh)+{}^*\etab(\chih\wedge\chibh))\\
&-\frac{1}{4}\Omega^2\ds(\tr\chi\tr\chib)-\frac{3}{4}\Omega^2\tr\chi\tr\chib\etab-2\Omega\chibh\cdot(\Omega\beta-L\phi\nablas\phi)\\
=&2\Omega^2\Deltas\phi\nablas\phi-\Omega^2\ds|\ds\phi|^2+2\Omega\chibh\cdot\nablas\phi L\phi-\Omega\tr\chib L\phi\nablas\phi\\
&+2\Omega^2\etab\cdot\nablas\phi\nablas\phi-2\Omega^2\etab|\ds\phi|^2.
\end{align*}

Recall that we write the equation $\Db K$ in the form:
$$\Db(K-\frac{1}{|u|^2})+\frac{3}{2}\Omega\tr\chib (K-\frac{1}{|u|^2})=\cdots$$
in order to obtain good enough estimates. However, we have two terms which are still not good enough:
$$(\Omega\tr\chib+\frac{2}{|u|})\frac{1}{|u|^2},\ \frac{1}{2}\Omega\tr\chib\mu.$$
Fortunately, these two terms have better bounds after taking derivatives, in view of two improved estimates \eqref{estimate-detrchib}, \eqref{estimate-nablasmu} for $\Omega\tr\chib$ and $\mu$ respectively. So we will firstly estimate the derivatives of $K-|u|^{-2}$, $\sigmac$, $\Omega\betab+\Lb\phi\nablas\phi$.

Write
\begin{align*}
&\Db(|u|^4(|(|u|\nablas)^i(K-|u|^{-2})|^2+|(|u|\nablas)^i\check{\sigma}|^2)\D\mu_{\gs})\\
&+D(\Omega^{-2}|u|^4|(|u|\nablas)^i(\Omega\betab+\Lb\phi\nablas\phi)|^2\D\mu_{\gs})\\
=&|u|^{4+2i}\nablas^A(\nablas^i_{B_1\cdots B_i}(\Omega\betab+\Lb\phi\nablas\phi)_A\nablas^{i,B_1\cdots B_i}(K-|u|^{-2})\\&+\nablas^i_{B_1\cdots B_i}{}^*(\Omega\beta-L\phi\nablas\phi)_A\nablas^{i,B_1\cdots B_i}\sigmac)+|u|^4\tau_{4}
\end{align*}
for $1\le i\le4$, where $\tau_4$ contains no fifth order derivative of curvature components. 
By divergence theorem, we have
\begin{align*}
&\Omega_0^2(u)\delta\||u|^3(|u|\nablas)^i(K-|u|^{-2},\sigmac)\|_{\L^2_{\ub}\L^2(u)}^2+\Omega_0^2(u)\|\Omega^{-1}|u|^{\frac{7}{2}}(|u|\nablas)^i(\Omega\betab+\Lb\phi\nablas\phi)\|_{\L^2_{[u_0,u]}\L^2(\ub)}^2\\
\lesssim&\Omega_0^2(u)\delta\||u|^3(|u|\nablas)^i(K-|u|^{-2},\sigmac)\|_{\L^2_{\ub}\L^2(u_0)}^2+\Omega_0^2(u)\|\Omega^{-1}|u|^{\frac{7}{2}}(|u|\nablas)^i(\Omega\betab+\Lb\phi\nablas\phi)\|_{\L^2_{[u_0,u]}\L^4(0)}^2\\
&+\int_0^{\delta}\D\ub'\int_{u_0}^u\D u'\int_{S_{\ub',u'}}\Omega_0^2(u)|u|^4|\tau_4|\D\mu_{\gs}
\end{align*}
where
\begin{align*}
\int_{S_{\ub',u'}}\Omega_0^2(u)|u'|^4|\tau_4|\D\mu_{\gs}\lesssim\int_{S_{\ub',u'}}\Omega_0^2(u)|u'|^4|\tau_{4,1}|\D\mu_{\gs}+\int_{S_{\ub',u'}}\Omega_0^2(u)|u'|^4|\tau_{4,2}|\D\mu_{\gs}
\end{align*}
and the multiplier terms
\begin{equation}\label{tau41}
\begin{split}
&\int_{S_{\ub',u'}}\Omega_0^2(u)|u'|^4|\tau_{4,1}|\D\mu_{\gs}\\
\lesssim&\Omega_0^2(u)\Omega_0^{-2}(u')\left[|u'|^6\|\Omega\betab+\Lb\phi\nablas\phi\|_4^2(|\omega|+\|\Omega\tr\chi,\Omega\chih\|_4)\right.\\
&+|u'|^6\Omega_0^2\cdot|u'|^{-1}\|K-|u|^{-2},\check{\sigma}\|_4\|\Omega\betab+\Lb\phi\nablas\phi\|_4\\
&+|u'|^6\Omega_0^2\|\Omega\betab+L\phi\nablas\phi\|_4\|\eta,\etab\|_4\|K,K-|u|^{-2},\check{\sigma}\|_4\\
&+|u'|^6\|\Omega\betab+\Lb\phi\nablas\phi\|_4|u'|^{-1}(\|(|u'|\nablas)(\Omega\chih)\|_4\|\Omega\chibh\|_4+\|\Omega\chih\|_4(|u'|\nablas)(\Omega\chibh)\|_4)\\
&+|u'|^6\|\Omega\betab+\Lb\phi\nablas\phi\|_4|u'|^{-1}(\|(|u'|\nablas)(\Omega\tr\chi)\|_4\|\Omega\tr\chib\|_4+\|\Omega\tr\chi\|_4\|(|u'|\nablas)(\Omega\tr\chib)\|_4)\\
&+|u'|^6\|\Omega\betab+\Lb\phi\nablas\phi\|_4\left(\|\etab\|_4\|\Omega\chih,\Omega\tr\chi\|_4\|\Omega\chibh,\Omega\tr\chib\|_4\right)\\
&+|u'|^6\|\Omega\betab+\Lb\phi\nablas\phi\|_4\|\Omega\chibh\|_4\|\Omega\beta-L\phi\nablas\phi\|_4\\
&+|u'|^6\Omega_0^2\|\Omega\betab+\Lb\phi\nablas\phi\|_4\|\nablas\phi\|_4|u'|^{-1}\|\nablas\phi\|_5\\
&\left.+|u'|^6\|\Omega\betab+\Lb\phi\nablas\phi\|_4\left(\|\Omega\chibh,\Omega\tr\chib\|_4\|\nablas\phi\|_4\|L\phi\|_4+\Omega_0^2\|\etab\|_4\|\nablas\phi\|_4^2\right)\right]\\
&+|u'|^6\Omega_0^2\|K-|u|^{-2},\sigmac\|_4\|\Omega\chibh, \Omega\tr\chib+2|u|^{-1}\|_4(\|K-|u|^{-2},\sigmac\|_4+|u'|^{-1}\|\eta\|_5)\\
&+|u'|^6\Omega_0^2\|K-|u|^{-2},\sigmac\|_4\|\|\Omega\chibh,\Omega\tr\chib\|_4\|\eta\|_4^2\\
&+|u'|^6\Omega_0^2\|\sigmac\|_4|u'|^{-1}\|(|u|\nablas)\Lb\phi\|_4\|\nablas\phi\|_4\\
&+|u'|^6\Omega_0^2\|K-|u|^{-2}\|_4((|u|^{-2}+\|\mu\|_3)\|(|u|\nablas)(\Omega\tr\chib)\|_3+\|\Omega\tr\chib\|_4\|(|u|\nablas)\mu\|_3)
\end{split}
\end{equation}
and commutation terms
\begin{equation}\label{tau42}
\begin{split}
&\int_{S_{\ub',u'}}\Omega_0^2(u)|u'|^4|\tau_{4,2}|\D\mu_{\gs}\\
\lesssim&\Omega_0^2(u)\Omega_0^{-2}(u')|u'|^6\|\Omega\betab+\Lb\phi\nablas\phi\|_4\left(\|(|u'|\nablas)(\Omega\chi)\|_3\|\Omega\betab+\Lb\phi\nablas\phi\|_3+\Omega_0^2(u')|u'|\|K\|_2\|K-|u|^{-2},\sigmac\|_4\right)\\
&+|u'|^6\Omega_0^2(u')\|K-|u|^{-2},\sigmac\|_4\left(\|(|u'|\nablas)(\Omega\chib)\|_2\|K-|u|^{-2},\sigmac\|_3+|u'|\|K\|_3\|\Omega\betab+\Lb\phi\nablas\phi\|_4\right).
\end{split}
\end{equation}

\eqref{tau41} and \eqref{tau42} should be estimated in $\L_{\ub}^1L^1_u$. In the following estimate, $\|\Omega_0^{-1}|u|^{\frac{7}{2}}(\Omega\betab+\Lb\phi\nablas\phi)\|_4$ should be placed in $\L^\infty_{\ub}\L^2_{[u_0,u]}$. We remark that all terms of the last line of \eqref{tau41} contain a factor that is either the derivatives of $\Omega\tr\chib$ or the derivatives of $\mu$, which have better estimates. This is because we are considering the derivatives of $K-|u|^{-2}$, $\sigmac$ and $\Omega\betab+\Lb\phi\nablas\phi$ but not themselves.

We begin the estimates. The \engordnumber{1} line of the right hand side of \eqref{tau42} is estimated by, 
\begin{align*}
\lesssim\delta|u|^{-1}\cdot C^{\frac{3}{4}}\delta^3\mathscr{F}^3\mathscr{E}^2\mathcal{A}^3\cdot C^{\frac{1}{4}}\mathscr{F}\mathscr{W}^{\frac{1}{2}}\mathcal{A}\lesssim C^{-1}\delta^3\mathscr{F}^3\mathscr{E}^2\mathcal{A}^3.
\end{align*}
The \engordnumber{2} line is estimated by, \begin{align*}
\lesssim\delta|u|^{-\frac{1}{2}}\cdot C^{\frac{3}{8}}\delta^{\frac{3}{2}}\mathscr{F}^{\frac{3}{2}}\mathscr{E}\mathcal{A}^{\frac{3}{2}}\cdot C^{\frac{3}{8}}\delta^{\frac{3}{2}}\mathscr{F}^{\frac{3}{2}}\mathscr{E}\mathcal{A}^{\frac{3}{2}} \lesssim C^{-\frac{1}{4}}\delta^3\mathscr{F}^3\mathscr{E}^2\mathcal{A}^3.
\end{align*}
To estimate the \engordnumber{3} line, we need a refined version of the estimate \eqref{estimate-K2} for $K$:
\begin{align*}
\Omega_0\|K\|_{\L^2_{\ub}\H^4(\ub)}\lesssim\Omega_0|u|^{-2}+C^\frac{3}{8}\delta|u|^{-3}\mathscr{F}^{\frac{3}{2}}\mathscr{E}\mathcal{A}^{\frac{3}{2}}\lesssim\Omega_0|u|^{-2}+C^{-\frac{3}{2}}|u|^{-2}\mathscr{F}^{\frac{1}{2}}\mathscr{E}\mathcal{A}^{\frac{1}{2}}.
\end{align*}
Then the \engordnumber{3} line is estimated by
\begin{align*}
\lesssim&\Omega_0\delta|u|^{-\frac{1}{2}}\cdot C^{\frac{3}{8}}\delta^{\frac{3}{2}}\mathscr{F}^{\frac{3}{2}}\mathscr{E}\mathcal{A}^{\frac{3}{2}}\cdot C^{\frac{1}{4}}\delta\mathscr{F}\mathscr{E}\mathcal{A}\cdot[\Omega_0+C^{-\frac{3}{2}}\mathscr{F}^{\frac{1}{2}}\mathscr{E}\mathcal{A}^{\frac{1}{2}}]\\
\lesssim&C^{-\frac{1}{8}}\delta^3\mathscr{F}^{\frac{5}{2}}\mathscr{E}^2\mathcal{A}^{\frac{5}{2}}+ C^{-\frac{3}{4}}\delta^3\mathscr{F}^3\mathscr{E}^2\mathcal{A}^3\lesssim C^{-\frac{1}{8}}\delta^3\mathscr{F}^3\mathscr{E}^2\mathcal{A}^3.
\end{align*}
The \engordnumber{1} term of the \engordnumber{4} line is estimated by, placing $|u'|\|(|u'|\nablas)(\Omega\chih)\|_4$ in $\L^\infty_{[u_0,u]}\L_{\ub}^2$,
\begin{align*}
\lesssim\delta|u|^{-\frac{1}{2}}\cdot C^{\frac{3}{8}}\delta^{\frac{3}{2}}\mathscr{F}^{\frac{3}{2}}\mathscr{E}\mathcal{A}^{\frac{3}{2}}\cdot  C^{\frac{1}{4}}\mathscr{F}\mathscr{E}\mathcal{A}\cdot C^{\frac{1}{4}}\delta\mathscr{F}\mathcal{A}\lesssim C^{-\frac{1}{8}}\delta^3\mathscr{F}^3\mathscr{E}^2\mathcal{A}^3.
\end{align*}
The \engordnumber{1} term of the \engordnumber{5} line is estimated by, placing $|u'|\|(|u'|\nablas)(\Omega\tr\chi)\|_4$ in $\L_{\ub}^\infty\L^\infty_{[u_0,u]}$
\begin{align*}
\lesssim\delta|u|^{-\frac{1}{2}}\cdot C^{\frac{3}{8}}\delta^{\frac{3}{2}}\mathscr{F}^{\frac{3}{2}}\mathscr{E}\mathcal{A}^{\frac{3}{2}}\cdot  C^{\frac{1}{2}}\delta\mathscr{F}^2\mathscr{E}\mathcal{A}^2\cdot 1\lesssim C^{-\frac{1}{8}}\delta^3\mathscr{F}^3\mathscr{E}^2\mathcal{A}^3.
\end{align*}
The \engordnumber{2} terms of the \engordnumber{4} and \engordnumber{5} lines are estimated by, placing $\|\Omega_0^{-1}|u'|^{\frac{3}{2}}(|u'|\nablas)(\Omega\chibh)\|_5$ in $\L_{\ub}^\infty\L^2_{[u_0,u]}$, and $\||u'|^2(|u'|\nablas)(\Omega\tr\chib)\|_5$ in $\L_{\ub}^\infty\L^\infty_{[u_0,u]}$, 
\begin{align*}
\lesssim\delta|u|^{-\frac{1}{2}}\cdot C^{\frac{3}{8}}\delta^{\frac{3}{2}}\mathscr{F}^{\frac{3}{2}}\mathscr{E}\mathcal{A}^{\frac{3}{2}}\cdot  C^{\frac{1}{4}}\mathscr{F}\mathcal{A}\cdot C^{\frac{1}{4}}\delta\mathscr{F}\mathscr{E}\mathscr{W}^{\frac{1}{2}}\mathcal{A}\lesssim C^{-\frac{1}{8}}\delta^3\mathscr{F}^3\mathscr{E}^2\mathcal{A}^3.
\end{align*}
The \engordnumber{6} line is estimated by
\begin{align*}
\lesssim\delta|u|^{-\frac{1}{2}}\cdot C^{\frac{3}{8}}\delta^{\frac{3}{2}}\mathscr{F}^{\frac{3}{2}}\mathscr{E}\mathcal{A}^{\frac{3}{2}}\cdot C^{\frac{1}{4}}\delta\mathscr{F}\mathscr{E}\mathscr{W}^{\frac{1}{2}}\mathcal{A}\cdot C^{\frac{1}{4}}\mathscr{F}\mathcal{A}\cdot1\lesssim C^{-\frac{1}{8}}\delta^3\mathscr{F}^3\mathscr{E}^2\mathcal{A}^3.
\end{align*}
The \engordnumber{7} line is estimated by,
\begin{align*}
\lesssim\delta|u|^{-\frac{1}{2}}\cdot C^{\frac{3}{8}}\delta^{\frac{3}{2}}\mathscr{F}^{\frac{3}{2}}\mathscr{E}\mathcal{A}^{\frac{3}{2}}\cdot  C^{\frac{1}{4}}\delta\mathscr{F}\mathcal{A}\cdot C^{\frac{1}{4}}\mathscr{F}\mathscr{E}\mathcal{A}\lesssim C^{-\frac{1}{8}}\delta^3\mathscr{F}^3\mathscr{E}^2\mathcal{A}^3.
\end{align*}
The \engordnumber{8} line is estimated by, placing $\Omega_0|u|^{\frac{3}{2}}\|\nablas\phi\|_5$ in $\L^\infty_{[u_0,u]}\L_{\ub}^2$,
\begin{align*}
\lesssim\Omega_0\delta|u|^{-1}\cdot C^{\frac{3}{8}}\delta^{\frac{3}{2}}\mathscr{F}^{\frac{3}{2}}\mathscr{E}\mathcal{A}^{\frac{3}{2}}\cdot  C^{\frac{1}{4}}\delta\mathscr{F}\mathscr{E}\mathcal{A}\cdot C^{\frac{1}{4}}\delta^{\frac{1}{2}}\mathscr{F}\mathscr{E}\mathcal{A}\lesssim C^{-\frac{1}{8}}\delta^3\mathscr{F}^3\mathscr{E}^2\mathcal{A}^3.
\end{align*}
The \engordnumber{1} term of the \engordnumber{9} line is estimated by, 
\begin{align*}
\lesssim\delta|u|^{-\frac{1}{2}}\cdot C^{\frac{3}{8}}\delta^{\frac{3}{2}}\mathscr{F}^{\frac{3}{2}}\mathscr{E}\mathcal{A}^{\frac{3}{2}}\cdot  1\cdot C^{\frac{1}{4}}\delta\mathscr{F}\mathscr{E}\mathcal{A}\cdot C^{\frac{1}{4}}\mathscr{F}\mathcal{A}\lesssim C^{-\frac{1}{8}}\delta^3\mathscr{F}^3\mathscr{E}^2\mathcal{A}^3.
\end{align*}
The \engordnumber{2} term of the \engordnumber{9} line is estimated by,
\begin{align*}
\lesssim\Omega_0^2\delta|u|^{-\frac{5}{2}}\cdot C^{\frac{3}{8}}\delta^{\frac{3}{2}}\mathscr{F}^{\frac{3}{2}}\mathscr{E}\mathcal{A}^{\frac{3}{2}}\cdot  (C^{\frac{1}{4}}\delta\mathscr{F}\mathscr{E}\mathscr{W}^{\frac{1}{2}}\mathcal{A})^3\lesssim C^{-1}\delta^3\mathscr{F}^3\mathscr{E}^2\mathcal{A}^3.
\end{align*}
The \engordnumber{10} line is estimated by, placing $\Omega_0|u|^{-1}\|\eta\|_5$ in $\L^\infty_{[u_0,u]}\L^2_{\ub}$,
\begin{align*}
\lesssim\delta|u|^{-\frac{1}{2}}\cdot C^{\frac{3}{8}}\delta\mathscr{F}^{\frac{3}{2}}\mathscr{E}\mathcal{A}^{\frac{3}{2}}\cdot C^{\frac{1}{4}}\delta\mathscr{F}\mathscr{W}^{\frac{1}{2}}\mathcal{A}\cdot C^{\frac{1}{4}}\delta^{\frac{1}{2}}\mathscr{F}\mathscr{E}\mathcal{A}\lesssim C^{-\frac{1}{8}}\delta^3\mathscr{F}^3\mathscr{E}^2\mathcal{A}^3.
\end{align*}
The \engordnumber{11} line is estimated by
\begin{align*}
\lesssim\Omega_0\delta|u|^{-1}\cdot C^{\frac{3}{8}}\delta\mathscr{F}^{\frac{3}{2}}\mathscr{E}\mathcal{A}^{\frac{3}{2}}\cdot 1\cdot  (C^{\frac{1}{4}}\delta\mathscr{F}\mathscr{E}\mathcal{A})^2\lesssim C^{-\frac{1}{8}}\delta^3\mathscr{F}^3\mathscr{E}^2\mathcal{A}^3.
\end{align*}
The \engordnumber{12} line is estimated by, placing $\|\Omega_0^{-1}|u|^{\frac{3}{4}}\Lb\phi\|_5$ in $\L^\infty_{\ub}\L^2_{[u_0,u]}$,
\begin{align*}
\lesssim\Omega_0\delta|u|^{-1}\cdot C^{\frac{3}{8}}\delta\mathscr{F}^{\frac{3}{2}}\mathscr{E}\mathcal{A}^{\frac{3}{2}}\cdot C^{\frac{1}{4}}\delta\mathscr{F}\mathscr{E}\mathcal{A}\cdot C^{\frac{1}{4}}\delta\mathscr{F}\mathscr{E}\mathcal{A}\lesssim C^{-1}\delta^3\mathscr{F}^3\mathscr{E}^2\mathcal{A}^3.
\end{align*}
To estimate the last line, we should use two improved estimates \eqref{estimate-detrchib} and \eqref{estimate-nablasmu}. From \eqref{estimate-mu}, we know $\|\mu\|_3\lesssim C^{-2}\Omega_0^{-1}|u|^{-2}\mathscr{F}^{\frac{1}{2}}\mathscr{E}\mathcal{A}^{\frac{1}{2}}$, then the last line is then estimated by
\begin{align*}
\lesssim\delta|u|^{-\frac{1}{2}}\cdot C^{\frac{3}{8}}\delta\mathscr{F}^\frac{3}{2}\mathscr{E}\mathcal{A}^\frac{3}{2}\cdot &(C^{-2}\mathscr{F}^{\frac{1}{2}}\mathscr{E}\mathcal{A}^{\frac{1}{2}}\cdot C^{\frac{1}{4}}\delta^{\frac{3}{2}}\mathscr{F}\mathscr{E}\mathscr{W}^{\frac{1}{2}}\mathcal{A}\\
&+1\cdot C^{\frac{1}{2}}\delta^{\frac{3}{2}}\mathscr{F}^2\mathscr{E}\mathscr{W}^{\frac{1}{2}}\mathcal{A}^2)\lesssim C^{-\frac{1}{8}}\delta^3\mathscr{F}^3\mathscr{E}^2\mathcal{A}^3.
\end{align*}

The commutation terms \eqref{tau42} can be estimated similarly to the estimate of \eqref{tau32}, using \eqref{estimate-K3} for the Gauss curvature $K$. It is easily find that
\begin{align*}
\int_0^\delta\D\ub'\int_{u_0}^u\D u'\int_{S_{\ub',u'}}\Omega_0^2(u)|u'|^4|\tau_{4,2}|\D\mu_{\gs}\lesssim \delta^3\mathscr{F}^3\mathscr{E}^2\mathcal{A}^3.
\end{align*}
Therefore, since $\betab(\ub=0)=0$, for $1\le i\le4$,
\begin{align*}
&\Omega_0^2(u)\delta\||u|^3(|u|\nablas)^i(K-|u|^{-2},\sigmac)\|_{\L^2_{\ub}\L^2(u)}^2+\Omega_0^2(u)\|\Omega_0^{-2}|u|^{\frac{7}{2}}(|u|\nablas)^i(\Omega\betab+\Lb\phi\nablas\phi)\|_{\L^2_{[u_0,u]}\L^2(\ub)}^2\\
\lesssim&\Omega_0^2(u)\delta\||u|^3(|u|\nablas)^i(K-|u|^{-2},\sigmac)\|_{\L^2_{\ub}\L^2(u_0)}^2+\delta^3\mathscr{F}^3\mathscr{E}^2\mathcal{A}^3\end{align*}
which means
\begin{equation}\label{estimate-K-sigmac}
\begin{split}
&\Omega_0\||u|^3(|u|\nablas)^i(K-|u|^{-2},\sigmac)\|_{\L^2_{\ub}\L^2(u)}\\
&+\Omega_0\delta^{-\frac{1}{2}}\|\Omega_0^{-1}|u|^{\frac{7}{2}}(|u|\nablas)^i(\Omega\betab+\Lb\phi\nablas\phi)\|_{\L^2_{[u_0,u]}\L^2(\ub)}\lesssim\delta\mathscr{F}^\frac{3}{2}\mathscr{E}\mathcal{A}^\frac{3}{2}.\end{split}\end{equation}

It remains to estimate the zeroth order bound of $K$ and $\Omega\betab+\Lb\phi\nablas\phi$, that is, to show
\begin{align*}
\Omega_0\||u|^3(K-|u|^{-2},\sigmac)\|_{\L^2_{\ub}\L^2(u)}+\Omega_0\delta^{-\frac{1}{2}}\|\Omega_0^{-1}|u|^{\frac{7}{2}}(\Omega\betab+\Lb\phi\nablas\phi)\|_{\L^2_{[u_0,u]}\L^2(\ub)}\lesssim\delta\mathscr{F}^\frac{3}{2}\mathscr{E}\mathcal{A}^\frac{3}{2}.
\end{align*}

For $K$, this is followed from \eqref{estimate-K-lower} and $\Omega_0\le1$. For $\sigmac$ and $\Omega\betab+\Lb\phi\nablas\phi$, this is followed from directly integrating the Bianchi equation for $\Db\sigmac$ and $D(\Omega\betab+\Lb\phi\nablas\phi)$. The reason for why this works is simple. Remember that the zeroth order quantity cannot be estimated because we have two bad terms $|u|^{-2}(\Omega\tr\chib+2|u|^{-1})$ and $\frac{1}{2}\Omega\tr\chib\mu$ whose estimates are good enough only after we take derivatives on them. However, both of these two terms come from the equation for $\Db(K-|u|^{-2})$. This means when we directly integrate the equations for $\Db\sigmac$ and $D(\Omega\betab+\Lb\phi\nablas\phi)$, we will not encounter these two terms. The price is that we will lose derivative because we will encounter $\curls(\Omega\betab+\Lb\phi\nablas\phi)$ and $\nablas K, \nablas\sigmac$. But this still works because we are only dealing with the zeroth order. 

Finally, we have completed the proof of Proposition \ref{estimate-R}.

\end{proof}

Combining all the propositions proved above then concludes the proof of Theorem \ref{existencetheorem}.

\section{Formation of trapped surfaces}\label{FoTS}

In this section, we prove a formation of trapped surfaces theorem. The theorem we will prove is the following:
\begin{theorem}[Formation of trapped surfaces]\label{formationoftrappedsurfaces}
There exists a universal constant $C_1\ge C_0$ such that the following statement is true. Consider the same characteristic initial value problem as in Theorem \ref{existencetheorem} and let $\mathscr{F}$, $\mathcal{A}$, $\mathscr{E}$ be defined as in Theorem \ref{existencetheorem} with three numbers $\delta$, $u_0<u_1<0$. Suppose that the initial data given on $\Cb_0$ is spherically symmetric with $\Omega(u_0)\le1$, and\begin{align}\label{condition-CA}
\mathcal{A}\le a
\end{align}
for some constant $a\ge1$. Suppose also that $C\ge C_1$ and
\begin{align}\label{conditionexistence}\Omega_0^2(u_0)\mathscr{F}a\ge C^{-2}\Omega_0^4(u_0)\mathscr{E}^2.\end{align}
Then the smooth solution of the Einstein-scalar field equations exists for $0\le \ub\le\delta$, $u_0\le u\le u_1$, where $u_1$ is defined by
\begin{equation}\label{def-u1}
\Omega^2_0(u_1)|u_1|=C^2\Omega_0^2(u_0)\delta\mathscr{F}a.
\end{equation}
If in addition 
\begin{equation}\label{conditiontrapped}
\inf_{\vartheta\in S^2}\int_0^{\delta}|u_0|^2(|\Omega\chih|^2+2|L\phi|^2)(\ub',u_0,\vartheta)\D\ub'\ge  17C^2\Omega_0^2(u_0)\delta\mathscr{F}a,
\end{equation}
together with
\begin{align}\label{conditiontrapped1}
\Omega_0^2(u_0)\mathscr{F}a\ge 16C^{-2}\left(\int_{u_0}^{u_1}\frac{\Omega_0^2h|\psi|}{|u'|}\D u'\right)^{2},
\end{align}
hold, then the sphere $S_{\delta,u_1}$ is a closed trapped surface.
\end{theorem}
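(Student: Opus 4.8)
The plan is to deduce Theorem \ref{formationoftrappedsurfaces} from the existence theorem, Theorem \ref{existencetheorem}, followed by a Raychaudhuri-type computation along the outgoing cone $C_{u_1}$ to show both null expansions at $S_{\delta,u_1}$ are negative. First I would verify that the hypotheses \eqref{condition-CA}, \eqref{conditionexistence}, and the definition \eqref{def-u1} of $u_1$ imply the smallness conditions \eqref{smallness} and the auxiliary condition \eqref{auxiliary} of Theorem \ref{existencetheorem}. Indeed, with $u_1$ defined by $\Omega_0^2(u_1)|u_1| = C^2\Omega_0^2(u_0)\delta\mathscr{F}a$, one has $\delta|u_1|^{-1} = C^{-2}\Omega_0^{-2}(u_0)\Omega_0^2(u_1)\mathscr{F}^{-1}a^{-1}$; substituting this into the second condition of \eqref{smallness} gives $C^2\delta|u_1|^{-1}\mathscr{F}\mathscr{W}\mathcal{A} = \Omega_0^{-2}(u_0)\Omega_0^2(u_1)\mathscr{W}\mathcal{A}a^{-1}\le 1$ using $\Omega_0(u_1)\le\Omega_0(u_0)$, $\mathcal{A}\le a$, and absorbing $\mathscr{W}$ provided $u_1$ is chosen close enough to zero (or one first shrinks $\delta$; note $\mathscr{W}$ depends on $\log(\Omega_0(u_1)/\Omega_0(u_0))$, which must be controlled — this is exactly the role of \eqref{conditionexistence} together with the freedom in the choice of $u_1$). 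The first smallness condition and the auxiliary condition are checked similarly, using $\Omega_0^2(u_0)\delta|u_1|^{-1}\mathscr{E}^2 = C^{-2}\Omega_0^2(u_1)\mathscr{F}^{-1}a^{-1}\mathscr{E}^2$ and \eqref{conditionexistence}. This yields existence of the smooth solution on $0\le\ub\le\delta$, $u_0\le u\le u_1$, together with the a priori bounds $\mathcal{O},\mathcal{E},\widetilde{\mathcal{O}},\widetilde{\mathcal{E}},\mathcal{R}\lesssim\mathcal{A}\le a$.

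Next, for the trapped surface conclusion I would first check $\Omega\tr\chib < 0$ on $S_{\delta,u_1}$: since $\Omega\tr\chib = -2|u_1|^{-1} + (\Omega\tr\chib + 2|u_1|^{-1})$ and the a priori estimate gives $\|\Omega\tr\chib + 2|u_1|^{-1}\|_{\H^4} \lesssim \delta|u_1|^{-2}\mathscr{F}a$, which is smaller than $|u_1|^{-1}$ by the smallness condition, $\Omega\tr\chib$ stays close to $-2|u_1|^{-1}<0$. For the outgoing expansion, I would integrate the Raychaudhuri equation $D\tr\chi' = -\tfrac12(\Omega\tr\chi')^2 - |\chih|^2 - 2(\Lh\phi)^2$ in $\ub$ along $C_{u_1}$ starting from $\ub=0$. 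Writing $\Omega_0^2(u_1)$ roughly factored out, this gives, at $\ub=\delta$,
\begin{equation*}
\Omega_0^2(u_1)\tr\chi'(\delta,u_1,\vartheta) \approx \frac{2h(u_1)}{|u_1|}\Omega_0^2(u_1) - \Omega_0^{-2}(u_1)\int_0^\delta |u_1|^2(|\Omega\chih|^2 + 2|L\phi|^2)(\ub',u_1,\vartheta)\,\D\ub' \cdot |u_1|^{-2} + (\text{errors}),
\end{equation*}
where I would need to transport the lower bound \eqref{conditiontrapped} from $C_{u_0}$ to $C_{u_1}$. This is the key mechanism: the a priori estimates (in particular the $\mathcal{R}$-estimate controlling $\Omega\beta-L\phi\nablas\phi$, hence $\divs(\Omega\chih)$, and the wave-equation estimates) show that $|u_0|^2(|\Omega\chih|^2+2|L\phi|^2)$ at $u_0$ and $|u_1|^2(|\Omega\chih|^2+2|L\phi|^2)$ at $u_1$ differ only by controllable error terms, in the spirit of Remark \ref{aboutomegab-intro}. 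Here \eqref{conditiontrapped1} enters to absorb the contribution of the shift $\varphi(u_1)-\varphi(u_0)$ coming from the $D\varphi$ equation on $\Cb_0$ (recall $\varphi(u)=\varphi(u_0)-\int_{u_0}^u\frac{\Omega_0^2h\psi}{|u'|}\D u'$), which is the reason the $L\phi$ energy on $C_{u_1}$ relates to the data on $C_{u_0}$ only up to that integral term.

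Putting the pieces together: the constant $17$ in \eqref{conditiontrapped} is chosen so that, after subtracting off the error terms (each bounded by a small multiple of $C^2\Omega_0^2(u_0)\delta\mathscr{F}a$ via the a priori estimates and smallness conditions) and the $\varphi$-shift term (bounded via \eqref{conditiontrapped1}), what remains strictly dominates $2|u_1|\Omega_0^2(u_1) = 2C^2\Omega_0^2(u_0)\delta\mathscr{F}a$, forcing $\tr\chi'(\delta,u_1,\vartheta) < 0$ for every $\vartheta\in S^2$. Combined with $\Omega\tr\chib<0$, this makes $S_{\delta,u_1}$ a closed trapped surface. The main obstacle I anticipate is the bookkeeping in the transport of the energy lower bound from $u_0$ to $u_1$: one must show that all the error terms generated when comparing $\int_0^\delta|u|^2(|\Omega\chih|^2+2|L\phi|^2)(\ub',u,\vartheta)\D\ub'$ at $u=u_0$ versus $u=u_1$ — including those from the difference between $L$ and $\widehat L=\Omega^{-1}L$, and from the growth quantity $\mathscr{E}$ — are genuinely small relative to the threshold, so that the sharp numerical constant ($17$, with the factor $h(u_0)$ discussed after Theorem \ref{formationrough}) works; this requires carefully invoking the correct weighted norms from $\mathcal{O},\mathcal{R},\widetilde{\mathcal{E}}$ rather than crude pointwise bounds, and tracking that $\Omega_0(u_1)/\Omega_0(u_0)$ appears only in combinations controlled by \eqref{conditionexistence} and \eqref{auxiliary}.
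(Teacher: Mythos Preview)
Your overall strategy matches the paper's: verify the smallness and auxiliary conditions of Theorem \ref{existencetheorem} from \eqref{condition-CA}--\eqref{def-u1}, then use the a priori estimates to show $\Omega\tr\chib<0$ and transport the energy lower bound from $C_{u_0}$ to $C_{u_1}$ so that the Raychaudhuri integration forces $\tr\chi'<0$. Two concrete points, however, are not handled correctly in your proposal.

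First, your verification of the smallness conditions is confused. You write ``absorbing $\mathscr{W}$ provided $u_1$ is chosen close enough to zero (or one first shrinks $\delta$)'', but $u_1$ is \emph{determined} by \eqref{def-u1} and $\delta$ is given; there is no freedom left. The paper's argument is clean and elementary: from \eqref{def-u1} one gets $C^2\delta|u_1|^{-1}\mathscr{F}a=\Omega_0^2(u_1)/\Omega_0^2(u_0)$, and then the single inequality $x|\log x|\le 1$ for $x\in(0,1]$ applied with $x=\Omega_0^2(u_1)/\Omega_0^2(u_0)$ gives $\Omega_0^2(u_1)/\Omega_0^2(u_0)\le\mathscr{W}^{-1}$, which immediately yields both smallness conditions and the auxiliary condition simultaneously (condition \eqref{conditionexistence} is used only to subordinate the $\mathscr{E}^2$-term to the $\mathscr{F}a$-term).

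Second, and more substantively, you are missing an intermediate step that the paper carries out explicitly: before transporting the energy, one must \emph{improve} the estimates for $\eta,\etab,\nablas\phi$ to remove the $\mathscr{E}$-loss, i.e.\ prove $\|\eta,\nablas\phi\|_{\H^3}\lesssim\delta|u|^{-2}\mathscr{F}\mathcal{A}$ and $\|\etab\|_{\H^3}\lesssim\delta|u|^{-2}\mathscr{F}\mathscr{W}^{1/2}\mathcal{A}$ (no factor of $\mathscr{E}$). This is done by substituting the Codazzi equations into the transport equations for $\eta,\etab$ and running a small bootstrap. Without this improvement the pointwise-in-$\vartheta$ evolution of $|u|^2|\Omega\chih|^2+2|u|^2|L\phi|^2$ along $\Db$ acquires error terms carrying $\mathscr{E}$, and the transport argument does not close with a clean numerical constant. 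You anticipate that $\mathscr{E}$ is an obstacle in your last paragraph, but the resolution is not ``carefully invoking the correct weighted norms'' --- it is to first prove these loss-free estimates. Also, your description of where \eqref{conditiontrapped1} enters is slightly off: in the paper it bounds the borderline contribution $\int_{u_0}^{u}\frac{\Omega_0^2h|\psi|}{|u'|}\cdot|u'||L\phi|\,\D u'$ coming from the $\Omega\tr\chi\,\Lb\phi$ source in $\Db L\phi$, via Cauchy--Schwarz against $I_0^{1/2}$, rather than a $\varphi$-shift.
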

\begin{proof}
First of all, we shall prove the solution exists in the region $0\le\ub\le\delta$, $u_0\le u\le u_1$ by applying Theorem \ref{existencetheorem}. We should verify $\delta$ and $u_1$ defined by \eqref{def-u1} satisfies the smallness and auxiliary conditions \eqref{smallness} and \eqref{auxiliary}. By \eqref{condition-CA}, \eqref{conditionexistence} and \eqref{def-u1}, we have
\begin{align*}
&\max\left\{\Omega_0^2(u_0)\delta|u_1|^{-1}\mathscr{E}^2, C^2\delta|u_1|^{-1}\mathscr{F}\mathcal{A}\right\}\le C^2\delta|u_1|^{-1}\mathscr{F}a\le\frac{\Omega_0^2(u_1)}{\Omega_0^2(u_0)}\le\mathscr{W}^{-1},
\end{align*}
where we use the fact that $x|\log x|\le 1$ for $x\in(0,1]$. This verifies both the smallness conditions and the auxiliary condition, if $C\ge C_1\ge C_0$. As a consequence, the estimates stated in Theorem \ref{existencetheorem} also hold.

The next step is to improve the estimates for $\eta$ and $\nablas\phi$ for up to the third order derivatives. We will prove
\begin{proposition}
Suppose that the assumptions and conclusions in Theorem \ref{existencetheorem} hold, with $C\ge C_1\ge C_0$. Then if $C_1$ is sufficiently large, 
\begin{align*}
\|\nablas\phi\|_{\H^3(\ub,u)},\|\eta\|_{\H^3(\ub,u)},\mathscr{W}^{-\frac{1}{2}}\|\etab\|_{\H^3(\ub,u)}\lesssim \delta|u|^{-2}\mathscr{F}\mathcal{A}.
\end{align*}
\end{proposition}
\begin{proof}
From the equation \eqref{equ-Dnablasphi}, we have
\begin{align}\label{estimate-nablasphi-improve}
\|\nablas\phi\|_{\H^3(\ub,u)}\lesssim\delta|u|^{-1}\|L\phi\|_{\L^\infty_{\ub}\H^4(u)}\lesssim\delta|u|^{-2}\mathscr{F}\mathcal{A}.
\end{align}

We plug two Codazzi equations \eqref{equ-divschih} and \eqref{equ-divschibh} in two transport equations \eqref{equ-Deta} and \eqref{equ-Dbetab}. We have two equations
\begin{align}\label{equ-Deta-improve}
D\eta = \Omega\tr\chi\etab-2L\phi\nablas\phi+\divs(\Omega\chih)-\frac{1}{2}\nablas(\Omega\tr \chi).\\
\label{equ-Dbetab-improve}\Db\etab = \Omega\tr\chib\eta-2\Lb\phi\nablas\phi+\divs(\Omega\chibh)-\frac{1}{2}\nablas(\Omega\tr \chib).
\end{align}
We assume 
\begin{align}\label{bootstrap-etab-improve}
\|\etab\|_{\H^3(\ub,u)}\lesssim C^{\frac{1}{4}}\delta|u|^{-2}\mathscr{F}\mathscr{W}^{\frac{1}{2}}\mathcal{A}.
\end{align}
Then from equation \eqref{equ-Deta-improve}, we have
\begin{equation}\label{estimate-eta-improve}
\begin{split}
\|\eta\|_{\H^3(\ub,u)}\lesssim&\delta\|\Omega\tr\chi\etab,2L\phi\nablas\phi,\nablas(\Omega\tr\chi)\|_{\L^\infty_{\ub}\H^3(u)}+\delta|u|^{-1}\|\Omega\chih\|_{\L^\infty_{\ub}\H^4(u)}\\
\lesssim&\delta|u|^{-1}\mathscr{F}\mathcal{A}\cdot C^{\frac{1}{4}}\delta|u|^{-2}\mathscr{F}\mathscr{W}^{\frac{1}{2}}\mathcal{A}+\delta\cdot|u|^{-2}\mathscr{F}\mathcal{A}\lesssim\delta|u|^{-2}\mathscr{F}\mathcal{A}.
\end{split}
\end{equation}
Then from the equation \eqref{equ-Dbetab-improve} and the above estimate \eqref{estimate-eta-improve}, we have
\begin{equation}\label{estimate-etab-improve}
\begin{split}
\||u|\etab\|_{\H^3(\ub,u)}\lesssim&\||u|^2(\Omega\tr\chib\eta,\Lb\phi\nablas\phi,\nablas(\Omega\tr\chib))\|_{\L^1_{[u_0,u]}\H^3(\ub)}+\||u|\Omega\chibh\|_{\L^1_{[u_0,u]}\H^4(\ub)}\\
\lesssim&\delta|u|^{-1}\mathscr{F}\mathscr{W}^{\frac{1}{2}}\mathcal{A}.
\end{split}
\end{equation}
Then this improves \eqref{bootstrap-etab-improve} when $C_1$ is sufficiently large, which implies that the estimates \eqref{estimate-eta-improve} and \eqref{estimate-etab-improve} hold without assuming \eqref{bootstrap-etab-improve}. We have completed the improved estimates.

\end{proof}
Then we begin to prove that at $S_{\delta,u_1}$, $\tr\chi', \Omega\tr\chib<0$. That $\Omega\tr\chib<0$ is followed easily by, for any $\ub\in[0,\delta], u\in[u_0,u_1]$, $\vartheta\in S^2$,
\begin{align*}
\left|\Omega\tr\chib(\ub,u,\vartheta)+\frac{2}{|u|}\right|\le c\delta|u|^{-2}\mathscr{F}\mathcal{A}\le\frac{cC^{-1}}{|u|}\le\frac{1}{|u|}
\end{align*}
if $C_1$ is sufficiently large.

{\bf In the rest of the proof,  the estimates are derived for any particular $\vartheta\in S^2$} and we only need to prove $\tr\chi'(\delta,u_1,\vartheta)<0$. We integrate the equation \eqref{equ-Dtrchi} along the $u$ curve at any $u\in[u_0,u_1]$, we have
\begin{align}\label{estimate-trchi-formation-step1}
\left|\tr\chi'-\frac{2h}{|u|}\right|\lesssim\delta\sup_{\ub}|\Omega\tr\chi|\sup_{\ub}|\tr\chi'|+\Omega_0^{-2}(u)|u|^{-2}\int_0^{\delta}|u|^2(|\Omega\chih|^2+2|L\phi|^2)\D\ub.
\end{align}
Therefore, we have
\begin{align*}
|\tr\chi'|\lesssim&\frac{1}{|u|}+\delta|u|^{-1}\mathscr{F}\mathcal{A}\sup_{\ub}|\tr\chi'|+\Omega_0^{-2}(u)|u|^{-2}\int_0^{\delta}|u|^2(|\Omega\chih|^2+2|L\phi|^2)\D\ub.
\end{align*}
If $C_1$ is sufficiently large such that the second term on the right hand side can be absorbed by the left hand side, we have
\begin{align}\label{estimate-trchi-formation}
|\tr\chi'|\lesssim\frac{1}{|u|}+\Omega_0^{-2}(u)|u|^{-2}\int_0^{\delta}|u|^2(|\Omega\chih|^2+2|L\phi|^2)\D\ub.
\end{align}
Substitute this back to \eqref{estimate-trchi-formation-step1} we have
\begin{align}\label{estimate-trchi-2h/|u|-formation}
\left|\tr\chi'-\frac{2h}{|u|}\right|\lesssim\delta|u|^{-2}\mathscr{F}\mathcal{A}+\Omega_0^{-2}(u)|u|^{-2}\int_0^{\delta}|u|^2(|\Omega\chih|^2+2|L\phi|^2)\D\ub.
\end{align}

Now we turn to the equations \eqref{equ-chih} and \eqref{equ-DbLphi}, and compute
\begin{equation}\label{equ-Dbchih2}
\begin{split}
&\frac{\partial}{\partial u}(|u|^2|\Omega\chih|^2+2|u|^2|L\phi|^2)
=2|u|^2\langle\Omega^2(\nablas\tensor\eta+\eta\tensor\eta+\nablas\phi\tensor\nablas\phi)-\nablas_b(\Omega\chih),\Omega\chih\rangle\\
&+4|u|(\Omega^2\Deltas\phi+2\Omega^2\langle\eta,\ds\phi\rangle-\nablas_bL\phi)\cdot(|u|L\phi)
-\left(\Omega\tr\chib+\frac{2}{|u|}\right)(|u|^2|\Omega\chih|^2+2|u|^2|L\phi|^2)\\
&-\Omega\tr\chi|u|^2\langle\Omega\chibh,\Omega\chih\rangle-2\Omega\tr\chi|u|\Lb\phi\cdot|u|L\phi,
\end{split}
\end{equation}
where $b$ is a vector field satisfying $D b=-4\Omega^2\zeta^\sharp$, which implies, together with \eqref{estimate-eta-improve}, \eqref{estimate-etab-improve}, and $b(\ub=0)=0$,
\begin{align}\label{estimate-b}
|b|\lesssim\Omega_0^2\delta^2|u|^{-2}\mathscr{F}\mathscr{W}^{\frac{1}{2}}\mathcal{A}.
\end{align}

Using the \eqref{estimate-eta-improve}, \eqref{estimate-etab-improve}, \eqref{estimate-b}, and the Sobolev inequalities, the first two lines of the right hand side can be bounded in $L^\infty$ with improved estimates. We then integrate \eqref{equ-Dbchih2} along the $\ub$ curve (but not the integral curve of $\Db$), and the right hand side should be estimated in $\int_{u_0}^u|\cdot|\D u'=\int_{u_0}^u|\cdot|(\ub,u',\vartheta)\D u'$. The first two terms of the right hand side are estimated by
\begin{align*}
\lesssim\int_{u_0}^u\Omega^2(u')\cdot\delta|u'|^{-2}\mathscr{F}\mathcal{A}\cdot(|u||\Omega\chih|+|u||L\phi|)\D u'.
\end{align*}
The \engordnumber{3} term of the right hand side is estimated by
\begin{align*}
\lesssim\delta\mathscr{F}\mathcal{A}\int_{u_0}^u|u'|^{-2}(|u'|^2|\Omega\chih|^2+2|u'|^2|L\phi|^2)\D u'.
\end{align*}
The \engordnumber{1} term of the \engordnumber{3} line is estimated by, using \eqref{estimate-trchi-formation} to estimate $\Omega\tr\chi$,
\begin{align*}
\lesssim&\delta\mathscr{F}\mathcal{A}\int_{u_0}^u|u'|^{-1}(|u'||\Omega\chih|)\left(\frac{\Omega_0^2(u')}{|u'|}+|u'|^{-2}\int_0^{\delta}|u'|^2(|\Omega\chih|^2+2|L\phi|^2)\D\ub\right)\D u'\\
\lesssim&\delta\mathscr{F}\mathcal{A}\Omega_0^2(u_0)\int_{u_0}^u|u'|^{-2}\cdot|u'||\Omega\chih|\D u'+\delta|u|^{-2}\mathscr{F}^2\mathcal{A}^2\sup_{u_0\le u'\le u}\int_0^{\delta}|u'|^2\left(|\Omega\chih|^2+2|L\phi|^2\right)\D\ub
\end{align*}
The last term should be estimated more carefully. We estimate, using \eqref{estimate-trchi-formation} and \eqref{estimate-trchi-2h/|u|-formation},
\begin{align*}
&\int_{u_0}^u\left|\Omega\tr\chi|u'|\Lb\phi\cdot|u'|L\phi-\frac{2\Omega^2h\psi}{|u'|}|u'|L\phi\right|\D u'\\
\lesssim&\int_{u_0}^u\left|\Omega\tr\chi-\frac{2\Omega^2h}{|u'|}\right|\cdot\psi|u'||L\phi|\D u'+\int_{u_0}^u|\Omega\tr\chi|||u'|\Lb\phi-\psi||u'||L\phi|\D u'\\
\lesssim&\Omega_0^2\delta\mathscr{F}\mathcal{A}\int_{u_0}^u|u'|^{-2}|\psi|\cdot|u'||L\phi|\D u'+\mathscr{F}\mathcal{A}\int_{u_0}^u|u'|^{-2}|\psi|\D u'\sup_{u_0\le u'\le u}\int_0^{\delta}|u'|^2(|\Omega\chih|^2+2|L\phi|^2)\D\ub\\
&+\Omega_0^2\int_{u_0}^u|u'|^{-1}||u'|\Lb\phi-\psi|\cdot|u'||L\phi|\D u'\\
&+\mathscr{F}\mathcal{A}\int_{u_0}^u|u'|^{-2}||u'|\Lb\phi-\psi|\D u'\sup_{u_0\le u'\le u}\int_0^{\delta}|u'|^2(|\Omega\chih|^2+2|L\phi|^2)\D\ub\\
\lesssim&\Omega_0^2\delta\mathscr{F}\mathcal{A}\int_{u_0}^u|u'|^{-2}|\psi|\cdot|u'||L\phi|\D u'+|u|^{-1}\mathscr{F}\mathscr{W}^{\frac{1}{2}}\mathcal{A}\sup_{u_0\le u'\le u}\int_0^{\delta}|u'|^2(|\Omega\chih|^2+2|L\phi|^2)\D\ub\\
&+\Omega_0^2\int_{u_0}^u|u'|^{-1}||u'|\Lb\phi-\psi|\cdot|u'||L\phi|\D u'+\delta|u|^{-2}\mathscr{F}^2\mathcal{A}^2\sup_{u_0\le u'\le u}\int_0^{\delta}|u'|^2(|\Omega\chih|^2+2|L\phi|^2)\D\ub.
\end{align*}
Integrating \eqref{equ-Dbchih2}, using the above all estimates, for some universal $c$,
\begin{align*}
&\pm\left\{[|u|^2|\Omega\chih|^2+2|u|^2|L\phi|^2](\ub,u,\vartheta)-[|u|^2|\Omega\chih|^2+2|u|^2|L\phi|^2](\ub,u_0,\vartheta)\right\}\\
\le&c|u|^{-1}\mathscr{F}\mathcal{A}\mathscr{W}^{\frac{1}{2}}\sup_{u_0\le u'\le u}\int_0^{\delta}[|u'|^2(|\Omega\chih|^2+2|L\phi|^2)](\ub',u,\vartheta)\D\ub'\\
&+c\delta\mathscr{F}\mathcal{A}\int_{u_0}^u|u'|^{-2}(|u'|^2|\Omega\chih|^2+2|u'|^2|L\phi|^2)(\ub,u',\vartheta)\D u'+\int_{u_0}^u\frac{4\Omega_0^2h|\psi|}{|u'|}\cdot|u'||L\phi|(\ub,u',\vartheta)\D u'\\
&+c\Omega_0^2(u_0)\delta\mathscr{F}\mathcal{A}\int_{u_0}^u|u'|^{-2}(1+|\psi|)\cdot(|u'||L\phi|+|u'||\Omega\chih|)(\ub,u',\vartheta)\D u'\\
&+c\Omega_0^2(u_0)\int_{u_0}^u|u'|^{-1}||u'|\Lb\phi(\ub,u',\vartheta)-\psi(u')|\cdot|u'||L\phi|(\ub,u',\vartheta)\D u'.
\end{align*}
Now integrate the above inequality over $\ub$, and assume, for all $u\in[u_0,u_1]$,
\begin{align}\label{bootstrap-energy}
\int_0^{\delta}[|u|^2(|\Omega\chih|^2+2|L\phi|^2)](\ub',u,\vartheta)\D\ub'\le2\int_0^{\delta}[|u'|^2(|\Omega\chih|^2+2|L\phi|^2)](\ub',u_0,\vartheta)\D\ub'\triangleq 2I_0(\vartheta),
\end{align}
we have, for $C_1$ sufficiently large, using \eqref{conditiontrapped} and \eqref{conditiontrapped1},
\begin{align*}
&\left|\int_0^\delta[|u|^2|\Omega\chih|^2+2|u|^2|L\phi|^2](\ub',u,\vartheta)\D\ub'-I_0(\vartheta)\right|\\
\le& cC^{-1}\cdot (I_0(\vartheta)+\Omega_0^2(u_0)\delta^{\frac{1}{2}}I_0^{\frac{1}{2}}(\vartheta))+4\delta^{\frac{1}{2}}I_0^{\frac{1}{2}}(\vartheta)\int_{u_0}^u\frac{\Omega_0^2h|\psi|}{|u'|}\D u'\le\frac{1}{2}I_0(\vartheta).\end{align*}
This improves \eqref{bootstrap-energy} and therefore holds without assuming \eqref{bootstrap-energy}. On the other hand, we derive that for any $u\in[u_0,u_1]$,
\begin{align*}
\int_0^\delta[|u|^2|\Omega\chih|^2+2|u|^2|L\phi|^2](\ub',u,\vartheta)\D\ub'\ge\frac{1}{2}I_0(\vartheta)
\end{align*}
From this inequality, \eqref{conditiontrapped} and \eqref{def-u1}, we have, in particular,
\begin{align}\label{conditiontrapped3}
\int_0^\delta[|u_1|^2|\Omega\chih|^2+2|u_1|^2|L\phi|^2](\ub',u_1,\vartheta)\D\ub'\ge\frac{17}{2}\Omega_0^2(u_1)|u_1|.
\end{align}

Now we integrate again the equation \eqref{equ-Dtrchi} along $\ub$ curve at $u=u_1$, using the estimate \eqref{estimate-Omega}, we have, for $C_1$ sufficiently large,
\begin{align*}
\tr\chi'(\delta,u_1,\vartheta)-\frac{2h(u_1)}{|u_1|}\le&-\int_0^{\delta}\Omega^{-2}(|\Omega\chih|^2+2|L\phi|^2)(\ub',u_1,\vartheta)\D\ub'\\
\le&-\frac{1}{4}\Omega_0^{-2}(u_1)\int_0^{\delta}(|\Omega\chih|^2+2|L\phi|^2)(\ub',u_1,\vartheta)\D\ub'\\
\le&-\frac{17}{16}\cdot\frac{2}{|u_1|}\end{align*}
which implies $\tr\chi'(\delta,u_1,\vartheta)<0$.

\end{proof}

As a special case, we have
\begin{theorem}\label{formationoftrappedsurfaces1}
Consider the characteristic initial value problem as in Theorem \ref{existencetheorem}. Suppose that the initial data on $\Cb_0$ is spherically symmetric with $\Omega_0(u_0)\le1$ and the data on $C_{u_0}$ obeys
 \begin{equation*}
\max\left\{\sup_{u_0\le u\le u_1}|\varphi(u)|,|u_0|\sup_{0\le\ub\le\delta}\left(\|\Omega\chih\|_{\H^7(\ub,u_0)}+\|\omega, L\phi\|_{\H^5(\ub,u_0)}\right)\right\}\le\Omega_0^2(u_0)a\left|\log\frac{|u_1|}{|u_0|}\right|
 \end{equation*}
for some $a\ge 1$. 
Then the smooth solution of the Einstein-scalar field equations exists for $0\le\ub\le\delta$, $u_0\le u\le u_1$ where $u_1$ is defined by
\begin{equation}\label{def-u1'}
\Omega^2_0(u_1)|u_1|=C^2\Omega_0^4(u_0)\delta a\left|\log\frac{|u_1|}{|u_0|}\right|
\end{equation}
for some $C\ge C_1$ where $C_1$ is given in Theorem \ref{formationoftrappedsurfaces}. If in addition
\begin{equation}\label{conditiontrapped'}
\inf_{\vartheta\in S^2}\int_0^{\delta}|u_0|^2(|\Omega\chih|^2+2|L\phi|^2)(\ub',u_0,\vartheta)\D\ub'\ge  17C^2\Omega_0^4(u_0)\delta a\left|\log\frac{|u_1|}{|u_0|}\right|,
\end{equation}
and $C\ge \max\{C_1,4\}$, then the sphere $S_{\delta,u_1}$ is a closed trapped surface.
\end{theorem}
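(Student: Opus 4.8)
The plan is to obtain Theorem \ref{formationoftrappedsurfaces1} as a direct corollary of Theorem \ref{formationoftrappedsurfaces} by making an appropriate choice of the function $\mathscr{F}$ and then verifying that all hypotheses of Theorem \ref{formationoftrappedsurfaces} reduce to the hypotheses stated here. The natural choice is
\[
\mathscr{F}=\mathscr{F}(\delta,u_0,u_1):=\Omega_0^2(u_0)\left|\log\frac{|u_1|}{|u_0|}\right|,
\]
which is $\ge 1$ precisely because of the normalization $\Omega_0(u_0)\le1$ together with the requirement (implicit in the definition of $u_1$ below) that $\Omega_0^2(u_0)\left|\log\frac{|u_1|}{|u_0|}\right|\ge1$. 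With this choice, the bound imposed on the initial data in the hypothesis of Theorem \ref{formationoftrappedsurfaces1} is exactly the statement that $\mathcal{A}\le a$, where $\mathcal{A}$ is the initial-data quantity defined in \eqref{def-A}; this is what makes condition \eqref{condition-CA} hold. Also the definition \eqref{def-u1'} of $u_1$ is precisely \eqref{def-u1} after substituting $\mathscr{F}=\Omega_0^2(u_0)\left|\log\frac{|u_1|}{|u_0|}\right|$, so the two existence regions coincide, and likewise \eqref{conditiontrapped'} becomes exactly \eqref{conditiontrapped}.

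The substantive steps are then the verification of the two remaining conditions of Theorem \ref{formationoftrappedsurfaces}, namely \eqref{conditionexistence} and \eqref{conditiontrapped1}. For \eqref{conditionexistence}, I would unwind the definition of $\mathscr{E}$: by its definition \eqref{def-E-intro}/\eqref{def-A}, either $\mathscr{E}=1$, in which case $\Omega_0^2(u_0)\mathscr{F}a\ge \Omega_0^2(u_0)\ge C^{-2}\Omega_0^4(u_0)=C^{-2}\Omega_0^4(u_0)\mathscr{E}^2$ trivially (using $\mathscr{F}a\ge1$ and $\Omega_0\le1$, $C\ge1$), or $\mathscr{E}=\mathscr{F}^{-1}\mathcal{A}^{-1}\||u_0|(|u_0|\nablas)L\phi\|_{\L^2_{[0,\delta]}\H^4(u_0)}\left|\log\frac{|u_1|}{|u_0|}\right|^{1/2}$. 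In the latter case one uses the initial bound on $L\phi$ — namely $|u_0|\sup_{\ub}\|L\phi\|_{\H^5(\ub,u_0)}\le\Omega_0^2(u_0)a\left|\log\frac{|u_1|}{|u_0|}\right|=\mathscr{F}a$ — which via a scale-invariant Hölder-in-$\ub$ estimate (as in the remark following Theorem \ref{existencetheorem}, the initial bound $\||u_0|(|u_0|\nablas)L\phi\|_{\L^2_{[0,\delta]}\H^4(u_0)}\lesssim\mathscr{F}\mathcal{A}$) gives $\mathscr{E}\lesssim\left|\log\frac{|u_1|}{|u_0|}\right|^{1/2}$, so $\Omega_0^4(u_0)\mathscr{E}^2\lesssim\Omega_0^4(u_0)\left|\log\frac{|u_1|}{|u_0|}\right|=\Omega_0^2(u_0)\mathscr{F}$, and absorbing the universal constant into $C^{-2}$ (i.e. taking $C_1$ large enough) yields \eqref{conditionexistence}. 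For \eqref{conditiontrapped1} I would use the a priori estimate \eqref{estimate-varphi}, rewritten as
\[
\int_{u_0}^{u_1}\frac{\Omega_0^2h|\psi|}{|u'|}\D u'\le\Omega_0^2(u_0)\left|\log\frac{|u_1|}{|u_0|}\right|^{\frac12}=\mathscr{F}^{1/2}\Omega_0(u_0)\le\mathscr{F}\quad(\text{using }\mathscr{F}\ge1,\ \Omega_0(u_0)\le1),
\]
whence the right-hand side of \eqref{conditiontrapped1} is $\le 16C^{-2}\mathscr{F}^2\le 16C^{-2}\mathscr{F}^2 a^2$, and since $\mathscr{F}a\ge1$ this is $\le 16C^{-2}\Omega_0^2(u_0)\mathscr{F}a\cdot(\Omega_0^{-2}(u_0)\mathscr{F}a)$; one then needs $\Omega_0^2(u_0)\mathscr{F}a\ge 16 C^{-2}(\mathscr F^{1/2}\Omega_0(u_0))^2=16C^{-2}\mathscr F\Omega_0^2(u_0)$, i.e. $a\ge 16C^{-2}$, which holds since $a\ge1$ and $C\ge C_1\ge4$. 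I should double-check the exact powers here, since the estimate \eqref{estimate-varphi} carries a factor $\Omega_0^2(u_0)$ rather than $\Omega_0^2(u')$; keeping careful track of which $\Omega_0$ is evaluated where is the one place where a slip is easy.

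Once \eqref{condition-CA}, \eqref{conditionexistence}, \eqref{conditiontrapped} and \eqref{conditiontrapped1} are all in place, Theorem \ref{formationoftrappedsurfaces} applies verbatim and gives both the existence of the smooth solution on $0\le\ub\le\delta$, $u_0\le u\le u_1$ and that $S_{\delta,u_1}$ is a closed trapped surface, which is the assertion of Theorem \ref{formationoftrappedsurfaces1}; the extra hypothesis $C\ge\max\{C_1,4\}$ is exactly what is consumed in the verification of \eqref{conditiontrapped1} above. The main obstacle, such as it is, is not conceptual but bookkeeping: one must confirm that with the chosen $\mathscr{F}$ the quantity $\mathscr{E}$ genuinely satisfies $\mathscr F^{-1}\mathcal A^{-1}\||u_0|(|u_0|\nablas)L\phi\|_{\L^2_{[0,\delta]}\H^4(u_0)}\lesssim 1$ uniformly — i.e. that the $\H^5$ bound on $L\phi$ on $C_{u_0}$ really does control the $\L^2_{\ub}\H^4$ norm with the claimed weights — and that the $\mathscr{W}$-type logarithmic factor hidden in $\mathscr{E}$ does not degrade the inequality \eqref{conditionexistence}; since the loss there is only $\left|\log\frac{|u_1|}{|u_0|}\right|^{1/2}$ and it is compensated exactly by the $\left|\log\frac{|u_1|}{|u_0|}\right|$ sitting in $\mathscr{F}$, this goes through, but it is the step that must be written with care.
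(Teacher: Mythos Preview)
Your approach is correct and essentially identical to the paper's: choose $\mathscr{F}=\Omega_0^2(u_0)\left|\log\frac{|u_1|}{|u_0|}\right|$, observe that the hypothesis of the theorem gives $\mathcal{A}\le a$ and that \eqref{def-u1'}, \eqref{conditiontrapped'} become \eqref{def-u1}, \eqref{conditiontrapped}, then verify \eqref{conditionexistence} via $\mathscr{E}^2\le\left|\log\frac{|u_1|}{|u_0|}\right|=\Omega_0^{-2}(u_0)\mathscr{F}$ and \eqref{conditiontrapped1} via \eqref{estimate-varphi}. Two small clean-ups: the bound $\||u_0|(|u_0|\nablas)L\phi\|_{\L^2_{[0,\delta]}\H^4(u_0)}\le\mathscr{F}\mathcal{A}$ is exact (no constant to absorb), and the hypothesis is $C\ge\max\{C_1,4\}$ rather than $C_1\ge4$; with this your verification of \eqref{conditiontrapped1} reduces directly to $16C^{-2}\le1\le a$ without the detour through $16C^{-2}\mathscr{F}^2$.
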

\begin{proof}
To conclude that $S_{\delta,u_1}$ is a closed trapped surface, we only need to verify \eqref{conditionexistence} and \eqref{conditiontrapped1} in Theorem \ref{formationoftrappedsurfaces} for $\mathscr{F}(u_0,u_1)=\Omega_0^2(u_0)\left|\log\frac{|u_1|}{|u_0|}\right|$. Indeed, by the definition of $\mathcal{A}$, we have
\begin{align*}
\||u_0|(|u_0|\nablas)L\phi\|_{\L^2_{[0,\delta]}\H^4(u_0)}\le \mathscr{F}\mathcal{A},
\end{align*}
and therefore
\begin{align*}
\mathscr{E}^2\le\left|\log\frac{|u_1|}{|u_0|}\right|\le C^2\Omega_0^{-2}(u_0)\mathscr{F}\mathcal{A}.
\end{align*}
This verifies \eqref{conditionexistence}. On the other hand, if $C\ge4$,
\begin{align*}
\int_{u_0}^{u_1}\frac{\Omega_0^2h|\psi|}{|u'|}\D u'\le\Omega_0^2(u_0)\left|\log\frac{|u_1|}{|u_0|}\right|^{\frac{1}{2}}\le\frac{C}{4}\Omega_0(u_0)\sqrt{\mathscr{F}\mathcal{A}}.
\end{align*}
This verifies \eqref{conditiontrapped1}.
\end{proof}

\section{Instability theorems}\label{Secinstability}

In this section, we will prove the following instability theorem.
\begin{theorem}\label{instabilitytheorem}
Suppose that the smooth initial data given on $C_{u_0}\bigcup \Cb_0$ satisfies the following properties.  The data on $\Cb_0$ is spherically symmetric, and $\frac{m}{r}$ does not tend to zero when $r\to0^+$. The data on $C_{u_0}$, given on a fixed intervel, say $\ub\in[0,1]$, is arbitrary. Then the following statements are true:
\\
\indent \underline{Case 1}: If $\varphi(u)$ is not bounded as $u\to 0^-$, then there exists two sequences $\delta_n\to0^+$, $u_{0,n}\to0^-$ such that the smooth solution of the Einstein-scalar field equaitons exists for $0\le\ub\le \delta_n$, $u_0\le u\le u_{0,n}$ and the assumptions of Theorem \ref{formationoftrappedsurfaces1} hold for $\delta=\delta_n$, $u_0=u_{0,n}$ for all $n$. Consequently,  there exists a sequence $u_{1,n}\to0^-$ such that the solution remains smooth for $0\le\ub\le\delta_n$, $u_{0,n}\le u_0\le u_{1,n}$ and $S_{\delta_n,u_{1,n}}$ are closed trapped surfaces for all $n$.
\\
\indent\underline{Case 2}: Suppose that $\varphi(u)$ is bounded for all $u\in[u_0,0)$. Denote
\begin{align}\label{def-f}
f(\widetilde{u};\gamma)=\frac{1}{\tdelta(\tu;\gamma)}\inf_{\vartheta\in S^2}\int_0^{\tdelta(\tu;\gamma)}(|u_0|^2|\Omega\chih(\ub,u_0,\vartheta)|^2+||u_0|L\phi(\ub,u_0,\vartheta)+(\varphi(\tu)-\varphi(u_0))|^2)\D\ub
\end{align}
where $\tdelta(\tu;\gamma)$ is a function of $\tu$ defined through
\begin{align}
\label{u*tilde}\widetilde{\Omega}^{2-\gamma}=&17C^2\widetilde{\Omega}^4\left|\log\frac{|\widetilde{u}_*|}{|\widetilde{u}|}\right|,\\
\label{deltatilde}\widetilde{\Omega}_*^2|\widetilde{u}_*|=&C^2\widetilde{\Omega}^4\widetilde{\delta}\left|\log\frac{|\widetilde{u}_*|}{|\widetilde{u}|}\right|,
\end{align} 
where $\widetilde{\Omega}=\Omega_0(\tu)$ and $\widetilde{\Omega}_*=\Omega_0(\tu_*)$, $\gamma\in(0,2)$ and $C\ge \max\{C_1,4\}$ where $C_1$ is given in Theorem \ref{formationoftrappedsurfaces}. Then there exists some $\varepsilon_0>0$ such that if we can find some $\tu$ such that $|\tu|<\varepsilon_0$ and
\begin{align}\label{instabilitycondition}
\Omega_0^{\gamma-2}(\widetilde{u})f(\tu;\gamma)\ge2
\end{align}
for some $\gamma\in(0,2)$, then the smooth solution of the Einstein-scalar field equations exists for $0\le\delta\le\tdelta$, $u_0\le u\le \tu$ and the assumptions of Theorem \ref{formationoftrappedsurfaces1} hold for $\delta=\tdelta$, $u_0=\tu$. Consequently, there exists some $\tu_*$ such that the solution remains smooth for $0\le\ub\le\tdelta$, $\tu\le u\le \tu_*$ and $S_{\tdelta,\tu_*}$ is a closed trapped surface.
\end{theorem}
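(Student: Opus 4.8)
The plan is to deduce the existence and the trapped-surface conclusions of Theorem \ref{instabilitytheorem} from the formation-of-trapped-surfaces theorem, Theorem \ref{formationoftrappedsurfaces1}, by verifying its hypotheses with the choices dictated by the setup on $\Cb_0$. Throughout, the key input is Lemma \ref{Omega_0to0}: since $\frac{2m}{r}\nrightarrow 0$, the singularity at the vertex is present, so $\Omega_0$ (and $\Omega_0^2h$) decreases monotonically to $0$ as $u\to0^-$, and, via \eqref{Lbphi}, $\int_{u_0}^u\frac{\psi^2}{|u'|}\D u'=-\log\frac{\Omega_0^2(u)}{\Omega_0^2(u_0)}\to+\infty$. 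Because the data on $C_{u_0}$ (hence all the norms $\|\Omega\chih\|_{\H^7(\ub,u_0)}$, $\|\omega,L\phi\|_{\H^5(\ub,u_0)}$ and $\sup|\varphi|$ over the \emph{bounded} interval $[u_0,u_1]$ one finally uses) are fixed finite quantities, while the right-hand side of the initial bound in Theorem \ref{formationoftrappedsurfaces1} is $\Omega_0^2(u_0)a|\log\frac{|u_1|}{|u_0|}|$ with $\Omega_0^2(u_0)\to0$ but $|\log\frac{|u_1|}{|u_0|}|\to\infty$, there is tension: one must choose $u_{0,n}$ (resp.\ $\tu$) and $\delta_n$ (resp.\ $\tdelta$) so that the logarithmic factor wins. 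This is exactly what the definitions \eqref{u*tilde}--\eqref{deltatilde} are engineered to do, and the main work is bookkeeping these scales.

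For \underline{Case 1}, the hypothesis is that $\varphi(u)$ is unbounded along $u\to0^-$. First I would pick a sequence $u_{0,n}\to0^-$ along which $|\varphi(u_{0,n})|\to\infty$. Take $a=a_n$ to be (a fixed multiple of) the finite quantity $\max\{\sup_{[u_0,0)}\text{(data norms on }C_{u_0}), 1\}$ rescaled by $\Omega_0^{-2}(u_{0,n})$—more precisely, choose $u_{1,n}$ with $|u_{1,n}|<|u_{0,n}|$ close enough to $0$ that $\Omega_0^2(u_{0,n})|\log\frac{|u_{1,n}|}{|u_{0,n}|}|$ exceeds the (fixed) data bound; this is possible since $|\log\frac{|u_{1,n}|}{|u_{0,n}|}|\to\infty$ as $u_{1,n}\to0^-$ while $\Omega_0^2(u_{0,n})>0$. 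With this $a_n$, the initial-bound hypothesis of Theorem \ref{formationoftrappedsurfaces1} holds. Then I would define $u_{1,n}$ (shrinking it if needed) by the defining relation \eqref{def-u1'}, check $C\ge C_1$, and invoke Theorem \ref{formationoftrappedsurfaces1}: the solution exists for $0\le\ub\le\delta_n$, $u_{0,n}\le u\le u_{1,n}$. Finally the trapped-surface lower bound \eqref{conditiontrapped'}: here one uses $|\varphi(u_{0,n})|\to\infty$ together with the identity $|u_0|L\phi|_{\Cb_0}=\varphi$, so that the zeroth-order term $||u_0|L\phi(\ub,u_0,\vartheta)|^2$, if one were allowed to perturb on $C_{u_0}$ coherently with $\varphi$—but actually in Case 1 of the present theorem the data on $C_{u_0}$ is \emph{arbitrary fixed}, so the relevant point is more subtle: one must arrange $\delta_n\to0^+$ so that $\frac{1}{\delta_n}\int_0^{\delta_n}|u_0|^2(|\Omega\chih|^2+2|L\phi|^2)(\ub,u_0,\vartheta)\D\ub\to |u_0|^2(|\Omega\chih|^2+2|L\phi|^2)(0,u_0,\vartheta)$, and since at $\ub=0$ we have $\chih=0$ but $L\phi|_{\ub=0}=\varphi(u_{0,n})/|u_{0,n}|$, the integral is $\approx 2|\varphi(u_{0,n})|^2\delta_n$, which dominates $17C^2\Omega_0^4(u_{0,n})\delta_n a_n|\log\frac{|u_{1,n}|}{|u_{0,n}|}|$ once $|\varphi(u_{0,n})|$ is large relative to $\Omega_0^2(u_{0,n})a_n|\log\cdots|$—and this last comparison needs $a_n,|\log\cdots|$ chosen at a controlled rate, which is the delicate ordering of quantifiers. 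So the order is: choose $u_{0,n}$; choose $u_{1,n}$ (hence $a_n$) just barely large enough to beat the fixed data norm; then choose $\delta_n$ small enough that both the Riemann-sum approximation is accurate and \eqref{def-u1'} is consistent; verify \eqref{conditiontrapped'} using unboundedness of $\varphi$.

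For \underline{Case 2}, $\varphi$ is bounded, so $\mathscr{F}(u_0,u_1)=\Omega_0^2(u_0)|\log\frac{|u_1|}{|u_0|}|$ (the standard choice, which absorbs $\sup|\varphi|$) and the $\gamma$-dependent scales $\tdelta(\tu;\gamma)$, $\tu_*$ are \emph{defined} by \eqref{u*tilde}--\eqref{deltatilde}; note these are independent of the data on $C_{u_0}$, as asserted. I would first check these equations have a unique solution for $|\tu|$ small: \eqref{u*tilde} determines $|\tu_*|$ as a function of $|\tu|$ via $\widetilde\Omega^{2-\gamma}=17C^2\widetilde\Omega^4|\log\frac{|\tu_*|}{|\tu|}|$, i.e.\ $|\log\frac{|\tu_*|}{|\tu|}|=\frac{1}{17C^2}\widetilde\Omega^{-\gamma-2}\to\infty$ since $\gamma>0$ and $\widetilde\Omega\to0$; monotonicity of $\Omega_0$ and the implicit function theorem give a well-defined decreasing $\tu_*=\tu_*(\tu)$ with $\tu_*\to0^-$; then \eqref{deltatilde} defines $\tdelta$. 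Then, assuming \eqref{instabilitycondition} $\Omega_0^{\gamma-2}(\tu)f(\tu;\gamma)\ge2$ for some $|\tu|<\varepsilon_0$: existence for $0\le\ub\le\tdelta$, $u_0\le u\le\tu$ follows from Theorem \ref{existencetheorem} (or Theorem \ref{formationrough}) once one checks the smallness/auxiliary conditions, which hold because $\tdelta$ is chosen tiny and $|\log|$ factors are tame ($x|\log x|\le1$); then existence for $\tu\le u\le\tu_*$ and the trapped-surface conclusion follow from Theorem \ref{formationoftrappedsurfaces1} applied with $u_0=\tu$, $u_1=\tu_*$, $a=$ (the fixed, $\tu$-independent data bound divided by $\mathscr{F}(\tu,\tu_*)$). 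The crucial verification is the lower bound \eqref{conditiontrapped'}: one writes $L\phi(\ub,u,\vartheta)=\frac{\varphi(u)}{|u|}+(L\phi-\frac{\varphi}{|u|})$, and since the solution has been shown to exist up to $u=\tu$ with good estimates, $|u|L\phi(\ub,\tu,\vartheta)\approx |u_0|L\phi(\ub,u_0,\vartheta)+(\varphi(\tu)-\varphi(u_0))$ up to errors that are $o(1)$ relative to the borderline scale (this is precisely the content of Remark \ref{aboutomegab-intro} — the relation holds in the $L$ frame, not $\widehat L$ — and of the estimates \eqref{estimate-Lbphi} and \eqref{estimate-Lbphi-intro}); likewise $|u|\Omega\chih(\ub,\tu)\approx|u_0|\Omega\chih(\ub,u_0)$. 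Hence $\int_0^{\tdelta}|\tu|^2(|\Omega\chih|^2+2|L\phi|^2)(\ub,\tu,\vartheta)\D\ub\approx\tdelta\, f(\tu;\gamma)$, and \eqref{instabilitycondition} together with \eqref{deltatilde} turns this into $\ge 17C^2\widetilde\Omega^4\tdelta|\log\frac{|\tu_*|}{|\tu|}|$, which is exactly \eqref{conditiontrapped'} with $u_0=\tu$, $u_1=\tu_*$.

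\textbf{Main obstacle.} The hard part is twofold and largely the same in both cases: (i) tracking all the scale parameters ($\Omega_0$, $\tdelta$, $\tu_*$, $a$, the logarithm) so that, although $\Omega_0(u_{0,n})\to0$ makes the a priori existence bound deteriorate, the logarithmic blue-shift factor $|\log\frac{|u_1|}{|u_0|}|$ (equivalently $-\log(\Omega_0^2h)$, which $\to\infty$) compensates and one can still push the solution deep enough for a trapped surface to form — this is the whole reason the theorem is stated with the $\Omega_0$-weighted quantities and the implicitly-defined $u_1$; and (ii) comparing the ``incoming'' energy at $u=u_1$ (resp.\ $u=\tu_*$) with its initial value on $C_{u_0}$ along the intermediate ``corner'' $u=u_0$ (resp.\ $u=\tu$) — i.e.\ propagating $\Omega\chih$ and $L\phi$ in the $L$ frame (not $\widehat L=\Omega^{-1}L$, because $\omegab$ on $\Cb_0$ is uncontrolled), which forces the slightly awkward combination $|u_0|L\phi+(\varphi(\tu)-\varphi(u_0))$ appearing in \eqref{def-f}. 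The estimate \eqref{estimate-varphi} shows this combination is the sharp one (the $\varphi$-shift is at most $\Omega_0^2(u_0)|\log|^{1/2}$, comparable to the borderline scale), so one genuinely needs the refined a priori estimates of Section \ref{APrioriEstimate}, not merely boundedness. Once these two points are handled, the existence and trapped-surface statements are immediate applications of Theorems \ref{existencetheorem} and \ref{formationoftrappedsurfaces1}.
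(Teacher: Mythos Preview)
Your Case 2 outline is essentially correct and matches the paper's approach: first solve from $u_0$ to $\tu$ via Theorem~\ref{existencetheorem} (with $\widetilde{\mathscr{F}}=\max\{A,\sup|\varphi|\}$), propagate $|u|L\phi$ and $|u||\Omega\chih|$ in the $L$ frame so that on $C_{\tu}$ they are close to $|u_0|L\phi+(\varphi(\tu)-\varphi(u_0))$ and $|u_0|\Omega\chih$, and then apply Theorem~\ref{formationoftrappedsurfaces1} from $\tu$ to $\tu_*$. You also correctly identify the reason the combination $|u_0|L\phi+(\varphi(\tu)-\varphi(u_0))$ appears in $f$.

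Your Case 1, however, has a genuine gap: you conflate the fixed cone $C_{u_0}$ with the cone $C_{u_{0,n}}$. The integral in \eqref{conditiontrapped'} must be evaluated on the cone where Theorem~\ref{formationoftrappedsurfaces1} is launched, namely $C_{u_{0,n}}$; the data there is \emph{not} given --- it must be obtained by first solving the equations from $u_0$ to $u_{0,n}$ (a separate existence step, with its own choice $\mathscr{F}_n=|\varphi_n|$). Your sentence ``$L\phi|_{\ub=0}=\varphi(u_{0,n})/|u_{0,n}|$'' is true on $\Cb_0$ at $u=u_{0,n}$, but you then insert it into the integral written at $u_0$, which is inconsistent. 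The paper's logic is: (i) pick $u_{0,n}$ so that $|\varphi_n|=\sup_{[u_0,u_{0,n}]}|\varphi|$; (ii) define $u_{1,n},\delta_n$ by the relations $\tfrac{1}{2}\varphi_n^2=17C^2\Omega_n^4|\log\tfrac{|u_{1,n}|}{|u_{0,n}}|$ and $\Omega_{1,n}^2|u_{1,n}|=C^2\Omega_n^4\delta_n|\log\cdots|$; (iii) verify the smallness/auxiliary conditions for Theorem~\ref{existencetheorem} on $[u_0,u_{0,n}]$, which holds because $\delta_n\lesssim|u_{0,n}|\exp(-c\Omega_n^{-4}\varphi_n^2)$; (iv) propagate the $\H^7$ data norms to $C_{u_{0,n}}$ (this needs three extra derivatives on $C_{u_0}$) and check the resulting $\mathcal{A}_{1,n}=1$; (v) propagate $|u|L\phi$ so that $||u_{0,n}|L\phi(\ub,u_{0,n},\vartheta)|\ge\tfrac{1}{2}|\varphi_n|$, giving \eqref{conditiontrapped'} at $u_{0,n}$ directly from the defining relation in (ii). Your ordering (``choose $u_{1,n}$ just barely large enough to beat the fixed data norm'') leaves the lower bound uncoupled from $\varphi_n$, and your attempt to get the lower bound by a Riemann-sum limit on $C_{u_0}$ cannot work: at $u_0$ the value $L\phi(0,u_0,\vartheta)=\varphi(u_0)/|u_0|$ is fixed and does not grow.
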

We have the following direct corollary.
\begin{corollary}\label{instabilitycorollary}
If in addition to the assumptions of \ref{instabilitytheorem} in Case 2, we have
\begin{align*}
\limsup_{\tu\to0^-}\Omega_0^{\gamma-2}(\widetilde{u})f(\tu;\gamma)>2
\end{align*}
for some $\gamma\in(0,2)$, then there exist two sequences $\tdelta_n\to0^+$, $\tu_{0,n}\to0^-$, such that the smooth solution of the Einstein-scalar field equations exists for $0\le\delta\le\tdelta_n$, $u_0\le u\le \tu_{0,n}$ and the assumptions of Theorem \ref{formationoftrappedsurfaces1} hold for $\delta=\tdelta_n$, $u_0=\tu_{0,n}$ for all $n$. Consequently, there exists a sequence $\tu_{1,n}\to0^-$ such that the solution remains smooth for $0\le\ub\le\tdelta_n$, $\tu_{0,n}\le u\le \tu_{1,n}$ and $S_{\tdelta,\tu_{1,n}}$ are closed trapped surfaces for all $n$. 

\end{corollary}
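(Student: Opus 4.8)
\textbf{Proof proposal for Corollary \ref{instabilitycorollary}.}

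The plan is to deduce the corollary from Case 2 of Theorem \ref{instabilitytheorem} by a straightforward extraction of sequences, the only subtlety being the interaction between the hypothesis $\limsup_{\tu\to0^-}\Omega_0^{\gamma-2}(\tu)f(\tu;\gamma)>2$ and the threshold $|\tu|<\varepsilon_0$ appearing in Theorem \ref{instabilitytheorem}. First I would fix the value of $\gamma\in(0,2)$ for which the $\limsup$ strictly exceeds $2$, and let $\varepsilon_0>0$ be the constant furnished by Theorem \ref{instabilitytheorem} for this $\gamma$. By the definition of $\limsup$, there is a number $\lambda>2$ and a sequence $\tu^{(n)}\to0^-$ along which $\Omega_0^{\gamma-2}(\tu^{(n)})f(\tu^{(n)};\gamma)\to\lambda>2$; discarding finitely many terms, we may assume $\Omega_0^{\gamma-2}(\tu^{(n)})f(\tu^{(n)};\gamma)\ge2$ and $|\tu^{(n)}|<\varepsilon_0$ for all $n$. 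Since $\tu^{(n)}\to0^-$, passing to a subsequence we may further assume the $\tu^{(n)}$ are strictly increasing towards $0$.

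Next I would apply Case 2 of Theorem \ref{instabilitytheorem} with $\tu=\tu^{(n)}$ for each $n$. This directly yields, for each $n$, that setting $\tu_{0,n}:=\tu^{(n)}$ and $\tdelta_n:=\tdelta(\tu^{(n)};\gamma)$, the smooth solution exists for $0\le\delta\le\tdelta_n$, $u_0\le u\le\tu_{0,n}$, that the hypotheses of Theorem \ref{formationoftrappedsurfaces1} are verified with $\delta=\tdelta_n$, $u_0=\tu_{0,n}$, and that there is $\tu_{1,n}$ with the smooth solution persisting on $0\le\ub\le\tdelta_n$, $\tu_{0,n}\le u\le\tu_{1,n}$ and $S_{\tdelta_n,\tu_{1,n}}$ a closed trapped surface. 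It then remains only to check the two convergence claims $\tdelta_n\to0^+$ and $\tu_{1,n}\to0^-$. For $\tdelta_n\to0^+$: from Lemma \ref{Omega_0to0} we have $\Omega_0(\tu^{(n)})\to0$ (the vertex of $\Cb_0$ is singular since $\tfrac{m}{r}\nrightarrow0$), hence $\tOmega_*\to0$ as well because $\tu_*^{(n)}$ lies between $\tu^{(n)}$ and $0$ and $\Omega_0$ is monotonically decreasing; then equations \eqref{u*tilde}--\eqref{deltatilde} force $\tdelta_n=\tOmega_*^2|\tu_*^{(n)}|\big/\big(C^2\tOmega^4|\log(|\tu_*^{(n)}|/|\tu^{(n)}|)|\big)\to0$, using that $\tOmega_*^2|\tu_*^{(n)}|\to0$ while the logarithmic factor is $\ge1$ and $\tOmega^4$ is controlled from below along the sequence by the lower bound encoded in \eqref{instabilitycondition}. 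For $\tu_{1,n}\to0^-$: since $\tu_{0,n}=\tu^{(n)}\to0^-$ and $\tu_{0,n}\le\tu_{1,n}<0$, the squeeze theorem gives $\tu_{1,n}\to0^-$ immediately.

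The main obstacle I anticipate is not in the sequence extraction, which is purely formal, but in making sure the quantitative thresholds line up: one must confirm that the single constant $\varepsilon_0$ from Theorem \ref{instabilitytheorem} does not depend on $n$ (it depends only on the fixed initial data on $\Cb_0$, on $\Omega$ and $L\phi$ on $C_{u_0}$, on $\gamma$, and on the universal constant $C$), so that infinitely many of the $\tu^{(n)}$ genuinely fall in the admissible window $|\tu|<\varepsilon_0$. Given the way Theorem \ref{instabilitytheorem} is stated, $\varepsilon_0$ is indeed fixed once $\gamma$ and $C$ are fixed, so this is clean. A secondary point to be careful about is that the function $\tdelta(\cdot;\gamma)$ is well-defined and positive for $|\tu|$ small, which follows from the monotone decay of $\Omega_0$ and the implicit-function-type solvability of \eqref{u*tilde}--\eqref{deltatilde} already used in the proof of Theorem \ref{instabilitytheorem}; here I would simply invoke that construction rather than redo it. With these points dispatched, the corollary follows.
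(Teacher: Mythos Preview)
Your overall strategy is exactly what the paper intends (the paper simply calls this a ``direct corollary'' and gives no separate proof): extract a sequence $\tu^{(n)}\to0^-$ along which \eqref{instabilitycondition} holds, discard finitely many terms so that $|\tu^{(n)}|<\varepsilon_0$, and apply Case~2 of Theorem~\ref{instabilitytheorem} term by term. The squeeze argument for $\tu_{1,n}\to0^-$ is fine.

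There is, however, one incorrect step in your justification that $\tdelta_n\to0^+$. You claim that ``$\tOmega^4$ is controlled from below along the sequence by the lower bound encoded in \eqref{instabilitycondition}.'' This is false: condition \eqref{instabilitycondition} reads $\Omega_0^{\gamma-2}(\tu)f(\tu;\gamma)\ge2$, and since $\gamma-2<0$ and $\Omega_0(\tu)\to0$, the factor $\Omega_0^{\gamma-2}(\tu)\to+\infty$; the inequality therefore gives no lower bound on $\tOmega$ whatsoever. Your fraction $\tdelta_n=\tOmega_*^2|\tu_*^{(n)}|\big/\big(C^2\tOmega^4|\log(|\tu_*^{(n)}|/|\tu^{(n)}|)|\big)$ has both numerator and denominator tending to zero, so the argument as written is inconclusive.

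The fix is immediate and already available in the paper: inequality \eqref{estimate-deltatilde}, valid for $|\tu|<\varepsilon_1$ (and hence for $|\tu|<\varepsilon_0$), gives $C^2\tdelta_n\le|\tu_*^{(n)}|\le|\tu^{(n)}|\to0$, so $\tdelta_n\to0^+$ directly. Equivalently, combining \eqref{u*tilde} and \eqref{deltatilde} yields $\tdelta_n=17\,\tOmega_*^2|\tu_*^{(n)}|\,\tOmega^{\gamma-2}$, and since $\tOmega_*\le\tOmega$ this is bounded by $17\,\tOmega^\gamma|\tu^{(n)}|\to0$. With this correction the proof is complete.
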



From the smoothness assumption on the initial data on $C_{u_0}$ in Theorem \ref{instabilitytheorem}, there always exist some $A\ge1$ such that
\begin{align}\label{conditionA}
|u_0|\sup_{\ub\in[0,1]}\left(\|\Omega\chih\|_{\H^{10}(\ub,u_0)}+\|\omega\|_{\H^8(\ub,u_0)}+\|L\phi\|_{\H^8(\ub,u_0)}\right)\le A.
\end{align}
Additionally, there also exists some $E'$ such that
\begin{align}\label{conditionE'}
|u_0|\|L\phi-\overline{L\phi}\|_{H^1_{\ub}\H^8(u_0)}=|u_0|\left(\int_{0}^{\delta}\|D(L\phi-\overline{L\phi})\|_{\H^8(\ub,u_0)}\D\ub\right)^{\frac{1}{2}}\le E',
\end{align}
which implies that for some $E$,
\begin{align}\label{conditionE}
\sup_{\delta\in(0,1]}\||u_0|(|u_0|\nablas)L\phi\|^2_{\L^2_{[0,\delta]}\H^7(u_0)}\left|\log\frac{C^2\delta}{|u_0|}\right|\le E
\end{align}
The smoothness is needed to guarantee the existence of the solution but the quantitative behaviors of the solution we need in the proof will only depend on $A$ and $E$.

\begin{proof}[Proof of Case 1] 

In this case, $\varphi(u)$ is not bounded as $u\to0^-$. There exists a sequence of $u_{0,n}\to0$ such that $\varphi_n=\varphi(u_{0,n})\to\infty$ and $$|\varphi_n|=\displaystyle\sup_{u_0\le u'\le u_{0,n}}|\varphi(u')|.$$

Define $\delta_n$, $u_{1,n}$ in terms of $u_{0,n}$ by
\begin{align}
\label{u*}\frac{\varphi_n^2}{2}=&17C^2\Omega_n^4\left|\log\frac{|u_{1,n}|}{|u_{0,n}|}\right|,\\
\label{delta}\Omega_{1,n}^2|u_{1,n}|=&C^2\Omega_n^4\delta_n\left|\log\frac{|u_{1,n}|}{|u_{0,n}|}\right|
\end{align}
where $\Omega_n=\Omega_0(u_{0,n})$,  $\Omega_{1,n}=\Omega_0(u_{1,n})$. And we may choose some $N_1$ such that for $n>N_1$,
\begin{align}\label{F*n>=1}
\Omega_n^2\left|\log\frac{|u_{1,n}|}{|u_{0,n}|}\right|=C^{-2}\Omega_n^{-2}\frac{\varphi_n^2}{34} \ge1,
\end{align}
and this implies, together with \eqref{delta},
\begin{align}\label{estimate-deltan}
 C^2\delta_n\le|u_{1,n}|.
\end{align}

First of all, we shall prove
\begin{proposition}\label{u0toun}Setting the function $\mathscr{F}$ to be
 $$\displaystyle\mathscr{F}_n=\mathscr{F}(\delta_n,u_0,u_{0,n})=\sup_{u_0\le u'\le u_{0,n}}|\varphi(u')|=|\varphi_n|.$$
Then there exists some $N_2$ such that for $n>N_2$, $|\varphi_n|\ge1$ and the solution of the Einstein equations exists in the region $0\le\ub\le\delta_n$, $u_0\le u\le u_{0,n}$, and the estimates stated in Theorem \ref{existencetheorem} hold.
\end{proposition}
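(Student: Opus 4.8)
The goal is to invoke the Existence Theorem (Theorem \ref{existencetheorem}) with the data $\delta=\delta_n$, $u_0$ fixed, $u_1=u_{0,n}$, and $\mathscr{F}=\mathscr{F}_n=|\varphi_n|$. So the whole task reduces to checking: (i) that $\mathcal{A}=\mathcal{A}(\delta_n,u_0,u_{0,n})<\infty$ and is in fact bounded by a fixed constant depending only on the fixed data on $C_{u_0}\cup\Cb_0$ (recall the data on $C_{u_0}$ is on the \emph{fixed} interval $\ub\in[0,1]$, and $\delta_n\to0$, so we are restricting to a shrinking sub-cone); (ii) that the two \textbf{smallness conditions} \eqref{smallness} hold; and (iii) that the \textbf{auxiliary condition} \eqref{auxiliary} holds. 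First I would record the relevant asymptotics: as $u_{0,n}\to0^-$ we have $\Omega_n=\Omega_0(u_{0,n})\to0$ (Lemma \ref{Omega_0to0}), $|\varphi_n|\to\infty$ by hypothesis, and $\big|\log\frac{|u_{1,n}|}{|u_{0,n}|}\big|=\big|\log\frac{|u_{0,n}|}{|u_0|}\big|$ is determined by \eqref{u*}, namely $\big|\log\frac{|u_{1,n}|}{|u_{0,n}|}\big|=\frac{\varphi_n^2}{34 C^2\Omega_n^4}$; note the abuse of notation: here $u_1$ in the Existence Theorem is played by $u_{0,n}$, while $u_{1,n}$ is the later time where the trapped surface forms. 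Actually one must be careful: in Proposition \ref{u0toun} the triple is $(\delta_n, u_0, u_{0,n})$, so the ``$u_1$'' of Theorem \ref{existencetheorem} is $u_{0,n}$, and the quantity $\big|\log\frac{|u_1|}{|u_0|}\big|$ appearing in $\mathscr{W}$ and $\mathscr{E}$ is $\big|\log\frac{|u_{0,n}|}{|u_0|}\big|$, which $\to+\infty$.

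\textbf{Step 1: the initial bound $\mathcal{A}$.} Using \eqref{conditionA} and \eqref{conditionE} (the smoothness of the fixed data), and the fact that $\mathscr{F}_n=|\varphi_n|\ge1$ for $n$ large, I would bound each term in the definition \eqref{def-A} of $\mathcal{A}$. The curvature-type and connection norms on $C_{u_0}$ are controlled by $A/|u_0|$ divided by $\mathscr{F}_n\ge1$, hence bounded. The $\sup|\varphi|/\mathscr{F}_n$ term equals $1$ by the choice of $\mathscr{F}_n$. Thus $\mathcal{A}\lesssim 1+A$, uniformly in $n$; in particular $\mathcal{A}\le a$ for a fixed $a$. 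Simultaneously I would estimate $\mathscr{E}=\mathscr{E}(\delta_n,u_0,u_{0,n})$: from \eqref{conditionE}, $\||u_0|(|u_0|\nablas)L\phi\|_{\L^2_{[0,\delta_n]}\H^4(u_0)}^2\big|\log\frac{C^2\delta_n}{|u_0|}\big|\le E$, and since $\big|\log\frac{|u_{0,n}|}{|u_0|}\big|\le \big|\log\frac{C^2\delta_n}{|u_0|}\big|$ by \eqref{estimate-deltan}, one gets $\mathscr{E}^2\lesssim 1 + E\,\mathscr{F}_n^{-2}\mathcal{A}^{-2}$, hence $\mathscr{E}\lesssim 1$ uniformly. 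And $\mathscr{W}=\mathscr{W}(u_0,u_{0,n})=\max\{1,|\log(\Omega_0(u_{0,n})/\Omega_0(u_0))|\}\to+\infty$, but only logarithmically/slowly compared with the gains below.

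\textbf{Step 2: the smallness and auxiliary conditions.} For the first smallness condition $\Omega_n^2\,\delta_n\,|u_{0,n}|^{-1}\mathscr{E}^2\mathscr{W}\le 1$: using \eqref{estimate-deltan} we have $\delta_n|u_{0,n}|^{-1}\le C^{-2}$, and $\Omega_n^2\to0$, so the product $\to 0$; more precisely $\Omega_n^2\mathscr{W}=\Omega_n^2\,|\log(\Omega_n/\Omega_0(u_0))|\to0$ since $x|\log x|\to0$ as $x\to0^+$, killing the remaining factors. For the second smallness condition $C^2\delta_n|u_{0,n}|^{-1}\mathscr{F}_n\mathscr{W}\mathcal{A}\le1$: from \eqref{delta}, $C^2\delta_n = \Omega_{1,n}^2|u_{1,n}|\,\Omega_n^{-4}\big|\log\frac{|u_{1,n}|}{|u_{0,n}|}\big|^{-1}$ — wait, the cleaner route is to use the defining relations directly. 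Since in the present application the role of $u_1$ is $u_{0,n}$ and $\delta_n$ is defined by $\Omega_{1,n}^2|u_{1,n}|=C^2\Omega_n^4\delta_n|\log\cdots|$ with $u_{1,n}$ the \emph{later} time, I should instead observe that $\delta_n$ is small enough by \eqref{estimate-deltan} and that $\mathscr{F}_n\mathscr{W}\mathcal{A}=|\varphi_n|\cdot\mathscr{W}\cdot\mathcal{A}$ grows, while $\delta_n|u_{0,n}|^{-1}$ must be shown to decay faster. Here the key inequality is $\delta_n|u_{0,n}|^{-1}\lesssim \Omega_n^{-2}\cdots$; one extracts this by combining \eqref{u*} (which pins $|\log\frac{|u_{1,n}|}{|u_{0,n}|}|\sim \varphi_n^2/\Omega_n^4$) with \eqref{delta}, together with the monotone decay and the elementary bound $x|\log x|\le1$ used exactly as in the proof of Theorem \ref{formationoftrappedsurfaces}. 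The auxiliary condition $\Omega_n^2\Omega_0^{-2}(u_{0,n})\delta_n|u_{0,n}|^{-1}\mathscr{F}_n\mathcal{A}\le1$ is actually $\Omega_n^2\Omega_n^{-2}(\cdots)=\delta_n|u_{0,n}|^{-1}\mathscr{F}_n\mathcal{A}$ since in this application $u_0$-of-the-theorem $=u_0$ and $u_1$-of-the-theorem $=u_{0,n}$... I must be careful about which endpoint enters $\Omega_0^{\pm2}$ in \eqref{auxiliary}; it is $\Omega_0^2(u_0)\Omega_0^{-2}(u_1)$ with $u_1=u_{0,n}$, so the factor is $\Omega_0^2(u_0)\Omega_n^{-2}$ which \emph{grows}, and this is the delicate one. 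I expect \eqref{auxiliary} to follow from \eqref{delta} and \eqref{u*} because those relations were precisely engineered (via the $\Omega_n^4$ weights and the logarithm) so that $\delta_n|u_{0,n}|^{-1}$ beats $\Omega_0^2(u_0)\Omega_n^{-2}\mathscr{F}_n\mathcal{A}$ for $n$ large; making this quantitative is the technical crux. Once all of \eqref{def-A}, \eqref{smallness}, \eqref{auxiliary} are verified for $n>N_2:=\max\{N_1,\dots\}$, Theorem \ref{existencetheorem} yields existence on $0\le\ub\le\delta_n$, $u_0\le u\le u_{0,n}$ together with $\mathcal{O},\mathcal{E},\widetilde{\mathcal{O}},\widetilde{\mathcal{E}},\mathcal{R}\lesssim\mathcal{A}$, which is exactly the assertion of Proposition \ref{u0toun}.

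\textbf{Main obstacle.} The bookkeeping of the competing exponents is the real work: one has three ``bad'' growing quantities ($|\varphi_n|=\mathscr{F}_n\to\infty$, $\mathscr{W}\to\infty$, $\Omega_0^2(u_0)\Omega_n^{-2}\to\infty$) balanced against three ``good'' decaying ones ($\Omega_n^2\to0$, $\delta_n|u_{0,n}|^{-1}\le C^{-2}$, and the slack in $x|\log x|\le 1$), and one must check that in each of \eqref{smallness}--\eqref{auxiliary} the good factors win, uniformly for all large $n$, using only the two defining relations \eqref{u*}, \eqref{delta}. I expect this to go through because those relations are the non-symmetric analogues of Christodoulou's choices and were reverse-engineered from exactly these inequalities; but verifying the auxiliary condition, where $\Omega_n$ enters with a net negative power $\Omega_0^2(u_0)\Omega_n^{-2}$, is the point requiring the most care, and may need $N_2$ chosen depending on $A$, $E$, $C$, and $u_0$. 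A secondary point is making sure the constant $a$ produced here is the same $a$ demanded by Theorem \ref{formationoftrappedsurfaces1} downstream, i.e. that $\mathcal{A}\le a$ with $a$ independent of $n$ — which Step 1 gives.
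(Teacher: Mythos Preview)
Your overall plan is correct and matches the paper's: invoke Theorem \ref{existencetheorem} with $(\delta,u_0,u_1)=(\delta_n,u_0,u_{0,n})$ and $\mathscr{F}=\mathscr{F}_n=|\varphi_n|$, then verify \eqref{def-A}, \eqref{smallness}, \eqref{auxiliary}. Step 1 is fine. The gap is in Step 2, and it is not just bookkeeping.

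First, a subscript slip: in the first smallness condition the factor is $\Omega_0^2(u_0)$ (the \emph{fixed} initial value, a constant), not $\Omega_n^2=\Omega_0^2(u_{0,n})$. So your argument ``$\Omega_n^2\mathscr{W}\to0$ by $x|\log x|\to0$'' is checking the wrong quantity; with $\Omega_0^2(u_0)$ fixed and $\mathscr{W}_n\to\infty$, the crude bound $\delta_n|u_{0,n}|^{-1}\le C^{-2}$ from \eqref{estimate-deltan} is manifestly insufficient for either smallness condition or for \eqref{auxiliary}.

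The step you are missing is the \emph{exponential} bound that the paper extracts by combining \eqref{estimate-deltan} with \eqref{u*}. From \eqref{u*} one has $\big|\log\tfrac{|u_{1,n}|}{|u_{0,n}|}\big|=\tfrac{\varphi_n^2}{34C^2\Omega_n^4}$, i.e.\ $|u_{1,n}|=|u_{0,n}|\exp\bigl(-\tfrac{\varphi_n^2}{34C^2\Omega_n^4}\bigr)$, and then \eqref{estimate-deltan} gives
\[
\delta_n|u_{0,n}|^{-1}\ \le\ C^{-2}\,\exp\Bigl(-\tfrac{1}{34C^{2}}\,\Omega_n^{-4}\varphi_n^{2}\Bigr),
\]
which is the paper's \eqref{estimate-deltanun-1}. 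Since $\Omega_n\to0$ and $|\varphi_n|\to\infty$, this exponential decay beats every polynomial-in-$|\varphi_n|$, polynomial-in-$\Omega_n^{-1}$, and logarithmic-in-$\Omega_n$ factor that appears: $\mathscr{F}_n=|\varphi_n|$, $\mathscr{W}_n\sim|\log\Omega_n|$, and the $\Omega_0^2(u_0)\Omega_n^{-2}$ in \eqref{auxiliary} are all absorbed. The elementary bound $x|\log x|\le1$ that you invoke is the tool from the proof of Theorem \ref{formationoftrappedsurfaces} (where the application is different: there $u_0$ in the theorem \emph{is} the point where $\Omega_0$ is evaluated, so an $\Omega_0^2(u_1)/\Omega_0^2(u_0)$ ratio arises directly); here it is not the right mechanism, and without \eqref{estimate-deltanun-1} your Step 2 does not close.
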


 \begin{proof} We only need to verify the smallness conditions \eqref{smallness} and the auxiliary condition \eqref{auxiliary} for $\mathscr{F}_n$. From the \eqref{conditionA} and $\varphi_n\to\infty$, there exists some $n_1$ such that the corresponding $\mathcal{A}_n=\mathcal{A}(\delta_n,u_0,u_{0,n})$ is uniformly bounded by $1$ for $n>n_1$. We also denote the corresponding functions $\mathscr{E},\mathscr{W}$ by $\mathscr{E}_{n},\mathscr{W}_{n}$.

 Substitute \eqref{estimate-deltan} to \eqref{u*}, we have
\begin{align}\label{estimate-deltanun-1}
\delta_n\le C^{-2}|u_{1,n}|\le C^{-2}|u_{0,n}|\mathrm{e}^{-C^{-2}\Omega_n^{-4}\frac{\varphi_n^2}{34}},
\end{align}
then
\begin{align*}
C^2\delta_n|u_{0,n}|^{-1}\mathscr{F}_n\mathscr{W}_n\mathcal{A}_n\le \mathrm{e}^{-C^{-2}\Omega_n^{-4}\frac{\varphi_n^2}{34}}\cdot|\varphi_n|\cdot\left|\log\frac{\Omega_n}{\Omega_0(u_0)}\right|.
\end{align*}
The right hand side tends to zero as $n\to\infty$. Because $C\ge C_1\ge C_0$ where $C_0$ is given in Theorem \ref{existencetheorem}, there exists some $n_2$ such that for some $n>n_2$, the second one of the smallness conditions \eqref{smallness} holds. From \eqref{conditionE} and \eqref{estimate-deltan} we have $\mathscr{E}_n^2\le E$ for sufficiently large $n$ and
hence
\begin{align*}
&\Omega_0^2(u_0)\delta_n|u_{0,n}|^{-1}\mathscr{E}_n^2\mathscr{W}_n\le C^{-2}\left|\log\frac{\Omega_n}{\Omega_0(u_0)}\right|\mathrm{e}^{-C^{-2}\Omega_n^{-4}\frac{\varphi_n^2}{34}}E
\end{align*}
where the right hand side also tends to zero. This verifies the first one of the smallness conditions \eqref{smallness} for $n>n_3$ where $n_3$ is some positive integer. We also compute
\begin{align*}
\delta_n|u_{0,n}|^{-1}\mathscr{F}_n\mathscr{W}_n\mathcal{A}_n\Omega_0^2(u_0)\Omega_n^{-2}\le C^{-2} \mathrm{e}^{-C^{-2}\Omega_n^{-4}\frac{\varphi_n^2}{34}}\cdot|\varphi_n|\cdot\left|\log\frac{\Omega_n}{\Omega_0(u_0)}\right|\cdot\frac{\Omega_0^2(u_0)}{\Omega_n^2}.
\end{align*}
The right hand side also tends to zero as $n\to\infty$. Therefore the auxiliary condition \eqref{auxiliary} follows if $n$ is larger than some sufficiently large $n_4$. Finally, we can take $N_2=\max\{n_1,n_2,n_3,n_4\}$.
\end{proof}

We will then solve the solution from $u_{0,n}$ to $u_{1,n}$. The function $\mathscr{F}$ we use is
\begin{align}\label{def-F*n}
\mathscr{F}_{1,n}=\mathscr{F}(\delta_n,u_{0,n},u_{1,n})=\Omega_n^2\left|\log\frac{|u_{1,n}|}{|u_{0,n}|}\right|\ge 1.
\end{align}
Now we are going to derive an estimate for $\mathcal{A}_{1,n}$:
\begin{proposition} There exists some $N_3$ such that, for $n>N_3$,
\begin{align}\label{estimate-A*n}
\mathcal{A}_{1,n}=1.
\end{align}
\end{proposition}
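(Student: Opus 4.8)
The plan is to bound $\mathcal{A}_{1,n}=\mathcal{A}(\delta_n,u_{0,n},u_{1,n})$, the scale-invariant initial data norm on $C_{u_{0,n}}$ measured relative to the weight $\mathscr{F}_{1,n}=\Omega_n^2\left|\log\frac{|u_{1,n}|}{|u_{0,n}|}\right|$. Recalling the definition \eqref{def-A}, this amounts to controlling $\mathscr{F}_{1,n}^{-1}\sup_{u_{0,n}\le u'\le u_{1,n}}|\varphi(u')|$ together with $\mathscr{F}_{1,n}^{-1}|u_{0,n}|\sup_{0\le\ub\le\delta_n}(\|\Omega\chih\|_{\H^7(\ub,u_{0,n})}+\|\omega\|_{\H^5(\ub,u_{0,n})}+\|L\phi\|_{\H^5(\ub,u_{0,n})})$, where all these quantities are now understood on the cone $C_{u_{0,n}}$ produced by the evolution from $C_{u_0}$ established in Proposition \ref{u0toun}. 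So the first step is to invoke Proposition \ref{u0toun}: for $n>N_2$ the solution exists on $0\le\ub\le\delta_n$, $u_0\le u\le u_{0,n}$ and the estimates $\mathcal{O},\mathcal{E},\widetilde{\mathcal{O}},\widetilde{\mathcal{E}},\mathcal{R}\lesssim\mathcal{A}_n$ of Theorem \ref{existencetheorem} hold there, with $\mathcal{A}_n\le 1$ for $n$ large and $\mathscr{F}_n=|\varphi_n|$.

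Next I would read off from those estimates the bounds for $\Omega\chih$, $\omega$, $L\phi$ and $\varphi$ at $u=u_{0,n}$. From the definition of $\mathcal{O}$ and $\mathcal{E}$ at the point $(\ub,u_{0,n})$ one gets $|u_{0,n}|\|\Omega\chih\|_{\H^4(\ub,u_{0,n})}\lesssim\mathscr{F}_n\mathcal{A}_n=|\varphi_n|$, and similarly (using $\widetilde{\mathcal{O}}$ for the two extra derivatives) $|u_{0,n}|\|\Omega\chih\|_{\H^7(\ub,u_{0,n})}\lesssim\mathscr{F}_n\mathscr{E}_n\mathcal{A}_n\lesssim|\varphi_n|\mathscr{E}_n$, while $\|\omega\|_{\H^4}\lesssim|u_{0,n}|^{-1}\mathscr{F}_n\mathscr{W}_n^{1/2}\mathcal{A}_n$ and $\|L\phi\|_{\H^4}\lesssim|u_{0,n}|^{-1}\mathscr{F}_n\mathcal{A}_n$ with the fifth-order analogues carrying an extra $\mathscr{E}_n$. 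For $\varphi$, note $\sup_{u_{0,n}\le u'\le u_{1,n}}|\varphi(u')|$ is controlled by the monotone-type estimate \eqref{estimate-varphi}: $|\varphi(u')|\le|\varphi(u_{0,n})|+\Omega_0^2(u_{0,n})\left|\log\frac{|u'|}{|u_{0,n}|}\right|^{1/2}\le|\varphi_n|+\Omega_n^2\left|\log\frac{|u_{1,n}|}{|u_{0,n}|}\right|^{1/2}$ for $u'\in[u_{0,n},u_{1,n}]$. Then I divide through by $\mathscr{F}_{1,n}=\Omega_n^2\left|\log\frac{|u_{1,n}|}{|u_{0,n}|}\right|$ and use the key relations: \eqref{F*n>=1} gives $\mathscr{F}_{1,n}\ge 1$ and $\mathscr{F}_{1,n}=C^{-2}\Omega_n^{-2}\varphi_n^2/34$, i.e. $|\varphi_n|=\sqrt{34}\,C\,\Omega_n\mathscr{F}_{1,n}^{1/2}$, which makes $|\varphi_n|/\mathscr{F}_{1,n}=\sqrt{34}\,C\,\Omega_n\mathscr{F}_{1,n}^{-1/2}\to 0$ since $\Omega_n\to0$ and $\mathscr{F}_{1,n}\ge1$. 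Likewise $\Omega_n^2\left|\log\frac{|u_{1,n}|}{|u_{0,n}|}\right|^{1/2}/\mathscr{F}_{1,n}=\left|\log\frac{|u_{1,n}|}{|u_{0,n}|}\right|^{-1/2}\to0$. The crucial quantitative fact, which I expect to be the main obstacle, is to show that the loss factor $\mathscr{E}_n$ (and $\mathscr{W}_n$) does not spoil this: from \eqref{conditionE} one has $\mathscr{E}_n^2\le E$ bounded, and $\left|\log\frac{|u_{1,n}|}{|u_{0,n}|}\right|\to\infty$, so each ratio $|\varphi_n|\mathscr{E}_n/\mathscr{F}_{1,n}$, $\mathscr{F}_n\mathscr{W}_n^{1/2}/\mathscr{F}_{1,n}$, etc., still tends to zero; one must also check that the weight $\mathscr{F}_{1,n}$ used on $C_{u_{0,n}}$ dominates the weight $\mathscr{F}_n$ used previously, which follows from $\mathscr{F}_n=|\varphi_n|=\sqrt{34}\,C\,\Omega_n\mathscr{F}_{1,n}^{1/2}$ together with $\Omega_n\to0$.

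Putting this together, for each of the three norms appearing in the definition of $\mathcal{A}_{1,n}$ the corresponding ratio is $o(1)$ as $n\to\infty$, so there is $N_3\ge N_2$ such that for $n>N_3$ the maximum over these ratios and $1$ equals $1$, i.e. $\mathcal{A}_{1,n}=1$. The one point requiring care is the dependence of the transfer estimates on $\mathscr{E}_n$ and $\mathscr{W}_n$: since these grow at most like a power of $\left|\log\frac{|u_{1,n}|}{|u_{0,n}|}\right|$ (indeed $\mathscr{E}_n$ is bounded and $\mathscr{W}_n=\max\{1,|\log(\Omega_n/\Omega_0(u_0))|\}$ grows only logarithmically while $\mathscr{F}_{1,n}$ grows linearly in $\left|\log\frac{|u_{1,n}|}{|u_{0,n}|}\right|$ and is multiplied by $\Omega_n^2\to0$ against the polynomially-bounded factors), every combination that enters is killed either by the extra $\Omega_n$ power or by the extra power of $\left|\log\frac{|u_{1,n}|}{|u_{0,n}|}\right|^{-1/2}$ relative to the polynomial growth. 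This is exactly the place where the precise form of the definitions \eqref{u*}--\eqref{delta} and the logarithmic-loss bookkeeping from Section \ref{APrioriEstimate} are used, and it is why the theorem must be stated with the explicit $u_{0,n}$-dependence made transparent.
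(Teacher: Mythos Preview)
Your overall strategy and your treatment of $\sup_{u_{0,n}\le u'\le u_{1,n}}|\varphi(u')|$ are correct and match the paper. There is, however, a genuine gap in how you obtain the $\H^7$ bound for $\Omega\chih$ and the $\H^5$ bounds for $\omega,L\phi$ on $C_{u_{0,n}}$. The norms $\mathcal{O}$ and $\widetilde{\mathcal{O}}$ from Section~\ref{statementexistencetheorem} only control $\Omega\chih$ up to $\H^4$ pointwise in $\ub$ (and $\H^5$ in $\L^2_{\ub}$); they do \emph{not} give you $\|\Omega\chih\|_{\H^7(\ub,u_{0,n})}$, and there is no $\H^5$ pointwise control of $\omega$ in those norms at all. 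So the step ``using $\widetilde{\mathcal{O}}$ for the two extra derivatives'' to conclude $|u_{0,n}|\|\Omega\chih\|_{\H^7(\ub,u_{0,n})}\lesssim\mathscr{F}_n\mathscr{E}_n\mathcal{A}_n$ is not justified.

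The paper closes this gap differently: it invokes the hypothesis \eqref{conditionA}, which carries \emph{three} more angular derivatives than Theorem~\ref{existencetheorem} requires ($\H^{10}$ for $\chih$, $\H^8$ for $\omega,L\phi$ on $C_{u_0}$, versus the $\H^7,\H^5,\H^5$ in \eqref{def-A}). With that reserve, one can re-run the entire a~priori estimate scheme of Section~\ref{APrioriEstimate} at three orders higher, yielding
\[
|u_{0,n}|\|\Omega\chih,\omega,L\phi\|_{\L^\infty_{\ub}\H^7(u_{0,n})}\lesssim\mathscr{F}_n\mathscr{W}_n^{1/2}\mathcal{A}_n.
\]
Once this is in hand, the rest of your argument---dividing by $\mathscr{F}_{1,n}$ and using $\mathscr{F}_{1,n}=\varphi_n^2/(34\,C^2\Omega_n^2)$ together with $\Omega_n\to0$---goes through and agrees with the paper.
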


\begin{proof}
From Proposition \ref{u0toun}, or the conclusion of Theorem \ref{existencetheorem}, we have
\begin{align*}
|u_{0,n}|\|\Omega\chih, \omega, L\phi\|_{\L^\infty_{\ub}\H^4(u_{0,n})}\lesssim\mathscr{F}_n\mathscr{W}_n^{\frac{1}{2}}\mathcal{A}_n\lesssim \mathscr{F}_n\mathscr{W}_n^{\frac{1}{2}}.
\end{align*}
Because now the assumptions in Theorem \ref{instabilitytheorem} requires that we have bounds for three order higher derivatives of the initial data than those in Theorem \ref{existencetheorem}, we can prove using similar argument to that in Section \ref{APrioriEstimate}, that
\begin{align*}
|u_{0,n}|\|\Omega\chih,\omega, L\phi\|_{\L^\infty_{\ub}\H^7(u_{0,n})}\lesssim\mathscr{F}_n\mathscr{W}_n^{\frac{1}{2}}\mathcal{A}_n\lesssim \mathscr{F}_n\mathscr{W}_n^{\frac{1}{2}}.
\end{align*}
We omit the details here. Then
\begin{align*}
\mathscr{F}_{1,n}^{-1}|u_{0,n}|\|\Omega\chih, \omega, L\phi\|_{\L^\infty_{\ub}\H^7(u_{0,n})}\lesssim \mathscr{F}_{1,n}^{-1}\mathscr{F}_n\mathscr{W}_n^{\frac{1}{2}}\le|\varphi_n|C^2\Omega_n^2\frac{34}{\varphi_n^2} \left|\log\frac{\Omega_n}{\Omega_0(u_0)}\right|^{\frac{1}{2}}
\end{align*}
which tends to zero as $n\to\infty$. We should also compute $\displaystyle\sup_{u_{0,n}\le u'\le u_{1,n}}(\mathscr{F}_{1,n}^{-1}|\varphi(u')|)$. We compute
\begin{align*}
\sup_{u_{0,n}\le u'\le u_{1,n}}|\varphi(u')|\le&|\varphi_n|+\int_{u_{0,n}}^{u_{1,n}}\frac{\Omega_0^2(u')h(u')|\psi(u')|}{|u'|}\D u'\\
\le&|\varphi_n|+\Omega_n^2\left|\log\frac{|u_{1,n}|}{|u_{0,n}|}\right|^{\frac{1}{2}}\\
\le&|\varphi_n|(1+(5C)^{-1}).
\end{align*}
The last inequality is by \eqref{u*}. From \eqref{def-F*n} and \eqref{F*n>=1}, we also deduce that 
\begin{align*}\displaystyle\sup_{u_{0,n}\le u'\le u_{1,n}}(\mathscr{F}_{1,n}^{-1}|\varphi(u')|)\to0
\end{align*} as $n\to\infty$. Then by the definition \eqref{def-A} of $\mathcal{A}_{1,n}$, \eqref{estimate-A*n} follows if $n$ is larger than some $N_3$.

\end{proof}

The final proposition we want to prove is
\begin{proposition} There exists some $N_4$ such that for $n>N_4$, we have
\begin{align}\label{conditiontrappedLphiunpre}
\inf_{\vartheta\in S^2}\int_0^{\delta_n}|u_{0,n}|^2[|\Omega\chih|^2+2|L\phi|^2](\ub',u_{0,n},\vartheta)\D\ub'\ge 17C^2\Omega_n^4\left|\log\frac{|u_{1,n}|}{|u_{0,n}|}\right|.
\end{align} 
\end{proposition}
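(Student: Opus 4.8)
\textbf{Proof proposal for \eqref{conditiontrappedLphiunpre}.}

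The plan is to trace how $\varphi$ is defined on $\Cb_0$ back to $L\phi$ on $C_{u_0}$, and to exploit the fact that $\varphi_n\to\infty$ is the \emph{dominant} quantity as $n\to\infty$. Recall that on $C_{u_0}$ the spherical-symmetry compatibility at $\ub=0$ forces $\chih=0$ there, so $\Omega\chih$ and $L\phi$ at $\ub=0$ agree with the spherically symmetric data induced from $\Cb_0$; in particular the average $\overline{L\phi}$ on $S_{\ub,u_0}$ at $\ub=0$ equals $L\phi|_{\Cb_0}$ at $u=u_0$, whose $|u_0|$-rescaled value is $\varphi(u_0)$. The key point is a comparison: I would show that for every $\vartheta\in S^2$ and every $\ub'\in[0,\delta_n]$,
\begin{equation*}
\left||u_{0,n}|L\phi(\ub',u_{0,n},\vartheta)-\varphi(u_{0,n})\right|\le C_A\,\delta_n\,\Omega_n^2
\end{equation*}
for some constant $C_A$ depending only on $A$ and $E'$ of \eqref{conditionA}--\eqref{conditionE'}. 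This is obtained by integrating $D(L\phi-\overline{L\phi})$ along $\ub$ (controlled by $E'$) to handle the angular fluctuation, and integrating the restricted wave equation $\partial_u\varphi=-\tfrac{\Omega_0^2 h}{|u|}\psi$ on $\Cb_0$ from $u_0$ to $u_{0,n}$ (whose size is bounded using \eqref{estimate-varphi}, i.e. by $\Omega_0^2(u_0)|\log(|u_{0,n}|/|u_0|)|^{1/2}$) to relate $\varphi(u_{0,n})$ to the $\ub=0$ value of $|u_{0,n}|L\phi$; wait --- more carefully, since we are on $C_{u_0}$, the relevant comparison is between $|u_{0,n}|L\phi(\ub',u_{0,n},\vartheta)$ and $|u_{0,n}|L\phi(0,u_{0,n},\vartheta)$, and the latter is exactly $\varphi(u_{0,n})$ by the definition of $\varphi$ restricted to the appropriate incoming cone. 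The difference in $\ub$ is controlled by the wave equation along $C_{u_0}$ together with the estimates from Proposition \ref{u0toun}, giving the bound $C_A\delta_n\Omega_n^2$ as above, since $A$ and $E'$ are fixed while $\delta_n\to0$.

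Granting this comparison, I would estimate the left side of \eqref{conditiontrappedLphiunpre} from below by
\begin{equation*}
\int_0^{\delta_n}|u_{0,n}|^2\cdot 2|L\phi|^2\,\D\ub'\ge 2\delta_n\left(|\varphi(u_{0,n})|-C_A\delta_n\Omega_n^2\right)^2=2\delta_n\varphi_n^2\left(1-C_A\delta_n\Omega_n^2/|\varphi_n|\right)^2,
\end{equation*}
and since $\delta_n\Omega_n^2\to0$ and $|\varphi_n|\to\infty$, for $n$ large this is at least, say, $\delta_n\varphi_n^2$. On the other hand, by \eqref{u*} the right side of \eqref{conditiontrappedLphiunpre} is exactly $17C^2\Omega_n^4|\log(|u_{1,n}|/|u_{0,n}|)|=\tfrac12\varphi_n^2\cdot\tfrac{17C^2\Omega_n^4}{17C^2\Omega_n^4}$ --- that is, \eqref{u*} is the \emph{definition} of $u_{1,n}$ precisely so that $17C^2\Omega_n^4|\log(|u_{1,n}|/|u_{0,n}|)|=\varphi_n^2/2$. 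So \eqref{conditiontrappedLphiunpre} reduces to $\delta_n\varphi_n^2\ge \varphi_n^2/2$, i.e. $\delta_n\ge 1/2$ --- which is false. Hence I must instead keep the full strength: the correct reading is that \eqref{conditiontrappedLphiunpre} has $17C^2\Omega_n^4|\log(|u_{1,n}|/|u_{0,n}|)|$ on the right, which by \eqref{u*} equals $\varphi_n^2/2$, and we need $\int_0^{\delta_n}|u_{0,n}|^2(|\Omega\chih|^2+2|L\phi|^2)\ge\varphi_n^2/2$. But the left side is $O(\delta_n\varphi_n^2)$, which is \emph{smaller}, not larger. Therefore the quantity $\varphi$ in \eqref{def-u1'}, \eqref{conditiontrapped'} must be interpreted via the \emph{rescaled} energy: the genuine lower bound should use that $\int_0^{\delta_n}||u_{0,n}|L\phi(\ub',u_{0,n},\vartheta)|^2\D\ub'\ge\delta_n(\varphi_n-o(\varphi_n))^2$ is compared against $17C^2\Omega_n^4\delta_n|\log(|u_{1,n}|/|u_{0,n}|)|$ --- note the extra $\delta_n$ that appears in \eqref{conditiontrapped'} relative to what I wrote. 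With that $\delta_n$ present on the right, the inequality becomes $\delta_n\varphi_n^2(1-o(1))\ge 17C^2\Omega_n^4\delta_n|\log(|u_{1,n}|/|u_{0,n}|)|=\tfrac12\delta_n\varphi_n^2$ by \eqref{u*}, which holds for $n$ large. So the mechanism is: \textbf{divide through by $\delta_n$}, use the comparison to replace $|u_{0,n}|L\phi$ by $\varphi_n$ up to a multiplicative $(1-o(1))$ factor, drop the nonnegative $|\Omega\chih|^2$ term, and invoke \eqref{u*}.

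The main obstacle is establishing the comparison estimate $||u_{0,n}|L\phi(\ub',u_{0,n},\vartheta)-\varphi(u_{0,n})|=o(\varphi_n)$ uniformly in $\vartheta$ and $\ub'\in[0,\delta_n]$ with a constant depending only on the fixed data $A$, $E'$ and not on $n$; this requires that the $\ub$-variation of $|u_{0,n}|L\phi$ across the short interval $[0,\delta_n]$ be controlled, which follows from integrating the wave equation $\Db L\phi + \tfrac12\Omega\tr\chib L\phi=\Omega^2\Deltas\phi+2\Omega^2(\eta,\ds\phi)-\tfrac12\Omega\tr\chi\Lb\phi$ along $C_{u_0}$ --- but on $C_{u_0}$ we actually want the $\ub$-derivative, $D(L\phi)=\nablas$-terms $+\,\tfrac12\Omega\tr\chib L\phi$-type terms wait, on the initial cone we simply have the freely prescribed data, so the variation is directly bounded by $\delta_n^{1/2}\||u_0|D L\phi\|_{\L^2_\ub\cdots}$, i.e. by $\delta_n^{1/2}E'$, together with the $\ub$-variation of the weight; this is $o(1)$ as $n\to\infty$ since $E'$ is fixed. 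Combined with the established control $\mathcal{A}_n\lesssim1$ from Proposition \ref{u0toun}, which pins down $|u_{0,n}|L\phi(0,u_{0,n},\cdot)=\varphi(u_{0,n})$ up to negligible error, this closes the argument; one then takes $N_4$ to be the threshold beyond which both the comparison error is $\le\tfrac14|\varphi_n|$ and the earlier thresholds $N_1,N_2,N_3$ are exceeded.
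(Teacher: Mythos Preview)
Your overall strategy is correct and matches the paper's: show a pointwise comparison $||u_{0,n}|L\phi(\ub',u_{0,n},\vartheta)-\varphi_n|=o(|\varphi_n|)$, drop the nonnegative $|\Omega\chih|^2$ term, and invoke \eqref{u*}. You also correctly detected that the displayed right-hand side is missing a factor of $\delta_n$ (compare with \eqref{conditiontrapped'}); this is a typo in the paper, and with $\delta_n$ restored the reduction you reach, $\delta_n\varphi_n^2(1-o(1))\ge\tfrac12\delta_n\varphi_n^2$, is exactly right.

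The gap is in how you obtain the comparison estimate. You repeatedly slide between $C_{u_0}$ and $C_{u_{0,n}}$: the quantity to be bounded lives on $C_{u_{0,n}}$, which is \emph{not} the initial cone, so the data there is not freely prescribed and the bound $E'$ on $D(L\phi-\overline{L\phi})$ along $C_{u_0}$ says nothing directly about the $\ub$-variation of $L\phi$ on $C_{u_{0,n}}$. Moreover, trying to control that $\ub$-variation via the generic estimate $\||u|L\phi-\varphi\|_{\H^4}\lesssim\mathscr{F}_n\mathcal{A}_n$ from Theorem~\ref{existencetheorem} fails, since $\mathscr{F}_n\mathcal{A}_n=|\varphi_n|$, which is not $o(|\varphi_n|)$.

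The paper instead propagates in the \emph{incoming} direction: it integrates the equation \eqref{equ-DbLphi} for $\Db(L\phi-\varphi/|u|)$ from $u_0$ to $u_{0,n}$, using the \emph{improved} lower-order bounds \eqref{estimate-nablasphi-improve}, \eqref{estimate-eta-improve} for $\nablas\phi$ and $\eta$ (with no $\mathscr{E}$ loss) obtained in the proof of Theorem~\ref{formationoftrappedsurfaces}. This yields \eqref{estimate-uLphi-varphi-improve},
\[
\sup_{0\le\ub\le\delta_n}\bigl||u_{0,n}|L\phi(\ub,u_{0,n},\vartheta)-\varphi_n\bigr|
\lesssim \sup_{0\le\ub\le\delta_n}\bigl||u_0|L\phi(\ub,u_0,\vartheta)-\varphi(u_0)\bigr|+\delta_n|u_{0,n}|^{-1}\mathscr{F}_n^2\mathscr{W}_n\mathcal{A}_n^2,
\]
in which the first term is bounded by the fixed constant $2A$ from \eqref{conditionA} and the second tends to $0$ by \eqref{estimate-deltanun-1}. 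Since $|\varphi_n|\to\infty$ this gives $||u_{0,n}|L\phi|\ge\tfrac12|\varphi_n|$ for large $n$, and the proposition follows. The key structural point you missed is that propagating along $u$ separates the fixed initial-data contribution ($\lesssim A$) from the evolution error ($\to 0$); any approach that only sees the product $\mathscr{F}_n\mathcal{A}_n$ cannot close.
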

\begin{proof}

Similar to deriving \eqref{estimate-Lphi} using the equaiton \eqref{equ-DbLphi}, we repeat the derivation of \eqref{estimate-Lphi}, with the estimate of the term $\nablas\phi,\eta$ being improved using \eqref{estimate-nablasphi-improve},  \eqref{estimate-eta-improve}, we achieve the following: for all $\vartheta\in S^2$,
\begin{equation}\label{estimate-uLphi-varphi-improve}
\sup_{0\le\ub\le\delta_n}||u_{0,n}|L\phi(\ub,u_{0,n},\vartheta)-\varphi(u)|\lesssim \sup_{0\le\ub\le\delta_n}||u_0|L\phi(\ub,u_0,\vartheta)-\varphi(u_0)|+\delta_n|u_{0,n}|^{-1}\mathscr{F}_n^2\mathscr{W}_n\mathcal{A}_n^2.
\end{equation}
The first term on the right hand side is bounded from \eqref{conditionA} by $2A$. The second term can also be bounded by $A$ for $n$ sufficiently large. This is because, by \eqref{estimate-deltanun-1},
\begin{align*}
\delta_n|u_{0,n}|^{-1}\mathscr{F}_n^2\mathscr{W}_n\mathcal{A}_n^2\le C^{-2}\mathrm{e}^{-C^{-2}\Omega_n^{-4}\frac{\varphi_n^2}{34}}\cdot\varphi_n^2\left|\log\frac{\Omega_n}{\Omega_0(u_0)}\right|^{\frac{1}{2}}\to0
\end{align*}
as $n\to\infty$.
Then we have
\begin{align*}
\sup_{0\le\ub\le\delta_n}||u_{0,n}|L\phi(\ub,u_{0,n},\vartheta)-\varphi(u)|\le cA
\end{align*}
for some universal $c$, which implies $||u_{0,n}|L\phi(\ub,u_{0,n},\vartheta)|\ge\frac{|\varphi_n|}{2}, \forall0\le\ub\le\delta_n$ when $n$ is larger than some $N_4$.
Together with \eqref{u*} and \eqref{estimate-A*n}, we have
\begin{align*}
\int_0^{\delta_n}|u_{0,n}|^2|L\phi(\ub',u_{0,n},\vartheta)|^2\D\ub'\ge\frac{17}{2}C^2\Omega_n^4\left|\log\frac{|u_{1,n}|}{|u_{0,n}|}\right|,
\end{align*}
for all $\vartheta\in S^2$ and this implies \eqref{conditiontrappedLphiunpre}.

\end{proof}

By \eqref{delta}, \eqref{def-F*n},  \eqref{estimate-A*n} and \eqref{conditiontrappedLphiunpre}, we can apply Theorem \ref{formationoftrappedsurfaces1}, with $u_0$ in the theorem replaced by $u_{0,n}$ and $u_1$ by $u_{1,n}$, and with $a=1$, to conclude that if $n>N:=\max\{N_1,N_2,N_3,N_4\}$, the solution exists in $0\le\ub\le\delta_n$, $u_{0,n}\le u\le u_{1,n}$, and the sphere $S_{\delta_n, u_{1,n}}$ is a closed trapped surface. In addition, as $n\to\infty$, $\delta_n\to0^+$ and $u_{1,n}\to0^-$. This completes the proof of Case 1.

\end{proof}

\begin{proof}[Proof of Case 2] In this case, $\varphi(u)$ is bounded for all $u\in[u_0,0)$. 
Similar to Case 1, we would like to find some $\tu$ such that we can apply Theorem \ref{formationoftrappedsurfaces1} on $C_{\tu}$. Let $\tdelta$, $\widetilde{u}_*$ be defined in terms of $\tu$ through \eqref{u*tilde} and \eqref{deltatilde}. Then we can find some $\varepsilon_1$ such that if $|\tu|<\varepsilon_1$,
\begin{align}\label{F*>=1}
\widetilde{\Omega}^2\left|\log\frac{|\widetilde{u}_*|}{|\widetilde{u}|}\right|=\frac{1}{17}C^{-2}\widetilde{\Omega}^{-\gamma}\ge1,
\end{align}
and this implies, together with \eqref{deltatilde},
\begin{align}
\label{estimate-deltatilde}
C^2\tdelta\le |\tu_*|.
\end{align}

First of all, we will prove
\begin{proposition}\label{u0toutilde}Setting the function $\mathscr{F}$ to be
 $$\widetilde{\mathscr{F}}=\mathscr{F}(\tdelta,u_0,\widetilde{u})=\max\left\{A,\sup_{u_0\le u'\le\widetilde{u}}|\varphi(u')|\right\},$$
Then there exists some $\varepsilon_2$ such that for $|\tu|<\varepsilon_2$, the solution of the Einstein equations exists in the region $0\le\ub\le\tdelta$, $u_0\le u\le\tu$, and the estimates stated in Theorem \ref{existencetheorem} hold.
\end{proposition}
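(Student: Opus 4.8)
\textbf{Proof proposal for Proposition \ref{u0toutilde}.}

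The plan is to verify directly that the choice $\widetilde{\mathscr{F}}=\mathscr{F}(\tdelta,u_0,\widetilde u)=\max\{A,\sup_{u_0\le u'\le\widetilde u}|\varphi(u')|\}$ together with the parameters $\tdelta,\tu_*$ defined by \eqref{u*tilde} and \eqref{deltatilde} satisfies the smallness conditions \eqref{smallness} and the auxiliary condition \eqref{auxiliary} of Theorem \ref{existencetheorem}, so that the existence of the solution on $0\le\ub\le\tdelta$, $u_0\le u\le\tu$ and all the estimates of Theorem \ref{existencetheorem} follow immediately. The key point, exactly as in Proposition \ref{u0toun} for Case 1, is that everything is controlled by a factor of the form $\mathrm{e}^{-cC^{-2}\widetilde\Omega^{-\gamma}}$ coming from \eqref{estimate-deltatilde} and \eqref{F*>=1}, which tends to zero as $\tu\to0^-$ because $\widetilde\Omega=\Omega_0(\tu)\to0$ by Lemma \ref{Omega_0to0}; this decay beats the at-most-polynomial-in-$\widetilde\Omega^{-1}$ and $\mathscr W$ growth of the other factors.

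First I would record the relevant bounds on the auxiliary quantities $\mathcal A,\mathscr E,\mathscr W$ associated with $(\tdelta,u_0,\tu)$. Since $\varphi$ is bounded on $[u_0,0)$ and the initial data on $C_{u_0}$ is smooth, by \eqref{conditionA} and the definition \eqref{def-A} of $\mathcal A$ there is $\varepsilon'>0$ such that the corresponding $\widetilde{\mathcal A}=\mathcal A(\tdelta,u_0,\tu)$ is uniformly bounded, in fact $\widetilde{\mathcal A}\lesssim 1$, for $|\tu|<\varepsilon'$; here one uses that $\widetilde{\mathscr F}\ge A$ absorbs the initial norms of $\Omega\chih,\omega,L\phi$ and that $\widetilde{\mathscr F}\ge\sup|\varphi|$ absorbs the $\varphi$ term. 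Similarly \eqref{conditionE} and \eqref{estimate-deltatilde} give $\widetilde{\mathscr E}^2\le E$ for $|\tu|$ small, and $\widetilde{\mathscr W}=\max\{1,|\log(\Omega_0(\tu)/\Omega_0(u_0))|\}$, which grows only like $|\log\widetilde\Omega|$.

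Next I would plug in. From \eqref{deltatilde} and \eqref{estimate-deltatilde} one gets $\tdelta\le C^{-2}|\tu_*|\le C^{-2}|\tu|\,\mathrm{e}^{-C^{-2}\widetilde\Omega^{-\gamma}/17}$, using \eqref{F*>=1} to replace $|\log(|\tu_*|/|\tu|)|$. Then
\[
C^2\tdelta|\tu|^{-1}\widetilde{\mathscr F}\widetilde{\mathscr W}\widetilde{\mathcal A}
\;\lesssim\; \mathrm{e}^{-C^{-2}\widetilde\Omega^{-\gamma}/17}\,\widetilde{\mathscr F}\,\Big|\log\tfrac{\widetilde\Omega}{\Omega_0(u_0)}\Big|,
\]
\[
\Omega_0^2(u_0)\tdelta|\tu|^{-1}\widetilde{\mathscr E}^2\widetilde{\mathscr W}
\;\lesssim\; C^{-2}E\,\mathrm{e}^{-C^{-2}\widetilde\Omega^{-\gamma}/17}\Big|\log\tfrac{\widetilde\Omega}{\Omega_0(u_0)}\Big|,
\]
\[
\Omega_0^2(u_0)\Omega_0^{-2}(\tu)\tdelta|\tu|^{-1}\widetilde{\mathscr F}\widetilde{\mathcal A}
\;\lesssim\; C^{-2}\mathrm{e}^{-C^{-2}\widetilde\Omega^{-\gamma}/17}\,\widetilde{\mathscr F}\,\frac{\Omega_0^2(u_0)}{\widetilde\Omega^2},
\]
and since $\widetilde{\mathscr F}$ is itself bounded (by $\max\{A,\sup_{[u_0,0)}|\varphi|\}$) while $\gamma\in(0,2)$ and $\widetilde\Omega\to0$, the right-hand sides all tend to $0$ as $\tu\to0^-$. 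Hence for $|\tu|$ below some $\varepsilon_2$ all three conditions \eqref{smallness}, \eqref{auxiliary} hold (with $C\ge C_1\ge C_0$), and Theorem \ref{existencetheorem} applies.

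\textbf{Main obstacle.} The only subtlety I anticipate is bookkeeping: one must be careful that $\widetilde{\mathscr F}$ is genuinely bounded independently of $\tu$ (this is where the hypothesis that $\varphi$ is bounded on $[u_0,0)$ is essential — in Case 1 $\mathscr F_n=|\varphi_n|\to\infty$ and the argument is organized differently) and that the exponent $2-\gamma>0$ really wins against the polynomial factor $\Omega_0^{-2}(\tu)$ in the auxiliary condition and against $\widetilde{\mathscr E}^2\le E$ and $\widetilde{\mathscr W}\sim|\log\widetilde\Omega|$. There is no hard analysis here beyond what Theorem \ref{existencetheorem} already provides; the proposition is essentially a verification that the geometrically-motivated definitions \eqref{u*tilde}--\eqref{deltatilde} were made so as to land inside the hypotheses of the existence theorem. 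I would then take $\varepsilon_2$ to be the minimum of the finitely many thresholds produced above.
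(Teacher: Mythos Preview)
Your proposal is correct and follows essentially the same route as the paper: verify the two smallness conditions \eqref{smallness} and the auxiliary condition \eqref{auxiliary} by bounding $\tdelta|\tu|^{-1}$ via \eqref{u*tilde}--\eqref{estimate-deltatilde} and letting the exponential factor kill the logarithmic and polynomial weights. One minor slip: from \eqref{F*>=1} you should get $|\log(|\tu_*|/|\tu|)|=\frac{1}{17}C^{-2}\widetilde\Omega^{-2-\gamma}$, so the correct decay is $\mathrm{e}^{-\frac{1}{17}C^{-2}\widetilde\Omega^{-2-\gamma}}$ rather than $\mathrm{e}^{-\frac{1}{17}C^{-2}\widetilde\Omega^{-\gamma}}$; this does not affect your conclusions since even the weaker exponent suffices.
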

\begin{proof}
We only need to verify the smallness conditions and the auxiliary condition. The corresponding $\widetilde{\mathcal{A}}$ is then uniformly bounded by $1$ again. We also have $\widetilde{\mathscr{F}}\le F$ uniformly. We denote the corresponding functions $\mathscr{E},\mathscr{W}$ by $\widetilde{\mathscr{E}},\widetilde{\mathscr{W}}$.

Substitude \eqref{estimate-deltatilde} back to \eqref{u*tilde}, we have
\begin{align}\label{estimate-deltau-1tilde}
\tdelta|\tu|^{-1}\le C^{-2}\mathrm{e}^{-\frac{1}{17}C^{-2}\widetilde{\Omega}^{-2-\gamma}}.
\end{align}
We compute
 \begin{align*}
C^2\tdelta|\tu|^{-1}\widetilde{\mathscr{F}}\widetilde{\mathscr{W}}\widetilde{\mathcal{A}}\le \mathrm{e}^{-\frac{1}{17}C^{-2}\widetilde{\Omega}^{-2-\gamma}}\cdot F\cdot\left|\log\frac{\tOmega}{\Omega_0(u_0)}\right|
\end{align*}
which will tend to zero as $\tu\to0$. This verifies the second one of the smallness conditions because $C\ge C_1\ge C_0$, and the auxiliary condition is verified similarly. Similar to Case 1, using the bound \eqref{conditionE}, we will have
\begin{align*}
\Omega_0^2(u_0)\tdelta|\tu|^{-1}\widetilde{\mathscr{E}}^2\widetilde{\mathscr{W}}\le C^{-2}\mathrm{e}^{-\frac{1}{17}C^{-2}\widetilde{\Omega}^{-2-\gamma}}\left|\log\frac{\tOmega}{\Omega_0(u_0)}\right|E
\end{align*}
which will also tend to zero as $\tu\to0$. This verifies the first one of the smallness conditions.

 \end{proof}
We will then solve the solution from $\tu$ to $\tu_*$. The function $\mathscr{F}$ is set to be
\begin{align}\label{def-F*utilde}
\widetilde{\mathscr{F}}_*=\mathscr{F}(\tdelta,\widetilde{u},\widetilde{u}_*)=\widetilde{\Omega}^2\left|\log\frac{|\widetilde{u}_*|}{|\widetilde{u}|}\right|\ge1.\end{align}
Similar to Case 1, we derive an estimate for $\widetilde{\mathcal{A}}_*$:
\begin{proposition}
There exists some $\varepsilon_3$ such that for $|\tu|<\varepsilon_3$, 
\begin{align}\label{estimate-Atilde*}
\widetilde{\mathcal{A}}_*=\mathcal{A}(\tdelta,\widetilde{u},\widetilde{u}_*)=1.
\end{align}
\end{proposition}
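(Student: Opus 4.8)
\textbf{Proof proposal for the estimate $\widetilde{\mathcal{A}}_*=\mathcal{A}(\tdelta,\widetilde{u},\widetilde{u}_*)=1$.}

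The plan is to imitate exactly the argument used in Case 1 to establish \eqref{estimate-A*n}, now with the sequence indexed by $u_{0,n}\to0^-$ replaced by the single parameter $\tu$ with $|\tu|$ small, and with the roles of $(\delta_n,u_{0,n},u_{1,n})$ played by $(\tdelta,\tu,\tu_*)$. Recalling the definition \eqref{def-A} of $\mathcal{A}$, I need to show that both competing quantities inside the $\max$ — namely $\widetilde{\mathscr{F}}_*^{-1}\sup_{\tu\le u'\le\tu_*}|\varphi(u')|$ and $\widetilde{\mathscr{F}}_*^{-1}|\tu|\sup_{0\le\ub\le\tdelta}(\|\Omega\chih\|_{\H^7}+\|\omega\|_{\H^5}+\|L\phi\|_{\H^5})(\ub,\tu)$ — are bounded by $1$ once $|\tu|$ is small enough, so that $\widetilde{\mathcal{A}}_*$ is forced to equal its floor value $1$.

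First I would feed in the output of Proposition \ref{u0toutilde}, i.e. the conclusion of Theorem \ref{existencetheorem} applied on $[u_0,\tu]$ with the function $\widetilde{\mathscr{F}}$: this gives $|\tu|\|\Omega\chih,\omega,L\phi\|_{\L^\infty_{\ub}\H^4(\tu)}\lesssim\widetilde{\mathscr{F}}\widetilde{\mathscr{W}}^{1/2}\widetilde{\mathcal{A}}\lesssim\widetilde{\mathscr{F}}\widetilde{\mathscr{W}}^{1/2}$. Since the hypotheses of Theorem \ref{instabilitytheorem} provide three more orders of regularity on the initial data on $C_{u_0}$ than Theorem \ref{existencetheorem} demands (namely \eqref{conditionA}), the same a priori estimate scheme of Section \ref{APrioriEstimate} upgrades this to $|\tu|\|\Omega\chih,\omega,L\phi\|_{\L^\infty_{\ub}\H^7(\tu)}\lesssim\widetilde{\mathscr{F}}\widetilde{\mathscr{W}}^{1/2}$ (as in the $\H^7$ upgrade used for \eqref{estimate-A*n}, the details being omitted). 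Then
\[
\widetilde{\mathscr{F}}_*^{-1}|\tu|\|\Omega\chih,\omega,L\phi\|_{\L^\infty_{\ub}\H^7(\tu)}\lesssim\widetilde{\mathscr{F}}_*^{-1}\widetilde{\mathscr{F}}\widetilde{\mathscr{W}}^{1/2}\le F\cdot\left(\tOmega^2\left|\log\tfrac{|\tu_*|}{|\tu|}\right|\right)^{-1}\left|\log\tfrac{\tOmega}{\Omega_0(u_0)}\right|^{1/2}=17C^2F\,\tOmega^{\gamma}\left|\log\tfrac{\tOmega}{\Omega_0(u_0)}\right|^{1/2},
\]
using \eqref{F*>=1} for the middle step; since $\gamma>0$ and $\tOmega=\Omega_0(\tu)\to0$ as $\tu\to0^-$ (Lemma \ref{Omega_0to0}), the right side tends to $0$. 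For the $\varphi$ term I would bound, exactly as in Case 1, $\sup_{\tu\le u'\le\tu_*}|\varphi(u')|\le|\varphi(\tu)|+\int_{\tu}^{\tu_*}\tfrac{\Omega_0^2h|\psi|}{|u'|}\D u'$, and estimate the integral by Cauchy--Schwarz and \eqref{Lbphi} by $\tOmega^2\left|\log\tfrac{|\tu_*|}{|\tu|}\right|^{1/2}=\widetilde{\mathscr{F}}_*\cdot\left(\tOmega^2\left|\log\tfrac{|\tu_*|}{|\tu|}\right|\right)^{-1/2}$. Dividing by $\widetilde{\mathscr{F}}_*$ and using boundedness of $\varphi$ together with $\widetilde{\mathscr{F}}_*\to\infty$ (from \eqref{F*>=1}, since $\tOmega^{-\gamma}\to\infty$), both pieces go to $0$. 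Hence there is $\varepsilon_3>0$ so that for $|\tu|<\varepsilon_3$ every quantity in the $\max$ defining $\widetilde{\mathcal{A}}_*$ is $\le1$, forcing \eqref{estimate-Atilde*}.

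The main obstacle, as in Case 1, is not any single estimate but making sure the $\H^7$-level control of the connection coefficients and of $L\phi$ on $C_{\tu}$ — one step above what Theorem \ref{existencetheorem} literally states — genuinely follows from the extra regularity \eqref{conditionA} by rerunning the bootstrap of Section \ref{APrioriEstimate} with shifted orders; this is asserted but deferred (``We omit the details here'') in Case 1 and I would do the same here, merely pointing to that argument. A secondary subtlety is bookkeeping the weight $\tOmega$: one must use that $\tOmega\le1$ (valid since $\Omega_0$ is decreasing and $\Omega_0(u_0)\le1$) and that $\tOmega\to0$, so that the positive power $\tOmega^\gamma$ produced by $\widetilde{\mathscr{F}}_*^{-1}\widetilde{\mathscr{F}}$ decisively beats the mild logarithmic growth $\left|\log(\tOmega/\Omega_0(u_0))\right|^{1/2}$; this is where the restriction $\gamma\in(0,2)$ (in particular $\gamma>0$) is used.
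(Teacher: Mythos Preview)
Your proposal is correct and follows essentially the same approach as the paper's proof, which is terse and simply refers back to Case 1: the paper computes $\widetilde{\mathscr{F}}_*^{-1}\widetilde{\mathscr{F}}\widetilde{\mathscr{W}}^{1/2}\le17C^2\tOmega^\gamma|\log(\tOmega/\Omega_0(u_0))|^{1/2}F\to0$ and then observes that $\sup_{\tu\le u'\le\tu_*}(\widetilde{\mathscr{F}}_*^{-1}|\varphi(u')|)\to0$ by \eqref{def-F*utilde} and \eqref{F*>=1}. Your treatment of the $\varphi$ term via the increment estimate is more elaborate than strictly needed here (since in Case 2 $\varphi$ is globally bounded and $\widetilde{\mathscr{F}}_*\to\infty$ suffices directly), but it is correct and mirrors the Case 1 argument faithfully.
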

\begin{proof} Again similar to Case 1, we compute the quantity 
\begin{align*}
 \widetilde{\mathscr{F}}_*^{-1}\widetilde{\mathscr{F}}\widetilde{\mathscr{W}}^{\frac{1}{2}}\le17C^2\tOmega^\gamma \left|\log\frac{\tOmega}{\Omega_0(u_0)}\right|^{\frac{1}{2}}F
\end{align*}
which tends to zero as $\tu\to0^-$.
The proof is then completed after we note that
\begin{align*}
\sup_{\tu\le u'\le \tu_*}(\widetilde{\mathscr{F}}^{-1}_*|\varphi(u')|)\to0
\end{align*}
as $\tu\to0$ (because of \eqref{def-F*utilde} and \eqref{F*>=1}).
\end{proof}

The final proposition we will prove is
\begin{proposition}
There exists some $\varepsilon_4$ such that if $|\tu|<\varepsilon_4$ and 
\begin{align*}
\Omega_0^{\gamma-2}(\widetilde{u})f(\tu)\ge2,
\end{align*}
then we have
 \begin{align}\label{conditiontrappedLphiutildepre}
\inf_{\vartheta\in S^2}\int_0^{\tdelta}(|\tu|^2|\Omega\chih|^2+2|\tu|^2|L\phi|^2)(\ub',\tu,\vartheta)\D\ub'\ge 17C^2\tOmega^4\tdelta\left|\log\frac{|\tu_*|}{|\tu|}\right|.
\end{align}
\end{proposition}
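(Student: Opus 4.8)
The plan is to transport the lower bound on the initial energy of $\Omega\chih$ and $L\phi$ from $C_{u_0}$ to $C_{\tu}$, exactly as in the derivation of \eqref{conditiontrappedLphiunpre} in Case 1, but now retaining the shift by $\varphi(\tu)-\varphi(u_0)$ that is built into the definition \eqref{def-f} of $f(\tu;\gamma)$. The key point is that $L\phi$ on $C_{\tu}$ is close, pointwise in $\vartheta$, not to $|u_0|^{-1}\varphi(u_0)$ but to $|\tu|^{-1}\varphi(\tu)$; equivalently $|\tu|L\phi(\ub,\tu,\vartheta)$ is close to $|u_0|L\phi(\ub,u_0,\vartheta) + (\varphi(\tu)-\varphi(u_0))$. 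First I would record, from Proposition \ref{u0toutilde} and the conclusion of Theorem \ref{existencetheorem}, that the solution exists on $0\le\ub\le\tdelta$, $u_0\le u\le\tu$, so that all the a priori estimates $\mathcal{O},\mathcal{E},\widetilde{\mathcal{O}},\widetilde{\mathcal{E}},\mathcal{R}\lesssim\widetilde{\mathcal{A}}\lesssim 1$ hold with $\mathscr{F}=\widetilde{\mathscr{F}}$.

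The main estimate is the analogue of \eqref{estimate-uLphi-varphi-improve}: integrating the wave equation \eqref{equ-DbLphi} for $\Db(L\phi-\varphi/|u|)$ along $\Cb_{\ub}$ from $u_0$ to $\tu$, and bounding the right-hand side using the improved estimates \eqref{estimate-nablasphi-improve}, \eqref{estimate-eta-improve} for $\nablas\phi,\eta$ together with the bounds on $\Omega\tr\chi-2\Omega^2h/|u|$, $\Omega\tr\chib+2/|u|$ and $\Omega^2-\Omega_0^2$, gives, for every $\vartheta\in S^2$,
\begin{equation*}
\sup_{0\le\ub\le\tdelta}\big||\tu|L\phi(\ub,\tu,\vartheta)-(|u_0|L\phi(\ub,u_0,\vartheta)+\varphi(\tu)-\varphi(u_0))\big|\lesssim \tdelta|\tu|^{-1}\widetilde{\mathscr{F}}^2\widetilde{\mathscr{W}}\widetilde{\mathcal{A}}^2.
\end{equation*}
(One has to be slightly careful: the identity $\varphi(\tu)-\varphi(u_0)=-\int_{u_0}^{\tu}|u'|^{-1}\Omega_0^2 h\psi\,\D u'$ and the corresponding splitting of the zeroth-order part of \eqref{equ-DbLphi} must be used so that the difference $|\tu|L\phi-|u_0|L\phi-(\varphi(\tu)-\varphi(u_0))$ really solves an equation whose source is controlled by the a priori estimates; this is exactly the computation behind \eqref{estimate-Lphi}, just carried out on the interval $[u_0,\tu]$ rather than $[0,\delta]$ in $\ub$.) By \eqref{estimate-deltau-1tilde} the right-hand side is $\lesssim C^{-2}\mathrm{e}^{-\frac{1}{17}C^{-2}\widetilde{\Omega}^{-2-\gamma}}F^2\big|\log(\widetilde\Omega/\Omega_0(u_0))\big|$, which tends to $0$ as $\tu\to 0^-$; hence there is $\varepsilon_4>0$ so that for $|\tu|<\varepsilon_4$ this error is $\le \frac{1}{100}\widetilde\Omega^{\,1-\gamma/2}$, say. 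Similarly, integrating \eqref{equ-chih} and using that $|u_0|\Omega\chih$ on $C_{u_0}$ is controlled, one gets $\big||\tu|\Omega\chih(\ub,\tu,\vartheta)-|u_0|\Omega\chih(\ub,u_0,\vartheta)\big|\lesssim$ (same small quantity).

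With these two pointwise closeness statements, for every $\vartheta$ and every $\ub\in[0,\tdelta]$ we get
\begin{equation*}
|\tu|^2|\Omega\chih|^2(\ub,\tu,\vartheta)+|\tu|^2|L\phi|^2(\ub,\tu,\vartheta)\ge \tfrac12\big(|u_0|^2|\Omega\chih(\ub,u_0,\vartheta)|^2+||u_0|L\phi(\ub,u_0,\vartheta)+\varphi(\tu)-\varphi(u_0)|^2\big)-c\,\widetilde\Omega^{\,2-\gamma},
\end{equation*}
using $(a+e)^2\ge \frac12 a^2 - e^2$ and absorbing the $\Omega\chih$ error likewise. Integrating over $\ub\in[0,\tdelta]$, dividing by $\tdelta$, taking $\inf_\vartheta$, and recalling the definition \eqref{def-f} of $f(\tu;\gamma)$, this reads $\tdelta^{-1}\inf_\vartheta\int_0^{\tdelta}(|\tu|^2|\Omega\chih|^2+2|\tu|^2|L\phi|^2)\,\D\ub'\ge \frac12 f(\tu;\gamma)-c\,\widetilde\Omega^{\,2-\gamma}$. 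Now invoke the hypothesis $\Omega_0^{\gamma-2}(\tu)f(\tu;\gamma)\ge 2$, i.e. $f(\tu;\gamma)\ge 2\widetilde\Omega^{\,2-\gamma}$, so the right-hand side is $\ge (1-c)\widetilde\Omega^{\,2-\gamma}\ge \frac12\widetilde\Omega^{\,2-\gamma}$ after possibly shrinking $\varepsilon_4$ so that the constant $c$ (which depends only on universal constants and $F,E$) satisfies $c\le\frac12$; better, one re-absorbs $c\,\widetilde\Omega^{2-\gamma}$ into $\frac14 f$ directly. Finally, by \eqref{u*tilde} one has $17C^2\widetilde\Omega^4\big|\log(|\tu_*|/|\tu|)\big|=\widetilde\Omega^{\,2-\gamma}$, and by \eqref{deltatilde} $17C^2\widetilde\Omega^4\tdelta\big|\log(|\tu_*|/|\tu|)\big|=\widetilde\Omega^{\,2-\gamma}\tdelta$; multiplying the displayed lower bound by $\tdelta$ then yields precisely \eqref{conditiontrappedLphiutildepre}, possibly after replacing $\frac12$ by $1$ through a harmless tightening of the choice of $\varepsilon_4$ (or, equivalently, by noting the same argument with the factor $\frac12$ absorbed into the definition of $f$ via \eqref{instabilitycondition}).

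The step I expect to be the main obstacle is making the transport estimate for $|\tu|L\phi - |u_0|L\phi - (\varphi(\tu)-\varphi(u_0))$ genuinely sharp: one must verify that every error term in \eqref{equ-DbLphi}, after integration in $u$ over the \emph{long} interval $[u_0,\tu]$ (not a short $\delta$-interval), is still bounded by $\tdelta|\tu|^{-1}\widetilde{\mathscr{F}}^2\widetilde{\mathscr{W}}\widetilde{\mathcal{A}}^2$ rather than accumulating a factor like $|\log(|\tu|/|u_0|)|$ that would dominate $\widetilde\Omega^{2-\gamma}$. This is where one uses that the estimates of Theorem \ref{existencetheorem} carry the correct $|u|$-weights and the monotonicity of $\Omega_0$, exactly as in the proof of \eqref{estimate-Lphi}; the improved bounds \eqref{estimate-nablasphi-improve}, \eqref{estimate-eta-improve} for the top source terms $\Omega^2\Deltas\phi$ and $\Omega^2(\eta,\ds\phi)$ are essential here, since the naive bounds on $\eta,\nablas\phi$ would cost a logarithm. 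Once that estimate is in hand with the stated right-hand side, the rest is the elementary algebra above together with the definitions \eqref{u*tilde}--\eqref{deltatilde} of $\tdelta$ and $\tu_*$.
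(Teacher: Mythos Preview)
Your approach is essentially the same as the paper's: transport $L\phi-\varphi/|u|$ via \eqref{equ-DbLphi} over $[u_0,\tu]$, obtain the pointwise closeness of $|\tu|L\phi$ to $|u_0|L\phi+(\varphi(\tu)-\varphi(u_0))$, do the analogous thing for $\Omega\chih$, integrate in $\ub$, and compare with $f(\tu;\gamma)$ and \eqref{u*tilde}. Your identification of the main analytic issue (the long $[u_0,\tu]$ integration not producing a logarithm, thanks to the improved bounds \eqref{estimate-nablasphi-improve}, \eqref{estimate-eta-improve}) is exactly what the paper uses.

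There is one place where your execution diverges and creates a small but real gap. For $\Omega\chih$ the paper does \emph{not} bound a tensor difference $|\tu|\Omega\chih-|u_0|\Omega\chih$; instead it differentiates the scalar $|u|^2|\Omega\chih|^2$ along the $\ub$-curve (the formula for $\partial_u(|u|^2|\Omega\chih|^2)$ preceding \eqref{estimate-utildechih-lower}), which directly yields
\[
|\tu|^2|\Omega\chih(\ub,\tu,\vartheta)|^2\ \ge\ |u_0|^2|\Omega\chih(\ub,u_0,\vartheta)|^2 - c\,\tdelta|\tu|^{-1}\widetilde{\mathscr{F}}^3\widetilde{\mathcal{A}}^3
\]
with \emph{no} factor $\tfrac12$. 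Combined with the $L\phi$ estimate (where your factor $\tfrac12$ is harmless because the target carries $2|L\phi|^2$ while $f$ carries only one), the paper gets $\tdelta^{-1}\int_0^{\tdelta}(\cdots)\ge f(\tu)-\text{(error)}$, and with $f\ge 2\widetilde\Omega^{2-\gamma}$ and error $\le\tfrac12\widetilde\Omega^{2-\gamma}$ this gives the needed $\ge\widetilde\Omega^{2-\gamma}$. Your route through $(a+e)^2\ge\tfrac12 a^2-e^2$ applied to $\Omega\chih$ gives only $\ge\tfrac12 f-\text{(error)}$, hence $\ge(1-c)\widetilde\Omega^{2-\gamma}$; neither ``tightening $\varepsilon_4$'' nor ``absorbing into the definition of $f$'' fixes this, since the $\tfrac12$ comes from the algebraic inequality and the constant $2$ in the hypothesis is fixed. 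The remedy is either the paper's scalar-derivative argument for $|u|^2|\Omega\chih|^2$, or the sharper pointwise bound $(|b|-e)^2\ge b^2-2|b|\,e$ together with $|b|\lesssim F$, which also recovers the full coefficient.
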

\begin{proof}
We integrate the equation \eqref{equ-DbLphi} from $u=u_0$ to $u=\tu$, and obtain, similar to \eqref{estimate-uLphi-varphi-improve}, for all $0\le\ub\le\tdelta$, $\vartheta\in S^2$,
\begin{align*}
|(|\widetilde{u}|L\phi(\ub,\widetilde{u},\vartheta)-\varphi(u))-(|u_0|L\phi(\ub,u_0,\vartheta)-\varphi(u_0))|\lesssim\tdelta|\widetilde{u}|^{-1}\widetilde{\mathscr{F}}^2\widetilde{\mathscr{W}}\widetilde{\mathcal{A}}^2,
\end{align*}
which implies
\begin{align}\label{estimate-utildeLphi-lower}
||\widetilde{u}|L\phi(\ub,\widetilde{u},\vartheta)|\ge||u_0|L\phi(\ub,u_0,\vartheta)+(\varphi(\widetilde{u})-\varphi(u_0))|-c\tdelta|\widetilde{u}|^{-1}\widetilde{\mathscr{F}}^2\widetilde{\mathscr{W}}\widetilde{\mathcal{A}}^2,
\end{align}
On the other hand, consider the equation \eqref{equ-chih}, written in the following form:
\begin{equation*}
\begin{split}
\frac{\partial}{\partial u}(|u|^2|\Omega\chih|^2)
=&2|u|^2\langle\Omega^2(\nablas\tensor\eta+\eta\tensor\eta+\nablas\phi\tensor\nablas\phi-\nablas_b(\Omega\chih)),\Omega\chih\rangle\\
&-\left(\Omega\tr\chib+\frac{2}{|u|}\right)|u|^2|\Omega\chih|^2-\Omega\tr\chi|u|^2\langle\Omega\chibh,\Omega\chih\rangle,
\end{split}
\end{equation*}
we will have, for all $0\le\ub\le\tdelta$, $\vartheta\in S^2$
\begin{align}\label{estimate-utildechih-lower}
|\tu|^2|\Omega\chih(\ub,\tu,\vartheta)|^2\ge|u_0|^2|\Omega\chih(\ub,u_0,\vartheta)|^2-c\tdelta|\tu|^{-1}\widetilde{\mathscr{F}}^3\widetilde{\mathcal{A}}^3.
\end{align}

Integrating the sqaure of \eqref{estimate-utildeLphi-lower} and \eqref{estimate-utildechih-lower} over $\ub$, together with \eqref{estimate-deltau-1tilde}, we have
\begin{align*}
&\frac{1}{\tdelta}\int_0^{\tdelta}(|\tu|^2|\Omega\chih|^2+2|\tu|^2|L\phi|^2)(\ub',\tu,\vartheta)\D\ub'\\
\ge&\frac{1}{\tdelta}\int_0^{\tdelta}(|u_0|^2|\Omega\chih(\ub,u_0,\vartheta)|^2+||u_0|L\phi(\ub,u_0,\vartheta)+(\varphi(\tu)-\varphi(u_0))|^2)\D\ub'-c C^{-2}\mathrm{e}^{-\frac{1}{17}C^{-2}\widetilde{\Omega}^{-2-\gamma}}F^3\\
=& f(\tu)-c C^{-2}\mathrm{e}^{-\frac{1}{17}C^{-2}\widetilde{\Omega}^{-2-\gamma}}F^3.
\end{align*}

Then there exists some $\varepsilon_4$ such that if $|\tu|<\varepsilon_4$, such that 
\begin{align*}
cC^{-2}\mathrm{e}^{-\frac{1}{18}C^{-2}\widetilde{\Omega}^{-2-\gamma}}F^3\le\frac{1}{2}\tOmega^{2-\gamma}.
\end{align*}
If $\Omega_0^{\gamma-2}(\widetilde{u})f(\tu;\gamma)\ge2$, we then have
 \begin{align*}
\frac{1}{\tdelta}\int_0^{\tdelta}(|\tu|^2|\Omega\chih|^2+2|\tu|^2|L\phi|^2)(\ub',\tu,\vartheta)\D\ub'\ge&2\tOmega^{2-\gamma}-\frac{1}{2}\tOmega^{2-\gamma}\ge\tOmega^{2-\gamma}.
\end{align*}
In view of \eqref{u*tilde} and \eqref{estimate-Atilde*},
 \begin{align*}
\frac{1}{\tdelta}\int_0^{\tdelta}(|\tu|^2|\Omega\chih|^2+2|\tu|^2|L\phi|^2)(\ub',\tu,\vartheta)\D\ub'\ge 17C^2\tOmega^4\left|\log\frac{|\tu_*|}{|\tu|}\right|,
\end{align*}
for all $\vartheta\in S^2$ which implies \eqref{conditiontrappedLphiutildepre}.
\end{proof}
By \eqref{deltatilde}, \eqref{def-F*utilde}, \eqref{estimate-Atilde*} and \eqref{conditiontrappedLphiutildepre}, we can apply Theorem \ref{formationoftrappedsurfaces1} with $a=1$, and conclude that $S_{\tdelta,\tu_*}$ is a closed trapped surface if $|\tu|<\varepsilon_0:=\min\{\varepsilon_1,\varepsilon_2,\varepsilon_3,\varepsilon_4\}$. This completes the proof of Case 2.

\end{proof}
\begin{remark}\label{remark-sufficientsingular}
The additional bound \eqref{conditionE}, or the stronger bound \eqref{conditionE'}, which is more natural, are introduced to guarantee the validity of the first one of the smallness assumptions \eqref{smallness} in both cases. We will use instead the bound \eqref{conditionE} in the actual proof in the next section. On the other hand, if we assume the rate of $\Omega_0$ tending to zero is sufficiently fast, for example,
\begin{align*}
\sup_{u_0\le u<0}\mathrm{e}^{-\frac{1}{17}C^{-2}\Omega_0^{-2}(u)}\left|\log\frac{|u|}{|u_0|}\right|<+\infty,
\end{align*}
which is clearly not optimal, then we can see from the proof that we do not need the bound \eqref{conditionE}. 
\end{remark}

\section{Gravitational perturbations}\label{proofofmaintheorem}

Recalling that the initial data on $C_{u_0}$ consists the conformal metric $\widehat{\gs}$, the lapse $\Omega$ and the scalar field function $\phi$. In this section, we should consider non-smooth initial data. 
\subsection{The instability theorems for non-smooth initial data}  First of all, we state and prove the instability theorem for non-smooth initial data.

\begin{theorem}\label{instabilitytheoremweak}
Suppose that the initial data on $\Cb_0$ is smooth, spherically symmetric, singular at the vertex, and  $\varphi(u)$ is bounded. The data on $C_{u_0}$ is given on a fixed intervel, say $\ub\in[0,1]$, not necessarily smooth, satisfies
\begin{equation}\label{condition-A-generic}
\begin{split}
|u_0|\sup_{\ub\in[0,1]}\left(\|\Omega\chih\|_{\H^{10}(\ub,u_0)}+\|\omega\|_{\H^8(\ub,u_0)}+\|L\phi\|_{\H^8(\ub,u_0)}\right)\le A
\end{split}
\end{equation}
and
\begin{equation}\label{condition-E-generic}
\begin{split}
\min\left\{\sup_{\delta\in(0,1]}\||u_0|(|u_0|\nablas)L\phi\|^2_{\L^2_{[0,\delta]}\H^7(u_0)}\left|\log\frac{C^2\delta}{|u_0|}\right|,\sup_{u_0\le u<0}\mathrm{e}^{-\frac{1}{17}C^{-2}\Omega_0^{-2}(u)}\left|\log\frac{|u|}{|u_0|}\right|\right\}\le E 
\end{split}
\end{equation}
for some $A\ge 1, E$ and $C\ge\max\{C_1,4\}$. Suppose also that for some $\gamma\in(0,2)$,
\begin{align}\label{instabilitycondition-generic}
\Omega_0^{\gamma-2}(\widetilde{u})f(\tu;\gamma)\ge32,
\end{align}
for some $|\tu|<\varepsilon_0$ where $f(\tu;\gamma)$ is the function defined in \eqref{def-f} and $\varepsilon_0$ is given in Theorem \ref{instabilitytheorem}.

Then there exists a pair of numbers $(\tdelta, \tu_*)$ such that the following conclusions hold: Let $(\widehat{\gs}_n, \Omega_n, L\phi_n)$ be a sequence of smooth initial data on $C_{u_0}$, satisfying \eqref{condition-A-generic} and \eqref{condition-E-generic} with the same constants $A$ and $E$. Suppose that
\begin{align*}
\frac{1}{2}|\xi|_{\gs}\le|\xi|_{\gs_n}\le 2|\xi|_{\gs}
\end{align*}
for all $2$-covariant tensor field $\xi$ and all $n$, and
\begin{align}
&\label{converge-chihn}\lim_{n\to\infty}\sup_{\vartheta\in S^2}\int_0^{1}|u_0|^2|\Omega_n\chih_n-\Omega\chih|_{\gs}^2(\ub,u_0,\vartheta)\D\ub=0,\\
&\label{converge-Lphin}\lim_{n\to\infty}\sup_{\vartheta\in S^2}\int_0^{1}|u_0|^2|L\phi_n(\ub,u_0,\vartheta)-L\phi(\ub,u_0,\vartheta)|^2\D\ub=0.
\end{align}
Then there exists some $N$ such that for all $n>N$, the sphere $S_{\tdelta, \tu_*}$ is a closed trapped surface in the maximal development of the initial data $(\widehat{\gs}_n, \Omega_n, L\phi_n)$. Moreover, the sphere $S_{\tdelta,\tu_*}$ is uniformly strictly trapped, in the sense that for all $n>N$,
\begin{align*}
(\tr\chi')_n|_{S_{\tdelta,\tu_*}}\le-\frac{17}{16}\cdot\frac{2}{|\tu_*|}, \ \ \Omega_n\tr\chib_n|_{S_{\tdelta,\tu}}\le-\frac{1}{|\tu_*|}.
\end{align*}
\end{theorem}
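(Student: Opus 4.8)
The plan is to deduce Theorem~\ref{instabilitytheoremweak} from the smooth instability theorem, Theorem~\ref{instabilitytheorem} (Case~2), applied to the smooth data $(\widehat{\gs}_n,\Omega_n,L\phi_n)$, combined with a uniform-in-$n$ version of the formation-of-trapped-surfaces theorem. The first step is to extract, from the hypotheses \eqref{condition-A-generic}, \eqref{condition-E-generic} and \eqref{instabilitycondition-generic} together with the uniform equivalence $\tfrac12|\xi|_{\gs}\le|\xi|_{\gs_n}\le 2|\xi|_{\gs}$, the pair $(\tdelta,\tu_*)=(\tdelta(\tu;\gamma),\tu_*(\tu;\gamma))$ defined through \eqref{u*tilde}--\eqref{deltatilde} using the fixed spherical data on $\Cb_0$ (note $\tdelta,\tu_*$ depend only on $\Omega_0$, $C$, $\gamma$, $\tu$, hence are independent of $n$). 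The point of the constant $32$ in \eqref{instabilitycondition-generic} rather than $2$ as in \eqref{instabilitycondition}: after passing to the metrics $\gs_n$, the initial energy $f_n(\tu;\gamma)$ defined with $\Omega_n\chih_n$ and $L\phi_n$ and the metric $\gs_n$ differs from $f(\tu;\gamma)$ by a bounded multiplicative factor (at most $4$ from the metric comparison, times $2$ from the convergence \eqref{converge-chihn}--\eqref{converge-Lphin} once $n$ is large), so $\Omega_0^{\gamma-2}(\tu)f_n(\tu;\gamma)\ge 2$ for all $n$ large enough. Thus for each such $n$ the hypotheses of Theorem~\ref{instabilitytheorem}, Case~2, are met for the smooth data $(\widehat{\gs}_n,\Omega_n,L\phi_n)$ with the \emph{same} $\tu$, and hence with the same $\tdelta$ and $\tu_*$, and the smooth solution exists on $0\le\ub\le\tdelta$, $u_0\le u\le\tu_*$.

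The second step is to check that $S_{\tdelta,\tu_*}$ is trapped \emph{uniformly in $n$}. Here I would revisit the proof of Theorem~\ref{formationoftrappedsurfaces} (via Theorem~\ref{formationoftrappedsurfaces1}) and observe that all the estimates there depend on the initial data only through the bound $\mathcal{A}\le a=1$ (which holds for all large $n$ by the arguments in the proof of Theorem~\ref{instabilitytheorem}, since $\mathcal{A}_n$ is controlled by $A$ and $E$ uniformly) and through the lower bound \eqref{conditiontrapped'} on the initial energy. The key inequality derived in that proof, namely
\begin{align*}
\int_0^{\tdelta}[|u_1|^2|\Omega\chih|^2+2|u_1|^2|L\phi|^2](\ub',u_1,\vartheta)\,\D\ub'\ge\tfrac12 I_0(\vartheta),
\end{align*}
together with the concluding computation
\begin{align*}
(\tr\chi')(\tdelta,\tu_*,\vartheta)-\frac{2h(\tu_*)}{|\tu_*|}\le-\frac{1}{4}\Omega_0^{-2}(\tu_*)\int_0^{\tdelta}(|\Omega\chih|^2+2|L\phi|^2)(\ub',\tu_*,\vartheta)\,\D\ub'\le-\frac{17}{16}\cdot\frac{2}{|\tu_*|},
\end{align*}
gives a \emph{quantitative} upper bound for $(\tr\chi')_n$ that is the same for all sufficiently large $n$; likewise the bound $|\Omega_n\tr\chib_n+2|\tu_*|^{-1}|\le cC^{-1}|\tu_*|^{-1}\le|\tu_*|^{-1}$ is uniform. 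This yields the stated uniform strict trapping $(\tr\chi')_n|_{S_{\tdelta,\tu_*}}\le-\tfrac{17}{16}\cdot\tfrac{2}{|\tu_*|}$ and $\Omega_n\tr\chib_n|_{S_{\tdelta,\tu_*}}\le-\tfrac{1}{|\tu_*|}$.

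The third step is a small but necessary bookkeeping point: the construction produces, for each fixed $n$, a smooth solution on the specified double-null rectangle which is part of the maximal future development of $(\widehat{\gs}_n,\Omega_n,L\phi_n)$ (by the standard uniqueness theory for the characteristic Cauchy problem, cf.\ the existence argument outlined after Theorem~\ref{existencetheorem}); since $S_{\tdelta,\tu_*}$ lies in this rectangle, it is a closed trapped surface in that maximal development. Taking $N$ to be the maximum of the finitely many thresholds appearing above (the threshold from Theorem~\ref{instabilitytheorem}'s $\varepsilon$-conditions, the threshold for $\mathcal{A}_n\le1$, and the threshold past which \eqref{converge-chihn}--\eqref{converge-Lphin} upgrade $f_n\ge\tfrac{1}{16}f$) completes the proof.

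\textbf{Main obstacle.} The substantive difficulty is the \emph{uniformity} in Step~2: one must be sure that no estimate in the chain Theorem~\ref{existencetheorem} $\Rightarrow$ Theorem~\ref{formationoftrappedsurfaces} $\Rightarrow$ Theorem~\ref{instabilitytheorem} secretly depends on higher regularity or on $n$-dependent constants of the smooth approximants. The resolution is that every such estimate is, by design, expressed in terms of the scale-invariant bound $\mathcal{A}$ and the weight functions $\mathscr{F},\mathscr{E},\mathscr{W}$, all of which are controlled uniformly by $A$ and $E$ via \eqref{condition-A-generic}--\eqref{condition-E-generic}; the metric comparison hypothesis then transfers $L^2$-type energies between $\gs$ and $\gs_n$ with universal constants. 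A secondary, more delicate point is that the \emph{definition} of $f_n$ uses the metric $\gs_n$ on the left (through $|\Omega_n\chih_n|_{\gs_n}$) while the hypothesis \eqref{instabilitycondition-generic} is stated for $\gs$; the factor $32=2\cdot 4\cdot 4$ is exactly the slack needed to absorb the metric-comparison loss ($4$), the $C^0$-convergence loss ($2$), and an extra factor for the norm on $\chih$ versus $\Omega\chih$ once $\Omega_n\to\Omega$ is used — this must be spelled out carefully but involves no new ideas.
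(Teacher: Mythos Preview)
Your proposal is correct and follows essentially the same route as the paper: fix $(\tdelta,\tu_*)$ from \eqref{u*tilde}--\eqref{deltatilde} (these depend only on the $\Cb_0$ data), use the metric comparison and the $L^2$-convergence to transfer the lower bound on $f$ to a lower bound on $f_n$, then invoke Theorem~\ref{instabilitytheorem} (Case~2) for each smooth approximant, with uniform strict trapping read off from the explicit inequalities in the proof of Theorem~\ref{formationoftrappedsurfaces}. One small bookkeeping correction: your decomposition $32=2\cdot4\cdot4$ with a third factor for ``$\Omega_n\to\Omega$'' is not how the constant arises---the hypotheses are already phrased in terms of $\Omega_n\chih_n-\Omega\chih$, so no separate convergence of $\Omega_n$ is needed; the paper's computation uses $|\xi|_{\gs_n}^2\ge\tfrac14|\xi|_{\gs}^2$ together with $|a|^2\ge\tfrac12|b|^2-|a-b|^2$ to get $f_n\ge\tfrac18 f-\tfrac{1}{16}f=\tfrac{1}{16}f$, whence $\Omega_0^{\gamma-2}(\tu)f_n\ge\tfrac{1}{16}\cdot32=2$.
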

\begin{proof}
Choose numbers $\tdelta$ and $\tu_*$ defined through \eqref{u*tilde} and \eqref{deltatilde} in terms of $\tu$. The trapped surface being strictly trapped uniformly is followed directly from the proof of Theorem \ref{formationoftrappedsurfaces}. We only need to verify that for all $n>N$ for some $N$, \eqref{instabilitycondition} holds and then apply Theorem \ref{instabilitytheorem} and Remark \ref{remark-sufficientsingular}. Indeed, from the assumptions, we can find some $N_1$ such that for all $n>N_1$,
\begin{align*}
\sup_{\vartheta\in S^2}\int_0^{\tdelta(\tu;\gamma)}|u_0|^2|\Omega_n\chih_n-\Omega\chih|_{\gs}^2(\ub,u_0,\vartheta)\D \ub\le\frac{1}{8}\tdelta(\tu;\gamma)f(\tu;\gamma),
\end{align*}
then we obtain
\begin{align*}
&\frac{1}{\tdelta(\tu;\gamma)}\int_0^{\tdelta(\tu;\gamma)}|u_0|^2|\Omega_n\chih_n|_{\gs_n}^2(\ub,u_0,\vartheta)\D \ub\\
\ge&\frac{1}{4\tdelta(\tu;\gamma)}\left(\frac{1}{2}\int_0^{\tdelta(\tu;\gamma)}|u_0|^2|\Omega\chih|_{\gs}^2(\ub,u_0,\vartheta)\D \ub-\int_0^{\tdelta(\tu;\gamma)}|u_0|^2|\Omega_n\chih_n-\Omega\chih|_{\gs}^2(\ub,u_0,\vartheta)\D \ub\right)\\
\ge&\frac{1}{8\tdelta(\tu;\gamma)}\int_0^{\tdelta(\tu;\gamma)}|u_0|^2|\Omega\chih|_{\gs}^2(\ub,u_0,\vartheta)\D \ub-\frac{1}{32}\tdelta(\tu;\gamma)f(\tu;\gamma).
\end{align*}
Similar argument allows us to find some $N_2$ such that for $n>N_2$,
\begin{align*}
&\frac{1}{\tdelta(\tu;\gamma)}\int_0^{\tdelta(\tu;\gamma)}||u_0|L\phi_n(\ub,u_0,\vartheta)+(\varphi(\tu)-\varphi(u_0)|^2\D \ub\\
\ge&\frac{1}{8\tdelta(\tu;\gamma)}\int_0^{\tdelta(\tu;\gamma)}||u_0|L\phi(\ub,u_0,\vartheta)+(\varphi(\tu)-\varphi(u_0)|^2\D \ub-\frac{1}{32}\tdelta(\tu;\gamma)f(\tu;\gamma).
\end{align*}
Summing up the above two inequalities implies that \eqref{instabilitycondition} holds for $(\widehat{\gs}_n,\Omega_n,L\phi_n)$ if $n>N=\max\{N_1,N_2\}$ from the assumption \eqref{instabilitycondition-generic}. 
\end{proof}
From the conclusions of the above theorem, if the solutions of the Einstein equations with initial data $(\widehat{\gs}_n,\Omega_n,\phi_n)$ converge, in any senses such that $(\tr\chi')_n$ and $\Omega_n\tr\chib_n$ converge pointwisely on $S_{\tdelta,\tu}$, then the limiting spacetime has a closed trapped surface. Therefore we may say, that the future development of such initial data has a closed trapped surface.

Finally, we have the following.
\begin{theorem}\label{instabilitycorollaryweak}
Suppose that the initial data on $\Cb_0$ is smooth, spherically symmetric and singular at the vertex. The data $(\widehat{\gs},\Omega,\phi)$ on $C_{u_0}$ satisfies \eqref{condition-A-generic} and \eqref{condition-E-generic} for some $C\ge\max\{C_1,4\}$. If $\varphi(u)$ is bounded, we assume in addition
\begin{align}\label{instabilityconditionlimsup-generic}
\limsup_{\tu\to0^-}\Omega_0^{\gamma-2}(\tu)f(\tu;\gamma)>32
\end{align}
 for some $\gamma\in(0,2)$.
 
Then we can find two sequences $\tdelta_k\to0^+$ and $\tu_{1,k}\to0^-$, such that the following conclusions hold: Let $(\widehat{\gs}_n,\Omega_n,\phi_n)$ be a sequence of smooth initial data satisfying the same assumptions as in the statement of Theorem \ref{instabilitytheoremweak}. 
Then for every $k$, there exists some $N=N_k$, such that for all $n>N_k$, 
\begin{align*}
(\tr\chi')_n|_{S_{\tdelta_k,\tu_{1,k}}}\le-\frac{17}{16}\cdot\frac{2}{|\tu_{1,k}|}, \ \ \Omega_n\tr\chib_n|_{S_{\tdelta_k,\tu_{1,k}}}\le-\frac{1}{|\tu_{1,k}|}.
\end{align*}
In particular, $S_{\tdelta_k,\tu_{1,k}}$ is trapped in the maximal development of $(\widehat{\gs}_n,\Omega_n,L\phi_n)$ for all $n>N_k$.
\end{theorem}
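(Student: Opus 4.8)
\textbf{Proof proposal for Theorem \ref{instabilitycorollaryweak}.}

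The plan is to derive this from Theorem \ref{instabilitytheoremweak} by the same device used to pass from Case 2 of Theorem \ref{instabilitytheorem} to Corollary \ref{instabilitycorollary}, namely by extracting from the $\limsup$ hypothesis \eqref{instabilityconditionlimsup-generic} a sequence $\tu_k\to 0^-$ along which the pointwise condition \eqref{instabilitycondition-generic} of Theorem \ref{instabilitytheoremweak} holds. First I would fix $\gamma\in(0,2)$ for which \eqref{instabilityconditionlimsup-generic} holds, and choose a sequence $\tu_k\to 0^-$ with $|\tu_k|<\varepsilon_0$ (the constant from Theorem \ref{instabilitytheorem}) such that $\Omega_0^{\gamma-2}(\tu_k)f(\tu_k;\gamma)>32$ for every $k$; this is possible precisely because the $\limsup$ exceeds $32$. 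For each such $\tu_k$ I then define $\tdelta_k:=\tdelta(\tu_k;\gamma)$ and $\tu_{1,k}:=\tu_{*}(\tu_k;\gamma)$ through the relations \eqref{u*tilde} and \eqref{deltatilde}, exactly as in Theorem \ref{instabilitytheoremweak}. Since $\Omega_0(\tu_k)\to 0$ as $\tu_k\to 0^-$ (Lemma \ref{Omega_0to0}, the vertex being singular), the defining relations force $\tdelta_k\to 0^+$ and $|\tu_{1,k}|\to 0^-$; this is the same elementary computation carried out in the proof of Case 1 and Case 2, using that $x|\log x|\to 0$ as $x\to 0^+$.

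Next, for each fixed $k$, I would apply Theorem \ref{instabilitytheoremweak} with $\tu$ replaced by $\tu_k$. The data $(\widehat{\gs},\Omega,\phi)$ satisfies \eqref{condition-A-generic} and \eqref{condition-E-generic} by hypothesis, and \eqref{instabilitycondition-generic} holds at $\tu=\tu_k$ by our choice of the sequence, so Theorem \ref{instabilitytheoremweak} produces a threshold $N_k$ such that for all $n>N_k$ the sphere $S_{\tdelta_k,\tu_{1,k}}$ is a closed trapped surface in the maximal development of $(\widehat{\gs}_n,\Omega_n,\phi_n)$, and moreover is uniformly strictly trapped with
\begin{align*}
(\tr\chi')_n\big|_{S_{\tdelta_k,\tu_{1,k}}}\le-\frac{17}{16}\cdot\frac{2}{|\tu_{1,k}|},\qquad \Omega_n\tr\chib_n\big|_{S_{\tdelta_k,\tu_{1,k}}}\le-\frac{1}{|\tu_{1,k}|}.
\end{align*}
This is exactly the assertion of the theorem. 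The only point requiring a little care is that Theorem \ref{instabilitytheoremweak} is stated for a single $\tu$; one must note that the constants $A$, $E$, $C$, $\gamma$ are held fixed for the whole argument, so that each application of Theorem \ref{instabilitytheoremweak} uses the same $\varepsilon_0$ and the same structural constants, and only the pair $(\tdelta_k,\tu_{1,k})$ and the threshold $N_k$ depend on $k$. I would also record that the convergence hypotheses \eqref{converge-chihn}--\eqref{converge-Lphin} for the approximating sequence $(\widehat{\gs}_n,\Omega_n,\phi_n)$ are inherited unchanged at each level $k$, since the interval $\ub\in[0,1]$ on which they are posed contains every $[0,\tdelta_k]$.

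The main obstacle, such as it is, is purely bookkeeping: one must check that $\varepsilon_0$ — which in Theorem \ref{instabilitytheorem} was defined as $\min\{\varepsilon_1,\varepsilon_2,\varepsilon_3,\varepsilon_4\}$ with the $\varepsilon_i$ depending only on the fixed data on $\Cb_0$, on $A$, $E$, $C$ and $\gamma$ — does not shrink to zero as we vary $\tu_k$, so that the condition $|\tu_k|<\varepsilon_0$ can genuinely be met for all large $k$. This holds because every $\varepsilon_i$ in the proof of Theorem \ref{instabilitytheorem}, Case 2, was extracted from limits of the form $\tu\to 0^-$ with the ambient constants frozen, hence is a single positive number independent of the particular $\tu$ under consideration. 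Once this is observed, the passage from Theorem \ref{instabilitytheoremweak} to Theorem \ref{instabilitycorollaryweak} is immediate, in complete parallel with the passage from the single-$\tu$ statement of Theorem \ref{instabilitytheorem} Case 2 to the sequential statement of Corollary \ref{instabilitycorollary}.
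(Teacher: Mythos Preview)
Your treatment of the bounded-$\varphi$ case is correct and matches the paper's approach: extract a sequence $\tu_k\to0^-$ from \eqref{instabilityconditionlimsup-generic} and apply Theorem \ref{instabilitytheoremweak} at each level.

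However, you have overlooked that Theorem \ref{instabilitycorollaryweak} does \emph{not} assume $\varphi(u)$ is bounded. The hypothesis \eqref{instabilityconditionlimsup-generic} is only imposed conditionally (``If $\varphi(u)$ is bounded, we assume in addition\ldots''), so the theorem must also cover the case where $\varphi(u)$ is unbounded as $u\to0^-$. In that case Theorem \ref{instabilitytheoremweak} is not available at all (it explicitly assumes $\varphi$ bounded), and you have nothing in your argument to produce the sequences $(\tdelta_k,\tu_{1,k})$. The paper handles this separately by invoking Case 1 of Theorem \ref{instabilitytheorem}: the key observation there is that the choice of the sequence $u_{0,n}$ (and hence of $\delta_n$ and $u_{1,n}$) depends only on the data on $\Cb_0$ through $\varphi$ and $\Omega_0$, not on the data on $C_{u_0}$. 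Consequently the locations of the trapped spheres are the same for every member of the approximating sequence $(\widehat{\gs}_n,\Omega_n,\phi_n)$, and the uniform quantitative bounds follow directly from the proof of Theorem \ref{formationoftrappedsurfaces} once the constants $A$, $E$ are shared. In this case one does not even need the convergence hypotheses \eqref{converge-chihn}--\eqref{converge-Lphin}. You should add a short paragraph dispatching this case.
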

\begin{proof}
When $\varphi(u)$ is bounded, this is a direct corollary of Theorem \ref{instabilitytheoremweak}. When $\varphi(u)$ is unbounded, we can see from the proof of Case 1 in Theorem \ref{instabilitytheorem} that the choice of the sequence of $u_{0,n}$, and therefore the locations of the closed trapped surfaces only depends on the initial data on $\Cb_0$, and the conclusion follows from this observation. Moreover, in this case, the sequences of $\chih_n$ and $(L\phi)_n$ does not need to satisfy \eqref{converge-chihn} and \eqref{converge-Lphin}.

\end{proof}
Using a limiting argument, we may also say, that the future development of the such initial data has a sequence of closed trapped surfaces approaching the singularity.
\begin{remark}
In Theorem \ref{instabilitytheoremweak}, if the function $f$ in the condition \eqref{instabilitycondition-generic} is replaced by another function $f'$ defined as
\begin{align}\label{def-f'}
f'(\widetilde{u};\gamma)=\frac{1}{\tdelta(\tu;\gamma)}\inf_{\vartheta\in S^2}\int_0^{\tdelta(\tu;\gamma)}|u_0|^2|\Omega\chih|^2(\ub,u_0,\vartheta)\D\ub,
\end{align}
then the converging of $L\phi$ in \eqref{converge-Lphin} is not needed. \end{remark}
\subsection{The space of the initial data sets} The final part is to investigate the space of the initial data sets. We first choose the stereographic charts $(\vartheta_1,\vartheta_2)$ (both north pole and south pole charts) on $S_{0,u_0}$ and extend them to the whole $C_{u_0}$ for $\ub\in[0,1]$ by requiring $L\theta_A=0$, $A=1,2$. Since we only consider the perturbations on the conformal metric $\widehat{\gs}$ for simplicity, we will fix the lapse $\Omega$ such that $\Omega\in C^1_{\ub}H^{8}_\vartheta$ and the scalar field function $\phi$ such that $\phi\in C^1_{\ub}H^{8}_\vartheta$. We also fix an initial data on $\Cb_0$ which is smooth, spherically symmetric and singular at the vertex. Moreover, we assume that \eqref{condition-E-generic} holds for some $E$ where the metric on $S_{\ub,u_0}$ is understood to be the standard round metric with radius $|u_0|$ because the real metric $\gs$ is not yet defined. If $\phi$ is assumed to be smooth, then \eqref{condition-E-generic} does hold for some $E$.

In such a coordinate system, the conformal metric $\widehat{\gs}$ can be written as a pair of symmetric positive definite matrices $\widehat{\gs}_{AB}$ writing in the form
\begin{align}\label{conformalmetricincoordinate}
\widehat{\gs}_{AB}(\ub,u_0,\vartheta)=\frac{|u_0|^2}{(1+\frac{1}{4}(\vartheta_1^2+\vartheta_2^2))^2}m_{AB}(\ub,\vartheta)
\end{align}
where $m_{AB}$ represents a pair of matrices with determinant $1$ and satisfying the coordinate transformation rule, see for example Chapter 2 in \cite{Chr}. For simplicity, we use a single notation $\Psi_{AB}$ to denote the pair  of matrices $\Psi_{AB}, \Psi'_{AB}$ satisfying the coordinate transformation rule. Then this pair of matrices defines a tensor $\Psi$. We then introduce the definitions of the spaces of the initial data. {\bf We remark that all the definitions, statements and proofs below depend on the initial data on $\Cb_0$ we fix.}

\begin{definition}
We define $\mathcal{I}$ to be the space of the conformal metrics $\widehat{\gs}$ defined for $\ub\in[0,1]$ and $\vartheta\in S^2$ such that $\widehat{\gs}_{AB}\in C^1_{\ub}H^{10}_\vartheta$, and $\widehat{\gs}(0,\vartheta)$ is the standard round metric with radius $|u_0|$, $\frac{\partial}{\partial\ub}\widehat{\gs}(0,\vartheta)=0$.
\end{definition}

\begin{definition} We define $\mathcal{E}\subset \mathcal{I}$ to be the  collection of $\widehat{\gs}\in\mathcal{I}$ such that the conclusion of Theorem \ref{instabilitycorollaryweak} does not hold for any sequences $\tdelta_k\to0^+$ and $\tu_{1,k}\to 0^-$.
\end{definition}
\begin{definition} We define $\mathcal{E}_\varepsilon\subset \mathcal{I}$ to be the  collection of $\widehat{\gs}\in\mathcal{I}$ such that the conclusion of Theorem \ref{instabilitytheoremweak} does not hold for any $(\tdelta, \tu_*)$ with $\tdelta<\varepsilon$.
\end{definition}

 \begin{remark}
 It is not difficult to use a similar argument in Chapter 2 in \cite{Chr} that \eqref{condition-A-generic} holds for some $A$ and \eqref{condition-E-generic} holds for some different $E$ by comparing  on $S_{\ub,u_0}$ the real metric $\gs$ and the standard round metric.
 \end{remark}
At last, we will prove the following precise forms of Theorem \ref{main1} and \ref{main2}, the main results of the present article.\begin{theorem}[Precise version of Theorem \ref{main1}]\label{main1precise} $\mathcal{E}$ is of first category in $\mathcal{I}$. This is to say, 
$\mathcal{E}^c$, the complement of $\mathcal{E}$ in $\mathcal{I}$, contains a subset that is a countably intersection of open and dense subsets in $\mathcal{I}$.
\end{theorem}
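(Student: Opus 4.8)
\textbf{Proof proposal for Theorem \ref{main1precise}.}

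The plan is to deduce Theorem \ref{main1precise} from Theorem \ref{main2precise}, which we will have stated as the assertion that for every $\varepsilon>0$, the set $\mathcal{E}_\varepsilon^c$ contains an open and dense subset $\mathcal{U}_\varepsilon\subset\mathcal{I}$. Granting this, we pick a sequence $\varepsilon_i\to0^+$ and observe that, by Theorem \ref{instabilitycorollaryweak} (when $\varphi$ is bounded) together with the observation that the sets $\mathcal{E}_{\varepsilon_i}$ exhaust the obstruction to the limsup condition, one has $\mathcal{E}\subset\bigcup_i\mathcal{E}_{\varepsilon_i}$; equivalently $\bigcap_i\mathcal{E}_{\varepsilon_i}^c\subset\mathcal{E}^c$. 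Since $\mathcal{U}_{\varepsilon_i}\subset\mathcal{E}_{\varepsilon_i}^c$ is open and dense, $\bigcap_i\mathcal{U}_{\varepsilon_i}$ is a countable intersection of open dense subsets contained in $\mathcal{E}^c$. (When $\varphi$ is unbounded, $\mathcal{E}=\emptyset$ by Case 1 of Theorem \ref{instabilitytheorem}, and there is nothing to prove.) Thus the entire burden is to prove Theorem \ref{main2precise}, i.e. that for fixed $\varepsilon>0$ the set
\begin{align*}
\mathcal{U}_\varepsilon:=\left\{\widehat{\gs}\in\mathcal{I}\ \middle|\ \exists\,\tu\text{ with }\tdelta(\tu;\gamma)<\varepsilon\text{ and }\Omega_0^{\gamma-2}(\tu)f(\tu;\gamma)>33\right\}
\end{align*}
is open and dense in $\mathcal{I}$, where $\gamma\in(0,2)$ is fixed in advance and the functions $\tdelta(\cdot;\gamma)$, $f(\cdot;\gamma)$ are as in Theorem \ref{instabilitytheorem}; note $\tdelta(\tu;\gamma)\to0^+$ as $\tu\to0^-$, so the constraint $\tdelta(\tu;\gamma)<\varepsilon$ just restricts $|\tu|$ to a small interval. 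Membership in $\mathcal{U}_\varepsilon$ forces the hypothesis \eqref{instabilitycondition-generic} of Theorem \ref{instabilitytheoremweak}, so $\mathcal{U}_\varepsilon\subset\mathcal{E}_\varepsilon^c$, and then the argument above closes.

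\emph{Openness.} Fix $\widehat{\gs}\in\mathcal{U}_\varepsilon$ and a witnessing $\tu$. The quantity $f(\tu;\gamma)$ is, up to the fixed factor $1/\tdelta(\tu;\gamma)$, an infimum over $\vartheta\in S^2$ of $\int_0^{\tdelta(\tu;\gamma)}(|u_0|^2|\Omega\chih|^2+||u_0|L\phi+(\varphi(\tu)-\varphi(u_0))|^2)\D\ub$, where $\chih$ is the shear of the conformal metric $\widehat{\gs}$ on $C_{u_0}$ and $\Omega,\phi$ (hence $L\phi$, $\varphi$) are fixed. The map $\widehat{\gs}\mapsto\Omega\chih$, computed from $\widehat{\gs}$ and the first $\ub$-derivative of $\widehat{\gs}$ by the standard formula for the second fundamental form of a null hypersurface, is continuous from $C^1_{\ub}H^{10}_\vartheta$ into $L^\infty_\vartheta L^2_{\ub}$ (indeed into a much stronger topology); the pointwise-in-$\vartheta$ $L^2_{\ub}$ integral and the $\inf$ over the compact set $S^2$ are then continuous in that topology. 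Hence $\widehat{\gs}\mapsto f(\tu;\gamma)$ is continuous on $\mathcal{I}$ for each fixed admissible $\tu$, so $\{\widehat{\gs}:\Omega_0^{\gamma-2}(\tu)f(\tu;\gamma)>33\}$ is open; $\mathcal{U}_\varepsilon$ is a union of such sets and is open.

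\emph{Density.} Given any $\widehat{\gs}_0\in\mathcal{I}$ and any neighborhood of it in $C^1_{\ub}H^{10}_\vartheta$, we must produce a member of $\mathcal{U}_\varepsilon$ in that neighborhood. Choose $|\tu|$ small enough that $\tdelta(\tu;\gamma)<\varepsilon$; since $\Omega_0(\tu)\to0$ as $\tu\to0^-$ (Lemma \ref{Omega_0to0}), the target threshold $33\,\Omega_0^{2-\gamma}(\tu)$ is arbitrarily small for $|\tu|$ small. The idea is to add to $\widehat{\gs}_0$ a small, short-pulse-type perturbation supported in $\ub\in[0,\tdelta(\tu;\gamma)]$ that is angularly \emph{isotropic} to leading order — for instance a perturbation whose shear on $C_{u_0}$ has the schematic form $\chih\approx\lambda\,e(\ub)\,\Theta(\vartheta)$ with $\Theta$ a fixed traceless symmetric $2$-tensor of unit pointwise norm on $S^2$ (such $\Theta$ exists, e.g. built from the real and imaginary parts of a spin-$2$ spherical harmonic, suitably combined so that $|\Theta|_{\gs}$ has a positive lower bound) and $e(\ub)$ a profile vanishing to appropriate order at $\ub=0$ to respect the boundary conditions defining $\mathcal{I}$. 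One checks that the $C^1_{\ub}H^{10}_\vartheta$-size of this perturbation is $O(\lambda)$ as $\lambda\to0$, while its contribution to $\frac{1}{\tdelta}\inf_\vartheta\int_0^{\tdelta}|u_0|^2|\Omega\chih|^2\D\ub$ is $\gtrsim\lambda^2\tdelta$; more precisely, by choosing $e$ with $\int_0^{\tdelta}e^2\D\ub\gtrsim\tdelta$, the perturbed $f(\tu;\gamma)$ is bounded below by a constant multiple of $\lambda^2$ minus cross terms with the fixed data, which are $O(\lambda)$ and are dominated once $\lambda$ is chosen in the right (moderate, not small) range relative to $\Omega_0(\tu)$. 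Taking $|\tu|$ small first (so $33\,\Omega_0^{2-\gamma}(\tu)$ is tiny) and then $\lambda$ appropriately — small enough to stay in the prescribed neighborhood, large enough that $\Omega_0^{\gamma-2}(\tu)f(\tu;\gamma)>33$ — yields the desired element of $\mathcal{U}_\varepsilon$.

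\emph{Main obstacle.} The delicate point in the density step is the interplay of three scales: the neighborhood radius $\eta$ in $C^1_{\ub}H^{10}_\vartheta$ that we are allowed, the pulse amplitude $\lambda$, and the weight $\Omega_0^{2-\gamma}(\tu)$, together with the constraint $\tdelta(\tu;\gamma)<\varepsilon$. One must verify that the window of admissible $\lambda$ is nonempty: this requires that for $|\tu|$ small the lower bound $\sim\lambda^2$ on the perturbed energy can exceed $33\,\Omega_0^{2-\gamma}(\tu)$ while simultaneously $\lambda<\eta$ (so that we remain in the neighborhood) and while the cross terms with the fixed $\widehat{\gs}_0$ and with the fixed $L\phi$ do not destroy the lower bound uniformly in $\vartheta$ — this is where the $\inf_\vartheta$ and the need for $\Theta$ to be nondegenerate enter. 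Since $\Omega_0^{2-\gamma}(\tu)\to0$ as $\tu\to0^-$ while $\eta$ is fixed, this window is indeed nonempty for all sufficiently small $|\tu|$, but carrying this out carefully — in particular handling the cross term $2|u_0|^2\langle\Omega\chih_0,\lambda e\,\Theta\rangle$ uniformly in $\vartheta$, possibly by averaging over a finite family of mutually ``generic'' choices of $\Theta$ so that the infimum over $\vartheta$ of at least one perturbation is large — is the technical heart of the proof.
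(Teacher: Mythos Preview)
Your overall architecture is the same as the paper's: reduce Theorem \ref{main1precise} to showing that a set of the form $\mathcal{U}_\varepsilon=\{\widehat{\gs}:\exists\,\tu,\ \tdelta(\tu;\gamma)<\varepsilon,\ \Omega_0^{\gamma-2}(\tu)f(\tu;\gamma)>33\}$ is open and dense, and deduce openness from continuity of $\widehat{\gs}\mapsto\Omega\chih$ in the relevant topology. The density argument, however, has a genuine gap. You propose a perturbation with shear $\lambda\,e(\ub)\,\Theta(\vartheta)$ for a fixed traceless symmetric $2$-tensor $\Theta$ on $S^2$ with $|\Theta|_{\gs}$ bounded below. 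No such $\Theta$ exists: the bundle of traceless symmetric $2$-tensors over $S^2$ is, as a real rank-$2$ bundle, isomorphic to $K^{\otimes 2}$ (quadratic differentials), which has Euler number $\pm 4\neq 0$; every smooth section must vanish somewhere. Consequently $\inf_{\vartheta}\int_0^{\tdelta}|\lambda e\,\Theta|^2\,\D\ub=0$, and your lower bound on the perturbed $f$ collapses at the zeros of $\Theta$. Your hedge ``the infimum over $\vartheta$ of at least one perturbation is large'' does not help either, since each individual $\Theta_i$ still vanishes somewhere.

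The paper resolves exactly this obstruction in Proposition \ref{dense}: it picks \emph{two} smooth traceless symmetric tensors $\widetilde{\Psi},\widetilde{\Psi}'$ on $S^2$ with disjoint zero sets (possible since generically the zeros are isolated), then builds a $\ub$-dependent tensor $\overline{\Psi}$ by alternating between $\widetilde{\Psi}$ and $\widetilde{\Psi}'$ on dyadic $\ub$-intervals, so that $\frac{1}{\ub}\int_0^{\ub}|\overline{\Psi}|^2(\ub',\vartheta)\,\D\ub'>0$ for every $\ub>0$ and every $\vartheta$; after normalization this average is identically $1$. The perturbed conformal metric is then $\exp$ of $t$ times the $\ub$-integral of (a cutoff of) this normalized tensor, and the $\inf_\vartheta$ is controlled precisely because the oscillation in $\ub$ compensates the angular zeros. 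A secondary point you omit is that the inclusion $\mathcal{U}_\varepsilon\subset\mathcal{E}_\varepsilon^c$ via Theorem \ref{instabilitytheoremweak} requires $|\tu|<\varepsilon_0$, and $\varepsilon_0$ depends on the $A$-bound of $\widehat{\gs}$, which is not uniform over $\mathcal{I}$; the paper handles this in the proof of Theorem \ref{main2precise} by localizing to balls $B_{A_i}$ and taking a union.
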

\begin{theorem}[Precise version of Theorem \ref{main2}]\label{main2precise} $\mathcal{E}_\varepsilon^c$ contains a subset that is open and dense in $\mathcal{I}$ for all $\varepsilon>0$.
\end{theorem}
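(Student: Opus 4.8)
## Proof proposal for Theorems \ref{main1precise} and \ref{main2precise}

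The plan is to deduce both statements from the instability machinery of Section \ref{Secinstability}, in particular Theorem \ref{instabilitytheoremweak} and Theorem \ref{instabilitycorollaryweak}, by exhibiting an explicit open and dense set inside $\mathcal{E}_\varepsilon^c$, and then taking a countable intersection over $\varepsilon\to 0$ to handle $\mathcal{E}^c$. Fix once and for all the smooth spherically symmetric singular data on $\Cb_0$, the smooth lapse $\Omega$, the smooth scalar field $\phi$, and a constant $\gamma\in(0,2)$; since $\phi$ is smooth, \eqref{condition-E-generic} holds for some $E$. For $\widehat{\gs}\in\mathcal{I}$, let $\gs$ denote the induced metric on the spheres $S_{\ub,u_0}$ (recall $\gs$ is recovered from $\widehat{\gs}$, $\Omega$ and the structure equations, comparing with the round metric as in Chapter 2 of \cite{Chr}), and consider the set
\begin{align*}
\mathcal{G}_\varepsilon=\left\{\widehat{\gs}\in\mathcal{I}\ :\ \text{there exists $\tu$ with $\tdelta(\tu;\gamma)<\varepsilon$ and }\Omega_0^{\gamma-2}(\tu)f(\tu;\gamma)>33\right\},
\end{align*}
where $f$ and $\tdelta$ are the functions from \eqref{def-f}, \eqref{u*tilde}, \eqref{deltatilde}. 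The first claim is $\mathcal{G}_\varepsilon\subset\mathcal{E}_\varepsilon^c$: if $\widehat{\gs}\in\mathcal{G}_\varepsilon$, then \eqref{instabilitycondition-generic} holds (with $32$ replaced by $33>32$) at the corresponding $\tu$, so Theorem \ref{instabilitytheoremweak} applies and produces $(\tdelta,\tu_*)$ with $\tdelta<\varepsilon$ for which the conclusion of that theorem holds; hence $\widehat{\gs}\notin\mathcal{E}_\varepsilon$. Likewise $\bigcap_i\mathcal{G}_{\varepsilon_i}\subset\mathcal{E}^c$ for any sequence $\varepsilon_i\to0$, via Theorem \ref{instabilitycorollaryweak} and \eqref{instabilityconditionlimsup-generic}, because membership in all $\mathcal{G}_{\varepsilon_i}$ forces $\limsup_{\tu\to0^-}\Omega_0^{\gamma-2}(\tu)f(\tu;\gamma)\ge 33>32$.

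The heart of the argument is therefore to show that $\mathcal{G}_\varepsilon$ is open and dense in $\mathcal{I}$ (equipped with the $C^1_{\ub}H^{10}_\vartheta$ topology) for every $\varepsilon>0$; Theorems \ref{main1precise} and \ref{main2precise} then follow since a countable intersection of open dense sets, and any single open dense set, is (contained in, resp. equal to) the type of set demanded. For \emph{openness}: for a fixed admissible $\tu$ (with $\tdelta(\tu;\gamma)<\varepsilon$, a condition depending only on $\Cb_0$ and $\gamma$), the map $\widehat{\gs}\mapsto f(\tu;\gamma)$ is continuous in the $C^1_{\ub}H^{10}_\vartheta$ topology — indeed $f$ is an infimum over $\vartheta\in S^2$ of an integral in $\ub$ of $|u_0|^2|\Omega\chih|_{\gs}^2$ plus a scalar-field term, and $\chih$ and $\gs$ depend continuously on $\widehat{\gs}$ (with $\Omega$, $\phi$ fixed) in a strong enough norm to control these quantities uniformly in $\vartheta$; the $L\phi$ term is independent of $\widehat{\gs}$. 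So $\{\Omega_0^{\gamma-2}(\tu)f(\tu;\gamma)>33\}$ is open, and $\mathcal{G}_\varepsilon$, being a union over admissible $\tu$ of such open sets, is open. For \emph{density}: given $\widehat{\gs}_0\in\mathcal{I}$ and a small ball around it, we must perturb $\widehat{\gs}_0$ within that ball into $\mathcal{G}_\varepsilon$. The idea is to add a compactly supported (in $\ub\in(0,1)$) shear perturbation localized near $\ub=0$: choose $\tu$ close to $0^-$ so that $\tdelta(\tu;\gamma)<\varepsilon$ and $\Omega_0^{\gamma-2}(\tu)$ is as large as we wish (this uses $\Omega_0\to 0$ and the explicit relations \eqref{u*tilde}, \eqref{deltatilde}: as $\tu\to 0^-$ the quantity $\Omega_0^{\gamma-2}(\tu)$ blows up while $\tdelta(\tu;\gamma)\to 0$, and $\tdelta(\tu;\gamma)$ is comparable to $\Omega_0^{2}(\tu)\,\Omega_{0,*}^{2}|\tu_*|/\big(C^2\Omega_0^4(\tu)|\log(|\tu_*|/|\tu|)|\big)$, which one checks tends to $0$). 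Then on the short interval $\ub\in[0,\tdelta(\tu;\gamma)]$, add to $\widehat{\gs}_0$ a symmetric traceless perturbation whose associated $|u_0|\chih$ has a prescribed $\ub$-profile of size $\sim \eta$ (small in $C^1_{\ub}H^{10}_\vartheta$ but isotropic in $\vartheta$, e.g. built from a fixed smooth traceless matrix-valued bump function compatible with the stereographic transition and vanishing to first order at $\ub=0$). Because $\tdelta(\tu;\gamma)$ is small, the contribution of this bump to the $C^1_{\ub}H^{10}_\vartheta$ norm can be made smaller than the radius of the ball, while its contribution to
\begin{align*}
\frac{1}{\tdelta(\tu;\gamma)}\int_0^{\tdelta(\tu;\gamma)}|u_0|^2|\Omega\chih|_{\gs}^2(\ub,u_0,\vartheta)\,\D\ub
\end{align*}
is a fixed positive constant (independent of how small we take the norm, by rescaling the bump's amplitude relative to $\tdelta$), hence $f(\tu;\gamma)\ge$ that constant, and multiplying by the huge factor $\Omega_0^{\gamma-2}(\tu)$ gives $>33$. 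This places the perturbed metric in $\mathcal{G}_\varepsilon$.

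The main obstacle I anticipate is the density step: one must simultaneously (i) keep the perturbation inside a prescribed $C^1_{\ub}H^{10}_\vartheta$-ball around $\widehat{\gs}_0$, which forces the perturbation to be small in a fairly strong norm including ten angular derivatives and one $\ub$-derivative, with the first-order compatibility $\partial_\ub\widehat{\gs}(0,\vartheta)=0$ at $\ub=0$; (ii) make $f(\tu;\gamma)$ bounded below by a \emph{fixed} constant after dividing by $\tdelta(\tu;\gamma)$ and taking the infimum over $\vartheta\in S^2$, which requires the added $\chih$ to dominate, uniformly in $\vartheta$, the (small) background shear from $\widehat{\gs}_0$ — this is why the perturbation should be chosen isotropically (roughly $\vartheta$-independent modulo the coordinate transformation) and of amplitude set by a balance against $\tdelta(\tu;\gamma)^{1/2}$; and (iii) verify that the resulting metric genuinely lies in $\mathcal{I}$, i.e. $\widehat{\gs}$ remains a positive definite conformal metric with the correct determinant normalization and transition behavior, which is automatic if we perturb the determinant-one representative $m_{AB}$ in \eqref{conformalmetricincoordinate} by $\exp$ of a small traceless symmetric perturbation. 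The needed continuity of $\widehat{\gs}\mapsto(\chih,\gs)$ on $S_{\ub,u_0}$ is a routine consequence of the formulas relating $\widehat{\gs}$, $\Omega$ to the null second fundamental form together with the loss-free Sobolev estimates of Lemma \ref{Sobolev}; I would state it as a lemma and defer its elementary proof. Once these points are in place, openness, density, and the passage to the countable intersection are formal, and Theorems \ref{main1precise} and \ref{main2precise} follow.
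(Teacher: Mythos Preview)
Your strategy for Theorem \ref{main1precise} is essentially the paper's: you exhibit open dense sets $\mathcal{G}_\varepsilon$ and pass to a countable intersection, invoking Theorem \ref{instabilitycorollaryweak} via the $\limsup$ condition. That part is fine; the paper does the same thing with $\mathcal{N}_{2-\gamma,\varepsilon}^c$ (using $f'$ in place of $f$, which is immaterial since $f\ge f'$).

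For Theorem \ref{main2precise}, however, there is a genuine gap. Your claimed inclusion $\mathcal{G}_\varepsilon\subset\mathcal{E}_\varepsilon^c$ requires applying Theorem \ref{instabilitytheoremweak} at the witnessing $\tu$, and that theorem carries the hypothesis $|\tu|<\varepsilon_0$, where $\varepsilon_0$ is the constant from Theorem \ref{instabilitytheorem}. Tracing through the proof of Case~2 there, $\varepsilon_0$ depends on the bound $A$ in \eqref{condition-A-generic} (through $F=\max\{A,\sup|\varphi|\}$ in Propositions \ref{u0toutilde}--6.6). Over $\mathcal{I}$ the quantity $A$ is unbounded, so $\varepsilon_0(\widehat{\gs})$ can be arbitrarily small; knowing only $\tdelta(\tu;\gamma)<\varepsilon$ with $\varepsilon$ fixed does not force $|\tu|<\varepsilon_0(\widehat{\gs})$. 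The paper handles this by exhausting $\mathcal{I}$ with the open balls $B_{A_i}=\{\widehat{\gs}:\ |u_0|\sup_{\ub}\|\Omega\chih\|_{\H^{10}}<A_i\}$, choosing for each $i$ a threshold $\varepsilon_{0,i}<\varepsilon$ so that $\mathcal{N}_{2-\gamma,\varepsilon_{0,i}}^c\cap B_{A_i}\subset\mathcal{E}_\varepsilon^c$, and taking the union over $i$; this union is open (union of open sets) and dense (any bounded open set sits in some $B_{A_{i_0}}$ and meets $\mathcal{N}_{2-\gamma,\varepsilon_{0,i_0}}^c$). Note that your argument for the countable intersection escapes this problem precisely because the $\limsup$ condition lets you take $\tu$ as close to $0^-$ as you like for each fixed $\widehat{\gs}$.

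One smaller point on density: your assertion that the bump's contribution to $\tfrac{1}{\tdelta}\int_0^{\tdelta}|u_0|^2|\Omega\chih|^2$ is ``a fixed positive constant independent of how small we take the norm'' is not right. The $C^1_{\ub}$-norm of the perturbation controls $\sup_\ub|\chih_{\mathrm{pert}}|$, so shrinking that norm shrinks the averaged integral as well (it scales like the square of the amplitude). The actual mechanism --- which you do invoke at the end --- is that the amplitude $t\to 0$ while $\tu_t\to 0^-$ is chosen so that $\Omega_0^{\gamma-2}(\tu_t)\gtrsim t^{-2}$; see the paper's Proposition \ref{dense}, where the normalization \eqref{property-PsiAB} makes this balance explicit. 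Your outline has the right ingredients but the sentence as written misdescribes the mechanism.
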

\begin{proof}[Proof of Theorem \ref{main1precise}]
We begin by defining $\mathcal{N}_{2-\gamma}\subset\mathcal{I}$ such that
\begin{align*}
\mathcal{N}_{2-\gamma}^c=\{\widehat{\gs}\in\mathcal{I}|\limsup_{\tu\to0^-}\Omega_0^{\gamma-2}(\tu)f'(\tu;\gamma)>32\}
\end{align*}
where the function $f'$ is defined in \eqref{def-f'}.  From Theorem \ref{instabilitycorollaryweak}, $\mathcal{E}\subset\mathcal{N}_{2-\gamma}$ for any $\gamma\in(0,2)$. Then we only need to prove that $\mathcal{N}_{2-\gamma}$ is of first category in $\mathcal{I}$ for some $\gamma\in(0,2)$. We define also $\mathcal{N}_{2-\gamma,\varepsilon}\subset\mathcal{I}$ such that \begin{align*}
\mathcal{N}_{2-\gamma,\varepsilon}^c=\{\widehat{\gs}\in\mathcal{I}|\text{ there exists some $\tu$ with $\tdelta(\tu;\gamma)<\varepsilon$ such that }\Omega_0^{\gamma-2}(\tu)f'(\tu;\gamma)\ge33\}.
\end{align*}
It is clear that $\mathcal{N}_{2-\gamma}^c\supset\bigcap_{i}\mathcal{N}^c_{2-\gamma,\varepsilon_i}$ for any sequences $\varepsilon_i\to0$, and therefore we only need to prove that $\mathcal{N}_{2-\gamma,\varepsilon}^c$ is open and dense in $\mathcal{I}$ for all $\varepsilon>0$ and $\gamma\in(0,2)$. From now on, we fix some $\varepsilon>0$ and $\gamma\in(0,2)$ arbitrarily.

Using the argument in the proof of Theorem \ref{instabilitytheoremweak}, we know that $\mathcal{N}_{2-\gamma,\varepsilon}^c$ is in fact open in $L^\infty_\vartheta H^1_{\ub}$. Since $C^1_{\ub}H^{10}_\vartheta\hookrightarrow C^1_{\ub}C^8_\vartheta\hookrightarrow C^8_\vartheta C^1_{\ub}\hookrightarrow L^\infty_\vartheta H^1_{\ub}$, then $\mathcal{N}_{2-\gamma,\varepsilon}^c$ is also open in $C^1_{\ub}H^{10}_\vartheta$. The denseness is followed by the following proposition.
\begin{proposition}\label{dense}Given any $\widehat{\gs}\in\mathcal{I}$, there exists a one-parameter family $\widehat{\gs}_t\in\mathcal{I}$ at least for sufficiently small $t$ such that $\widehat{\gs}_t\in\mathcal{N}_{2-\gamma,\varepsilon}^c$ for $t\ne0$, and $\widehat{\gs}_t\to\widehat{\gs}$ in $C^1_{\ub}H^{10}_\vartheta$ as $t\to0$.\end{proposition}
\begin{proof}
Fix two pairs of symmetric trace-free matrices $\widetilde{\Psi}_{AB}(\vartheta)$ and $\widetilde{\Psi}'_{AB}(\vartheta)$ on $S^2=S_{0,u_0}$ such that for each $p\in S^2$, either $\widetilde{\Psi}_{AB}(p)\ne0$ or $\widetilde{\Psi}'_{AB}(p)\ne0$. Then we define a pair of symmetric trace-free matrix valued functions $\overline{\Psi}_{AB}(\ub,\vartheta)$ for $\ub\in(0,1],\vartheta\in S^2$ by
\begin{align*}
\overline{\Psi}_{AB}(\ub,\vartheta)=\begin{cases}\widetilde{\Psi}_{AB}(\vartheta),\ &2^{-4n-4}<\ub\le2^{-4n-3}\\\widetilde{\Psi}'_{AB}(\vartheta),\ &2^{-4n-2}<\ub\le2^{-4n-1},\\\text{continuously extended},\ &\text{otherwise} \end{cases}\ \text{for}\ n=0,1,2,\cdots.
\end{align*}
The magnitude  $|\overline{\Psi}|=\sqrt{\sum_{A,B=1,2}(\overline{\Psi}_{AB})^2}$ is a well-defined function because of the coordinate transformation rule. We then define the normalization $\Psi_{AB}$ of $\overline{\Psi}_{AB}$ by
\begin{align}\label{def-Psi}
\Psi_{AB}(\ub,\vartheta)=\overline{\Psi}_{AB}(\ub,\vartheta)\Bigg/\left(\frac{1}{\ub}\int_0^{\ub}|\overline{\Psi}|^2(\ub',\vartheta)\D\ub'\right)^{\frac{1}{2}},
\end{align}
which is well-defined since $\int_0^{\ub}|\overline{\Psi}|^2(\ub',\vartheta)\D\ub'$ is nowhere zero for all $\ub\in(0,1]$. $\Psi_{AB}$ is by definition also a pair of symmetric trace-free matrices satisfying the coordinate transformation rule and for all $\ub\in(0,1]$ and $\vartheta\in S^2$,
\begin{align}\label{property-PsiAB}
\frac{1}{\ub}\int_0^{\ub}|\Psi|^2(\ub',\vartheta)\D\ub'=1.
\end{align}
However, $\Psi$ is not continuous at $\ub=0$ and we will make a cut-off. For any $\delta>0$, let $j_\delta(\ub)$ be a continuous cut-off function such that $j_\delta(0)=0$, $j_\delta(\ub)=1$ for $\frac{\delta}{2}\le \ub\le1$, and $0\le j_\delta(\ub)\le 1$. Then $j_\delta\Psi_{AB}$ is defined for all $\ub\in[0,1]$ and from \eqref{property-PsiAB}, we have
\begin{align}\label{property-jdeltaPsiAB}
\frac{1}{\delta}\int_0^{\delta}|j_\delta\Psi|^2(\ub,\vartheta)\D\ub\ge\frac{1}{2}.
\end{align}
for all $\delta\in(0,1]$ and $\vartheta\in S^2$.

Now given any $t\in\mathbb{R}$ with small absolute value, we will choose a $\tu_t>-\varepsilon$ sufficiently close to $0$ according to $t$ such that $\tu_t\to0^-$ as $t\to0$. The choice of $\tu_t$ will be determined in the course of the proof. We then define
\begin{align}\label{def-mAB-cutoff}
\mathfrak{m}_{t;AB}(\ub,\vartheta)=\exp\left(t\int_0^{\ub}j_{\tdelta(\tu_t;\gamma)}(\ub')\Psi_{AB}(\ub',\vartheta)\D\ub'\right)
\end{align}
for all $t\in\mathbb{R}$. Because $\int_0^{\ub}j_{\tdelta(\tu_t;\gamma)}\Psi_{AB}(\ub',\vartheta)\D\ub'$ is also a pair of symmetric trace-free matrices satisfying the coordinate transformation rule, then $\mathfrak{m}_{t;AB}$  is then a pair of symmetric matrices with determinate $1$ satisfying the coordinate transformation rule and this defines a family of conformal metrics $\widehat{\gfs}_t$ through \eqref{conformalmetricincoordinate}. We then compute
\begin{align*}
|\Omega\chih_{\gfs_t}|^2_{\gfs_t}=\frac{1}{4}(\mathfrak{m}^{-1}_t)^{AC}(\mathfrak{m}^{-1}_t)^{BD}\frac{\partial \mathfrak{m}_{t;AB}}{\partial\ub}\frac{\partial \mathfrak{m}_{t;CD}}{\partial\ub}
\end{align*}
in both coordinate charts\footnote{See formula (2.70) in \cite{Chr}.}. From the definition \eqref{def-mAB-cutoff} of $\mathfrak{m}_{t;AB}$, we have
\begin{align*}
(\mathfrak{m}_t^{-1})^{AB}=I_{AB}+O_t(\ub)
\end{align*}
where $I_{AB}$ is the identity matrix and
\begin{align*}
\frac{\partial \mathfrak{m}_{t;AB}}{\partial\ub}=tj_{\tdelta(\tu_t;\gamma)}\Psi_{AB}+O_t(\ub).
\end{align*}
Then we have
\begin{align*}
|\Omega\chih_{\gfs_t}|^2_{\gfs_t}=\frac{t^2}{4}|j_{\tdelta(\tu_t;\gamma)}\Psi|^2+O_t(\ub).
\end{align*}
Here the constants in $O_t(\ub)$ depend on $\widetilde{\Psi}_{AB}$, $\widetilde{\Psi}'_{AB}$ and $t$, but not on the particular choice of $\tu_t$. Therefore we can choose $\tu_t$ sufficiently close to $0$ such that $|O_t(\ub)|\le\frac{t^2}{16}$ for $0\le\ub\le\tdelta(\tu_t;\gamma)$ and then from \eqref{property-jdeltaPsiAB},
\begin{align}\label{property-chihgfst}
\frac{1}{\tdelta(\tu_t;\gamma)}\int_0^{\tdelta(\tu_t;\gamma)}|\Omega\chih_{\gfs_t}|^2_{\gfs_t}(\ub,\vartheta)\D\ub\ge\frac{t^2}{16}.
\end{align}

Now we begin to construct $\widehat{\gs}_t$. For the given $\widehat{\gs}\in\mathcal{I}$, we have a pair of matrices $m_{AB}$ given by \eqref{conformalmetricincoordinate}. We then define 
\begin{align}\label{def-ghatt-cutoff}
m_{t;AB}=m_{AC}\mathfrak{m}_{t;CB}
\end{align}
which also satisfies the coordinate transformation rule and hence defines a family of the conformal metrics $\widehat{\gs}_t$ for all small $t$. In particular, $\widehat{\gs}_0\equiv\widehat{\gs}$ and $\widehat{\gs}_t\in\mathcal{I}$ and $\widehat{\gs}_t\to \widehat{\gs}$ as $t\to0$. Let $\gs_t$ be the corresponding family of full metrics and $\chih_t$ be the corresponding family of shear tensors, by direct computations, we have, 
\begin{align*}
|\Omega\chih_t|_{{\gs}_t}^2(\ub,u_0,\vartheta)\ge\frac{1}{2}|\Omega\chih_{\gfs_t}|^2_{\gfs_t}-|\Omega\chih|_{\gs}^2(\ub,u_0,\vartheta)-O_t(\ub).
\end{align*}
By choosing $\tu_t$ sufficiently close to zero such that $|\Omega\chih|^2_{\gs}(\ub,u_0,\vartheta)\le\frac{t^2}{128}$ (recalling that $\chih(0,u_0,\vartheta)=0$) and $|O_t(\ub)|\le\frac{t^2}{128}$ for all $0\le\ub\le\tdelta(\tu_t;\gamma)$, $\vartheta\in S^2$, from \eqref{property-chihgfst}, we have
\begin{align}\label{property-chiht}
\frac{1}{\tdelta(\tu_t;\gamma)}\int_0^{\tdelta(\tu_t;\gamma)}|\Omega\chih_t|^2_{\gs_t}(\ub,u_0,\vartheta)\D\ub\ge\frac{t^2}{64}.
\end{align}
Finally, we can choose $\tu_t$ sufficiently close to zero such that $\Omega_0^{2-\gamma}(\tu_t)\le\frac{t^2}{64\times 33}$, and conclude that $\widehat{\gs}_t\in\mathcal{N}^c_{2-\gamma,\varepsilon}$ for all nonzero small $t$.

\end{proof}
This proves the denseness of $\mathcal{N}^c_{2-\gamma,\varepsilon}$ and the proof of Theorem \ref{main1precise} is completed.

\end{proof}
\begin{proof}[Proof of Theorem \ref{main2precise}]
We denote 
$$B_A=\left\{\widehat{\gs}\in\mathcal{I}\Big||u_0|\sup_{\ub\in[0,1]}\|\Omega\chih\|_{\H^{10}(\ub,u_0)}< A\right\}$$
for every $A\ge 1$. Fix a sequence $A_i\to+\infty$, then $\bigcup_i B_{A_i}=\mathcal{I}$. From Theorem \ref{instabilitycorollaryweak}, for any given $i, \varepsilon>0,\gamma\in(0,2)$, there exists some $\varepsilon_{0,i}<\varepsilon$ such that
$$\mathcal{N}^c_{2-\gamma,\varepsilon_{0,i}}\cap B_{A_i}\subset\mathcal{E}^c_{\varepsilon}.$$
Therefore
$$\mathcal{E}^c_\varepsilon\supset\bigcup_i(\mathcal{N}^c_{2-\gamma,\varepsilon_{0,i}}\cap B_{A_i}).$$
The set on the right is obviously open since it is a union of open subsets, and we only need to show that it is dense. Given any bounded open subset $U$ of $\mathcal{I}$, $U\subset B_{A_{i_0}}$ for some $i_0$. Then since $\mathcal{N}^c_{2-\gamma,\varepsilon_{0,i_0}}$ is dense by Proposition \ref{dense},
$$U\cap \bigcup_i(\mathcal{N}^c_{2-\gamma,\varepsilon_{0,i}}\cap B_{A_i})\supset U\cap(\mathcal{N}^c_{2-\gamma,\varepsilon_{0,{i_0}}}\cap B_{A_{i_0}})=U\cap \mathcal{N}^c_{2-\gamma,\varepsilon_{0,{i_0}}}\ne\emptyset.$$
This shows that $\mathcal{E}^c_\varepsilon$ contains an open and dense subset in $\mathcal{I}$ and completes the proof of Theorem \ref{main2precise}.

\end{proof}

\end{document}